\theoremstyle{plain}
\newtheorem{thm}{Theorem}[section]
\newtheorem{cor}[thm]{Corollary} 
\newtheorem{lemma}[thm]{Lemma} 
\newtheorem{prop}[thm]{Proposition}
\theoremstyle{remark}
\newtheorem{remark}[thm]{Remark}
\theoremstyle{definition}
\newtheorem{defi}[thm]{Definition}
\newtheorem{example}[thm]{Example}
\newtheorem{notation}[thm]{Notation}
\newtheorem{obs}[thm]{Observation}
\def\today{{\number\day\space
 \ifcase\month\or
  January\or February\or March\or April\or May\or June\or
  July\or August\or September\or October\or November\or December\fi
 \space\number\year}}
\newcommand{\beq}{\begin{equation}}
\newcommand{\eeq}{\end{equation}}
\newcommand{\bre}{\begin{remark}}
\newcommand{\ere}{\end{remark}}
\newcommand{\beqno}[1]{\begin{equation}\label{#1}}
\newcommand{\mref}[1]{(\ref{#1})}
\newcommand\Z{\mathbb Z}
\DeclareMathOperator{\supp}{supp}
\DeclareMathOperator{\Hess}{Hess}
\newenvironment{prmainthm}{\paragraph{\textit{Proof of Theorem 2.11}}}{\hfill$\square$}
\newenvironment{prmainprop}{\paragraph{\textit{Proof of Proposition 4.1}}}{\hfill$\square$}
\newenvironment{prmainlem}{\paragraph{\textit{Proof of Lemma 5.3}}}{\hfill$\square$}
\begin{document}

\title[Spectral edge interior case] 
{Green's function asymptotics near the internal edges of spectra of periodic elliptic operators. Spectral gap interior.}

\author[Kha]{Minh Kha$^{*}$}
\address{M.K., Department of Mathematics, Texas A\&M University,
College Station, TX 77843-3368, USA}
\email{kha@math.tamu.edu}

\author[Kuchment]{Peter Kuchment$^{*}{}^\natural$}
\address{P.K., Department of Mathematics, Texas A\&M University,
College Station, TX 77843-3368, USA}
\email{kuchment@math.tamu.edu}

\author[Raich]
{Andrew Raich$^{**}$}
\address{A.R., Department of Mathematical Sciences, University of Arkansas, Fayetteville, AR 72701,
USA}
\email{araich@uark.edu}
\thanks{\footnotesize ${}^{*}$ The first two authors were partially supported by the NSF grant DMS-1517938. \\${}^{**}$ The third author is partially funded by NSF grant DMS-1405100.\\
 ${}^{\natural}$ The second author would like to thank the Isaac Newton Institute for Mathematical Sciences, Cambridge, for support and hospitality during the programme Periodic and Ergodic Problems, where work on this paper was undertaken.}


\begin{abstract}
Precise asymptotics known for the Green function of the Laplacian have found their analogs for bounded below periodic elliptic operators of the second-order below and at the bottom of the spectrum. Due to the band-gap structure of the spectra of such operators, the question arises whether similar results can be obtained near or at the edges of spectral gaps. In a previous work, two of the authors considered the case of a spectral edge. The main result of this article is finding such asymptotics near a gap edge, for ``generic'' periodic elliptic operators of second-order with real coefficients in dimension $d \geq 2$, when the gap edge occurs at a symmetry point of the Brillouin zone.
\end{abstract}

\maketitle
\section{Introduction}

The behavior at infinity of the Green function of the Laplacian in $\mathbb{R}^n$ outside and at the boundary of its spectrum is well known. Analogous results below and at the lower boundary of the spectrum have been established for bounded below periodic elliptic operators of the second order in \cite{Bab,MT} (see also \cite{Woess} for the discrete version). Due to the band-gap structure of the spectra of such periodic operators, the question arises whether similar results can be obtained at or near the edges of spectral gaps. The corresponding result at the internal edges of the spectrum was established in \cite{KR}. The main result of this article, Theorem \ref{main}, is the description of such asymptotics near the spectral edge for generic periodic elliptic operators of second-order with real coefficients in dimension $d \geq 2$, if the spectral edge is attained at a symmetry point of the Brillouin zone.

It is well known that outside of the spectrum the Green function decays exponentially at infinity, with the rate of decay controlled by the distance to the spectrum. See, e.g., well known Combes-Thomas estimates \cite{CombThom,BarComb}. However, comparison with the formulas for the case of the Laplacian shows that an additional algebraically decaying factor (depending on the dimension) is lost in this approach. Moreover, the exponential decay in general is expected to be anisotropic, while the operator theory approach can provide only isotropic estimates. The result of this paper provides the exact principal term of asymptotics, thus resolving these issues.

\section{Assumptions, notation and the main result}

Consider a linear second order elliptic operator in $\mathbb{R}^d$ with periodic coefficients
\begin{equation}
\label{eqn:L definition}
L(x,D)=\sum_{k,l=1}^{d} D_{k}(a_{kl}(x)D_{l})+V(x)=D^{*}A(x)D+V(x).
\end{equation}

Here $A=(a_{kl})_{1 \leq k,l \leq d}$, $D=(D_{1},\dots,D_{d})$, and $\displaystyle D_k:=-i\partial_{k}=-i\frac{\partial}{\partial x_k}$. All coefficients $a_{kl}, V$ are smooth real-valued functions on $\mathbb{R}^d$, periodic with respect to the integer
lattice $\mathbb{Z}^d$ in $\mathbb{R}^d$, i.e., $a_{kl}(x+n)=a_{kl}(x)$ and $V(x+n)=V(x)$, $\forall x \in \mathbb{R}^{d}, n \in \mathbb{Z}^d$.
The operator $L$ is assumed to be elliptic, i.e., the matrix $A$ is symmetric and
\begin{equation}
\label{E:ellipticity}
\sum_{k,l=1}^{d} a_{kl}(x)\xi_{k}\xi_{l} \geq \theta |\xi|^2,
\end{equation}
for some $\theta >0$ and any $x \in \mathbb{R}^d$, $\xi=(\xi_1,\dots,\xi_{d}) \in \mathbb{R}^d$.
The operator $L$, with the Sobolev space $H^{2}(\mathbb{R}^d)$ as the domain, is an unbounded, self-adjoint operator in $L^2(\mathbb{R}^d)$ (see e.g., \cite{Shubin}).

The spectrum of the above operator $L$ in $L^{2}(\mathbb{R}^d)$ has a \textbf{band-gap structure} \cite{Ea, K, RS4}, i.e., it is the union of a sequence of closed bounded intervals (\textbf{bands} or \textbf{stability zones} of the operator $L$) $[\alpha_j, \beta_j] \subset \mathbb{R}\,(j=1,2,\dots)$:
\begin{equation}
\label{band-gap}
\sigma(L)=\bigcup_{j=1}^{\infty}[\alpha_j, \beta_j],
\end{equation}
such that $\alpha_{j} \leq \alpha_{j+1}$, $\beta_{j} \leq \beta_{j+1}$ and $\lim_{j \rightarrow \infty}\alpha_j=\infty$. The bands can (and do) overlap when $d>1$, but they may leave open intervals in between, called \textbf{spectral gaps}.
Thus, a spectral gap is an interval of the form $(\beta_{j}, \alpha_{j+1})$ for some $j \in \mathbb{N}$ for which $\alpha_{j+1}>\beta_{j}$.
We make a convention that the open interval $(-\infty, \alpha_{1})$, which contains all real numbers below the bottom of the spectrum of $L$, is also a spectral gap. However, we will be mostly interested in finite spectral gaps.

In this text, we study Green's function asymptotics for the operator $L$ in a spectral gap, near to a spectral gap edge. More precisely, consider a finite spectral gap $(\beta_{j},\alpha_{j+1})$ for some $j \in \mathbb{N}$ and a value $\lambda \in (\beta_j, \alpha_{j+1})$ which is close either to the spectral edge $\beta_j$ or to the spectral edge $\alpha_{j+1}$. We would like to study the asymptotic behavior when $|x-y|\to\infty$ of the Green's kernel $G_{\lambda}(x,y)$ of the resolvent operator $R_{\lambda,L}:=(L-\lambda)^{-1}$. The case of the spectral edges (i.e., $\lambda=\alpha_{j+1}$ or $\lambda=\beta_{j}$) was studied for the similar purpose in \cite{KR}. All asymptotics here and also in \cite{KR} are deduced from an assumed ``generic" spectral edge behavior of the \textbf{dispersion relation} of the operator $L$, which we will briefly review below.

Let $W=[0,1]^{d} \subset \mathbb{R}^d$ be the unit cube, which is a \textbf{fundamental domain} of $\mathbb{R}^d$ with respect to the lattice $\mathbb{Z}^d$ (\textbf{Wigner-Seitz cell}). The \textbf{dual} (or \textbf{reciprocal}) \textbf{lattice} is $2\pi \mathbb{Z}^d$ and its fundamental domain is $[-\pi,\pi]^{d}$ (\textbf{Brillouin zone}).

The $d$-dimensional tori with respect to the lattices $\mathbb{Z}^d$ and $2\pi \mathbb{Z}^d$ are denoted by $\mathbb{T}:=\mathbb{R}^d/\mathbb{Z}^d$ and $\mathbb{T}^{*}:=\mathbb{R}^d/2\pi \mathbb{Z}^d$, respectively.

\begin{defi}
\label{H sk}
For any $k \in \mathbb{C}^d$, the subspace $H^{s}_{k}(W) \subset H^{s}(W)$ consists of restrictions to $W$ of functions $f \in H^{s}_{loc}(\mathbb{R}^d)$ that satisfy for any $\gamma \in \mathbb{Z}^d$ the \textbf{Floquet-Bloch condition} (also known as automorphicity condition or cyclic condition)
\begin{equation}
\label{floquet_cond}
f(x+\gamma)=e^{ik\cdot \gamma}f(x) \quad \mbox{for a.e} \quad x\in W.
\end{equation}
Here $H^{s}$ denotes the standard Sobolev space of order $s$. Note that when $s=0$, the above space coincides with $L^{2}(W)$ for any $k$.
\end{defi}

Due to periodicity, the operator $L(x,D)$ preserves condition \mref{floquet_cond} and thus, it defines an operator $L(k)$ in $L^{2}(W)$ with the domain $H^{2}_{k}(W)$. In this model, $L(k)$ is realized as a $k$-independent differential expression $L(x,D)$ acting on functions in $W$ with boundary conditions depending on $k$ (which can be identified with sections of a linear bundle over the torus $\mathbb{T}$). An alternative definition of $L(k)$ is as the operator $L(x,D+k)$ in $L^{2}(\mathbb{T})$ with the domain $H^{2}(\mathbb{T})$. In the latter model, $L(k)$ acts on the $k$-independent domain of periodic functions on $W$ as follows:
\begin{equation}
\label{conjugatingLk}
e^{-ik\cdot x}L(x,D)e^{ik\cdot x}=(D+k)^{*}A(x)(D+k)+V(x).
\end{equation}
We use the latter model of $L(k)$ throughout this paper, unless specified differently.

Note that the condition \mref{floquet_cond} is invariant under translations of $k$ by elements of the dual lattice $2\pi \mathbb{Z}^d$. Moreover, the operator $L(k)$ is unitarily equivalent to $L(k+2\pi \gamma)$, for any $\gamma \in \mathbb{Z}^d$. In particular, when dealing with real values of $k$, it suffices to restrict $k$ to the Brillouin zone $[-\pi, \pi]^d$ (or any its shifted copy). It is well-known (see \cite{Ea, K, RS4}) that the spectrum of $L$ is the union of all the spectra of $L(k)$ when $k$ runs over the Brillouin zone, i.e.
\begin{equation}
\label{fl_spectrum}
\sigma(L)=\bigcup_{k \in [-\pi, \pi]^d}\sigma(L(k)).
\end{equation}

Hence, the spectrum of $L$ is the range of the multivalued function
\begin{equation}
\label{sp_function}
k \mapsto \lambda(k):=\sigma(L(k)), \quad  k\in  [-\pi, \pi]^d.
\end{equation}

By \mref{conjugatingLk}, $L(k)$ is self-adjoint in $L^{2}(\mathbb{T})$ and has domain $H^{2}(\mathbb{T})$ for each $k \in \mathbb{R}^d$.
By the ellipticity of $L$, each $L(k)$ is bounded from below and has compact resolvent. This forces each of the operators $L(k)$, $k \in \mathbb{R}^d$, to have discrete spectrum in $\mathbb{R}$. Therefore, we can label its eigenvalues in non-decreasing order:
\begin{equation}
\label{eigenv}
\lambda_{1}(k) \leq \lambda_{2}(k) \leq \dots \quad .
\end{equation}
Hence, we can single out continuous and piecewise-analytic \textbf{band functions} $\lambda_{j}(k)$ for each $j \in \mathbb{N}$ \cite{Wilcox}. The range of the band function $\lambda_j$ constitutes exactly the band $[\alpha_j, \beta_j]$ of the spectrum of $L$  shown in \mref{band-gap}.

\begin{defi}
A \textbf{Bloch solution} of the equation $L(x,D)u=0$ is a solution of the form
\begin{equation*}
u(x)=e^{ik\cdot x}\phi (x),
\end{equation*}
where the function $\phi$ is 1-periodic in each variable $x_{j}$ for $j=1,\dots,d$. The vector $k$ is the \textbf{quasimomentum} and $z=e^{ik}=(e^{ik_{1}},\dots,e^{ik_{d}})$ is the \textbf{Floquet exponent} (or \textbf{Floquet multiplier}) of the solution.
In our formulation, allowing quasimomenta $k$ to be complex is essential.
\end{defi}

\begin{defi}
The (complex) \textbf{Bloch variety} $B_{L}$ of the operator $L$ consists of all pairs $(k,\lambda) \in \mathbb{C}^{d+1}$ such that the equation $Lu=\lambda u$ in $\mathbb{R}^d$ has a non-zero Bloch solution $u$ with a quasimomentum $k$. Similarly, the \textbf{real Bloch variety} $B_{L,\mathbb{R}}$ is $B_{L} \cap \mathbb{R}^{d+1}$.
\end{defi}
The Bloch variety $B_{L}$ can be treated as the \textbf{dispersion relation/curve}, i.e., the graph of the multivalued function $\lambda(k)$:
\begin{equation*}
\label{disp_relation}
B_L=\{(k,\lambda): \lambda \in \sigma(L(k))\}.
\end{equation*}
Note that $L(k)$ is non-self-adjoint if $k \notin \mathbb{R}^d$. However, $L(k)-L(0)$ is an operator of lower order for each $k \in \mathbb{C}^d$. Therefore, the spectra of all operators $L(k)$ on the torus $\mathbb{T}$ are discrete (see pp.188-190 in \cite{Agmon}).
\begin{remark}
In fact, the main techniques of Floquet theory (e.g., \mref{fl_spectrum}) apply to non-self-adjoint operators. It is required only that the operators $L(k)=L(x,D+k): H^{2}(\mathbb{T})\rightarrow L^2(\mathbb{T})$ are Fredholm for $k \in \mathbb{C}^d$. The latter condition is always satisfied due to ellipticity and embedding theorems (see Theorem 2.1 in \cite{K}). Unlike the self-adjoint case though, we do not have the band-gap structure as in \mref{band-gap}.
\end{remark}
\begin{defi}
The (complex) \textbf{Fermi surface} $F_{L,\lambda}$ of the operator $L$ at the energy level $\lambda \in \mathbb{C}$ consists of all quasimomenta $k \in \mathbb{C}^{d}$ such that the equation $Lu=\lambda u$ in $\mathbb{R}^d$ has a non-zero Bloch solution $u$ with a quasimomentum $k$.  For $\lambda=0$, we simply write $F_{L}$ instead of $F_{L,0}$.
The \textbf{real Fermi surface} $F_{L,\mathbb{R}}$ is $F_{L} \cap \mathbb{R}^d$.
\end{defi}
Equivalently, $k \in F_{L,\lambda}$ means the existence of a nonzero periodic solution $u$ of the equation $L(k)u=\lambda u$.
In other words, Fermi surfaces are level sets of the dispersion relation.

The following result can be found in Theorem 3.1.7 in \cite{K}:
\begin{lemma}
\label{L:Bloch_variety}
There exist entire ($2\pi \mathbb{Z}^d$-periodic in $k$) functions of finite orders on $\mathbb{C}^{d}$ and on $\mathbb{C}^{d+1}$ such that the Fermi and Bloch varieties are the sets of all zeros of these functions respectively.
\end{lemma}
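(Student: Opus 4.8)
\emph{Proof proposal.}

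The plan is to realize $B_{L}$, and hence $F_{L}$, as the zero divisor of an explicit regularized Fredholm determinant. Fix a real constant $C$ with $-C<\inf\sigma(L(0))$, so that $L(0)+C\colon H^{2}(\mathbb{T})\to L^{2}(\mathbb{T})$ is boundedly invertible. Since $L(k)=L(x,D+k)$ depends polynomially (of degree two) on $k\in\mathbb{C}^{d}$, with coefficients that are differential operators of order $\le 1$ with smooth coefficients, the operator
\[
  T(k,\lambda):=\bigl(L(k)-L(0)-\lambda-C\bigr)(L(0)+C)^{-1}
\]
is bounded on $L^{2}(\mathbb{T})$, depends polynomially — hence entirely — on $(k,\lambda)\in\mathbb{C}^{d+1}$, and belongs to every Schatten ideal $\mathfrak{S}_{p}$ with $p>d$ (the resolvent $(L(0)+C)^{-1/2}$ already lies in $\mathfrak{S}_{p}$, $p>d$, by the Weyl law for $L(0)$, while the remaining factor $\bigl(L(k)-L(0)-\lambda-C\bigr)(L(0)+C)^{-1/2}$ is bounded). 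From the identity $L(k)-\lambda=\bigl(I+T(k,\lambda)\bigr)(L(0)+C)$ one reads off that $L(k)-\lambda$ is invertible $H^{2}(\mathbb{T})\to L^{2}(\mathbb{T})$ if and only if $I+T(k,\lambda)$ is invertible on $L^{2}(\mathbb{T})$; and since $L(k)-\lambda$ is elliptic, hence Fredholm of index zero, this fails exactly when $\ker\bigl(L(k)-\lambda\bigr)\ne 0$, i.e. exactly when $(k,\lambda)\in B_{L}$.

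Fix an integer $m>d$, so $T(k,\lambda)\in\mathfrak{S}_{m}$, and set $g_{0}(k,\lambda):=\det{}_{m}\!\bigl(I+T(k,\lambda)\bigr)$, the $m$-th Carleman--Fredholm determinant. By the standard theory of regularized determinants $g_{0}$ is entire on $\mathbb{C}^{d+1}$ and vanishes exactly where $I+T(k,\lambda)$ fails to be invertible, so by the previous paragraph the zero set of $g_{0}$ is $B_{L}$. For the order one combines $\bigl|\det{}_{m}(I+A)\bigr|\le\exp\bigl(\Gamma_{m}\|A\|_{\mathfrak{S}_{m}}^{m}\bigr)$ with the polynomial bound $\|T(k,\lambda)\|_{\mathfrak{S}_{m}}\le c\,(1+|k|+|\lambda|)^{2}$ — valid because $T$ is a polynomial in $(k,\lambda)$ of fixed degree with $\mathfrak{S}_{m}$-valued coefficients — to get $|g_{0}(k,\lambda)|\le\exp\bigl(c'(1+|k|+|\lambda|)^{2m}\bigr)$, so $g_{0}$ has finite order $\le 2m$. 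Freezing $\lambda=0$ produces an entire, finite-order function $k\mapsto g_{0}(k,0)$ on $\mathbb{C}^{d}$ whose zero set is $F_{L}$.

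The last, and most delicate, point is $2\pi\mathbb{Z}^{d}$-periodicity in $k$, which $g_{0}$ itself may fail to have because the reference operator $L(0)$ is not periodic in $k$. However, the zero divisor $B_{L}$ is $2\pi\mathbb{Z}^{d}$-periodic, since $L(k)$ and $L(k+2\pi\gamma)$ are unitarily equivalent — via multiplication by the periodic function $e^{2\pi i\gamma\cdot x}$ — and this equivalence preserves kernels and local vanishing orders. Hence for each $\gamma\in\mathbb{Z}^{d}$ the quotient $g_{0}(k+2\pi\gamma,\lambda)/g_{0}(k,\lambda)$ is a nowhere-vanishing entire function on $\mathbb{C}^{d+1}$ of finite order (a quotient of two functions of order $\le 2m$ with the same zero divisor), so it equals $e^{P_{\gamma}(k,\lambda)}$ with $P_{\gamma}$ a polynomial, and the family $\{P_{\gamma}\}$ satisfies an additive cocycle relation (modulo $2\pi i\mathbb{Z}$). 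A discrete Poincar\'e lemma — solving $Q(k,\lambda)-Q(k+2\pi e_{j},\lambda)=P_{e_{j}}(k,\lambda)$ simultaneously for $j=1,\dots,d$ by descent on the number of variables, using that the difference operator $Q\mapsto Q-Q(\cdot+2\pi e_{j})$ is surjective on polynomials with kernel the polynomials independent of $k_{j}$ — yields a polynomial $Q$ for which $g(k,\lambda):=g_{0}(k,\lambda)\,e^{Q(k,\lambda)}$ is $2\pi\mathbb{Z}^{d}$-periodic in $k$, still entire, of finite order, with zero divisor $B_{L}$; its slice at $\lambda=0$ handles $F_{L}$. I expect this periodization to be the main obstacle — choosing the correcting factor polynomial so that finite order is preserved, and disposing of the $2\pi i\mathbb{Z}$ ambiguity in the cocycle; alternatively one avoids it by running the determinant construction directly in the model of Definition~\ref{H sk}, where $L(k)$ depends on $k$ only through $z=(e^{ik_{1}},\dots,e^{ik_{d}})$, at the cost of handling the $k$-dependent domains. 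This is the content of Theorem 3.1.7 in \cite{K}, whose argument we follow.
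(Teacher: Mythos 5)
The paper does not give a proof of this lemma at all: it simply cites Theorem~3.1.7 of \cite{K}, so there is no in-paper argument to compare yours against. Your regularized-determinant construction is in the same spirit as Kuchment's, and the ingredients you assemble --- the Schatten bound on $(L(0)+C)^{-1/2}$ via Weyl asymptotics, the identity $L(k)-\lambda=\bigl(I+T(k,\lambda)\bigr)(L(0)+C)$, and the finite-order estimate from $|\det_m(I+A)|\le\exp\bigl(\Gamma_m\|A\|_{\mathfrak{S}_m}^m\bigr)$ together with the polynomial dependence of $T$ on $(k,\lambda)$ --- are all correct and give an entire function $g_0$ of finite order with zero \emph{set} $B_L$.

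The genuine gap is in the periodization step, and it is more serious than your parenthetical remark suggests. You need the zero \emph{divisor} of $g_0$ (i.e.\ multiplicities along components, not merely the zero set) to be $2\pi\mathbb{Z}^d$-periodic, and you justify this by saying that $L(k+2\pi\gamma)=U_\gamma L(k)U_\gamma^{-1}$ ``preserves kernels and local vanishing orders.'' But $U_\gamma$ does \emph{not} conjugate $T(k,\lambda)$, since $(L(0)+C)^{-1}$ is a fixed reference that is not carried along: one rather has $I+T(k+2\pi\gamma,\lambda)=U_\gamma\bigl(I+T(k,\lambda)\bigr)W_\gamma$ with $W_\gamma:=(L(0)+C)U_\gamma^{-1}(L(0)+C)^{-1}$, a bounded invertible operator that is \emph{not} $U_\gamma^{-1}$, so the germ of $g_0$ at $(k_0+2\pi\gamma,\lambda_0)$ is not obviously related to the germ at $(k_0,\lambda_0)$. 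What actually makes the divisors coincide is that the vanishing order of $\det_m(I+T)$ along a component of $B_L$ equals the Gohberg--Sigal multiplicity of the Fredholm pencil $I+T$, and that multiplicity is invariant under left/right multiplication by (here constant) invertible operators; this identification, together with its extension to several complex variables via slicing, is a nontrivial fact that must be cited or proved, not asserted as a consequence of unitary equivalence. The discrete Poincar\'e lemma step (and the $2\pi i\mathbb{Z}$ ambiguity in the cocycle) is a smaller issue that you flag yourself. As you also note, the cleaner route --- and the one Kuchment actually follows in \cite{K} --- is to work directly on the Floquet-multiplier torus $z=e^{ik}\in(\mathbb{C}^*)^d$, where $L(k)$ depends only on $z$, so periodicity is automatic; the price is a second Cousin problem on $(\mathbb{C}^*)^d\times\mathbb{C}$, which is what is really being solved there.
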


From this lemma and the proof of Lemma 4.5.1 in \cite{K} (see also \cite{Wilcox}), the band functions $\lambda(k)$ are piecewise analytic on $\mathbb{C}^d$.

From now on, we fix $L$ as a self-adjoint elliptic operator of the form \mref{eqn:L definition}, whose band-gap structure is as \mref{band-gap}.
By adding a constant to the operator $L$ if necessary, we can assume that the spectral edge of interest is $0$. It is also enough to suppose that the adjacent spectral band is of the form $[0,a]$ for some $a>0$ since the case when the spectral edge $0$ is the maximum of its adjacent spectral band is treated similarly.

Suppose there is no spectrum for small negative values of $\lambda$ and hence there is a spectral gap below 0. Thus, there exists at least one band function $\lambda_{j}(k)$ for some $j \in \mathbb{N}$ such that $0$ is the minimal value of this function on the Brillouin zone.

To establish our main result, we need to impose the following analytic assumption on the dispersion curve $\lambda_j$ as in \cite{KR}:\\

\textbf{Assumption A}\\

\emph{There exists $k_0 \in [-\pi, \pi]^d$ and a band function $\lambda_{j}(k)$ such that:}\\

\textbf{A1} $\lambda_{j}(k_0)=0$.\\

\textbf{A2} $\min_{k \in \mathbb{R}^d, i \neq j}|\lambda_{i}(k)|>0$.\\

\textbf{A3} \emph{$k_0$ is the only\footnote{Finitely many such points can be also easily handled.} (modulo $2\pi \mathbb{Z}^d$) minimum of $\lambda_{j}$}.\\

\textbf{A4} \emph{$\lambda_{j}(k)$ is a Morse function near $k_0$, i.e., its Hessian matrix $\displaystyle H:=\Hess{(\lambda_j)}(k_0)$ at $k_0$ is  positive definite. In particular, the Taylor expansion of $\lambda_j$ at $k_0$ is:
\begin{equation*}
\lambda_{j}(k)=\frac{1}{2}(k-k_{0})^{T}H(k-k_{0})+O(|k-k_0|^3).
\end{equation*}}

It is known \cite{KloppRalston} that the conditions \textbf{A1} and \textbf{A2} `generically' hold (i.e., they can be achieved by small perturbation of coefficients of the operator) for Schr\"odinger operators. Although this has not been proven, the conditions \textbf{A3} and \textbf{A4} are widely believed (both in the mathematics and physics literature) to hold `generically'.
In other words, it is conjectured that for a `generic' selfadjoint second-order elliptic operator with periodic coefficients on $\mathbb{R}^d$ each of the spectral gap's endpoints is a unique (modulo the dual lattice $2\pi\mathbb{Z}^d$), nondegenerate extremum of a single band function $\lambda_{j}(k)$ (see e.g., \textbf{Conjecture 5.1} in \cite{KP2}).
It is known that for a non-magnetic periodic Schr\"odinger operator, the bottom of the spectrum always corresponds to a non-degenerate minimum of $\lambda_1$ \cite{KS}. A similar statement is correct for a wider class of  `factorable' operators \cite{BS_2001, BS_2003}. The following condition on $k_0$ will also be needed:\\

\textbf{A5} \emph{The quasimomentum $k_0$ is a \textbf{high symmetry point of the Brillouin zone}, i.e., all components of $k_0$ must be either equal to $0$ or to $\pi$}.\\

We denote by $X$ the set of such high symmetry points in the Brillouin zone.

It is known \cite{HKSW} that the condition \textbf{A5} is not always satisfied and spectral edges could occur deeply inside the Brillouin zone. However, as it is discussed in \cite{HKSW}, in many practical cases (e.g., in the media close to homogeneous) this condition holds.

We would like to introduce a suitable fundamental domain with respect to the dual lattice $2\pi \mathbb{Z}^d$ to work with.
\begin{defi}
\label{fd}
Consider the quasimomentum $k_0$ in our assumptions. By \textbf{A5}, $k_0=(\delta_1 \pi, \delta_2 \pi,\dots, \delta_d \pi)$, where $\delta_{j} \in \{0,1\}$ for $j \in \{1,\dots,d\}$. We denote by $\mathcal{O}$ the fundamental domain so that $k_0$ is its center of symmetry, i.e.,
$$\mathcal{O}=\prod_{j=1}^d [(\delta_{j}-1)\pi, (\delta_j+1)\pi].$$

When $k_0=0$, $\mathcal{O}$ is just the Brillouin zone.
\end{defi}

We now introduce notation that will be used throughout the paper.
\begin{notation}
\label{notation}
$ $

\begin{enumerate}[(a)]
\item Let $z_1 \in \mathbb{C}$, $z_2 \in \mathbb{C}^{d-1}$, $z_3 \in \mathbb{C}^d$ and $r_i$ be positive numbers for $i=1,2,3$. Then we denote by $B(z_1,r_1)$, $D'(z_2,r_2)$ and $D(z_3,r_3)$ the open balls (or discs) centered at $z_1$, $z_2$ and $z_3$ whose radii are $r_1$, $r_2$ and $r_3$ in $\mathbb{C}$, $\mathbb{C}^{d-1}$ and $\mathbb{C}^d$ respectively.

\item The real parts of a complex vector $z$, or of a complex matrix $A$ are denoted by $\Re(z)$ and $\Re(A)$ respectively.

\item The standard notation $O(|x-y|^{-n})$ for a function $f$ defined on $\mathbb{R}^{2d}$ means there exist constants $C>0$ and $R>0$ such that $|f(x,y)|\leq C|x-y|^{-n}$ whenever $|x-y|>R$. Also, $f(x,y)=o(|x-y|^{-n})$ means that
$$\displaystyle \lim_{|x-y| \rightarrow \infty}|f(x,y)|/|x-y|^{n}=0.$$

\item We often use the notation $A \lesssim B$ to mean that the quantity $A$ is less or equal than the quantity $B$ up to some multiplicative constant factor, which does not affect the arguments.
\end{enumerate}
\end{notation}

As we discussed, for each $z \in \mathbb{C}^d$, the operator $L(z)$ has discrete spectrum and is therefore a closed operator with non-empty resolvent set.
These operators have the same domain $H^2{(\mathbb{T})}$ and for each $\phi \in H^2(\mathbb{T})$, $L(z)\phi$ is a $L^2(\mathbb{T})$-valued analytic function of $z$, due to \mref{conjugatingLk}. Consequently, $\{L(z)\}_{z \in \mathbb{C}^d}$ is an analytic family of type $\mathcal{A}$ in the sense of Kato \cite{Ka}\footnote{It is also an analytic family in the Banach space of bounded linear operators acting from $H^2(\mathbb{T})$ to $L^2(\mathbb{T})$.}. Due to $\textbf{A1-A2}$, $\lambda_j(k_0)$ is a simple eigenvalue of $L(k_0)$. By using analytic perturbation theory for the family $\{L(z)\}_{z \in \mathbb{C}^d}$ (see e.g., Theorem XII.8 in \cite{RS4}), there is an open neighborhood $V$ of $k_0$  in $\mathbb{C}^d$ and some $\epsilon_0>0$ such that

\textbf{(P1)} $\lambda_j$ is analytic in a neighborhood of the closure of $V$.

\textbf{(P2)} $\lambda_{j}(z)$ has algebraic multiplicity one, i.e., it is a simple eigenvalue of $L(z)$ for any $z \in \overline{V}$.

\textbf{(P3)}  The only eigenvalue of $L(z)$ contained in the closed disc $\overline{B}(0,\epsilon_0)$ is $\lambda_j(z)$. Moreover, we may also assume that $|\lambda_j(z)|<\epsilon_0$ for each $z \in V$.

\textbf{(P4)}  For each $z \in \overline{V}$, let $\phi(z,x)$ be a nonzero $\mathbb{Z}^d$-periodic function of $x$ such that it is the unique (up to a constant factor) eigenfunction of $L(z)$ with the eigenvalue $\lambda_{j}(z)$, i.e., $L(z)\phi(z,\cdot)=\lambda_{j}(z)\phi(z,\cdot)$. We will also use sometimes the notation $\phi_{z}$ for the eigenfunction $\phi(z,\cdot)$.

By elliptic regularity, $\phi(z,x)$ is smooth in $x$. On a neighborhood of $\overline{V}$, $\phi(z,\cdot)$ is a $H^2(\mathbb{T})$-valued holomorphic function.

\textbf{(P5)} By condition \textbf{A4} and the continuity of $\Hess{(\lambda_j)}$,\footnote{The Hessian matrix of $\lambda_j$.} we can assume that for all $z\in V$,
$$2\Re(\Hess{(\lambda_{j})}(z))>\min \sigma(\Hess{(\lambda_{j})}(k_0))I_{d \times d}.$$

\textbf{(P6)} 
$V$ is invariant under complex conjugation. Furthermore,
the smooth function
\begin{equation}
\label{eigen_inner_prod}
F(z):=(\phi(z,\cdot),\phi(\overline{z},\cdot))_{L^{2}(\mathbb{T})}
\end{equation}
is non-zero on $V$, due to analyticity of the mapping $z \mapsto \phi(z,\cdot)$ and the inequality $F(k_0)=\|\phi(k_{0})\|^{2}_{L^{2}(\mathbb{T})}>0$.


The following lemma will be useful when dealing with operators having real and smooth coefficients:
\begin{lemma}
\label{evenness_band_functions}
(i) For $k$ in $\mathbb{R}^d$ and $i \in \mathbb{N}$,
\begin{equation}
\label{evenness}
\lambda_{i}(k)=\lambda_{i}(-k).
\end{equation}
In other words, each band $\lambda_{i}$ of $L$ is an even function on $\mathbb{R}^d$.

(ii) If $k_0 \in X$, we have $\lambda_{i}(k+k_{0})=\lambda_{i}(-k+k_0)$ for all $k$ in $\mathbb{R}^d$ and $i \in \mathbb{N}$.
\end{lemma}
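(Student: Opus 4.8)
The plan is to exploit the reality of the coefficients through the complex-conjugation operator. Let $C$ denote the antilinear operator $C\psi=\overline{\psi}$ on $L^{2}(\mathbb{T})$; it satisfies $C^{2}=\mathrm{Id}$, it commutes with multiplication by any of the real-valued functions $a_{kl}$ and $V$, and since $D_{m}=-i\partial_{m}$ one has $CD_{m}C=-D_{m}$, while $C(\mu\,\cdot)C=\overline{\mu}(\cdot)$ for a scalar $\mu\in\mathbb{C}$. First I would record the conjugation identity for the fiber operators. By \mref{conjugatingLk}, $L(k)=\sum_{m,l}(D_{m}+\overline{k}_{m})a_{ml}(x)(D_{l}+k_{l})+V(x)$, because $D_{m}^{*}=D_{m}$ and the adjoint of multiplication by $k_{m}$ is multiplication by $\overline{k}_{m}$. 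Conjugating by $C$ and using the three facts above gives
\begin{equation*}
C\,L(k)\,C=\sum_{m,l}(D_{m}-k_{m})a_{ml}(x)(D_{l}-\overline{k}_{l})+V(x)=L(-\overline{k}),\qquad k\in\mathbb{C}^{d}.
\end{equation*}

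Next I would read off the spectral consequence. If $L(k)\psi=\mu\psi$ with $\psi\in H^{2}(\mathbb{T})$ nonzero, then applying $C$ and inserting $C^{2}=\mathrm{Id}$ yields $L(-\overline{k})\overline{\psi}=\overline{\mu}\,\overline{\psi}$, so $\sigma(L(-\overline{k}))$ is the complex conjugate of $\sigma(L(k))$, with algebraic multiplicities preserved. For $k\in\mathbb{R}^{d}$ the operator $L(k)$ is self-adjoint, hence its spectrum is real and $-\overline{k}=-k$; therefore $L(k)$ and $L(-k)$ have exactly the same eigenvalues with the same multiplicities. Since the band functions $\lambda_{i}$ are obtained (see \mref{eigenv}) by listing these eigenvalues in non-decreasing order, equality of the spectra as multisets forces $\lambda_{i}(k)=\lambda_{i}(-k)$ for every $i$, which is part (i).

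For (ii), suppose $k_{0}\in X$, i.e. $k_{0}=(\delta_{1}\pi,\dots,\delta_{d}\pi)$ with each $\delta_{m}\in\{0,1\}$, so that $2k_{0}=(2\delta_{1}\pi,\dots,2\delta_{d}\pi)\in2\pi\mathbb{Z}^{d}$. Recall from the discussion preceding \mref{fl_spectrum} that $L(z)$ is unitarily equivalent to $L(z+2\pi\gamma)$ for $\gamma\in\mathbb{Z}^{d}$, so each $\lambda_{i}$ is $2\pi\mathbb{Z}^{d}$-periodic. Combining part (i) with this periodicity (applied with $2\pi\gamma=-2k_{0}$),
\begin{equation*}
\lambda_{i}(k+k_{0})=\lambda_{i}\big(-(k+k_{0})\big)=\lambda_{i}(-k+k_{0}-2k_{0})=\lambda_{i}(-k+k_{0}),\qquad k\in\mathbb{R}^{d},
\end{equation*}
which is part (ii).

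I do not expect a serious obstacle here; the only points that need care are the bookkeeping of adjoints versus complex conjugates in the conjugation identity $C L(k) C=L(-\overline{k})$, and the remark that the non-decreasing enumeration of eigenvalues is determined by the spectrum as a multiset, which is what allows the equality of the full spectra of $L(k)$ and $L(-k)$ to descend to each individual band function.
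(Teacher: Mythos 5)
Your argument is correct and is essentially the paper's own: both proofs observe that complex conjugation (your operator $C$) turns the eigenvalue equation for $L(k)$ into the one for $L(-k)$ with the same real eigenvalue, so the spectra agree as multisets and the non-decreasingly ordered band functions coincide, and both obtain (ii) from (i) together with $2k_0\in 2\pi\mathbb{Z}^d$. Your version is merely a little more explicit about multiplicities and the ordering step (and your literal reading of $(D+k)^*$ as $D+\overline{k}$ is harmless here since the identity is only used for real $k$).
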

\begin{proof}
Let $\phi_{k}$ be an eigenfunction of $L(k)$ corresponding to $\lambda_{j}(k)$. This means that $\phi_{k}$ is a periodic solution to the equation
\begin{equation}
\label{eigen_eqn}
L(x,\partial+ik)\phi_{k}(x)=\lambda_{j}(k)\phi_{k}(x).
\end{equation}
Taking the complex conjugate of \mref{eigen_eqn}, we get
\begin{equation*}
L(x,\partial-ik)\overline{\phi_{k}}(x)=\lambda_{j}(k)\overline{\phi_{k}}(x).
\end{equation*}
Therefore, $\overline{\phi_k}$ is an eigenfunction of $L(-k)$ with eigenvalue $\lambda_{j}(k)$. This implies the identity \mref{evenness}.

(ii) By (i), $\lambda_{i}(k+k_{0})=\lambda_{i}(-k-k_{0})=\lambda_{i}(-k+k_{0})$ since $2k_{0} \in 2\pi\mathbb{Z}^d$.
\end{proof}

\begin{cor}
\label{evenness_symmetry}
If $\beta \in \mathbb{R}^d$ such that $k_0+i\beta \in \overline{V}$ then $\lambda_j(k_0+i\beta) \in \mathbb{R}$.
\end{cor}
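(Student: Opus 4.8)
The plan is to combine three facts: (i) reality of the coefficients of $L$, which under complex conjugation relates $L(z)$ to $L(-\overline{z})$; (ii) the high-symmetry hypothesis \textbf{A5}, which makes $-k_0$ congruent to $k_0$ modulo the dual lattice $2\pi\mathbb{Z}^d$; and (iii) the isolation property \textbf{(P3)} of the eigenvalue $\lambda_j(z)$ inside the disc $\overline{B}(0,\epsilon_0)$.

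\emph{Step 1 (conjugation identity).} Writing the holomorphic family in the form $L(z)\phi=\sum_{k,l}(D_k+z_k)\big(a_{kl}(x)(D_l+z_l)\phi\big)+V\phi$ and using that all $a_{kl}$ and $V$ are real, a direct computation — the complex analogue of the one in the proof of Lemma \ref{evenness_band_functions}, relying on $\overline{(D_k+z_k)\psi}=-(D_k-\overline{z}_k)\overline{\psi}$ — yields $\overline{L(z)\phi}=L(-\overline{z})\,\overline{\phi}$ for every smooth $\phi$. Applying this with $\phi=\phi_z=\phi(z,\cdot)$, where $L(z)\phi_z=\lambda_j(z)\phi_z$ as in \textbf{(P4)}, shows that $\overline{\phi_z}$ is a nonzero $\mathbb{Z}^d$-periodic eigenfunction of $L(-\overline{z})$ with eigenvalue $\overline{\lambda_j(z)}$.

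\emph{Step 2 (folding back with \textbf{A5}).} Set $z=k_0+i\beta$, so $-\overline{z}=-k_0+i\beta$. Since $k_0$ is a high-symmetry point, every component of $k_0$ is $0$ or $\pi$, hence $2k_0\in 2\pi\mathbb{Z}^d$ and $-\overline{z}=z-2k_0$. Multiplication by the $1$-periodic function $e^{2\pi i\gamma\cdot x}$ is unitary on $L^2(\mathbb{T})$ and conjugates $L(w)$ into $L(w+2\pi\gamma)$ for every $\gamma\in\mathbb{Z}^d$ and every $w\in\mathbb{C}^d$; therefore $L(-\overline{z})$ is unitarily equivalent to $L(z)$, and consequently $\overline{\lambda_j(z)}$ is an eigenvalue of $L(z)$.

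\emph{Step 3 (conclude via \textbf{(P3)}).} Because $\lambda_j$ is analytic on a neighbourhood of $\overline{V}$ and $|\lambda_j|<\epsilon_0$ on $V$, continuity gives $|\lambda_j(z)|\le\epsilon_0$ for $z\in\overline{V}$; hence $|\overline{\lambda_j(z)}|=|\lambda_j(z)|\le\epsilon_0$, so $\overline{\lambda_j(z)}\in\overline{B}(0,\epsilon_0)$. By \textbf{(P3)} the unique eigenvalue of $L(z)$ in that closed disc is $\lambda_j(z)$, so $\overline{\lambda_j(z)}=\lambda_j(z)$; that is, $\lambda_j(k_0+i\beta)\in\mathbb{R}$. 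The computation in Step 1 is routine; the only genuine point of care — and the reason the hypothesis $k_0+i\beta\in\overline{V}$ (not merely $k_0+i\beta\in\mathbb{C}^d$) is used — is Step 3, where one must guarantee that the conjugated value still lies in the disc on which \textbf{(P3)} asserts uniqueness. The high-symmetry hypothesis in Step 2 is exactly what allows $-k_0$ to be folded back onto $k_0$; without it one would only relate $\lambda_j$ at $k_0+i\beta$ to a possibly different band at $-k_0+i\beta$. I do not expect any serious obstacle beyond this bookkeeping.
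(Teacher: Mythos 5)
Your proof is correct, and it takes a genuinely different route from the paper's. The paper disposes of the corollary in one sentence: Lemma~\ref{evenness_band_functions}(ii) gives $\lambda_j(k_0+k)=\lambda_j(k_0-k)$ for real $k$, and since $\lambda_j$ is analytic near $k_0$ and real on $\mathbb{R}^d$, its Taylor series at $k_0$ contains only even total-degree terms with real coefficients; substituting $k=i\beta$ then produces a real number because $i^{|\alpha|}\in\{1,-1\}$ whenever $|\alpha|$ is even. You instead work directly at the operator level: the conjugation identity $\overline{L(z)\phi}=L(-\overline z)\,\overline\phi$ (valid because $a_{kl}$ and $V$ are real), the fold-back $-\overline z = z - 2k_0$ with $2k_0\in 2\pi\mathbb{Z}^d$ from \textbf{A5} and the unitary equivalence $L(w+2\pi\gamma)\cong L(w)$, and finally the uniqueness statement in \textbf{(P3)} to identify $\overline{\lambda_j(z)}$ with $\lambda_j(z)$. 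Your argument is essentially the holomorphic extension of the paper's proof of Lemma~\ref{evenness_band_functions}, and it is more self-contained: it does not pass through the real-even Taylor expansion, and it makes visible exactly where \textbf{A5} and \textbf{(P3)} enter. The paper's version is shorter but leans implicitly on the evenness lemma and on analytic continuation; yours replaces that by a direct spectral identification, at the mild cost of having to observe that $|\lambda_j(z)|\le\epsilon_0$ persists on $\overline V$ so that \textbf{(P3)} applies. Both are sound.
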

\begin{proof}
Indeed, the statement (ii) of Lemma \ref{evenness_band_functions} implies that the Taylor series of $\lambda(k)$ at $k_0$ has only even degree terms and real coefficients.
\end{proof}

Corollary \ref{evenness_symmetry} allows us to define near $\beta=0$ the real analytic function $E(\beta):=\lambda_{j}(k_0+i\beta)$ near $0$. Since its Hessian at $0$ is negative-definite (by \textbf{A4}), there exists a connected and bounded neighborhood $V_0$ of $0$ in $\mathbb{R}^d$ such that $k_0+iV_0 \subseteq V$ and $\Hess{(E)}(\beta)$ is negative-definite whenever $\beta$ belongs to $V_0$.
Thus, $E$ is strictly concave on $V_0$ and $\sup_{\beta \in V_0}E(\beta)=E(0)=0$, $\nabla E(\beta)=0$ iff $\beta=0$. Note that at the bottom of the spectrum (i.e., $j=1$),
we could take $V_0$ as the whole Euclidean space $\mathbb{R}^d$.

By the Morse lemma and the fact that $0$ is a nondegenerate
critical point of $E$, there is a smooth change of coordinates $\phi: U_0 \rightarrow \mathbb{R}^d$ so that $0 \in U_0 \subset \subset V_0$, $U_0$ is connected, $\phi(0)=0$ and
$E(\phi^{-1} (a))=-|a|^2, \forall a \in \phi(U_0)$. Set $K_{\lambda}:=\{\beta \in U_0: E(\beta)\geq \lambda \}$ and
$\Gamma_{\lambda}:=\{\beta \in U_0: E(\beta)=\lambda \}$ for each $\lambda \in \mathbb{R}$. Now, we consider $\lambda$ to be in the set $\{-|a|^2: a \in \phi(U_0), a \neq 0\}$. Then
$K_{\lambda}$ is a strictly convex $d$-dimensional compact body in $\mathbb{R}^d$, and $\Gamma_{\lambda}=\partial K_{\lambda}$ is a compact hypersurface in $\mathbb{R}^d$.
The compactness of $K_{\lambda}$ follows from the equation $-|\phi(\beta)|^2=E(\beta) \geq \lambda$ which yields that $|\beta|=|\phi^{-1}(\phi(\beta))| \leq \max \{|\phi^{-1}(a)|: a \in \phi(U_0), |a|^2 \leq -\lambda\}$.
Additionally, $\lim_{\lambda \rightarrow 0^-}\max_{\beta \in K_{\lambda}}|\beta|=0$.

Let $\mathcal{K}_{\lambda}$ be the Gauss-Kronecker curvature of $\Gamma_{\lambda}$. Since the Hessian of $E$ is negative-definite on $\Gamma_{\lambda}$, $\mathcal{K}_{\lambda}$ is nowhere-zero.
For the value of $\lambda$ described in the previous paragraph and each $s \in \mathbb{S}^{d-1}$, there is a unique vector $\beta_{s} \in \Gamma_{\lambda}$ such that the value of
the Gauss map of the hypersurface $\Gamma_{\lambda}$ at this point coincides with $s$, i.e.
\begin{equation}
\label{E:gradient_E_s}
\nabla E(\beta_{s})=-|\nabla E(\beta_s)|s.
\end{equation}
This is due to the fact that the Gauss map of a compact, connected oriented hypersurface in $\mathbb{R}^d$, whose Gauss-Kronecker curvature is nowhere zero, is a diffeomorphism onto the sphere $\mathbb{S}^{d-1}$ (see e.g., Theorem 5, p.104 in \cite{Thorpe} or Corollary 3.1 in \cite{Ghomi}). Thus, $\beta_s$ depends smoothly on $s$.
We also see that $$\lim_{|\lambda| \rightarrow 0}\max_{s \in \mathbb{S}^{d-1}}|\beta_s|=0.$$

Note that $\beta_s$ could be defined equivalently by using the support functional $h$ of the strictly convex set $K_{\lambda}$. Recall that for each direction $s \in \mathbb{S}^{d-1}$,
$$\displaystyle h(s)=\max_{\xi \in K_{\lambda}}\langle s, \xi\rangle.$$
Then $\beta_s$ is the unique point in $\Gamma_{\lambda}$ such that $\langle s, \beta_s \rangle=h(s)$.

By letting $|\lambda|$ close enough to $0$, we can make sure that $(-\lambda)^{1/2} \in \phi(U_0)$. Then
\begin{equation}
\label{E:beta_s_in_V}
\{k_0+it\beta_{s}, (t,s) \in [0,1] \times \mathbb{S}^{d-1}\} \subset V.
\end{equation}

We can now state the main result of the paper.
\begin{thm}
\label{main}
Suppose conditions \textbf{A1-A5} are satisfied.
For $\lambda<0$ sufficiently close to $0$ (depending on the dispersion branch $\lambda_j$ and the operator $L$), the Green's function $G_{\lambda}$ of $L$ at $\lambda$ admits the following asymptotics as $|x-y| \rightarrow \infty$:
\begin{equation}
\label{main_asymp}
\begin{split}
G_{\lambda}(x,y)&=\frac{e^{(x-y)(ik_{0}-\beta_{s})}}{(2\pi|x-y|)^{(d-1)/2}}\frac{|\nabla E(\beta_s)|^{(d-3)/2}}{\det{(-\mathcal{P}_s \Hess{(E)}(\beta_{s})\mathcal{P}_s)}^{1/2}}\frac{\phi_{k_{0}+i\beta_{s}}(x)\overline{\phi_{k_{0}-i\beta_{s}}(y)}}{(\phi_{k_{0}+i\beta_{s}},\phi_{k_{0}-i\beta_{s}})_{L^{2}(\mathbb{T})}}
\\&+e^{(y-x)\cdot \beta_{s}}r(x,y).
\end{split}
\end{equation}
Here $\displaystyle s=(x-y)/|x-y|$, $\mathcal{P}_s$ is the projection from $\mathbb{R}^{d}$ onto the tangent space of the unit sphere $\mathbb{S}^{d-1}$ at the point $s$, and for any $\varepsilon>0$, there exists a constant $C_{\varepsilon}>0$ (independent of $s$) such that the remainder term $r$ satisfies $|r(x,y)| \leq C_{\varepsilon}|x-y|^{-d/2+\varepsilon}$ when $|x-y|$ is large enough.
\end{thm}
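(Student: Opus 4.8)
The plan is to obtain \mref{main_asymp} from the Floquet--Bloch representation of the resolvent combined with a contour-shift and stationary-phase analysis, following the strategy of \cite{KR} and \cite{MT} but now localizing near the interior gap edge $0$. First I would write, for $\lambda<0$ in the gap and $f$ nice,
\begin{equation*}
R_{\lambda,L}f(x)=\frac{1}{(2\pi)^d}\int_{\mathcal{O}} \big(L(k)-\lambda\big)^{-1}\widehat f(k)\,dk,
\end{equation*}
and split the integrand, using \textbf{(P3)}, into the rank-one singular part coming from the branch $\lambda_j$ and a part holomorphic in a complex neighborhood of $\mathcal{O}$. Because $\lambda<0$ lies in the resolvent set the whole expression is smooth; the point is to extract its decay in $x-y$. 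Only the $\lambda_j$-term carries the slow decay, so the Green's function is, up to an exponentially-better error, the kernel of
\begin{equation*}
k\mapsto \frac{(\,\cdot\,,\phi(\bar k,\cdot))\,\phi(k,\cdot)}{F(k)\,(\lambda_j(k)-\lambda)},
\end{equation*}
integrated over $\mathcal{O}$, where $F$ is as in \mref{eigen_inner_prod}.

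Next I would perform the key contour deformation. Writing $k=k_0+\xi$ and then complexifying $\xi\mapsto\xi+i\beta$, Cauchy's theorem (using \textbf{(P1)}, \textbf{(P2)}, \textbf{(P4)}, the $2\pi\mathbb{Z}^d$-periodicity from Lemma \ref{L:Bloch_variety}, and the fact that $\lambda_j(k)-\lambda$ stays away from $0$ along the way because $\lambda<0<$ band) lets me move the torus of integration in the $\beta$-direction $s=(x-y)/|x-y|$ by the optimal amount $\beta_s\in\Gamma_\lambda$ singled out by \mref{E:gradient_E_s}. The factor $e^{i k\cdot(x-y)}$ becomes $e^{(ik_0-\beta_s)\cdot(x-y)}e^{i\xi\cdot(x-y)}$, producing the anisotropic exponential prefactor $e^{(x-y)(ik_0-\beta_s)}$. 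On the shifted contour $\lambda_j(k_0+i\beta_s+i\xi')-\lambda$ — after using Corollary \ref{evenness_symmetry} to write $E(\beta)=\lambda_j(k_0+i\beta)$ and Taylor-expanding — vanishes exactly on the real directions tangent to $\Gamma_\lambda$ at $\beta_s$, i.e. we have arranged a genuine (real) zero of the denominator along a codimension-one set, and the remaining $x-y$-oscillation $e^{i\xi'\cdot(x-y)}$ is tangential. This is the mechanism that converts the resolvent kernel into an oscillatory integral with a nondegenerate phase.

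Then I would carry out the asymptotic evaluation. After a partition of unity isolating the critical direction, the contribution near $\beta_s$ is, to leading order, a constant (the eigenfunction ratio $\phi_{k_0+i\beta_s}(x)\overline{\phi_{k_0-i\beta_s}(y)}/(\phi_{k_0+i\beta_s},\phi_{k_0-i\beta_s})$, obtained by freezing the smooth numerator at the critical point and using Lemma \ref{evenness_band_functions}(i) to identify $\phi(\overline{k_0+i\beta_s},\cdot)=\overline{\phi_{k_0-i\beta_s}}$) times a scalar oscillatory integral of the form $\int e^{i|x-y|\,\eta\cdot s}\,\frac{d\eta}{Q(\eta)+i0}$ with $Q$ the restriction of the quadratic form $-\mathcal{P}_s\Hess(E)(\beta_s)\mathcal{P}_s$ to the tangent hyperplane. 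Evaluating this by the stationary-phase/Fourier-transform-of-a-quadric computation (exactly the one behind the free-resolvent asymptotics, cf. \cite{MT,KR}) gives the $(2\pi|x-y|)^{-(d-1)/2}$ decay, the curvature factor $\det(-\mathcal{P}_s\Hess(E)(\beta_s)\mathcal{P}_s)^{-1/2}$, and the power $|\nabla E(\beta_s)|^{(d-3)/2}$ coming from reparametrizing $\Gamma_\lambda$ by the Gauss map. Uniformity in $s$ follows from the smooth dependence of $\beta_s$ on $s$ established before the theorem and compactness of $\mathbb{S}^{d-1}$; the error from (a) the holomorphic part of the resolvent, (b) the tail of the partition of unity, and (c) the higher-order terms in the stationary-phase expansion is bounded by $C_\varepsilon |x-y|^{-d/2+\varepsilon}$ after absorbing the common factor $e^{(y-x)\cdot\beta_s}$ into $r(x,y)$.

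The main obstacle I expect is making the contour deformation and the oscillatory-integral estimate \emph{uniform in the direction} $s$, including the behavior as $s$ varies and the critical point $\beta_s$ moves over $\Gamma_\lambda$: one must check that a single complex neighborhood $V$ accommodates all shifts $\{k_0+it\beta_s\}$ (this is \mref{E:beta_s_in_V}), that the denominator stays bounded below off the critical set with constants independent of $s$, and that the stationary-phase remainder estimates hold with $s$-independent constants — all of which ultimately rests on the strict convexity of $K_\lambda$ and the nonvanishing Gauss--Kronecker curvature of $\Gamma_\lambda$. A secondary technical point is passing from the weak (quadratic-form) representation of $R_{\lambda,L}$ to a pointwise statement about the Schwartz kernel $G_\lambda(x,y)$, which requires the elliptic regularity and the $H^2(\mathbb{T})$-holomorphy of $\phi(z,\cdot)$ recorded in \textbf{(P4)}.

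\hfill$\square$
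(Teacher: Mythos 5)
Your overall architecture matches the paper's: Floquet reduction, splitting off the rank-one Riesz-projection part, shifting in the imaginary direction $i\beta_s$, and extracting the asymptotics from a scalar integral (the paper then deduces Theorem \ref{main} from the directionally-local Theorem \ref{main_local} by a finite cover of $\mathbb{S}^{d-1}$). However, two steps in your mechanism are wrong or essentially unjustified. First, after the shift to $k_0+i\beta_s$ the denominator $\lambda_j(k+i\beta_s)-\lambda$ does \emph{not} vanish on a codimension-one set tangent to $\Gamma_\lambda$: by Proposition \ref{singularity} it vanishes only at the single point $k=k_0$. Its Taylor expansion there is $-i\nabla E(\beta_s)\cdot(k-k_0)+\tfrac12(k-k_0)\cdot(-\Hess(E)(\beta_s))(k-k_0)+O(|k-k_0|^3)$, i.e.\ purely imaginary and linear in the normal variable $\xi_1$, with a positive-definite real quadratic part in the tangential variables. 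Consequently your model integral $\int e^{i|x-y|\eta\cdot s}\,d\eta/(Q(\eta)+i0)$ is the wrong one (that is the limiting-absorption regime, $\lambda$ inside a band, and with $Q$ depending only on the tangential variables it does not even reproduce the $|x-y|^{-(d-1)/2}$ decay); the correct model is $\int e^{i|x-y|\xi_1}\,d\xi/(i|\nabla E(\beta_s)|\xi_1+\tfrac12\xi'\cdot Q_s\xi'+\cdots)$, a mixed Fourier--Laplace (heat-kernel type) integral. The paper evaluates it via the Weierstrass preparation factorization $W_s(z)=(z_1-A_s(z'))B_s(z)$ (Proposition \ref{P:factorization_Ws}, with uniformity in $s$) and the representation \mref{E:representation_trick}, followed by a Gaussian computation in $\xi'$ --- not by stationary phase on a quadric.

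Second, the contour deformation cannot be closed by Cauchy's theorem alone ``because $\lambda_j(k)-\lambda$ stays away from $0$ along the way'': at the endpoint $t=1$ the Fermi surface is hit at $k=k_0$, so the limiting integrand is singular. One must prove that the $t\nearrow1$ limit of the resolvents exists and that the limiting integral converges absolutely; this is Lemma \ref{L:bilin}, which uses $d\ge2$ and the anisotropic bound $|\lambda_j(k+i\beta_s)-\lambda|^{-1}\lesssim|v_1|^{-3/4}|v'|^{-1/2}$ near $k_0$. Finally, your error estimate is asserted rather than derived: obtaining $O(|x-y|^{-d/2+\varepsilon})$ (rather than merely $o(|x-y|^{-(d-1)/2})$) requires showing that the complementary part $(1-\eta)(L_s(k)-\lambda)^{-1}+\eta((L_s(k)-\lambda)|_{R(Q_s(k))})^{-1}Q_s(k)$ has a kernel decaying faster than any power uniformly in $s$ (the paper's Section 7, via parameter-dependent toroidal $\Psi$DO parametrices), together with the quantitative remainder analysis of the Taylor term $J$ and of $I_1-cx_0^{-(d-1)/2}$ in Section 6. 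These are the substantive parts of the proof and are missing from the proposal.
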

This result achieves our stated goal of showing the precise (anisotropic) rates of the exponential decay of the Green's function and capturing the additional algebraic decay factor.

\section{Proof of the main theorem \ref{main} and some remarks}

Theorem \ref{main} is a direct consequence of its local (with respect to the direction of $(x-y)$) version:
\begin{thm}
\label{main_local}
Under the hypotheses of Theorem \ref{main} and when $\lambda \approx 0$, for each $\omega \in \mathbb{S}^{d-1}$, there exists a neighborhood $\mathcal{V}_{\omega}$ in $\mathbb{S}^{d-1}$ containing $\omega$ and a smooth function
$e(s)=(e_{s,2},\dots, e_{s,d}):\mathcal{V}_{\omega} \to (T_s \mathbb{S}^{d-1})^{d-1}$ such that the asymptotics
\begin{equation}
\label{main_asymp_local}
\begin{split}
G_{\lambda}(x,y)&=\frac{e^{(x-y)(ik_{0}-\beta_{s})}}{(2\pi|x-y|)^{(d-1)/2}}\left(\frac{|\nabla E(\beta_s)|^{(d-3)}}{\det{(-e_{s,p}\cdot \Hess{(E)}(\beta_{s})e_{s,q})_{2 \leq p,q \leq d}}}\right)^{1/2}
\\&\times \frac{\phi_{k_{0}+i\beta_{s}}(x)\overline{\phi_{k_{0}-i\beta_{s}}(y)}}{(\phi_{k_{0}+i\beta_{s}},\phi_{k_{0}-i\beta_{s}})_{L^{2}(\mathbb{T})}}+e^{(y-x)\cdot \beta_{s}}r(x,y),
\end{split}
\end{equation}
hold for all $(x,y)$ such that $s=(x-y)/|x-y| \in \mathcal{V}_{\omega}$. Furthermore, $|r(x,y)| \leq C(\varepsilon, \omega)|x-y|^{-d/2+\varepsilon}$ for some positive constant $C(\varepsilon, \omega)$ depending on $\varepsilon$ and $\omega$.
\end{thm}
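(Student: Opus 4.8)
The plan is to start from the Floquet--Bloch representation of the resolvent kernel,
\begin{equation*}
G_{\lambda}(x,y)=\frac{1}{(2\pi)^{d}}\int_{\mathcal{O}}e^{ik\cdot(x-y)}\,G_{\lambda,k}(x,y)\,dk ,
\end{equation*}
where $G_{\lambda,k}(x,y)$ is the integral kernel of $(L(k)-\lambda)^{-1}$ on $\mathbb{T}$ and $\mathcal{O}$ is the fundamental domain of Definition \ref{fd} centered at $k_{0}$ (here \textbf{A5} is used). Choosing a smooth partition of unity $\chi_{1}+\chi_{2}=1$ on the torus $\mathbb{T}^{*}$ with $\chi_{1}$ supported in a small ball around $k_{0}$ contained in $V$, I first dispose of the $\chi_{2}$-piece: by \textbf{A2}, \textbf{A3} and \textbf{(P3)} the number $\lambda_{j}(k)-\lambda$ and the rest of the spectrum of $L(k)$ stay bounded away from $0$ on $\operatorname{supp}\chi_{2}$, so $G_{\lambda,k}$ extends holomorphically in $k$ to a complex neighborhood of $\mathbb{R}^{d}$ there, and shifting the periodic contour by $i\beta_{s}$ together with a non-stationary phase estimate shows the $\chi_{2}$-piece equals $e^{(y-x)\cdot\beta_{s}}\,O(|x-y|^{-\infty})$, absorbed into $r$. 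On the support of $\chi_{1}$, \textbf{(P2)}--\textbf{(P4)} let me split $(L(k)-\lambda)^{-1}$ into its rank-one singular part and a holomorphic, uniformly bounded remainder $S_{\lambda,k}$, the latter handled exactly as the $\chi_{2}$-piece. Everything thus reduces to the asymptotics of
\begin{equation*}
I(x,y):=\frac{1}{(2\pi)^{d}}\int_{\mathbb{R}^{d}}\chi_{1}(k)\,e^{ik\cdot(x-y)}\,\frac{\phi(k,x)\,\overline{\phi(k,y)}}{\|\phi(k,\cdot)\|^{2}_{L^{2}(\mathbb{T})}\,(\lambda_{j}(k)-\lambda)}\,dk .
\end{equation*}

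The second step is analytic continuation and contour deformation. Setting $\psi_{z}:=\overline{\phi(\bar z,\cdot)}$, which by \textbf{(P6)} is $H^{2}(\mathbb{T})$-valued holomorphic in $z$ on $V$ and restricts to $\overline{\phi(k,\cdot)}$ for real $k$, and using the nonvanishing holomorphic function $F(z)=(\phi(z,\cdot),\phi(\bar z,\cdot))_{L^{2}(\mathbb{T})}$ from \textbf{(P6)} (equal to $\|\phi(k,\cdot)\|^{2}$ for real $k$), the integrand of $I$ continues holomorphically in $z$, off the variety $\{\lambda_{j}(z)=\lambda\}$, as $e^{iz\cdot(x-y)}\phi(z,x)\psi_{z}(y)\,/\,\big(F(z)(\lambda_{j}(z)-\lambda)\big)$. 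Writing $s=(x-y)/|x-y|$ and letting $\beta_{s}\in\Gamma_{\lambda}$ be the support point ($\langle s,\beta_{s}\rangle=h(s)$), I deform the contour of $I$ toward the affine plane $\Im z=\beta_{s}$. By Corollary \ref{evenness_symmetry}, $E(\beta)=\lambda_{j}(k_{0}+i\beta)$ is real and strictly concave on $V_{0}$ with unique maximum $0$ at $\beta=0$; combined with the Cauchy--Riemann equations (so that $\nabla_{\xi}\lambda_{j}(k_{0}+i\beta)=-i\nabla E(\beta)$ and $\Hess_{\xi}\lambda_{j}(k_{0}+i\beta)=-\Hess(E)(\beta)$) and the defining relation $\nabla E(\beta_{s})=-|\nabla E(\beta_{s})|s$, one checks that the plane $\Im z=\beta_{s}$ meets $\{\lambda_{j}=\lambda\}$ only at the single point $z_{s}:=k_{0}+i\beta_{s}$, where the intersection is tangential with quadratic contact governed, in the directions of $T_{s}\mathbb{S}^{d-1}=\{s\}^{\perp}=T_{\beta_{s}}\Gamma_{\lambda}$, by $\Hess(E)(\beta_{s})$ (negative definite by \textbf{A4} and \textbf{(P5)}). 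The Morse coordinates constructed above (equivalently, polar-type coordinates adapted to the level sets of $E$) are convenient here, turning $\{\lambda_{j}=\lambda\}$ near $z_{s}$ into a standard sphere-like pole set. Along the deformed contour the exponential factor contributes exactly $e^{iz\cdot(x-y)}=e^{(x-y)\cdot(ik_{0}-\beta_{s})}$.

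The core of the argument is then a residue-plus-Laplace computation near $z_{s}$, carried out in coordinates adapted to $s$ built from a smooth local tangent frame $e(s)=(e_{s,2},\dots,e_{s,d})$ of $T_{s}\mathbb{S}^{d-1}$ (which exists only on a neighborhood $\mathcal{V}_{\omega}$ of a fixed $\omega$, generally not globally --- hence the local formulation): one complex ``normal'' variable ($\xi_{1}=\xi\cdot s$) and $d-1$ real ``tangential'' ones. Performing the one-dimensional contour integral in $\xi_{1}$ by closing in the half-plane where $e^{i|x-y|\xi_{1}}$ decays picks up the single pole of $1/(\lambda_{j}(z)-\lambda)$ lying there; its residue contributes the factor $1/|\nabla E(\beta_{\sigma})|$ (up to the $i$'s coming from $z=k_{0}+i\beta$) at the corresponding point $k_{0}+i\beta_{\sigma}\in\Gamma_{\lambda}$, and evaluates the numerator to $\phi_{k_{0}+i\beta_{\sigma}}(x)\,\overline{\phi_{k_{0}-i\beta_{\sigma}}(y)}\,/\,(\phi_{k_{0}+i\beta_{\sigma}},\phi_{k_{0}-i\beta_{\sigma}})_{L^{2}(\mathbb{T})}$, since $\psi_{z}(y)=\overline{\phi(\bar z,y)}$ and $\overline{z_{s}}=k_{0}-i\beta_{s}$. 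What remains is a $(d-1)$-dimensional integral whose amplitude is smooth and nonvanishing at $\beta_{s}$ and which carries a Gaussian factor with Hessian $-\Hess(E)(\beta_{s})$ restricted to $T_{s}\mathbb{S}^{d-1}$ (i.e.\ the second fundamental form of $\Gamma_{\lambda}$ at $\beta_{s}$, after the Gauss-map parametrization); Laplace's method then yields the factor $(2\pi|x-y|)^{-(d-1)/2}$, the determinant $\det\big(-(e_{s,p}\cdot\Hess(E)(\beta_{s})e_{s,q})_{2\le p,q\le d}\big)^{-1/2}$, and, combined with the residue factor, the net power $|\nabla E(\beta_{s})|^{(d-3)/2}$. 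The first-order Laplace correction together with the leftover contour contributions and the $\chi_{2}$- and $S_{\lambda,k}$-terms from the first step are all collected into $e^{(y-x)\cdot\beta_{s}}r(x,y)$.

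The main obstacle is uniformity in $s\in\mathcal{V}_{\omega}$: the contour deformation and indentation, the adapted coordinates, and the frame $e(s)$ must be chosen to depend smoothly on $s$ (using the smooth dependence of $\beta_{s}$ on $s$ established before Theorem \ref{main}), the quadratic contact at $z_{s}$ must be shown nondegenerate uniformly in $s$ (from \textbf{(P5)}), and --- most delicately --- the transition region of the deformed contour between the pole part and its tail must be controlled; it is precisely this uniform remainder estimate, rather than the formally expected $|x-y|^{-(d+1)/2}$, that forces the weaker bound $|r(x,y)|\le C(\varepsilon,\omega)|x-y|^{-d/2+\varepsilon}$. Finally, Theorem \ref{main} follows from Theorem \ref{main_local} by a partition of unity on $\mathbb{S}^{d-1}$, since $\det\big(-(e_{s,p}\cdot\Hess(E)(\beta_{s})e_{s,q})_{2\le p,q\le d}\big)$ computed in any orthonormal frame of $T_{s}\mathbb{S}^{d-1}$ equals $\det(-\mathcal{P}_{s}\Hess(E)(\beta_{s})\mathcal{P}_{s})$, so the local asymptotics glue into the global formula \eqref{main_asymp}.
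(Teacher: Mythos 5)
Your overall architecture matches the paper's: Floquet reduction, localization near $k_0$ by a cutoff, rank-one splitting via the Riesz projection, shift into the complex domain by $i\beta_s$ (the paper realizes this by conjugating $L$ to $L_s=e^{\beta_s\cdot x}Le^{-\beta_s\cdot x}$ and passing to the limit $t\to1^-$ in $L_{t,s}$, which is the operator-theoretic form of your contour deformation and rests on Proposition \ref{singularity}), and extraction of the asymptotics from a scalar integral with a pole along a complex hypersurface. Your ``residue in the normal variable plus Laplace in the tangential variables'' is a legitimate reorganization of the paper's computation, which instead factors the denominator by the Weierstrass preparation theorem as $(z_1-A_s(z'))B_s(z)$ and uses the representation $1/(i\xi_1-A_s(i\xi'))=\int_{-\infty}^{0}e^{(i\xi_1-A_s(i\xi'))w}\,dw$ followed by Fourier analysis in $\xi_1$ and dominated convergence in $\xi'$; both routes produce the same Gaussian with covariance $Q_s$ and the same constants. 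Note, however, that even on your route you need a uniform-in-$s$ Weierstrass-type factorization to know that the pole is a holomorphic graph $\xi_1=-iA_s(i\xi')$ over the tangential variables with $A_s(z')=\tfrac12 z'\cdot Q_sz'+O(|z'|^3)$; your appeal to Morse coordinates for the real function $E$ does not supply this for the complex variety, nor the smooth $s$-dependence of all derivatives of $A_s$ that the uniform estimates require.

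Two steps are genuinely under-proved. First, the claim that the $\chi_2$-piece and the regular part $S_{\lambda,k}$ contribute $e^{(y-x)\cdot\beta_s}O(|x-y|^{-\infty})$ ``by non-stationary phase'' is not a one-liner: integrating by parts in $k$ requires uniform bounds on $\sup_{x,y}|D_k^{N}K(k,x,y)|$ for the Schwartz kernel of the $k$-smooth operator family, and since that kernel is singular on the diagonal of $\mathbb{T}\times\mathbb{T}$ this does not follow from smoothness of the operator-valued map alone. The paper spends Sections 7--8 building a parameter-dependent toroidal pseudodifferential calculus precisely to prove this, the key point being that $k$-derivatives of the symbol improve decay in $\xi$, so that $N>d-2$ derivatives yield a kernel bounded uniformly in $(s,k,x,y)$. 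Second, the remainder bound $|r(x,y)|\le C(\varepsilon,\omega)|x-y|^{-d/2+\varepsilon}$ is asserted rather than derived; in the paper it comes from a separate estimate of the term $J$ carrying the first-order Taylor remainder of the amplitude $\rho$ (whose tangential components only give $O(|x-y|^{-d/2})$, not $O(|x-y|^{-(d+1)/2})$) together with a delicate comparison, using truncations at scales $x_0^{\alpha}$ and $x_0^{\beta}$, between the exact integral and its Gaussian limit --- not from a ``transition region of the deformed contour.'' Without these two arguments the proof is incomplete.
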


\begin{prmainthm}
Observe that for any basis $\{e_{s,l}\}_{2 \leq l \leq d}$ of the tangent space $T_s \mathbb{S}^{d-1}$, $$\displaystyle \det{(-\mathcal{P}_s \Hess{(E)}(\beta_{s})\mathcal{P}_s)}=\det{(-e_{s,p}\cdot \Hess{(E)}(\beta_{s})e_{s,q})_{2 \leq p,q \leq d}}.$$
Now, using of a finite cover of the unit sphere by neighborhoods $\mathcal{V}_{\omega_j}$ in Theorem \ref{main_local}, one obtains Theorem \ref{main}.
\end{prmainthm}

\begin{remark}
$ $

\begin{itemize}
\item The asymptotics \mref{main_asymp}  (or \mref{main_asymp_local}) resemble the formula (1.1) in Theorem 1.1 \cite{MT} when $\lambda$ is below the bottom of the spectrum of the operator. Moreover, as in Theorem 1.1 in \cite{MT06},  by using the Gauss-Kronecker curvature $\mathcal{K}_{\lambda}$, the main result \mref{main_asymp} could be restated as follows:
\begin{equation*}
\begin{split}
G_{\lambda}(x,y)&=\frac{e^{(x-y)(ik_{0}-\beta_{s})}}{(2\pi|x-y|)^{(d-1)/2}}\frac{1}{|\nabla E(\beta_s)| \mathcal{K}_{\lambda}(\beta_s)^{1/2}}
\frac{\phi_{k_{0}+i\beta_{s}}(x)\overline{\phi_{k_{0}-i\beta_{s}}(y)}}{(\phi_{k_{0}+i\beta_{s}},\phi_{k_{0}-i\beta_{s}})_{L^{2}(\mathbb{T})}}\\&+e^{(y-x)\cdot \beta_{s}}O(|x-y|^{-d/2+\varepsilon}).
\end{split}
\end{equation*}

\item Although \mref{main_asymp} is an anisotropic formula, it is not hard to obtain from \mref{main_asymp} an isotropic upper estimate for the Green's function $G_{\lambda}$ based on the \textbf{distance from $\lambda$ to the spectrum of the operator $L$},\footnote{Recall that the spectral edge is assumed to be zero.} e.g., there are some positive constants $C_1, C_2$ (depending only on $L$ and $\lambda_j$) and $C_3$ (which may depend on $\lambda$) such that whenever $|x-y|>C_3$, the following inequality holds:
\begin{equation*}
|G_{\lambda}(x,y)| \leq C_{1}|\lambda|^{(d-3)/4}\frac{e^{-C_{2}|\lambda|^{1/2}|x-y|}}{|x-y|^{(d-1)/2}}\cdot
\end{equation*}

\item If the band edge occurs at finitely many points, rather than a single $k_0$, one just needs to combine the asymptotics coming from all these isolated minima.
\end{itemize}
\end{remark}

Now we outline the proof of Theorem \ref{main_local}. In Section 5, we introduce the tools of Floquet-Bloch theory to reduce the problem to that of finding the asymptotics of a scalar integral, similarly to \cite{KR}.
The purpose of Section 4 is to prepare for Section 5, i.e., to shift the integral from the fundamental domain $\mathcal{O}$ along some purely imaginary directions in $\mathbb{C}^d$.
Section 6 is devoted to estimating the leading term integral by adapting the method similar to the one used in the discrete case \cite{Woess}.
The decay of the remainder $r(x,y)$ comes from some elementary estimates of the difference between the scalar integral and the leading term of its asymptotics. In order to not overload the main text with technicalities, the proofs of some auxiliary statements are postponed till Sections 7-9.

\section{On local geometry of the resolvent set}
The following proposition shows that for any $s \in \mathbb{S}^{d-1}$, $k_0+i\beta_s$ is the only complex quasimomentum having the form of $k+it\beta_{s}$ where $k \in \mathcal{O}, t \in [0,1]$
such that $\lambda$ is in the spectrum of the corresponding fiber operator $L(k+it\beta_s)$. In other words,  by moving from $k \in \mathcal{O}$ in the direction $i\beta_s$, the first time we \textbf{hit the Fermi surface $F_{L, \lambda}$}
(i.e., the spectrum of $L(k)$ meets $\lambda$) is at the value of the quasimomentum $k=k_0+i\beta_s$.
This step is crucial for setting up the scalar integral in the next section, which is solely responsible for the main term asymptotics of our Green's function.

\begin{prop}
\label{singularity}
If $|\lambda|$ is small enough (depending on the dispersion branch $\lambda_j$ and $L$), then $\lambda \in \rho(L(k+it\beta_s))$ if and only if $(k,t) \neq (k_0,1)$.
\end{prop}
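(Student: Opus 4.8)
The plan is to reduce the statement to a one-dimensional spectral question by restricting the analytic family $L(z)$ to the complex line $z = k + it\beta_s$ and exploiting the properties \textbf{(P1)--(P6)} together with the strict concavity of $E$ on $V_0$. First I would observe that, since $|\lambda|$ is small, by \textbf{(P3)} the only eigenvalue of $L(z)$ that can possibly equal $\lambda$ for $z$ near $k_0$ is the simple branch $\lambda_j(z)$; and by \textbf{A2} together with \mref{fl_spectrum} there is a uniform spectral gap separating $\lambda_j$ from all other bands over the compact set $\mathcal{O}$, so for $|\lambda|$ small enough the equation $\lambda \in \sigma(L(k+it\beta_s))$ forces $\lambda_j(k+it\beta_s) = \lambda$. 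Thus the proposition is equivalent to: $\lambda_j(k+it\beta_s)=\lambda$ has the unique solution $(k,t)=(k_0,1)$ among $(k,t)\in\mathcal{O}\times[0,1]$, provided $|\lambda|$ is small.

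Next I would split into the real case $t=0$ and the genuinely complex case $t>0$. For $t=0$: since $0$ is the global minimum of $\lambda_j$ on $\mathcal O$ and, by \textbf{A3}, the only one, while $\lambda<0$, we have $\lambda_j(k) \ge 0 > \lambda$ for all real $k$, so no real solution exists. For $0<t\le 1$: here I would use Corollary \ref{evenness_symmetry} / the function $E$. The point $k$ must lie close to $k_0$ — indeed, for $|\lambda|$ small the resolvent-set geometry (combine \textbf{(P3)} with the uniform gap from \textbf{A2}) confines any solution to the neighborhood $V$, and on $V$ the symmetry argument of Corollary \ref{evenness_symmetry} applies only when the real part is exactly $k_0$; more robustly, write $k = k_0 + \xi$ with $\xi$ small real, so $z = k_0 + (\xi + it\beta_s)$, and use the Taylor expansion $\lambda_j(k_0 + w) = \tfrac12 w^T H w + O(|w|^3)$ with $H = \Hess(\lambda_j)(k_0)$ real positive definite. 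Taking real parts and using \textbf{(P5)} (which gives $2\Re\Hess(\lambda_j)(z) > c\, I$ on $V$ for $c = \min\sigma(H)>0$), one gets $\Re \lambda_j(k_0 + w) = \lambda$ combined with a lower bound; the key computation is that $\Re(w^T \Hess\, w)$ for $w = \xi + it\beta_s$ separates the contribution of $\xi$ (which only increases the real part) from that of $t\beta_s$, forcing $\xi$ to be small and the problem to collapse onto the line $k = k_0$.

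Once reduced to $k = k_0$, the statement becomes: $E(t\beta_s) = \lambda$ has the unique solution $t=1$ in $[0,1]$. This is immediate from the setup: $E$ is strictly concave on $V_0$ with maximum $E(0)=0$ and $\nabla E(\beta)=0$ iff $\beta=0$, so along the ray $t\mapsto E(t\beta_s)$ the function is strictly decreasing in $t$ once it leaves $0$ (its derivative $\beta_s\cdot\nabla E(t\beta_s)$ is negative for $t>0$, since $\beta_s$ points outward from the strictly convex sublevel set and $-\nabla E$ is the outward conormal by \mref{E:gradient_E_s}); by the choice of $\lambda$ and \mref{E:beta_s_in_V}, $\beta_s \in \Gamma_\lambda$ so $E(\beta_s)=\lambda$, giving $t=1$ as the unique root. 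Finally I would package uniformity in $s$: all the smallness thresholds on $|\lambda|$ can be taken independent of $s$ because $s\mapsto\beta_s$ is continuous on the compact sphere and $\max_s|\beta_s|\to 0$ as $|\lambda|\to 0$, so everything is controlled on a fixed neighborhood $V$.

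The main obstacle I anticipate is the complex case $0<t<1$ with $k$ not a priori equal to $k_0$: one must rule out ``sideways'' complex solutions where the real part $k$ drifts away from the symmetry point. The clean way is the real-part estimate above — showing $\Re\lambda_j(k_0+\xi+it\beta_s) \ge \tfrac{c}{2}(|\xi|^2 + \text{something}) - \tfrac12 t^2 \beta_s^T H \beta_s + O(|w|^3)$, which is $>\lambda$ unless $\xi=0$ — but making the cross terms $\Re(\xi^T H (it\beta_s)) $ vanish or be absorbed, and handling the cubic error uniformly, requires care; shrinking $V$ (hence $|\lambda|$) as in \textbf{(P5)} is exactly what makes this work. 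I would expect the authors to formalize this via a Rouché-type or direct convexity argument on the function $t \mapsto \Re\lambda_j(k+it\beta_s)$, showing it is strictly concave in $t$ on $[0,1]$ and strictly less than its value at a maximizing real $k$ unless $k=k_0$.
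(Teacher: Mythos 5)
Your overall strategy — split into a neighborhood of $k_0$ and its complement, reduce to the single branch $\lambda_j$ near $k_0$, take real parts of a Taylor expansion, and use the strict concavity of $E$ — is the same as the paper's. But there are two genuine gaps where you gesture at the argument without supplying it, and both are exactly where the paper does real work.

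First, you dispose of the region away from $k_0$ by saying the ``resolvent-set geometry confines any solution to the neighborhood $V$.'' This is the harder half and cannot be read off \textbf{A2}, which is a statement about real $k$ only; for complex $k+it\beta_s$ the operator $L(k+it\beta_s)$ is non-self-adjoint and you need an actual estimate. The paper handles the set $N_{s,C}=\{k+it\beta_s:\ |k-k_0|\ge C\}$ by a Kato-style upper semicontinuity argument: writing $L(k+z)-\lambda = \big(I + (L(k+z)-L(k))(L(k)-\lambda)^{-1}\big)(L(k)-\lambda)$, bounding $\|(L(k)-\lambda)^{-1}\|$ using the spectral gap for real $k$, and bounding $\|(L(k+z)-L(k))(L(k)-\lambda)^{-1}\|$ by an energy estimate so the Neumann series converges once $|\beta_s|$ is small. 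You would need to supply something of this sort. Second, in the near-$k_0$ region you expand $\lambda_j$ around $k_0$ with increment $w=\xi+it\beta_s$ and then worry, correctly, about cross terms and the uniform control of $O(|w|^3)$ when $|\xi|$ and $t|\beta_s|$ are comparable. The paper sidesteps this entirely: it Taylor-expands around the \emph{moving} point $k_0+it\beta_s$ rather than around $k_0$, and then invokes the evenness of $\lambda_j$ at $k_0$ (Lemma \ref{evenness_band_functions} via Remark \ref{R:CR}) to conclude that $\partial^\alpha\lambda_j(k_0+it\beta_s)$ is purely imaginary for odd $|\alpha|$. Taking real parts therefore annihilates all odd-order terms, leaving $E(\beta_s)-E(t\beta_s)$ on the left (bounded above by $(1-t)\lambda\le 0$ by concavity) against a Hessian term bounded below via \textbf{(P5)} plus a fourth-order remainder on the right — a clean contradiction unless $k=k_0$. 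This ``expand at the base point on the imaginary line and exploit evenness'' step is the missing idea; without it your cubic error does not close. (Minor point: the $k=k_0$, $t<1$ case you treat explicitly via strict concavity is correct and indeed needed, since the paper's contradiction argument assumes $k\ne k_0$.)
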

The proof of this proposition is presented in Subsection \ref{subsec:Proposition proofs}.

\section{A Floquet reduction of the problem}\label{sec:floquet}

We recall here some basic properties of the Floquet transform and then apply this transform to reduce our problem to finding asymptotics of a scalar integral expression, which is close to the one arising when dealing with the Green's function of the Laplacian at a small negative level $\lambda$. As in \cite{KR}, the idea is to show that only the branch of the dispersion relation $\lambda_j$ appearing in the Assumption\textbf{ A}  dominates the asymptotics.

\subsection{The Floquet transform}

Let us consider a sufficiently fast decaying function $f(x)$ (to begin with, compactly supported functions) on $\mathbb{R}^d$.
We need the following transform that plays the role of the Fourier transform for the periodic case \cite{K,RS4}. In fact, it is a version of the Fourier transform on the group $\mathbb{Z}^d$ of periods. We use the following version, which is slightly different from the one used in \cite{KR}.
\begin{defi}
\label{Floquet transforms}
The \textbf{Floquet transform} $\mathcal{F}$
\begin{equation*}
f(x) \rightarrow \widehat{f}(k,x)
\end{equation*}
maps a function $f$ on $\mathbb{R}^d$ into a function $\widehat{f}$ defined on $\mathbb{R}^d \times \mathbb{R}^d$ in the following way:

\begin{equation*}
\widehat{f}(k,x):=\sum_{\gamma \in \mathbb{Z}^d}f(x+\gamma)e^{-ik\cdot(x+\gamma)}.
\end{equation*}
\end{defi}
From the above definition, one can see that $\widehat{f}$ is $\mathbb{Z}^d$-periodic in the $x$-variable and satisfies a cyclic condition with respect to $k$:
\[   \left\{
\begin{array}
{ll}
\label{E:periodic}
      \widehat{f}(k,x+\gamma)=\widehat{f}(k,x), \quad \forall \gamma \in \mathbb{Z}^d \\
      \widehat{f}(k+2\pi \gamma,x)=e^{-2\pi i \gamma \cdot x}\widehat{f}(k,x), \quad \forall \gamma \in \mathbb{Z}^d \\
\end{array}
.\right. \]

Thus, it suffices to consider the Floquet transform $\widehat{f}$ as a function defined on $\mathcal{O} \times \mathbb{T}$. Usually, we will regard $\widehat{f}$ as a function $\widehat{f}(k,\cdot)$ in $k$-variable in $\mathcal{O}$ with values in the function space $L^{2}(\mathbb{T})$.

For our purpose, we need to list some well-known results of the Floquet transform (see e.g., \cite{K}):
\begin{lemma}
\label{L:floquet}

1. The transform $\mathcal{F}$ is an isometry of $L^{2}(\mathbb{R}^d)$ onto
\begin{equation*}
\int_{\mathcal{O}}^{\oplus}L^{2}(\mathbb{T})=L^{2}(\mathcal{O},L^{2}(\mathbb{T}))
\end{equation*}
and of $H^{2}(\mathbb{R}^d)$ into
\begin{equation*}
\int_{\mathcal{O}}^{\oplus}H^{2}(\mathbb{T})=L^{2}(\mathcal{O},H^{2}(\mathbb{T})).
\end{equation*}

2. The inversion $\mathcal{F}^{-1}$ is given by the formula
\begin{equation}
\label{E:inversion1}
f(x)=(2\pi)^{-d}\int_{\mathcal{O}}e^{ik \cdot x}\widehat{f}(k,x)dk, \quad x \in \mathbb{R}^d.
\end{equation}

By using cyclic conditions of $\widehat{f}$, we obtain an alternative inversion formula
\begin{equation}
\label{E:inversion2}
f(x)=(2\pi)^{-d}\int_{\mathcal{O}}e^{ik \cdot x}\widehat{f}(k,x-\gamma)dk, \quad x \in W+\gamma.
\end{equation}

3. The action of any $\Z^d$-periodic elliptic operator $L$ (not necessarily self-adjoint) in $L^{2}(\mathbb{R}^d)$ under the Floquet transform $\mathcal{F}$ is given by
\begin{equation*}
\mathcal{F}L(x,D)\mathcal{F}^{-1}=\int_{\mathcal{O}}^{\oplus}L(x,D+k)dk=\int_{\mathcal{O}}^{\oplus}L(k)dk,
\end{equation*}
where $L(k)$ is defined in \mref{conjugatingLk}.

Equivalently,
\begin{equation*}
\widehat{Lf}(k)=L(k)\widehat{f}(k), \quad \forall f \in H^{2}(\mathbb{R}^d).
\end{equation*}

4. (\textbf{A Paley-Wiener theorem for $\mathcal{F}$}.) Let $\phi(k,x)$ be a function defined on $\mathbb{R}^d \times \mathbb{R}^d$ such that for each $k$, it belongs to the Sobolev space $H^{s}(\mathbb{T})$ for $s\in \mathbb{R}^{+}$ and \emph{satisfies the cyclic condition} in $k$-variable. Then
\begin{enumerate}
\item Suppose the mapping $k \rightarrow \phi(k,\cdot)$ is a $C^\infty$-map from $\mathbb{R}^d$ into the Hilbert space $H^{s}(\mathbb{T})$. Then $\phi(k,x)$ is the Floquet transform of a function $f \in H^{s}(\mathbb{R}^d)$ such that for any compact set $K$ in $\mathbb{R}^d$ and any $N>0$, the norm $\|f\|_{H^{s}(K+\gamma)} \leq C_{N}|\gamma|^{-N}$.
In particular, by Sobolev's embedding theorem, if $s>d/2$, then the pointwise estimation holds:
\begin{equation*}
|f(x)|\leq C_{N}(1+|x|)^{-N}, \quad \forall N>0.
\end{equation*}
\item Suppose the mapping $k \rightarrow \phi(k,\cdot)$ is an analytic map from $\mathbb{R}^d$ into the Hilbert space $H^{s}(\mathbb{T})$. Then $\phi(k,x)$ is the Floquet transform of a function $f \in H^{s}(\mathbb{R}^d)$ such that for any compact set $K$ in $\mathbb{R}^d$, one has $\|f\|_{H^{s}(K+\gamma)} \leq Ce^{-C|\gamma|}$.
In particular, by Sobolev's embedding theorem, if $s>d/2$, then the pointwise estimation holds:
\begin{equation*}
|f(x)|\leq C e^{-C|x|}.
\end{equation*}
\end{enumerate}
\end{lemma}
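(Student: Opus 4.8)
The plan is to prove the four assertions in turn, reducing each to elementary Fourier analysis on the lattice $\mathbb{Z}^d$ of periods and then extending from a dense class of nice functions by continuity. For the isometry in Part~1, I would first check it on $f\in C_c^\infty(\mathbb{R}^d)$, for which the series defining $\widehat f$ is a finite sum. Since $|e^{-ik\cdot(x+\gamma)}|=1$,
\[
\int_{\mathcal{O}}\!\int_{W}|\widehat f(k,x)|^2\,dx\,dk=\int_{W}\sum_{\gamma,\gamma'}f(x+\gamma)\overline{f(x+\gamma')}\Big(\int_{\mathcal{O}}e^{-ik\cdot(\gamma-\gamma')}\,dk\Big)dx .
\]
As $\mathcal{O}$ is a fundamental domain for $2\pi\mathbb{Z}^d$ and $\gamma-\gamma'\in\mathbb{Z}^d$, the inner integral equals $(2\pi)^d\delta_{\gamma\gamma'}$, so the right-hand side is $(2\pi)^d\|f\|_{L^2(\mathbb{R}^d)}^2$; thus $\mathcal{F}$ is an isometry once $\int_{\mathcal{O}}^{\oplus}L^2(\mathbb{T})$ carries the normalized measure $(2\pi)^{-d}dk$ (the convention made explicit by the factor $(2\pi)^{-d}$ in \mref{E:inversion1}). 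Density of $C_c^\infty(\mathbb{R}^d)$ extends $\mathcal{F}$ isometrically to $L^2(\mathbb{R}^d)$, and surjectivity follows once Part~2 is available, since the operator defined by \mref{E:inversion1} is then a two-sided inverse. That $\mathcal{F}$ maps $H^2(\mathbb{R}^d)$ into $L^2(\mathcal{O},H^2(\mathbb{T}))$ is the same computation applied to $D^\alpha f$, $|\alpha|\le 2$, using that $D^\alpha$ commutes with $\mathbb{Z}^d$-translations.

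For Part~2, the inversion formula \mref{E:inversion1} is a one-line computation: inserting the definition of $\widehat f$ and using $\int_{\mathcal{O}}e^{-ik\cdot\gamma}\,dk=(2\pi)^d\delta_{\gamma 0}$ kills every term of the $\gamma$-sum but $f(x)$, and the variant \mref{E:inversion2} then follows from the $\mathbb{Z}^d$-periodicity of $\widehat f$ in $x$. For Part~3, the key point is that the coefficients of $L$ are $\mathbb{Z}^d$-periodic, so $(Lf)(x+\gamma)=\big(L(x,D)[f(\,\cdot\,+\gamma)]\big)(x)$; since $e^{-ik\cdot\gamma}$ is constant in $x$ it passes through $L(x,D)$, and summing over $\gamma$ gives $\widehat{Lf}(k,x)=e^{-ik\cdot x}L(x,D)e^{ik\cdot x}\widehat f(k,x)$, which is $L(k)\widehat f(k,x)$ by the conjugation identity \mref{conjugatingLk}. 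This is verified first for $f\in C_c^\infty(\mathbb{R}^d)$ and then for $f\in H^2(\mathbb{R}^d)$ by continuity, using Part~1.

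For the Paley--Wiener statement of Part~4, I would ``untwist'' $\phi$ by setting $\psi(k,x):=e^{ik\cdot x}\phi(k,x)$; the cyclic condition on $\phi$ translates exactly into $2\pi\mathbb{Z}^d$-periodicity of the map $k\mapsto\psi(k,\cdot)$, so this is an $H^s(W)$-valued periodic function on $\mathbb{T}^{*}$ of the same regularity class (smooth or analytic) as $\phi$. Defining $f$ by \mref{E:inversion1} (so that $\widehat f=\phi$), the restriction $f(\,\cdot\,+\gamma)|_W$ is, up to the constant $(2\pi)^{-d}$, the $\gamma$-th vector-valued Fourier coefficient of $\psi$. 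If $\psi$ is $C^\infty$, repeated integration by parts in $k$ — the boundary contributions on $\partial\mathcal{O}$ cancel by periodicity in $k$ — yields $\|f\|_{H^s(W+\gamma)}\le C_N|\gamma|^{-N}$ for every $N$; if $\psi$ extends holomorphically to a strip around $\mathbb{R}^d$, shifting the $k$-contour off the real axis by a fixed imaginary vector (again with no boundary term, by periodicity) yields $\|f\|_{H^s(W+\gamma)}\le Ce^{-c|\gamma|}$. In both cases Sobolev embedding $H^s(\mathbb{T})\hookrightarrow C(\mathbb{T})$ for $s>d/2$ upgrades these unit-cell $H^s$-bounds to the asserted pointwise bounds on $f$.

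The bulk of the argument is routine bookkeeping with the normalization constant $(2\pi)^d$ and with the two quasiperiodicity conventions (periodic in $x$, cyclic in $k$). The only point requiring genuine care is Part~4: one must justify differentiating, or holomorphically continuing, a Banach-space-valued integrand under the integral sign, and — crucially — obtain the constants $C_N$, $C$, $c$ \emph{uniformly} in $\gamma$ and in the compact set $K$. This amounts to knowing that $\partial_k^N\psi(k,\cdot)$ is bounded in $H^s(\mathbb{T})$ uniformly on the compact $\overline{\mathcal{O}}$ in the smooth case, and that the holomorphic extension of $\psi$ is bounded on a uniform strip around $\mathbb{R}^d$ in the analytic case; both follow from the stated hypotheses combined with $2\pi\mathbb{Z}^d$-periodicity, but they are the substantive input of the proof.
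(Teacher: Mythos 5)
The paper does not prove this lemma; it is stated as background and is attributed to Kuchment's monograph~\cite{K}, so there is no in-paper proof to compare against. Your argument is correct and is essentially the standard one from that reference: viewing the Floquet transform as a vector-valued Fourier series in $k$, the isometry and inversion reduce to Parseval's identity and orthogonality of lattice characters over the fundamental domain $\mathcal{O}$, Part~3 is translation-equivariance of a periodic operator combined with \mref{conjugatingLk}, and the Paley--Wiener decay comes, after untwisting $\psi(k,x)=e^{ik\cdot x}\phi(k,x)$ into a genuinely $2\pi\mathbb{Z}^d$-periodic $H^s(W)$-valued map, from the classical facts that smooth (resp.\ analytic) periodic vector-valued functions have rapidly (resp.\ exponentially) decaying Fourier coefficients, obtained by integration by parts (resp.\ a contour shift) followed by Sobolev embedding.
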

\subsection{The Floquet reduction}
$ $
\vspace{0.3cm}

The Green's function $G_{\lambda}$ of $L$ at $\lambda$ is the Schwartz kernel of the resolvent operator $R_{\lambda}=(L-\lambda)^{-1}$. Fix a $\lambda<0$ such that the statement of Proposition \ref{singularity} holds. For any $s \in \mathbb{S}^{d-1}$ and $t \in [0,1]$, we consider the following operator with real coefficients on $\mathbb{R}^d$:
\begin{equation}
\label{E:L_t_s}
L_{t,s}:=e^{t\beta_s\cdot x}Le^{-t\beta_s\cdot x}.
\end{equation}

For simplicity, we write $L_s:=L_{1,s}$ and note that $L_{0,s}=L$.
Due to self-adjointness of $L$, the adjoint of $L_{t,s}$ is
\begin{equation}
\label{E:adjoint_L_t_s}
L_{t,s}^{*}=L_{-t,s}.
\end{equation}

By definition, $L_{t,s}(k)=L(k+it\beta_s)$ for any $k$ in $\mathbb{C}^d$ and therefore, \mref{fl_spectrum} yields
\begin{equation}
\label{E:band_functions_L_t_s}
\sigma(L_{t,s})=\bigcup_{k \in \mathcal{O}} \sigma(L(k+it\beta_s)) \supseteq \{\lambda_{j}(k+it\beta_s)\}_{k\in \mathcal{O}}.
\end{equation}

The Schwartz kernel $G_{s,\lambda}$ of the resolvent operator $R_{s,\lambda}:=(L_s-\lambda)^{-1}$ is
\begin{equation}
\label{E:Green_Ls}
G_{s,\lambda}(x,y)=e^{\beta_s\cdot x}G_{\lambda}(x,y)e^{-\beta_s \cdot y}=e^{\beta_s \cdot (x-y)}G_{\lambda}(x,y).
\end{equation}
Thus, instead of finding asymptotics of $G_{\lambda}$, we can focus on the asymptotics of $G_{s,\lambda}$.

By \mref{E:band_functions_L_t_s} and Proposition \ref{singularity}, $\lambda$ is not in the spectrum of $L_{t,s}$ for any $s \in \mathbb{S}^{d-1}$ and $t \in [0,1)$.
Let us consider
\begin{equation*}
R_{t,s,\lambda}f:=(L_{t,s}-\lambda)^{-1}f, \quad f \in L^{2}_{comp}(\mathbb{R}^d),
\end{equation*}
where $L^{2}_{comp}$ stands for compactly supported functions in $L^2$.

Applying Lemma \ref{L:floquet}, we have
\begin{equation*}
\widehat{R_{t,s,\lambda}f}(k)=(L_{t,s}(k)-\lambda)^{-1}\widehat{f}(k), \quad (t,k) \in [0,1) \times \mathcal{O}.
\end{equation*}
We consider the sesquilinear form
\begin{equation*}
(R_{t,s,\lambda}f,\varphi)=(2\pi)^{-d}\int_{\mathcal{O}}\left( (L_{t,s}(k)-\lambda)^{-1}\widehat{f}(k), \widehat{\varphi}(k) \right)dk,
\end{equation*}
where $\varphi \in L^2_{comp}(\mathbb{R}^d)$.

In the next lemma
(see Subsection 9.2 for its proof), we show the weak convergence of $R_{t,s,\lambda}$ in $L^{2}_{comp}$ as $t \nearrow 1$  and  introduce the limit operator $\displaystyle R_{s,\lambda}=\lim_{t \rightarrow 1^{-}}R_{t,s,\lambda}$. The limit operator
$\displaystyle R_{s,\lambda}$ is central in our study of the asymptotics of the Green's function.

\begin{lemma}
\label{L:bilin}
Let $d\geq 2$. Under Assumption A, the following equality holds:
\begin{equation}
\label{E:limit}
\lim_{t \rightarrow 1^-}(R_{t,s,\lambda}f,\varphi)=(2\pi)^{-d}\int_{\mathcal{O}}\left( L_{s}(k)-\lambda)^{-1}\widehat{f}(k),\widehat{\varphi}(k)\right)dk.
\end{equation}
The integral in the right hand side of \mref{E:limit} is absolutely convergent for $f,\varphi$ in $L^{2}_{comp}(\mathbb{R}^d)$. Thus, the Green's function $G_{s,\lambda}$ is the integral kernel of the operator $R_{s,\lambda}$ defined as follows
\begin{equation}
\label{E:R_s}
\widehat{R_{s,\lambda}f}(k)=(L_{s}(k)-\lambda)^{-1}\widehat{f}(k).
\end{equation}
\end{lemma}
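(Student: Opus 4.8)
The plan is to establish Lemma \ref{L:bilin} by a dominated-convergence argument in the quasimomentum variable $k \in \mathcal{O}$, controlling the only singular contribution, which comes from the simple band $\lambda_j$. First I would decompose, for each $k \in \mathcal{O}$ and each $t \in [0,1)$, the resolvent $(L_{t,s}(k)-\lambda)^{-1}$ using the Riesz spectral projection $P_j(k+it\beta_s)$ onto the one-dimensional eigenspace of $L(k+it\beta_s)$ associated with the eigenvalue $\lambda_j(k+it\beta_s)$ (valid on the neighbourhood $V$ of $k_0$ supplied by \textbf{(P1)--(P4)}, after shrinking $U_0$ so that $k_0 + i V_0 \subset V$). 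Writing $(L_{t,s}(k)-\lambda)^{-1} = \dfrac{P_j(k+it\beta_s)}{\lambda_j(k+it\beta_s)-\lambda} + S_{t,s}(k)$, where $S_{t,s}(k)$ is the holomorphic part, the point is that $S_{t,s}(k)$ stays uniformly bounded as $t \to 1^-$ for $k$ in a small neighbourhood of $k_0$ (by \textbf{(P3)}, since no other eigenvalue enters $\overline{B}(0,\epsilon_0)$), while for $k$ away from $k_0$ Proposition \ref{singularity} guarantees that $\lambda$ stays a positive distance from $\sigma(L(k+it\beta_s))$ uniformly in $t \in [0,1]$, so the whole resolvent is uniformly bounded there. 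Hence the integrand converges pointwise a.e.\ in $k$ to $\big((L_s(k)-\lambda)^{-1}\widehat f(k),\widehat\varphi(k)\big)$.

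Next I would produce an integrable majorant. Away from $k_0$ this is immediate from the uniform bound just described together with $\widehat f,\widehat\varphi \in L^2(\mathcal{O},L^2(\mathbb{T}))$ (which holds because $f,\varphi$ are compactly supported, hence in $H^s$ for all $s$, so by the Paley--Wiener Lemma \ref{L:floquet}(4) $\widehat f(k,\cdot)$ is even real-analytic in $k$ and in particular bounded on $\overline{\mathcal O}$). The delicate region is a small ball around $k_0$. There the essential estimate is a lower bound on $|\lambda_j(k+i\beta_s)-\lambda|$: using Corollary \ref{evenness_symmetry} and the Morse/concavity structure of $E(\beta)=\lambda_j(k_0+i\beta)$, expand $k \mapsto \lambda_j(k+i\beta_s)$ near $k=k_0$. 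Writing $k = k_0 + \kappa$ with $\kappa \in \mathbb R^d$ small, analyticity and the evenness from Lemma \ref{evenness_band_functions}(ii) give $\lambda_j(k_0+\kappa+i\beta_s) = \lambda_j(k_0+i\beta_s) + \big(\text{quadratic in }\kappa\big) + \text{(odd, hence }O(|\kappa|\,|\beta_s|)\text{)} + \dots$, and since $\lambda_j(k_0+i\beta_s)=E(\beta_s)=\lambda$ exactly, the real part of the denominator is bounded below by $c|\kappa|^2$ for some $c>0$ (the real part of the Hessian being positive definite by \textbf{(P5)}). Therefore $|\lambda_j(k+i\beta_s)-\lambda| \gtrsim |k-k_0|^2$, so the singular part of the integrand is dominated by $C|k-k_0|^{-2}\|P_j(k+i\beta_s)\| \,\|\widehat f(k)\|\,\|\widehat\varphi(k)\|$, and $|k-k_0|^{-2}$ is integrable over a neighbourhood of $k_0$ in $\mathbb R^d$ precisely because $d \geq 2$. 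This is the one place the hypothesis $d \geq 2$ is used, and it is the main obstacle: getting the quadratic lower bound to hold uniformly in $s \in \mathbb S^{d-1}$ and uniformly for $t$ near $1$ (not just at $t=1$) requires the open neighbourhood $V$ and property \textbf{(P5)} to be set up so the quadratic form controls the perturbation for all the relevant imaginary shifts $t\beta_s$, $t \in [0,1]$.

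With pointwise convergence and an integrable majorant in hand, the dominated convergence theorem gives \mref{E:limit}, and the same majorant shows the right-hand integral is absolutely convergent for $f,\varphi \in L^2_{comp}(\mathbb R^d)$. Finally, the identity \mref{E:limit} exhibits a bounded sesquilinear form on $L^2_{comp}(\mathbb R^d) \times L^2_{comp}(\mathbb R^d)$, and since $\mathcal F$ is an isometry (Lemma \ref{L:floquet}(1)), the operator $R_{s,\lambda}$ defined by $\widehat{R_{s,\lambda}f}(k) = (L_s(k)-\lambda)^{-1}\widehat f(k)$ is well defined and its Schwartz kernel is $G_{s,\lambda}$, which by \mref{E:Green_Ls} is $e^{\beta_s\cdot(x-y)}G_\lambda(x,y)$; this is exactly the weak limit of the $R_{t,s,\lambda}$. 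I would close by noting that the argument also yields, as a by-product, that $R_{s,\lambda}$ maps $L^2_{comp}(\mathbb R^d)$ into $L^2_{loc}(\mathbb R^d)$, which is all that is needed to make sense of $G_{s,\lambda}$ as a locally integrable kernel in the subsequent sections.
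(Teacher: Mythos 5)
The overall architecture of your proof (decompose the resolvent near $k_0$ into a scalar singular piece plus a regular remainder, then apply dominated convergence) matches the paper's, but your key quantitative estimate is too weak and the argument breaks down exactly at $d=2$, which is the borderline case the lemma is designed to cover. You claim that $|\lambda_j(k+i\beta_s)-\lambda|\gtrsim |k-k_0|^2$ and that $|k-k_0|^{-2}$ is integrable near $k_0$ in $\mathbb R^d$ ``precisely because $d\ge 2$''. In fact
\[
\int_{|\kappa|<\epsilon}|\kappa|^{-2}\,d\kappa \;=\; \omega_{d-1}\int_0^\epsilon r^{\,d-3}\,dr,
\]
which converges only for $d\ge 3$; for $d=2$ it diverges logarithmically. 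So the majorant you produce is not integrable in the two-dimensional case, and dominated convergence does not apply.

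The missing idea is that you are discarding the term that actually saves the day. You note (correctly) that the odd part of the Taylor expansion is $O(|\kappa|\,|\beta_s|)$ and then treat it as negligible, keeping only the real quadratic part. But $\lambda<0$ is strictly inside the gap, so $\beta_s\neq 0$ and $\nabla E(\beta_s)\neq 0$; the Cauchy--Riemann relations give $\nabla_k\lambda_j(k_0+i\beta_s)=-i\nabla E(\beta_s)$, which is a nonzero purely imaginary vector pointing along $is$ (by \eqref{E:gradient_E_s}). Writing $v=k-k_0=(v_1,v')$ with $v_1=s\cdot(k-k_0)$ and $v'$ the orthogonal components, the expansion reads
\[
\lambda_j(k+i\beta_s)-\lambda \;=\; i\,|\nabla E(\beta_s)|\,v_1 \;-\;\tfrac12\, v\cdot \Hess(E)(\beta_s)\, v \;+\; O(|v|^3),
\]
so $|\lambda_j(k+i\beta_s)-\lambda|^2\gtrsim |v_1|^2+|v|^4\ge |v_1|^2+|v'|^4$. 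The linear term controls the $s$-direction and the quadratic term controls the transverse directions; the resulting bound is anisotropic. This is exactly how the paper proceeds: by H\"older/AM--GM, $|v_1|^2+|v'|^4\gtrsim |v_1|^{3/2}|v'|$, hence $|\lambda_j(k+i\beta_s)-\lambda|^{-1}\lesssim |v_1|^{-3/4}|v'|^{-1/2}$, and the two factors are integrable near $0$ in $\mathbb R$ and $\mathbb R^{d-1}$ respectively for every $d\ge 2$. Your isotropic bound $|v|^{-2}$ throws this structure away and is genuinely insufficient for $d=2$.

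A secondary, smaller difference: to isolate the scalar singularity near $k_0$ you use the Riesz projection $P_j(k+it\beta_s)$, whereas the paper invokes Lemma~\ref{L:Bloch_variety} to write $(L(k)-\lambda)^{-1}=h(k,\lambda)^{-1}I(k,\lambda)$ with $h$ vanishing simply on the Bloch variety, then identifies $h^{-1}$ with $(\lambda_j(\cdot+it\beta_s)-\lambda)^{-1}$ up to a non-vanishing analytic factor on a disc around $k_0$. Both routes are legitimate; the Riesz-projection decomposition is arguably more elementary. Likewise, the uniform control away from $k_0$ as $t\to 1^-$ should be justified via the estimate \eqref{E:semicontinuity} from the proof of Proposition~\ref{singularity} (as the paper does) rather than quoted from the statement of the proposition alone, since the proposition asserts $\lambda\in\rho(L(k+it\beta_s))$ but not a uniform resolvent bound; this is easy to patch, but the integrability gap above is the substantive error.
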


\subsection{Singling out the principal term in $R_{s,\lambda}$}

By \mref{E:R_s}, the Green's function $G_{s,\lambda}$ is the integral kernel of the operator $R_{s,\lambda}$ with the domain $L^{2}_{comp}(\mathbb{R}^d)$. The inversion formula \mref{E:inversion1} gives
\begin{equation*}
\label{E:R_s_exp1}
R_{s,\lambda}f(x)=(2\pi)^{-d}\int_{\mathcal{O}}e^{ik\cdot x}(L_{s}(k)-\lambda)^{-1}\widehat{f}(k,x)dk, \quad x \in \mathbb{R}^d.
\end{equation*}
The purpose of this part is to single out the part of the above integral that is responsible for the leading term of the Green's function asymptotics.

To find the Schwartz kernel of $R_{s,\lambda}$, it suffices to consider functions $f \in C^\infty_c(\mathbb R^d)$. Our first step is to localize the integral around the point $k_0$.
Let us consider a connected neighborhood $V$ of $k_0$ on which there exist nonzero $\mathbb{Z}^d$-periodic (in $x$) functions $\phi_{z}(x), z \in V$
satisfying 1) $L(z)\phi_z=\lambda_j(z)\phi_z$ and 2) each $\phi_z$ spans the eigenspace corresponding to the eigenvalue $\lambda_j(z)$ of the operator $L(z)$. According to \textbf{(P3)}, $\lambda_{j}(V) \subseteq B(0,\epsilon_0)$ and $\partial B(0,\epsilon_0) \subseteq \rho(L(z))$ when $z \in V$. For such $z$, let $P(z)$ be the Riesz projection of $L(z)$ that projects $L^{2}(\mathbb{T})$ onto the eigenspace spanned by $\phi_z$, i.e.,
\begin{equation*}
\label{E:Riesz_projector}
P(z)=-\frac{1}{2\pi i}\oint_{|\alpha|=\epsilon_0}(L(z)-\alpha)^{-1}d\alpha.
\end{equation*}
Taking the adjoint, we get
\begin{equation*}
\label{E:dual_Riesz_projector}
P(z)^{*}=-\frac{1}{2\pi i}\oint_{|\alpha|=\epsilon_0}(L(\overline{z})-\alpha)^{-1}d\alpha=P(\overline{z}),
\end{equation*}
which is the Riesz projection from $L^{2}(\mathbb{T})$ onto the eigenspace spanned by $\phi_{\overline{z}}$. Recall that due to \mref{E:beta_s_in_V}, by choosing $|\lambda|$ small enough, there exists $r_0>0$ (independent of $s$) such that $k\pm i\beta_s \in V$ for $k \in \overline{D}(k_0,r_0) \cap \mathbb{R}^d$.
We denote $ P_s(k):=P(k+i\beta_s)$ for such real $k$. Then $P_s(k)$ is the projector onto the eigenspace spanned by $\phi(k+i\beta_s)$ and $P_{s}(k)^*=P(k-i\beta_s)$.
Additionally, due to \textbf{(P6)},
\begin{equation}
\label{E:form_P_s}
P_{s}(k)g=\frac{(g, \phi(k-i\beta_s))_{L^{2}(\mathbb{T})}}{(\phi(k+i\beta_s), \phi(k-i\beta_s))_{L^{2}(\mathbb{T})}}\phi_{s}(k),
\quad \forall g \in L^{2}(\mathbb{T}).
\end{equation}


Let $\eta$ be a cut-off smooth function on $\mathcal{O}$ such that $\supp(\eta) \Subset  D(k_0,r_0)$ and $\eta=1$ around $k_0$.

We decompose $\widehat{f}=\eta \widehat{f}+ (1-\eta)\widehat{f}$. When $k \neq k_0$, the operator $L_{s}(k)-\lambda$ is invertible by Proposition \ref{singularity}. Hence, the following function is well-defined and smooth with respect to $(k,x)$ on $\mathbb{R}^{d} \times \mathbb{R}^d$:
\begin{equation*}
\label{E:good_ug}
\widehat{u_g}(k,x)=(L_{s}(k)-\lambda)^{-1}(1-\eta(k))\widehat{f}(k,x).
\end{equation*}
Using Lemma \ref{L:floquet}, smoothness of $\widehat{u_g}$ implies that $u_g$ has rapid decay in $x$.
Now we want to solve
\begin{equation}
\label{E:solve_eqn}
(L_{s}(k)-\lambda)\widehat{u}(k)=\eta(k)\widehat{f}(k).
\end{equation}
Let $Q_{s}(k)=I-P_{s}(k)$ and we denote the ranges of projectors $P_s(k)$, $Q_s(k)$ by $R(P_s(k)), R(Q_s(k))$ respectively.
We are interested in decomposing the solution $\widehat{u}$ into a sum of the form $\widehat{u_1}+\widehat{u_2}$ where $\widehat{u_1}=P_{s}(k)\widehat{u_1}$ and $\widehat{u_2}=Q_{s}(k)\widehat{u_2}$. Let $\widehat{f_1}=P_{s}(k)\eta(k)\widehat{f}$ and $\widehat{f_2}=Q_{s}(k)\eta(k)\widehat{f}$. Observe that since the Riesz projection $P_s(k)$ commutes with the operator $L_s(k)$ and $R(P_s(k))$ is invariant under the action of $L_s(k)$, we have $Q_s(k)L_s(k)P_s(k)=P_s(k)L_s(k)Q_s(k)=0$ and $Q_s(k)L_s(k)Q_s(k)=L_s(k)Q_s(k)$. Thus, the problem of solving \mref{E:solve_eqn} can be reduced to the following block-matrix structure form
\begin{equation*}
\label{E:block_matrix}
 \left(
    \begin{array}{r@{}c|c@{}l}
  &    \begin{matrix}
         (L_{s}(k)-\lambda)P_{s}(k)
      \end{matrix} & \mbox{0} & \\\hline
  &    \mbox{0} &
       \begin{matrix}
       (L_{s}(k)-\lambda)Q_{s}(k)
      \end{matrix}
    \end{array}
\right)
\left(
\begin{array}{c}
\widehat{u_1}\\ \widehat{u_2}
\end{array}
\right)=\left(
\begin{array}{c}
\widehat{f_1}\\ \widehat{f_2}
\end{array}
\right).
\end{equation*}
When $k$ is close to $k_0$, $$B(0,\epsilon_0) \cap \sigma(L_{s}(k)_{|R(Q_{s}(k))})=B(0,\epsilon_0)\cap \sigma(L(k+i\beta_s))\setminus \{\lambda_{j}(k+i\beta_s)\}=\emptyset.$$
Since $\lambda=\lambda_{j}(k_0+i\beta_s)\in B(0,\epsilon_0)$, $\lambda$ must belong to $\rho(L_{s}(k)|_{R(Q_{s}(k))})$. Hence, the operator function $\widehat{u_2}(k)=(L_s(k)-\lambda)^{-1}Q_s(k)\widehat{f_2}(k)$ is well-defined and smooth in $k$ and hence by Lemma \ref{L:floquet} again, $u_2$ has rapid decay when $|x| \rightarrow \infty$. Indeed, we claim that the Schwartz kernel coming from the operator-valued function $(1-\eta(k))(L_s(k)-\lambda)^{-1}+\eta(k)((L_s(k)-\lambda)|_{R(Q_s(k))})^{-1}Q_s(k)$ decays fast enough to be included in the error term $r(x,y)$ in \mref{main_asymp}. We shall give a microlocal proof of this claim in Section 7.

The $u_1$ term contributes the leading asymptotics for the Schwartz kernel $G_{s,\lambda}$. Therefore, we only need to solve the equation $(L_s(k)-\lambda)P_s(k)\widehat{u_1}=\widehat{f_1}$ on the one-dimensional range of $P_s(k)$.

Applying \mref{E:form_P_s}, we can rewrite
$$\widehat{f_1}(k)=\frac{\eta(k)(\widehat{f},\phi(k-i\beta_s))_{L^{2}(\mathbb{T})}}{(\phi(k+i\beta_s),\phi(k-i\beta_s))_{L^2(\mathbb{T})}}\phi(k+i\beta_s),$$
so that equation becomes
\begin{equation*}
(L_s(k)-\lambda)\frac{(\widehat{u_1},\phi(k-i\beta_s))_{L^{2}(\mathbb{T})}}{(\phi(k+i\beta_s),\phi(k-i\beta_s))_{L^2(\mathbb{T})}}\phi(k+i\beta_s)=\frac{\eta(k)(\widehat{f},\phi(k-i\beta_s))_{L^{2}(\mathbb{T})}}{(\phi(k+i\beta_s),\phi(k-i\beta_s))_{L^2(\mathbb{T})}}\phi(k+i\beta_s).
\end{equation*}
So,
\begin{equation*}
\frac{(\lambda_{j}(k+i\beta_s)-\lambda)(\widehat{u_1},\phi(k-i\beta_s))_{L^{2}(\mathbb{T})}}{(\phi(k+i\beta_s),\phi(k-i\beta_s))_{L^2(\mathbb{T})}}\phi(k+i\beta_s)=\frac{\eta(k)(\widehat{f},\phi(k-i\beta_s))_{L^{2}(\mathbb{T})}}{(\phi(k+i\beta_s),\phi(k-i\beta_s))_{L^2(\mathbb{T})}}\phi(k+i\beta_s).
\end{equation*}
In addition to the equation $\widehat{u_1}=P_s(k)\widehat{u_1}$, $\widehat{u_1}$ must also satisfy
\begin{equation*}
(\lambda_{j}(k+i\beta_s)-\lambda)(\widehat{u_1},\phi(k-i\beta_s))_{L^{2}(\mathbb{T})}=\eta(k)(\widehat{f},\phi(k-i\beta_s))_{L^{2}(\mathbb{T})}.
\end{equation*}
Thus, we define
\begin{equation*}
\label{E:def_u1}
\widehat{u_1}(k,\cdot):=\frac{\eta(k)\phi(k+i\beta_s,\cdot)(\widehat{f},\phi(k-i\beta_s))_{L^{2}(\mathbb{T})}}{(\phi(k+i\beta_s),\phi(k-i\beta_s))_{L^2(\mathbb{T})}(\lambda_j(k+i\beta_s)-\lambda)}.
\end{equation*}
By the inverse Floquet transform \mref{E:inversion1},
\begin{equation*}
u_{1}(x)=(2\pi)^{-d}\int_{\mathcal{O}}e^{ik\cdot x}\frac{\eta(k)\phi(k+i\beta_s,x)(\widehat{f},\phi(k-i\beta_s))_{L^{2}(\mathbb{T})}}{(\phi(k+i\beta_s),\phi(k-i\beta_s))_{L^2(\mathbb{T})}(\lambda_j(k+i\beta_s)-\lambda)}dk,
\end{equation*}
for any $x \in \mathbb{R}^d$.
\subsection{A reduced Green's function.}

We are now ready for setting up \textbf{a reduced Green's function $G_0$}, whose asymptotic behavior reflects exactly the leading term of the asymptotics of the Green's function $G_{s,\lambda}$. We introduce $G_0(x,y)$ (roughly speaking) as the Schwartz kernel of the restriction of the operator $R_{s,\lambda}$ onto the one-dimensional range of $P_s$ (which is the direct integral of idempotents $P_s(k)$) as follows:
\begin{equation*}
u_{1}(x)=\int_{\mathbb{R}^d}G_{0}(x,y)f(y)dy, \quad x \in \mathbb{R}^d,
\end{equation*}
where $f$ is in $L^2_{comp}(\mathbb{R}^d)$.

We recall from \mref{eigen_inner_prod} that $F(k+i\beta_s)$ is the inner product $(\phi(k+i\beta_s),\phi(k-i\beta_s))_{L^2(\mathbb{T})}$.
As in \cite{KR}, we notice that
\begin{equation*}
\begin{split}
u_1(x)&=(2\pi)^{-d}\int_{\mathcal{O}}\int_{\mathbb{T}}e^{ik\cdot x}\eta(k)\widehat{f}(k,y)\frac{\overline{\phi(k-i\beta_s,y)}\phi(k+i\beta_s,x)}{F(k+i\beta_s)(\lambda_j(k+i\beta_s)-\lambda)}dydk\\
&=(2\pi)^{-d}\int_{\mathcal{O}}\eta(k)\int_{[0,1]^d}\sum_{\gamma \in \mathbb{Z}^d}f(y-\gamma)e^{ik\cdot (x+\gamma-y)}\frac{\overline{\phi(k-i\beta_s,y)}\phi(k+i\beta_s,x)}{F(k+i\beta_s)(\lambda_j(k+i\beta_s)-\lambda)}dydk\\
&=(2\pi)^{-d}\int_{\mathcal{O}}\eta(k)\sum_{\gamma \in \mathbb{Z}^d}\int_{[0,1]^d+\gamma}f(y)e^{ik\cdot (x-y)}\frac{\overline{\phi(k-i\beta_s,y+\gamma)}\phi(k+i\beta_s,x)}{F(k+i\beta_s)(\lambda_j(k+i\beta_s)-\lambda)}dydk\\
&=(2\pi)^{-d}\int_{\mathcal{O}}\eta(k)\sum_{\gamma \in \mathbb{Z}^d}\int_{[0,1]^d+\gamma}f(y)e^{ik\cdot (x-y)}\frac{\overline{\phi(k-i\beta_s,y)}\phi(k+i\beta_s,x)}{F(k+i\beta_s)(\lambda_j(k+i\beta_s)-\lambda)}dydk\\
&=(2\pi)^{-d}\int_{\mathbb{R}^d}f(y)\left(\int_{\mathcal{O}}\eta(k)e^{ik\cdot (x-y)}\frac{\overline{\phi(k-i\beta_s,y)}\phi(k+i\beta_s,x)}{F(k+i\beta_s)(\lambda_j(k+i\beta_s)-\lambda)}dk\right)dy.
\end{split}
\end{equation*}
Therefore, our \textbf{reduced Green's function is}
\begin{equation}
\label{E:formula_G0}
G_0(x,y)=(2\pi)^{-d}\int_{\mathcal{O}}\eta(k)e^{ik\cdot (x-y)}\frac{\phi(k+i\beta_s,x)\overline{\phi(k-i\beta_s,y)}}{F(k+i\beta_s)(\lambda_j(k+i\beta_s)-\lambda)}dk.
\end{equation}

\section{Asymptotics of the Green's function}

Let $(e_1,\dots,e_d)$ be the standard orthonormal basis in $\mathbb{R}^d$. Fixing $\omega \in \mathbb{S}^{d-1}$, we would like to show that the asymptotics \mref{main_asymp_local} will hold for all $(x,y)$ such that $x-y$ belongs to a conic neighborhood containing $\omega$.
Without loss of generality, suppose that $\omega \neq e_1$.

Now let $\mathcal{R}_s$ be the rotation in $\mathbb{R}^d$ such that $\mathcal{R}_{s}(s)=e_{1}$ and $\mathcal{R}_s$ leaves the orthogonal complement of the subspace spanned by $\{s,e_{1}\}$ invariant. We define $e_{s,j}:=\mathcal{R}_{s}^{-1}(e_j)$, for all $j=2,\dots,d$.
Then, $\langle s,e_{s,p} \rangle=\langle e_1,e_p\rangle=0$ and $\langle e_{s,p},e_{s,q} \rangle=\langle e_p,e_q\rangle=\delta_{p,q}$ for $p,q>1$. In other words,
\begin{center}
$\{s,e_{s,2},\dots,e_{s,d}\}$ is an orthonormal basis of $\mathbb{R}^d$.
\end{center}

Then around $\omega$, we pick a compact coordinate patch $\mathcal{V}_{\omega}$, so that the $\mathbb{R}^{d(d-1)}$-valued function $e(s)=(e_{s,l})_{2 \leq l \leq d}$ is smooth in a neighborhood of $\mathcal{V}_{\omega}$.

We use the same notation for $\mathcal{R}_s$ and its $\mathbb{C}$-linear extension to $\mathbb{C}^d$. 

\subsection{The asymptotics of the leading term of the Green's function}

We introduce the function $\rho(k,x,y)$ on $D(k_0,r_0)\times \mathbb{R}^d \times \mathbb{R}^d$ as follows:
\begin{equation*}
\rho(k,x,y)=\frac{\phi(k+i\beta_s,x)\overline{\phi(k-i\beta_s,y)}}{F(k+i\beta_s)}.
\end{equation*}
where $F$ is defined in \mref{eigen_inner_prod} and $D(k_0, r_0)$ is described in Section 5.3.

Due to Proposition \ref{P:joint_continuity}, the function $\rho$ is in $C^{\infty}(\overline{D(k_0,r)}\times \mathbb{R}^{d}\times \mathbb{R}^d)$.
For each $(x,y)$, the Taylor expansion around $k_0$ of $\rho(k)$ gives
\begin{equation}
\label{E:taylor_rho}
\rho(k,x,y)=\rho(k_0,x,y)+\rho'(k,x,y)(k-k_0),
\end{equation}
where $\rho' \in C^{\infty}(\overline{D(k_0,r_0)}\times \mathbb{R}^d \times \mathbb{R}^d, \mathbb{C}^d)$. Note that for  $z \in V$, $\phi(z,x)$ is $\mathbb{Z}^d$-periodic in $x$ and thus, $\rho$ and $\rho'$ are $\mathbb{Z}^d \times \mathbb{Z}^d$-periodic in $(x,y)$.
Since our integrals are taken with respect to $k$, it is safe to write $\rho(k_0)$ instead of $\rho(k_0,x,y)$. We often omit the variables $x,y$ in $\rho$  if no confusion can arise.

Let $\mu(k):=\eta(k+k_0)$ be a cut-off function supported near 0, where $\eta$ is introduced in Subsection 5.3.
We define
\begin{equation}
\label{E:main_integrals}
\begin{split}
I&:=(2\pi)^{-d}\int_{\mathcal{O}}e^{i(k-k_0)\cdot (x-y)}\frac{\mu(k-k_0)}{\lambda_j(k+i\beta_s)-\lambda}dk,\\
J&:=(2\pi)^{-d}\int_{\mathcal{O}}e^{i(k-k_0)\cdot (x-y)}\frac{\mu(k-k_0)(k-k_0)\rho'(k,x,y)}{\lambda_j(k+i\beta_s)-\lambda}dk.
\end{split}
\end{equation}
Hence, we can represent the reduced Green's function as
$$G_0(x,y)=e^{ik_0\cdot (x-y)}(\rho(k_0)I+J).$$

The rest of this subsection is devoted to computing the asymptotics of the main integral $I$, which gives the leading term in asymptotic expansion of the reduced Green's function $G_0(x,y)$ as $|x-y| \rightarrow \infty$.

By making the change of variables $\xi=(\xi_1,\xi')=\mathcal{R}_s(k-k_0)$, we have
\begin{equation}
\label{E:integral_I}
I=(2\pi)^{-d}\int_{\mathbb{R}^d}e^{i|x-y|\xi_1}\frac{\mu(\xi_1,\xi')}{(\lambda_j\circ \mathcal{R}^{-1}_s)(\xi+\mathcal{R}_s(k_0+i\beta_s))-\lambda}d\xi.
\end{equation}

We introduce the following function defined on some neighborhood of $0$ in $\mathbb{C}^d$:
$$W_{s}(z):=(\lambda_j\circ \mathcal{R}^{-1}_s)(-iz+\mathcal{R}_s(k_0+i\beta_s))-\lambda.$$
It is holomorphic near 0 (on $i\mathcal{R}_s(V)$) and $W_s(0)=0$. Then $W_s(iz)$ is the analytic continuation to the domain $\mathcal{R}_s(V)$ of the denominator of the integrand in \mref{E:integral_I}. For a complex vector $z=(z_1,\dots,z_d) \in \mathbb{C}^d$, we write $z=(z_1, z')$, where $z'=(z_2,\dots,z_d)$.

The following proposition provides a factorization of $W_s$ that is crucial for computing the asymptotics of the integral $I$.
\begin{prop}
\label{P:factorization_Ws}
There exist $r>0$ and $\epsilon>0$ (independent of $s \in \mathcal{V}_{\omega}$), such that $W_s$ has the decomposition
\begin{equation}
\label{E:factorization_Ws}
W_s(z)=(z_1-A_{s}(z'))B_s(z), \quad \forall z=(z_1, z') \in B(0,r) \times D'(0,\epsilon). \footnote{See Notation 2.8 (a) in Section 2 for the definitions of $B(0,r)$ and $D'(0, \epsilon)$.}
\end{equation}
Here the functions $A_s$, $B_s$ are holomorphic in $D'(0,\epsilon)$ and $B(0,r) \times D'(0,\epsilon)$ respectively such that $A_s(0)=0$ and $B_s$ is non-vanishing on $B(0,r) \times D'(0,\epsilon)$. Also, these functions and their derivatives depend continuously on $s$. Moreover for $z' \in D'(0,\epsilon)$,
\begin{equation}
\label{E:A_s}
A_s(z')=\frac{1}{2}z'\cdot Q_s z'+O(|z'|^3),
\end{equation}
where $O(|z'|^3)$ is uniform in $s$ when $z' \rightarrow 0$ and $Q_s$ is the positive definite $(d-1) \times (d-1)$ matrix
\begin{equation}
\label{E:Q_s}
Q_s=-\frac{1}{|\nabla E(\beta_s)|}\Big(e_{s,p}\cdot \Hess{(E)}(\beta_s)e_{s,q} \Big)_{2 \leq p,q \leq d}.
\end{equation}
\end{prop}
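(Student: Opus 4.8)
The plan is to build the factorization \mref{E:factorization_Ws} via a Weierstrass-type preparation argument applied to the holomorphic function $W_s$, using the fact that $z_1 \mapsto W_s(z_1,0)$ has a \emph{simple} zero at $z_1 = 0$ thanks to Assumption \textbf{A4} (nondegeneracy). Concretely, first I would record the Taylor expansion of $W_s$ at the origin. By Corollary \ref{evenness_symmetry} and the definition $E(\beta) = \lambda_j(k_0 + i\beta)$, the function $\lambda_j$ restricted to $k_0 + i\mathbb{R}^d$ is real analytic with a nondegenerate maximum at $\beta_s$ (for the relevant $\lambda$), so $W_s(z) = (\lambda_j \circ \mathcal{R}_s^{-1})(-iz + \mathcal{R}_s(k_0 + i\beta_s)) - \lambda$ satisfies $W_s(0)=0$, and $\partial_{z_1} W_s(0) = -i\, \partial_{\xi_1}\big[(\lambda_j\circ\mathcal{R}_s^{-1})(\cdot + \mathcal{R}_s(k_0+i\beta_s))\big]\big|_0$. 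The key point is that this $\xi_1$-derivative equals (the $e_1$-component of) $\mathcal{R}_s \nabla_\beta E(\beta_s) \cdot (\text{sign factors})$, which by \mref{E:gradient_E_s} is $-|\nabla E(\beta_s)|\, \mathcal{R}_s(s)\cdot e_1 = -|\nabla E(\beta_s)| \neq 0$ since $\mathcal{R}_s(s) = e_1$. (One must keep careful track of the factor of $i$ and of the direction convention so that $|\nabla E(\beta_s)|>0$; this bookkeeping is routine but must be done.) Hence $W_s(z_1,0)$ vanishes to exact order one in $z_1$ at $z_1 = 0$.

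Next I would invoke the holomorphic implicit function theorem: since $W_s(0,0)=0$ and $\partial_{z_1} W_s(0,0) \neq 0$, there exist $r>0$, $\epsilon>0$ and a holomorphic function $A_s : D'(0,\epsilon) \to B(0,r)$ with $A_s(0)=0$ such that, for $(z_1,z') \in B(0,r)\times D'(0,\epsilon)$, $W_s(z_1,z')=0$ iff $z_1 = A_s(z')$. Because the zero in $z_1$ is simple and isolated, the quotient
\begin{equation*}
B_s(z) := \frac{W_s(z)}{z_1 - A_s(z')}
\end{equation*}
extends holomorphically across $\{z_1 = A_s(z')\}$ (remove the removable singularity slice-by-slice in $z_1$, then use Hartogs / continuity in $z'$), and is non-vanishing on $B(0,r)\times D'(0,\epsilon)$ after shrinking $r,\epsilon$, since $B_s(A_s(z'),z') = \partial_{z_1}W_s(A_s(z'),z') \neq 0$. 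Uniformity and continuity in $s \in \mathcal{V}_\omega$ of $r,\epsilon,A_s,B_s$ and their derivatives follows from the joint smoothness of $(s,z)\mapsto W_s(z)$ — which rests on the analytic perturbation theory giving smooth dependence of $\lambda_j$ and $\phi_z$ (properties \textbf{(P1)}--\textbf{(P6)} and Proposition \ref{P:joint_continuity}) together with the smooth dependence of $\mathcal{R}_s$ and $\beta_s$ on $s$ — plus compactness of $\mathcal{V}_\omega$; the implicit function theorem with parameters then yields the constants uniformly.

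Finally, to get \mref{E:A_s}–\mref{E:Q_s}, I would differentiate the identity $W_s(A_s(z'),z') \equiv 0$ twice in $z'$ at $z'=0$. Since $A_s(0)=0$ and $\nabla_{z'}A_s(0)=0$ (the latter because $\nabla E$ vanishes only at... more precisely because the $\xi'$-gradient of $\lambda_j\circ\mathcal{R}_s^{-1}$ at the point $\mathcal{R}_s(k_0+i\beta_s)$ is tangential and $\nabla E(\beta_s)$ is normal, i.e. parallel to $s$ which maps to $e_1$), the Hessian of $A_s$ at $0$ is $\Hess(A_s)(0) = -\big(\partial_{z_1}W_s(0)\big)^{-1}\,\Hess_{z'}W_s(0,\cdot)$. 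Converting the $z'$-Hessian of $W_s$ back to the Hessian of $E$ via the chain rule through $-i\mathcal{R}_s^{-1}$ (the factor $(-i)^2 = -1$ is what flips the sign so that $Q_s$ comes out \emph{positive} definite, using that $\Hess(E)(\beta_s)$ is negative definite), and using $\partial_{z_1}W_s(0) = -i(-|\nabla E(\beta_s)|) = i|\nabla E(\beta_s)|$ together with $e_{s,p} = \mathcal{R}_s^{-1}e_p$, gives exactly $Q_s = -|\nabla E(\beta_s)|^{-1}\big(e_{s,p}\cdot\Hess(E)(\beta_s)e_{s,q}\big)_{2\le p,q\le d}$, which is positive definite and depends continuously on $s$; the cubic remainder is uniform by Taylor's theorem with the uniform-in-$s$ bounds already established. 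The main obstacle I anticipate is not any single step but the sign/normalization bookkeeping across the three changes of variable ($k\mapsto k-k_0$, the rotation $\mathcal{R}_s$, and the substitution $z = i\xi$ hidden in the definition of $W_s$), which must be tracked precisely for the claimed positive-definiteness of $Q_s$ and the exact constant $|\nabla E(\beta_s)|$ to emerge; everything else is a parametrized Weierstrass preparation that is standard once the simple-zero condition is in hand.
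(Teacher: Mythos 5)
Your proposal is correct in substance and follows the same overall architecture as the paper's proof: establish that $z_1\mapsto W_s(z_1,0')$ has a simple zero at $0$ via \mref{E:gradient_E_s}, produce the factorization with radii uniform over the compact set $\mathcal{V}_\omega$, and then extract \mref{E:A_s}--\mref{E:Q_s} by implicitly differentiating $W_s(A_s(z'),z')\equiv 0$ twice at $z'=0$ (using $\nabla_{z'}A_s(0)=0$, which both you and the paper get from the orthogonality $s\cdot e_{s,p}=0$). The one genuine methodological difference is in how the factorization and its uniformity are obtained: the paper quotes the Weierstrass preparation theorem and then re-traces the first steps of H\"ormander's proof (two compactness covers, plus the explicit contour-integral formula \mref{E:formula_As} for $A_s$) to verify that $r$, $\epsilon$ and the continuity in $s$ of $A_s,B_s$ and all their derivatives can be arranged independently of $s$; you instead use the holomorphic implicit function theorem to define $A_s$ as the graph of the zero set and then divide, extending $B_s=W_s/(z_1-A_s(z'))$ across the removable singularity. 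Your route is arguably more elementary and makes the parameter dependence more transparent (a quantitative implicit function theorem with the uniform lower bound on $|\partial_{z_1}W_s|$ does the work in one stroke), at the cost of having to justify the holomorphic removability of the singularity slice-by-slice, which the preparation theorem hands you for free.

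One concrete slip to fix: your intermediate computation $\partial_{z_1}W_s(0)=i|\nabla E(\beta_s)|$ is off by a factor of $i$. Since $\nabla\lambda_j(k_0+i\beta_s)=-i\nabla E(\beta_s)$ (Cauchy--Riemann, as in Remark 9.3 of the paper), the chain rule through $z\mapsto -iz$ gives $\partial_{z_1}W_s(0)=-i\,\nabla\lambda_j(k_0+i\beta_s)\cdot s=(-i)(-i)\nabla E(\beta_s)\cdot s=-\nabla E(\beta_s)\cdot s=|\nabla E(\beta_s)|$, which is real and \emph{positive} as in \mref{E:gradient_Ws}. This matters beyond cosmetics: the reality and positivity of $\partial_{z_1}W_s(0)=B_s(0)$ is what makes $Q_s$ real positive definite in your own formula $\Hess(A_s)(0)=-(\partial_{z_1}W_s(0))^{-1}\Hess_{z'}W_s(0)$, and it is also used downstream (e.g.\ in \mref{E:representation_trick} and in the identification $\tilde\mu_s(0)=|\nabla E(\beta_s)|^{-1}$). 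With that bookkeeping corrected, your argument goes through.
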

\begin{proof}
By Cauchy-Riemann equations for $W_s$ and \mref{E:gradient_E_s},
\begin{equation}
\label{E:gradient_Ws}
\frac{\partial W_s}{\partial z_1}(0)=\frac{\partial W_s}{\partial \xi_1}(0)=-i\nabla\lambda_j(k_0+i\beta_s)\cdot\mathcal{R}^{-1}_s e_1=-\nabla E(\beta_s)\cdot s=|\nabla E(\beta_s)|>0.
\end{equation}
Thus $0$ is a simple zero of $W_s$. Due to smoothness in $s$ of $W_s$ and $\beta_s$, we have
\begin{equation}
\label{E:min_gradient_Ws}
c:=\min_{s \in \mathcal{V}_{\omega}}\frac{\partial W_s}{\partial z_1}(0)\geq \min_{s \in \mathbb{S}^{d-1}}\left|\nabla E(\beta_s)\right|>0.
\end{equation}
Applying the Weierstrass preparation theorem (see Theorem \ref{Weierstrass}), we obtain the decomposition \mref{E:factorization_Ws} on a neighborhood of $0$.

To show that this neighborhood can be chosen such that it does not depend on $s$, we have to chase down how the neighborhood is constructed in the proof of theorem 7.5.1 in \cite{Hormander} (only the first three lines of the proof there matter) and then show that all steps in this construction can be done independently of $s$.

In the first step of the construction, we need $r>0$ such that $W_s(z_1,0')\neq 0$ when $0<|z_1|<2r$. The mapping $\displaystyle (s,z) \mapsto \frac{\partial W_s}{\partial z_1}(z)=-i\nabla\lambda_j(-i\mathcal{R}_s^{-1}z+k_0+i\beta_s)\cdot s$ is jointly continuous on $\mathcal{V}_{\omega}\times \mathcal{R}_s(V)$ and the value of this mapping at $z=0$ is greater or equal than $c$ due to \mref{E:gradient_Ws} and \mref{E:min_gradient_Ws}. Therefore, $\displaystyle \left|\frac{\partial W_s}{\partial z_1}(z)\right|>c/2$ in some open neighborhood $X_s \times Y_s$ of $(s,0)$ in $\mathcal{V}_{\omega} \times \mathbb{C}^d$.
By compactness, $\displaystyle \mathcal{V}_{\omega} \subseteq \bigcup_{k=1}^N X_{s(k)}$ for a finite collection of points $s_1,\dots,s_N$ on the sphere. Let $Y$ be the intersection of all $Y_{s_k}$ and let $r>0$ such that $D(0,2r) \subseteq Y$. Note that $r$ is independent of $s$. We claim $r$ has the desired property.  Observe that for $|z|<2r$, we have $\displaystyle \left|\frac{\partial W_s}{\partial z_1}(z)\right|>\frac{c}{2}$ for any $s$ in $\mathcal{V}_{\omega}$.
For a proof by contraction, suppose that there is some $z_1$ such that $0<|z_1|<2r$ and $W_s(z_1,0')=0=W_s(0,0')$ for some $s$.
Applying Rolle's theorem to the function $t \in [0,1] \mapsto W_s(tz_1,0')$ yields $\displaystyle \frac{\partial W_s}{\partial z_1}(tz_1,0')=0$ for some $t \in (0,1)$. Consequently, $(tz_1,0') \notin D(0,2r)$ while $|tz_1|<|z_1|<2r$ (contradiction!).

For the second step of the construction, we want some $\delta>0$ (independent of $s$) such that $W_s(z) \neq 0$ when $|z_1|=r, |z'|<\delta$.
This can be done in a similar manner. Let $S(0,r) \subset \mathbb{C}$ be the circle with radius $r$. Now we consider the smooth mapping $\displaystyle W:(s,z_1,z') \mapsto W_s(z_1,z')$ where $z_1 \in S(0,r)$. Its value at each point $(s,z_1,0')$ is equal to $\displaystyle W_s(z_1,0')$, which is non-zero due to the choice of $r$ in the first step of the construction. Thus, it is also non-zero in some open neighborhood $\tilde{X}_{s,z_1} \times \tilde{Y}_{s,z_1} \times \tilde{Z}_{s,z_1}$ of $(s,z_1,0')$ in $\mathcal{V}_{\omega} \times S(0,r) \times \mathbb{C}^{d-1}$. We select points $s_1,\dots,s_M \in \mathcal{V}_{\omega}$ and $\gamma_1,\dots,\gamma_M \in S(0,r)$ so that the union of all $\tilde{X}_{s_k,\gamma_k}\times \tilde{Y}_{s_k,\gamma_k}, 1\leq k \leq M$ covers the compact set $\mathcal{V}_{\omega} \times S(0,r)$. Next we choose $\delta>0$ so that $D'(0,\delta)$ is contained in the intersection of these $\tilde{Z}_{s_k,z_k}$. Note that $\delta$ is independent of $s$ and also $z_1$. Of course $W_s(z_1,z')\neq 0$ for all $s$ and $z \in \{|z_1|=r, |z'|<\delta\}$. According to \cite{Hormander}, the decomposition \mref{E:factorization_Ws} holds in the polydisc $\{|z_1|<r, |z'|<\delta\}$.

Also, from the proof of Theorem 7.5.1 in \cite{Hormander}, the function $A_s$ is defined via the following formula
\begin{equation}
\label{E:formula_As}
z_1-A_s(z')=\exp\left(\frac{1}{2\pi i}\int_{|\omega|=r}\left(\frac{\partial W_s(\omega,z')}{\partial \omega}/W_s(\omega,z')\right)\log(z_1-\omega)d\omega\right). 
\end{equation}
The mappings $(s,z') \mapsto A_s(z')$ and $(s,z) \mapsto B_s(z)$ are jointly continuous due to \mref{E:factorization_Ws} and \mref{E:formula_As}. There exists $\displaystyle 0<\epsilon\leq \delta$ such that $\displaystyle \max_{s \in \mathcal{V}_{\omega}}|A_s(z')|<r$ whenever $|z'|<\epsilon$. We have the identity \mref{E:factorization_Ws} on $B(0,r)\times D'(0,\epsilon)$. Now, we show that this is indeed the neighborhood that has the desired properties.
Since $|z'|<\epsilon$ implies that the points $z=(A_s(z'),z') \in B(0,r) \times D'(0,\epsilon)$, we can evaluate \mref{E:factorization_Ws} at these points to obtain
\begin{equation}
\label{E:eqn_A_s}
W_s(A_s(z'),z')=0, \quad z' \in D'(0,\epsilon).
\end{equation}
By differentiating \mref{E:eqn_A_s}, we have
\begin{equation}
\label{E:diff_A_s}
\frac{\partial W_s}{\partial z_p}(A_s(z'),z')+\frac{\partial W_s}{\partial z_1}(A_s(z'),z')\frac{\partial A_s}{\partial z_p}(z')=0, \quad \text{for } p = 2,\dots,d.
\end{equation}
Observe that from the above construction, the term $\displaystyle \frac{\partial W_s}{\partial z_1}(A_s(z'),z')$ is always non-zero whenever $|z'|<\epsilon$. Consequently, all first-order derivatives of $A_s$ are jointly continuous in $(s,z)$. Similarly, we deduce by induction on $n \in \mathbb{N}^d$ that all derivatives of the function $A_s$ depend continuously on $s$ since after taking differentiation of the equation \mref{E:eqn_A_s} up to order $n$, the  $n$-order derivative term always goes with the nonzero term $\displaystyle \frac{\partial W_s}{\partial z_1}(A_s(z'),z')$ and the remaining terms in the sum are just lower order derivatives. Hence the same conclusion holds for all derivatives of $B_s$ by differentiating \mref{E:factorization_Ws}.

In particular, set $z'=0$ in \mref{E:diff_A_s} to obtain
\begin{equation}
\label{E:diff_A_s_at_0}
\frac{\partial W_s}{\partial z_p}(0)+\frac{\partial W_s}{\partial z_1}(0)\frac{\partial A_s}{\partial z_p}(0)=0, \quad \text{for } p = 2,\dots,d.
\end{equation}
Note that for $p>1$,
\begin{equation}
\label{E:gradient_A_s}
\frac{\partial W_s}{\partial z_p}(z)=-i\nabla \lambda_j(-i\mathcal{R}_s^{-1}z+k_0+i\beta_s)\cdot\mathcal{R}^{-1}_se_p.
\end{equation}
By substituting $z=0$,
\begin{equation}
\label{E:gradient_W_s}
\begin{split}
\frac{\partial W_s}{\partial z_p}(0)&=-i\nabla \lambda_j(k_0+i\beta_s)\cdot\mathcal{R}^{-1}_s e_p\\&=-\nabla E(\beta_s)\cdot e_{s,p}=-|\nabla E(\beta_s)|s\cdot e_{s,p}=0.
\end{split}
\end{equation}
\mref{E:gradient_Ws}, \mref{E:diff_A_s_at_0} and \mref{E:gradient_W_s} imply
\begin{equation}
\label{E:1st_gradient_A_s}
\frac{\partial A_s}{\partial z_p}(0)=0, \quad  \text{for } p = 2,\dots,d.
\end{equation}
Taking a partial derivative with respect to $z_q$ $(q>1)$ of \mref{E:gradient_A_s} at $z=0$, we see that
\begin{equation}
\label{E:Hessian_Ws}
\begin{split}
\frac{\partial^2 W_s}{\partial z_p\partial z_q}(0)&=\sum_{m=1}^d -\nabla(\frac{\partial \lambda_j}{\partial z_m}(k_0+i\beta_s))\cdot \mathcal{R}^{-1}_s e_p (\mathcal{R}^{-1}_s e_q)_{m}\\
&=-\sum_{m,n=1}^d \frac{\partial^2 \lambda_j}{\partial z_m \partial z_n}(k_0+i\beta_s)(e_{s,p})_{m}(e_{s,q})_{n}\\
&=e_{s,q}\cdot \Hess{(E)}(\beta_s)e_{s,p}.
\end{split}
\end{equation}
A second differentiation of \mref{E:diff_A_s} at $z=(A_s(z'),z')$ gives
\begin{equation}
\label{E:2nd_diff_A_s}
\begin{split}
0&=\left(\frac{\partial^2 W_s}{\partial z_p \partial z_q}(z)+
\frac{\partial W_s}{\partial z_1}(z)
\frac{\partial^2 A_s}{\partial z_p \partial z_q}(z')\right)\\
&+\left(\frac{\partial^2 W_s}{\partial z_1 \partial z_q}(z)\frac{\partial A_s}{\partial z_p}(z')+
\frac{\partial^2 W_s}{\partial z_p \partial z_1}(z)
\frac{\partial A_s}{\partial z_q}(z')+
\frac{\partial^2 W_s}{\partial z_1^2}(z)
\frac{\partial A_s}{\partial z_p}(z')
\frac{\partial A_s}{\partial z_q}(z')\right).
\end{split}
\end{equation}
At $z=0$, the sum in the second bracket of \mref{E:2nd_diff_A_s} is zero due to \mref{E:1st_gradient_A_s}. Thus,
\begin{equation}
\frac{\partial^2 A_s}{\partial z_p \partial z_q}(0)=-\left(\frac{\partial W_s}{\partial z_1}(0)\right)^{-1}\frac{\partial^2 W_s}{\partial z_p \partial z_q}(0) \quad (2 \leq p,q\leq d).
\end{equation}
Together with \mref{E:gradient_Ws} and \mref{E:Hessian_Ws}, the above equality becomes
\begin{equation}
\label{E:Hessian_A_s}
\frac{\partial^2 A_s}{\partial z_p \partial z_q}(0)=-\frac{1}{|\nabla E(\beta_s)|}\Big(e_{s,p}\cdot \Hess{(E)}(\beta_s)e_{s,q} \Big)_{2 \leq p,q \leq d}=Q_s.
\end{equation}
Consequently, by \mref{E:1st_gradient_A_s} and \mref{E:Hessian_A_s}, the Taylor expansion of $A_s$ at 0 implies \mref{E:A_s}.

Finally, the remainder term $O(|z'|^3)$ in the Taylor expansion \mref{E:A_s}, denoted by $R_{s,3}(z')$, can be estimated as follows:
\begin{equation*}
\begin{split}
|R_{s,3}(z')| &\lesssim |z'|^3 \max_{|\alpha|=3, 0 \leq t \leq 1}\left|\frac{\partial^{\alpha}A_s}{\partial z^{\alpha}}(tz')\right| \\
&\lesssim |z'|^3 \max_{|\alpha|=3, |y| \leq |z'|}\left|\frac{\partial^{\alpha}A_s}{\partial z^{\alpha}}(y)\right|.
\end{split}
\end{equation*}
Due to the continuity of third-order derivatives of $A_s$ on $\mathcal{V}_{\omega} \times D'(0,\epsilon)$,
\begin{equation}
\label{E:remainder_independent_s}
\lim_{|z'| \rightarrow 0}\max_{s \in \mathcal{V}_{\omega}}\frac{|R_{s,3}(z')|}{|z'|^3}<\infty.
\end{equation}
This proves the last claim of this proposition.
\end{proof}

We can now let the size of the support of $\eta$ $(\Subset \mathcal{O})$ be small enough such that the decomposition \mref{E:factorization_Ws} in Proposition \ref{P:factorization_Ws} holds on the support of $\mu$, i.e., $\supp(\mu) \Subset B(0,r) \times D'(0,\epsilon)$. Therefore, from \mref{E:integral_I}, we can represent the integral $I$ as follows:
\begin{equation}
\label{E:integral_I_rewrite}
I=(2\pi)^{-d}\int_{\mathbb{R}^d}e^{i|x-y|\xi_1}\frac{\mu(\xi_1,\xi')}{W_s(i\xi)}d\xi_1d\xi'=(2\pi)^{-d}\int_{|\xi'|<\epsilon}\int_{\mathbb{R}}\frac{e^{i|x-y|\xi_1}\tilde{\mu}_s(\xi_1,\xi')}{i\xi_1-A_s(i\xi')}d\xi_1d\xi',
\end{equation}
where $\tilde{\mu}_s(\xi)=\mu(\xi)(B_s(i\xi))^{-1}$. We extend $\tilde{\mu}_s$ to a smooth compactly supported function on $\mathbb{R}^d$ by setting $\tilde{\mu}_s=0$ outside its support. Since all derivatives of $\tilde{\mu}_s$ depend continuously on $s$, they are uniformly bounded in $s$. Let $\nu_s(t,\xi')$ be the Fourier transform in the variable $\xi_1$ of the function $\tilde{\mu}_s(-\xi_1,\xi')$ for each $\xi'\in \mathbb{R}^{d-1}$, i.e.,
\begin{equation*}
\nu_s(t,\xi')=\int_{-\infty}^{+\infty} e^{it\xi_1}\tilde{\mu}_s(\xi_1,\xi')d\xi_1.
\end{equation*}
By applying the Lebesgue Dominated Convergence Theorem, the function $\nu_s$ is continuous in $(s,t,\xi')$ on $\mathcal{V}_{\omega} \times \mathbb{R}^d$.
For such $\xi'$, $\nu_s(\cdot,\xi')$ is a Schwartz function in $t$ on $\mathbb{R}$. If we restrict $\xi'$ to a compact set $K$ containing $D'(0,\epsilon)$, then due to Lemma \ref{L:uniform}, for any $N>0$, $\nu_{s}(t,\xi')=O(|t|^{-N})$ uniformly in $s$ and $\xi'$ as $t \rightarrow \infty$. We also choose $\epsilon$ small enough such that whenever $|\xi'|<\epsilon$, the absolute value of the remainder term $O(|\xi'|^3)$ in \mref{E:A_s} is bounded from above by $\frac{1}{4}\xi'\cdot Q_s\xi'$. Note that $\epsilon$ is still independent of $s$, because the term $O(|\xi'|^3)/|\xi'|^3$ is uniformly bounded by the quantity in \mref{E:remainder_independent_s}.
Meanwhile, each positive definite matrix $Q_s$ dominates the positive matrix $\gamma_{\omega} I_{(d-1) \times (d-1)}$, where $\gamma_{\omega}>0$ is the smallest among all the eigenvalues of all matrices $Q_s$ $(s \in \mathcal{V}_{\omega})$. This implies that if $0<|\xi'|<\epsilon$, then
\begin{equation*}
\begin{split}
\Re(i\xi_1-A_s(i\xi'))&=-\Re(A_s(i\xi'))=-\Re(-\frac{1}{2}\xi'\cdot Q_s\xi'+O(|\xi'|^3))
\\ &=\frac{1}{2}\xi'\cdot Q_s\xi'-\Re(O(|\xi'|^3))>\frac{1}{4}\gamma_{\omega}|\xi'|^2>0.
\end{split}
\end{equation*}
We thus can obtain the following integral representation for a factor in the integrand of $I$ (see \mref{E:integral_I_rewrite}):
\begin{equation}
\label{E:representation_trick}
\frac{1}{i\xi_1-A_s(i\xi')}=\int_{-\infty}^0 e^{(i\xi_1 -A_s(i\xi'))w}dw,
\quad (\xi_1,\xi') \in \mathbb{R}\times (D'(0,\epsilon)\setminus\{0\}).
\end{equation}

Hence
\begin{equation*}
\begin{split}
I&=\frac{1}{(2\pi)^{d}}\int_{|\xi'|<\epsilon}\int_{-\infty}^0 e^{-wA_s(i\xi')}\int_{-r}^r e^{i(w+|x-y|)\xi_1}\tilde{\mu}_s(\xi_1,\xi')d\xi_1 dw d\xi'\\
&=\frac{1}{(2\pi)^{d}}\int_{|\xi'|<\epsilon}\int_{-\infty}^{|x-y|} e^{(-t+|x-y|)A_s(i\xi')}\nu_s(t,\xi')dt d\xi'=I_1+I_2,
\end{split}
\end{equation*}
where
\begin{equation*}
\begin{split}
&I_1=(2\pi)^{-d}\int_{|\xi'|<\epsilon}\int_{-\infty}^{|x-y|/2}e^{(-t+|x-y|)A_s(i\xi')}\nu_s(t,\xi')dtd\xi',\\
&I_2=(2\pi)^{-d}\int_{|\xi'|<\epsilon}\int_{|x-y|/2}^{|x-y|}e^{(-t+|x-y|)A_s(i\xi')}\nu_s(t,\xi')dtd\xi'.
\end{split}
\end{equation*}

Note that $|e^{(|x-y|-t)A_s(i\xi')}|\leq 1$ when $t \leq |x-y|$. As $|x-y| \rightarrow \infty$, the term $I_2$ decays rapidly, since for any $n\in\mathbb{N}$,
\begin{equation}
\label{E:I2}
\begin{split}
|I_2| &\leq \int_{|\xi'|<\epsilon}\int_{|x-y|/2}^{|x-y|}|\nu_s(t,\xi')|dtd\xi' \\
&\lesssim \int_{|\xi'|<\epsilon}\int_{|x-y|/2}^{|x-y|}o(|t|^{-(n+1)}) dtd\xi'=o(|x-y|^{-n}).
\end{split}
\end{equation}

We substitute $\displaystyle x'=\sqrt{(|x-y|-t)}\xi'$ and use the identity
$$(|x-y|-t)A_s(i\xi')=-\frac{1}{2}x'\cdot Q_s x'+O\left(|x'|^3(|x-y|-t)^{-1/2}\right)$$
to obtain
\begin{equation}
\label{E:I1_1}
\begin{split}
I_1&=|x-y|^{-(d-1)/2}\frac{1}{(2\pi)^d}\int_{-\infty}^{|x-y|/2}\frac{1}{(1-\frac{t}{|x-y|})^{(d-1)/2}}\\
&\times\int_{|x'|<\epsilon\sqrt{(|x-y|-t)}}\exp{\left(-\frac{1}{2}x'\cdot Q_s x'+O\left(\frac{|x'|^3}{\sqrt{|x-y|-t}}\right)\right)}\nu_s\left(t,\frac{x'}{\sqrt{|x-y|-t}}\right)dx' dt.
\end{split}
\end{equation}

For clarity, we introduce the notation
$x_0:=|x-y|,$ and
$$I(x_0,t):=\int_{|x'|<\epsilon\sqrt{x_0-t}}\exp{\left(-\frac{1}{2}x'\cdot Q_s x'+O\left(\frac{|x'|^3}{\sqrt{x_0-t}}\right)\right)}\nu_s\left(t,\frac{x'}{\sqrt{x_0-t}}\right)dx'.$$

Due to our choice of $\epsilon$ and the definition of $\gamma_{\omega}$, we have
\begin{equation}
\label{E:exponential_est}
\sup_{s \in \mathcal{V}_{\omega}}\exp{\left(-\frac{1}{2}x'\cdot Q_s x'+O\left(\frac{|x'|^3}{\sqrt{x_0-t}}\right)\right)}\leq \exp{\left(-\frac{1}{4}\gamma_{\omega} |x'|^2\right)}.
\end{equation}


Hence, for any $t<x_0/2$, the functions
$$\left|\chi_{D'(0, \epsilon \sqrt{x_0-t}}(x')\cdot\exp{\left(-\frac{1}{2}x'\cdot Q_s x'+O\left(\frac{|x'|^3}{\sqrt{x_0-t}}\right)\right)}\cdot\nu_s\left(t,\frac{x'}{\sqrt{x_0-t}}\right)\right|$$
are bounded from above by the integrable function (on $\mathbb{R}^{d-1}$)
$$\exp{\left(-\frac{1}{4}\gamma_{\omega}|x'|^2\right)}\cdot \sup_{|\xi'|<\epsilon}\left|\nu_s(t,\xi')\right|.$$
Moreover, for any $x' \in \mathbb{R}^{d-1}$, we get
$$\lim_{x_0 \rightarrow \infty}\chi_{D'(0, \epsilon \sqrt{x_0-t})}(x')\cdot \exp{\left(O\left(\frac{|x'|^3}{\sqrt{x_0-t}}\right)\right)}=1,$$
and
$$\lim_{x_0 \rightarrow \infty}\chi_{D'(0,\epsilon \sqrt{x_0-t})}(x')\cdot\nu_s\left(t,\frac{x'}{\sqrt{x_0-t}}\right)=\nu_s(t,0)$$
uniformly in $t\in (-\infty,x_0/2]$ and in $s \in \mathcal{V}_{\omega}$.

We now can apply the Lebesgue Dominated Convergence Theorem to obtain
\begin{equation*}
\begin{split}
\lim_{x_0 \rightarrow \infty}I(x_0,t)
=&\lim_{x_0 \rightarrow \infty}\int_{|x'|<\epsilon \sqrt{x_0-t}}\exp{\left(-\frac{1}{2}x'\cdot Q_s x'+O\left(\frac{|x'|^3}{\sqrt{x_0-t}}\right)\right)}\nu_s\left(t,\frac{x'}{\sqrt{x_0-t}}\right)dx'\\
=&\int_{\mathbb{R}^{d-1}}\exp{\left(-\frac{1}{2}x'\cdot Q_s x'\right)}\nu_s\left(t,0\right)dx'
=\frac{\nu_s(t,0)}{(\det{Q_s})^{1/2}}\int_{\mathbb{R}^{d-1}}\exp\left(-\frac{1}{2}|u'|^2\right)du'\\
=&\frac{(2\pi)^{(d-1)/2}}{(\det{Q_s})^{1/2}}\nu_s(t,0).
\end{split}
\end{equation*}
In the third equality above, we use the change of variables  $u':=Q_s^{1/2}x'$.

Likewise,
the functions $\displaystyle \chi_{(-\infty, x_0/2)}(t)(1-t/x_0)^{-(d-1)/2}\left|I(x_0,t)\right|$ are dominated (up to a constant factor)
by the function $\sup_{|\xi'|<\epsilon}|\nu_s(t,\xi')|$, which is integrable on $\mathbb{R}$.

Then a second application of the Lebesgue Dominated Convergence Theorem establishes
\begin{equation*}
\begin{split}
\lim_{x_0 \rightarrow \infty}x_0^{(d-1)/2}I_1&=\lim_{x_0 \rightarrow \infty}(2\pi)^{-d}\int_{-\infty}^{x_0/2}\left(1-\frac{t}{x_0}\right)^{-(d-1)/2} I(x_0,t)dt\\
&=(2\pi)^{-d}\int_{-\infty}^{\infty}\lim_{x_0 \rightarrow \infty}I(x_0,t)dt=\frac{(2\pi)^{-(d-1)/2}}{(\det{Q_s})^{1/2}}\frac{1}{2\pi}\int_{-\infty}^{\infty}\nu_s(t,0)dt \\&
=\frac{\tilde{\mu}_s(0)}{(2\pi)^{(d-1)/2}(\det{Q_s})^{1/2}}.
\end{split}
\end{equation*}
Here we use the Fourier inversion formula for $\nu_s$ in the last equality. Also, $\displaystyle \tilde{\mu}_s(0)=(B_s(0))^{-1}=(\frac{\partial W_s}{\partial z_1}(0))^{-1}=\frac{1}{|\nabla E(\beta_s)|}$ by \mref{E:gradient_Ws}. Consequently, the asymptotic of the integral $I$ as $|x-y| \rightarrow \infty$  is
\begin{equation}
\label{E:leadingterm}
\begin{split}
I&=I_1+I_2=\frac{|x-y|^{-(d-1)/2}}{(2\pi)^{(d-1)/2}(\det{Q_s})^{1/2}|\nabla E(\beta_s)|}+o(|x-y|^{-(d-1)/2})\\&
=\frac{|\nabla E(\beta_s)|^{(d-3)/2}|x-y|^{-(d-1)/2}}{(2\pi)^{(d-1)/2}\det{\left(-e_{s,p}\cdot \Hess{(E)}(\beta_s)e_{s,q} \right)^{1/2}_{2 \leq p,q \leq d}}}+o(|x-y|^{-(d-1)/2}).
\end{split}
\end{equation}
\subsection{Estimates of the integral $J$}
$ $
\vspace{0.3cm}

In this part, we want to show that the expression $J$ decays as $O\left(|x-y|^{-d/2}\right)$. Thus, taking into account \mref{E:leadingterm}, we conclude that $J$ does not contribute to the leading term of the reduced Green's function.

In \mref{E:taylor_rho}, we set the coordinate functions of $\rho'$ as $(\rho_1,\dots,\rho_d)$. Let us introduce the smooth function $\mu^{(l)}(k,x,y)=\rho_l(k+k_0,x,y)\mu(k)$ for any $k \in \mathbb{R}^d$. The support of $\mu^{(l)}$ (as a function of $k$ for each pair $(x,y)$) is contained in the support of $\mu$ and $\mu^{(l)}(k,\cdot,\cdot)$ is $\mathbb{Z}^d \times \mathbb{Z}^d$-periodic. We denote the components of a vector $k$ in $\mathbb{R}^d$ as $(k_1,\dots,k_d)$.
Observe that $J$ is the sum of integrals $J_l$ $(1 \leq l \leq d)$ if we define
\begin{equation}
\begin{split}
J_l:=(2\pi)^{-d}\int_{\mathcal{O}}e^{i(k-k_0)\cdot(x-y)}\frac{\mu^{(l)}(k-k_0,x,y)(k-k_0)_l}{\lambda_{j}(k+i\beta_s)-\lambda}dk.
\end{split}
\end{equation}
\begin{prop}
\label{remainderJ}
 As $|x-y| \rightarrow \infty$, we have $J_1=O\left(|x-y|^{-(d+1)/2}\right)$ and $J_l=O\left(|x-y|^{-d/2}\right)$ if $l>1$. In particular, $J=O\left(|x-y|^{-d/2}\right)$.
\end{prop}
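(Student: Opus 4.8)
The plan is to treat each $J_l$ as a scalar oscillatory integral in the same spirit as the treatment of $I$ in Subsection 6.1, using the Weierstrass factorization \eqref{E:factorization_Ws} of $W_s$ to reduce the denominator to $i\xi_1-A_s(i\xi')$, and then exploiting the extra factor $(k-k_0)_l$ that is present in $J_l$ but absent in $I$. First I would rotate coordinates by $\xi=\mathcal{R}_s(k-k_0)$, exactly as in passing from \eqref{E:main_integrals} to \eqref{E:integral_I}. Under this rotation the linear factor $(k-k_0)_l$ becomes a linear combination $\sum_{m=1}^d (\mathcal{R}_s^{-1})_{lm}\xi_m$, so after relabelling it suffices to estimate, for each fixed $m\in\{1,\dots,d\}$, the integral
\begin{equation*}
\widetilde{J}_m:=(2\pi)^{-d}\int_{\mathbb{R}^d}e^{i|x-y|\xi_1}\,\xi_m\,\frac{\widetilde{\mu}_s(\xi_1,\xi')}{i\xi_1-A_s(i\xi')}\,d\xi_1\,d\xi',
\end{equation*}
where $\widetilde{\mu}_s=\mu^{(l)}(\,\cdot\,,x,y)(B_s(i\xi))^{-1}$ is still a smooth, compactly supported, $\mathbb{Z}^d\times\mathbb{Z}^d$-periodic-in-$(x,y)$ symbol whose $\xi$-derivatives are bounded uniformly in $s$ (and now also uniformly in $(x,y)$, since $\rho'$ lies in $C^\infty(\overline{D(k_0,r_0)}\times\mathbb{R}^d\times\mathbb{R}^d,\mathbb{C}^d)$ and is periodic in $(x,y)$). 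The two cases $m=1$ and $m>1$ are handled separately: the gain relative to $I$ is one extra half-power of $|x-y|$ when $m>1$ (giving $|x-y|^{-d/2}$) and one full extra power when $m=1$ (giving $|x-y|^{-(d+1)/2}$).

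For $m>1$ I would follow the same Laplace-method scheme as for $I$: introduce the representation \eqref{E:representation_trick}, integrate in $\xi_1$ to produce the Fourier transform of $\xi_m\widetilde{\mu}_s(\,\cdot\,,\xi')$ (call it $\nu_s^{(m)}(t,\xi')$, again rapidly decaying in $t$ uniformly in $s,\xi'$ by Lemma \ref{L:uniform}), split the $t$-integral at $|x-y|/2$, discard the piece $t>|x-y|/2$ as rapidly decaying by the argument of \eqref{E:I2}, and rescale $x'=\sqrt{|x-y|-t}\,\xi'$. The crucial point is that the integrand now carries a factor $\xi_m = x'_{m}/\sqrt{|x-y|-t}$ for $m\ge 2$; writing it out, $\widetilde{J}_m$ equals $|x-y|^{-(d-1)/2}$ times an integral which additionally contains the factor $(|x-y|-t)^{-1/2}\sim |x-y|^{-1/2}$, and the remaining $x'$-integral converges to $\int_{\mathbb{R}^{d-1}} x'_m \exp(-\tfrac12 x'\cdot Q_s x')\,dx' \cdot \nu_s(t,0)=0$ by oddness — but even without using that vanishing, the crude bound $|\widetilde{J}_m|\lesssim |x-y|^{-(d-1)/2}\cdot|x-y|^{-1/2}=|x-y|^{-d/2}$ follows from dominating $|x'_m|\exp(-\tfrac12 x'\cdot Q_s x')$ by $C\exp(-\tfrac14\gamma_\omega|x'|^2)$ as in \eqref{E:exponential_est} and applying dominated convergence as in the treatment of $I_1$. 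This already gives $J_l=O(|x-y|^{-d/2})$ for all $l$, hence the last sentence of the proposition.

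For $J_1$ I would instead extract an extra power of $|x-y|$ by integration by parts in $\xi_1$: since $\xi_1 e^{i|x-y|\xi_1} = -i|x-y|^{-1}\partial_{\xi_1}(e^{i|x-y|\xi_1})$, one integrates by parts to move $\partial_{\xi_1}$ onto $\widetilde{\mu}_s(\xi_1,\xi')/(i\xi_1-A_s(i\xi'))$, producing a gain of $|x-y|^{-1}$ at the cost of either a derivative of the (uniformly bounded) symbol $\widetilde{\mu}_s$ or one extra factor $(i\xi_1-A_s(i\xi'))^{-1}$ in the denominator. The first type of term is then exactly of the form already estimated (no $\xi_m$ factor, so it is $O(|x-y|^{-(d-1)/2})$ by the computation of $I$), giving overall $|x-y|^{-1}\cdot|x-y|^{-(d-1)/2}=|x-y|^{-(d+1)/2}$; for the second type, one uses the representation \eqref{E:representation_trick} twice (or differentiates it) to write $(i\xi_1-A_s(i\xi'))^{-2}=\int_{-\infty}^0(-w)e^{(i\xi_1-A_s(i\xi'))w}dw$, and then repeats the split-and-rescale argument verbatim — the extra weight $w$ is harmless because $|x-y|-t$ stays comparable to $|x-y|$ on the dominant range $t\le |x-y|/2$, and the resulting bound is again $|x-y|^{-1}\cdot|x-y|^{-(d-1)/2}$. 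The main obstacle, and the place requiring care, is ensuring all estimates are uniform in $s\in\mathcal{V}_\omega$ and in $(x,y)$ simultaneously: uniformity in $s$ was arranged in Proposition \ref{P:factorization_Ws} (the neighborhood $B(0,r)\times D'(0,\epsilon)$ and the constants $c,\gamma_\omega$ are $s$-independent, and all derivatives of $A_s,B_s$ depend continuously on $s$ over the compact set $\mathcal{V}_\omega$), while uniformity in $(x,y)$ comes from periodicity of $\rho'$ in $(x,y)$, so that $\sup_{x,y}\|\widetilde{\mu}_s(\,\cdot\,,x,y)\|_{C^N}<\infty$; one must simply check that the dominating functions produced in the two applications of dominated convergence can be chosen independent of $(x,y)$ as well, which they can, being built from $\exp(-\tfrac14\gamma_\omega|x'|^2)$ and $\sup_{|\xi'|<\epsilon}|\nu_s^{(\cdot)}(t,\xi')|$.
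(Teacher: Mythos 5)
Your treatment of $J_l$ for $l>1$ is essentially the paper's argument (the factor $\xi_l=x'_l/\sqrt{|x-y|-t}$ yields the extra $|x-y|^{-1/2}$ after the rescaling), and your points about uniformity in $s$ and in $(x,y)$ are the right ones. The gap is in the $J_1$ case. First, the identity you start from is false as written: $-i|x-y|^{-1}\partial_{\xi_1}\bigl(e^{i|x-y|\xi_1}\bigr)=e^{i|x-y|\xi_1}$, not $\xi_1e^{i|x-y|\xi_1}$. Second, and more seriously, your term with the squared denominator is not $O(|x-y|^{-(d+1)/2})$. In the representation $(i\xi_1-A_s(i\xi'))^{-2}=\int_{-\infty}^0(-w)e^{(i\xi_1-A_s(i\xi'))w}\,dw$ the weight $|w|=|x-y|-t$ is comparable to $|x-y|$ on the dominant range $t\le|x-y|/2$, so far from being harmless it costs a full power of $|x-y|$: the split-and-rescale computation gives $|x-y|^{-1}\cdot|x-y|\cdot|x-y|^{-(d-1)/2}=|x-y|^{-(d-1)/2}$, with leading coefficient proportional to $\tilde{\mu}_s^{(1)}(0)$, which is in general nonzero. (For $d=2,3$ the integral with $(i\xi_1-A_s(i\xi'))^{-2}$ is not even absolutely convergent near $\xi=0$.) So the two terms of your decomposition each carry an $|x-y|^{-(d-1)/2}$ contribution, and the claimed bound can only come from a cancellation between them that your estimates do not see.

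The gain for $J_1$ has to come from the elementary identity $\xi_1/(i\xi_1-A_s(i\xi'))=-i-iA_s(i\xi')/(i\xi_1-A_s(i\xi'))$: the first piece produces the rapidly decaying Fourier transform of $\tilde{\mu}_s^{(1)}$ in $\xi_1$, while the second carries the small factor $A_s(i\xi')=O(|\xi'|^2)$ from \mref{E:A_s}, which after the rescaling becomes $O\bigl(|x'|^2/(|x-y|-t)\bigr)=O(|x-y|^{-1})$ and yields the extra full power. The paper implements exactly this by observing that multiplication by $\xi_1$ corresponds to $\frac{d}{dt}\nu_s^{(1)}$ on the Laplace side and then integrating by parts in $t$, so that the derivative falls on $e^{(|x-y|-t)A_s(i\xi')}$ and brings down $A_s(i\xi')$, plus a rapidly decaying boundary term $\nu_s^{(1)}(|x-y|,\xi',x,y)$. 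If you insist on integrating by parts in $\xi_1$, you must differentiate the full product $\xi_1\tilde{\mu}_s^{(1)}/(i\xi_1-A_s(i\xi'))$ and use $i\xi_1/(i\xi_1-A_s(i\xi'))^2=(i\xi_1-A_s(i\xi'))^{-1}+A_s(i\xi')(i\xi_1-A_s(i\xi'))^{-2}$, after which the dangerous $(i\xi_1-A_s(i\xi'))^{-1}\tilde{\mu}_s^{(1)}$ terms cancel and the surviving squared-denominator term carries the taming factor $A_s(i\xi')$. With that correction the rest of your argument goes through.
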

\begin{proof}
Indeed, to treat these integrals, we need to re-examine the calculation in the previous subsection done for the integral $I$.
After applying the orthogonal transformation $\mathcal{R}_s$ on each integral $J_l$, we rewrite them under the form of \mref{E:integral_I_rewrite} as
\begin{equation}
J_l=(2\pi)^{-d}\int_{|\xi'|<\epsilon}\int_{\mathbb{R}}e^{i|x-y
|\xi_1}\frac{\tilde{\mu}_s^{(l)}(\xi_1,\xi',x,y)\xi_l}{i\xi_1-A_s(i\xi')}d\xi_1d\xi',
\end{equation}
where $\tilde{\mu}_s^{(l)}(\xi,x,y)$ is $\mu^{(l)}(\xi,x,y)(B_s(i\xi))^{-1}$ on the support of $\mu^{(l)}$  and vanishes elsewhere. Let $\nu_s^{(l)}(t,\xi',x,y)$ be the Fourier transform in $\xi_1$ of $\tilde{\mu}_s^{(l)}(\xi_1,\xi',x,y)$. If the parameter $s$ is viewed as another argument of our functions here, then $\nu_s^{(l)}(\cdot,\xi',x,y)$ is a Schwartz function for each quadruple $(s,\xi',x,y)$. It is elementary to check that the Fourier transform $\nu_s^{(l)}(t,\xi',x,y)$ is jointly continuous on $\mathcal{V}_{\omega} \times \mathbb{R} \times \mathbb{R}^{d-1} \times \mathbb{R}^d \times \mathbb{R}^d$ due to the corresponding property of $\tilde{\mu}_s^{(l)}(\xi,x,y)$.  Periodicity in $(x,y)$ of $\nu_s^{(l)}$ and Lemma \ref{L:uniform} imply the following decay:
\begin{equation}
\label{E:uniform_decay}
\lim_{t \rightarrow \infty}|t|^{N}\sup_{(s,\xi',x,y) \in \mathcal{V}_{\omega} \times \overline{D'(0,\epsilon)}\times \mathbb{R}^d \times \mathbb{R}^d}|\nu_s^{(l)}(t,\xi',x,y)|=0, \quad N\geq 0.
\end{equation}
In particular,
\begin{equation}
\label{E:uniform_bound}
\max_{1 \leq l \leq d}\sup_{(s,t,\xi',x,y) \in \mathcal{V}_{\omega} \times \mathbb{R} \times \overline{D'(0,\epsilon)}\times \mathbb{R}^d \times \mathbb{R}^d}|\nu_s^{(l)}(t,\xi',x,y)|<\infty
\end{equation}
and
\begin{equation}
\label{E:uniform_integrable}
S:=\max_{1 \leq l \leq d}\int_{\mathbb{R}}\sup_{(s,\xi',x,y) \in \mathcal{V}_{\omega} \times \overline{D'(0,\epsilon)}\times \mathbb{R}^d \times \mathbb{R}^d}|\nu_s^{(l)}(t,\xi',x,y)|dt<\infty.
\end{equation}
Recall that when $0<|\xi'|<\epsilon$, $\Re(A_s(i\xi'))<0$ and thus from \mref{E:uniform_bound},
\begin{equation}
\label{E:int_by_part_J1}
\lim_{t \rightarrow -\infty}e^{(-t+|x-y|)A_s(i\xi')}\nu_s^{(1)}(t,\xi',x,y)=0.
\end{equation}

\textbf{\underline{Case 1}:} $l=1$.\\
Using \mref{E:representation_trick}, \mref{E:int_by_part_J1} and integration by parts, we obtain
\begin{equation}
\label{E:J1_1}
\begin{split}
J_1&=\frac{1}{(2\pi)^{d}}\int_{|\xi'|<\epsilon}\int_{-\infty}^0 e^{-wA_s(i\xi')}\int_{-r}^r \xi_1 e^{i(w+|x-y|)\xi_1}\tilde{\mu}_s^{(1)}(\xi_1,\xi',x,y)d\xi_1 dw d\xi\\
&=-\frac{i}{(2\pi)^{d}}\int_{|\xi'|<\epsilon}\int_{-\infty}^{|x-y|} e^{(-t+|x-y|)A_s(i\xi')}\frac{d}{dt}\nu_s^{(1)}(t,\xi',x,y)dt d\xi'\\
&=-\frac{i}{(2\pi)^{d}}\int_{|\xi'|<\epsilon}\bigg(\nu_s^{(1)}(|x-y|,\xi',x,y)
+\int_{-\infty}^{|x-y|}A_s(i\xi')e^{(-t+|x-y|)A_s(i\xi')}\\
& \hspace{8.5cm} \times \nu_s^{(1)}(t,\xi',x,y)dt\bigg)d\xi'.
\end{split}
\end{equation}
Recall the notation $x_0=|x-y|$. The term
$$\int_{|\xi'|<\epsilon}\nu_s^{(1)}(x_0,\xi',x,y)d\xi'$$ decays rapidly in $x_0$, due to \mref{E:uniform_decay}. We decompose the other term $$\int_{-\infty}^{x_0}A_s(i\xi')e^{(x_0-t)A_s(i\xi')}\nu_s^{(1)}(t,\xi',x,y)dt$$ into two parts, where the first integral is taking over $\displaystyle (x_0/2,x_0]$ and the second one over $\displaystyle (-\infty,x_0/2]$. The first part decays rapidly, as $I_2$ in \mref{E:I2}.
Now we need to prove that the second part decays as $\displaystyle O(x_0^{(d+1)/2})$.
To do this, we use the change of variables $\displaystyle x'=\xi'\sqrt{x_0-t}$ to rewrite the remaining integral as
\begin{equation}
\label{E:J1_2}
\begin{split}
&x_0^{(d+1)/2}\int_{|\xi'|<\epsilon}\int_{-\infty}^{x_0/2}A_s(i\xi')e^{(-t+x_0)A_s(i\xi')}\nu_s^{(1)}(t,\xi',x,y)dtd\xi'\\
=&\int_{-\infty}^{x_0/2}\left(1-\frac{t}{x_0}\right)^{-(d+1)/2}\int_{|x'|<\epsilon \sqrt{x_0-t}}\left(-\frac{1}{2}x'\cdot Q_s x'+O\left(\frac{|x'|^3}{\sqrt{x_0-t}}\right)\right)\\
&\times \exp{\left(-\frac{1}{2}x'\cdot Q_s x'+O\left(\frac{|x'|^3}{\sqrt{x_0-t}}\right)\right)}\nu_s^{(1)}\left(t,\left(\frac{x'}{\sqrt{x_0-t}}\right),x,y\right)dx'dt.
\end{split}
\end{equation}
From \mref{E:uniform_integrable}, we derive
\begin{equation}
\label{E:t_integral_J1}
\begin{split}
\int_{-\infty}^{x_0/2}\left(1-\frac{t}{x_0}\right)^{-(d+1)/2}\sup_{(s,\xi',x,y) \in \mathcal{V}_{\omega} \times \overline{D'(0,\epsilon)}\times \mathbb{R}^d \times \mathbb{R}^d}|\nu_s^{(1)}(t,\xi',x,y)|dt\leq 2^{\frac{(d+1)}{2}}S.
\end{split}
\end{equation}
On the other hand, we recall that
\begin{equation*}
\Re\left(-\frac{1}{2}x'\cdot Q_s x'+O\left(\frac{|x'|^3}{\sqrt{x_0-t}}\right)\right) \leq -\frac{1}{4}\gamma_{\omega} |x'|^2.
\end{equation*}
The exponential term is majorized as follows:
\begin{equation*}
\begin{split}
\left|\left(-\frac{1}{2}x'\cdot Q_s x'+O\left(\frac{|x'|^3}{\sqrt{x_0-t}}\right)\right)\exp{\left(-\frac{1}{2}x'\cdot Q_s x'+O\left(\frac{|x'|^3}{\sqrt{x_0-t}}\right)\right)}\right| \\
\leq \left(\frac{1}{2}x'\cdot Q_sx'+O(\epsilon|x'|^2)\right)\exp{\left(-\frac{1}{4}\gamma_{\omega}|x'|^2\right)}.
\end{split}
\end{equation*}
Consequently,
\begin{equation}
\label{E:x'_integral_J1}
\begin{split}
\int_{|x'|<\epsilon\sqrt{x_0-t}}\bigg|\bigg(-\frac{1}{2}x'\cdot Q_s x'+O\bigg(\frac{|x'|^3}{\sqrt{x_0-t}}&\bigg)\bigg)\exp{\left(-\frac{1}{2}x'\cdot Q_s x'+O\left(\frac{|x'|^3}{\sqrt{x_0-t}}\right)\right)}\bigg|dx'\\
& \lesssim \int_{\mathbb{R}^{d-1}}|x'|^2\exp{\left(-\frac{1}{4}\gamma_{\omega} |x'|^2\right)}dx'<\infty.
\end{split}
\end{equation}
Combining  \mref{E:J1_1} through \mref{E:x'_integral_J1}, we deduce $J_1=O(x_0^{-(d+1)/2})$.\\

\textbf{\underline{Case 2}:} $l>1$.\\
Using \mref{E:representation_trick} and decomposing $J_l$ into two parts as in \textbf{Case 1}, we get
\begin{equation}
\label{E:J2_1}
\begin{split}
J_l&=\frac{1}{(2\pi)^{d}}\int_{|\xi'|<\epsilon}\int_{-\infty}^0 \xi_l e^{-wA_s(i\xi')}\int_{-r}^r e^{i(w+|x-y|)\xi_1}\tilde{\mu}_s^{(l)}(\xi_1,\xi',x,y)d\xi_1 dw d\xi\\
&=\frac{1}{(2\pi)^{d}}\int_{|\xi'|<\epsilon}\int_{-\infty}^{|x-y|} \xi_l e^{(-t+|x-y|)A_s(i\xi')}\nu_s^{(l)}(t,\xi',x,y)dt d\xi'\\
&=\frac{1}{(2\pi)^{d}}\int_{|\xi'|<\epsilon}\int_{-\infty}^{|x-y|/2} \xi_l e^{(-t+|x-y|)A_s(i\xi')}\nu_s^{(l)}(t,\xi',x,y)dt d\xi'+o(|x-y|^{-d/2}).
\end{split}
\end{equation}
By changing the variables as before,
\begin{equation}
\label{E:J2_2}
\begin{split}
&x_0^{d/2}\int_{|\xi'|<\epsilon}\int_{-\infty}^{x_0/2}\xi_l e^{(-t+x_0)A_s(i\xi')}\nu_s^{(l)}(t,\xi',x,y)dtd\xi'\\
=&\int_{-\infty}^{x_0/2}\left(1-\frac{t}{x_0}\right)^{-d/2}\int_{|x'|<\epsilon \sqrt{x_0-t}}x'_l\exp{\left(-\frac{1}{2}x'\cdot Q_s x'+O\left(\frac{|x'|^3}{\sqrt{x_0-t}}\right)\right)}\\
&\hspace{7cm} \times \nu_s^{(l)}\left(t,\left(\frac{x'}{\sqrt{x_0-t}}\right),x,y\right)dx'dt.
\end{split}
\end{equation}
In a similar manner, we obtain
\begin{equation*}
\begin{split}
\int_{-\infty}^{x_0/2}\left(1-\frac{t}{x_0}\right)^{-d/2}\int_{|x'|<\epsilon\sqrt{x_0-t}}\bigg|x'_l&\exp{\left(-\frac{1}{2}x'\cdot Q_s x'+O\left(\frac{|x'|^3}{\sqrt{x_0-t}}\right)\right)}\bigg|\\
& \quad \times \left|\nu_s^{(l)}\left(t,\left(\frac{x'}{\sqrt{x_0-t}}\right),x,y\right)\right|dx'dt\\
& \leq 2^{d/2}S\int_{\mathbb{R}^{d-1}}|x'|\exp{\left(-\frac{1}{4}\gamma_{\omega} |x'|^2\right)}dx'<\infty.
\end{split}
\end{equation*}
This final estimate and \mref{E:J2_1}-\mref{E:J2_2} imply $J_l=O(x_0^{-d/2})$.
\end{proof}
\subsection{Estimates of the remainder term $r(x,y)$}
$ $
\vspace{0.3cm}

To complete the proof of Theorem \ref{main_local}, we only need to prove that the remainder term $r(x,y)$ is $o(|x-y|^{-d/2+\varepsilon})$ for any $\varepsilon>0$ when $|x-y| \rightarrow \infty$. The rapidly decaying terms $I_2$ and $J$ can be ignored in the asymptotics of the remainder $r$, according to Proposition \ref{remainderJ}. Now let $c$ be the coefficient in front of $|x-y|^{-(d-1)/2}$ in $I_1$ (see \mref{E:leadingterm}). Namely, $\displaystyle c:=\lim_{x_0 \rightarrow \infty}x_0^{(d-1)/2}I_1$, where $x_0=|x-y|$. Hence, it is sufficient to show
\begin{equation}
\label{E:I1_2}
\lim_{x_0 \rightarrow \infty}x_0^{d/2-\varepsilon}(I_1-cx_0^{-(d-1)/2})=0, \quad 0<\varepsilon<1/2.
\end{equation}
The idea of the following lemma is to truncate some unnecessary (rapidly decreasing) parts of the main integrals we are interested in.
\begin{lemma}
\label{L:truncation_I1}

(i) For any $\alpha \in (0,1)$ and $n>0$, one has
\begin{equation*}
\begin{split}
\int_{x_0^{\alpha}}^{x_0/2}\frac{1}{(1-\frac{t}{x_0})^{(d-1)/2}}
\int_{|x'|<\epsilon\sqrt{(x_0-t)}}&\exp{\left(-\frac{1}{2}x'\cdot Q_s x'+O\left(\frac{|x'|^3}{\sqrt{x_0-t}}\right)\right)}\\&
\times\left|\nu_s\left(t,\frac{x'}{\sqrt{x_0-t}}\right)\right|dx' dt=O(x_0^{-n}).
\end{split}
\end{equation*}

(ii) Let $\beta \in (0,1/2)$, then
\begin{equation*}
\begin{split}
\int_{-\infty}^{x_0^{\alpha}}
\int_{x_0^{\beta}\leq |x'|<\epsilon\sqrt{(x_0-t)}}&\exp{\left(-\frac{1}{2}x'\cdot Q_s x'+O\left(\frac{|x'|^3}{\sqrt{x_0-t}}\right)\right)}\\&
\times\left|\nu_s\left(t,\frac{x'}{\sqrt{x_0-t}}\right)\right|dx' dt=O(x_0^{-n}),
\end{split}
\end{equation*}
for any $n>0$.
\end{lemma}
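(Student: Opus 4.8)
\emph{Strategy and the key estimate.} The plan is that in both (i) and (ii) the integrand can be dominated, uniformly in $s\in\mathcal{V}_{\omega}$, by a clean Gaussian in $x'$ times $|\nu_s|$, after which one of the two truncations forces a gain of arbitrarily high order in $x_0=|x-y|$. The basic tool is the exponential bound \mref{E:exponential_est}, which on the region $|x'|<\epsilon\sqrt{x_0-t}$ gives
$$\exp\!\left(-\tfrac12 x'\cdot Q_s x'+O\!\left(\tfrac{|x'|^3}{\sqrt{x_0-t}}\right)\right)\le \exp\!\left(-\tfrac14\gamma_{\omega}|x'|^2\right),$$
together with the uniform decay $\sup_{s\in\mathcal{V}_{\omega},\,|\xi'|<\epsilon}|\nu_s(t,\xi')|=O(|t|^{-N})$ as $t\to\infty$ for every $N$, which comes from Lemma \ref{L:uniform}; in particular $t\mapsto\sup_{s,\,|\xi'|<\epsilon}|\nu_s(t,\xi')|$ is bounded and integrable on $\mathbb{R}$.

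\emph{Part (i).} I would first bound the prefactor $(1-t/x_0)^{-(d-1)/2}\le 2^{(d-1)/2}$ throughout the range $t\le x_0/2$, then use the displayed exponential estimate to carry out the $x'$-integral, which leaves a constant multiple of $\sup_{s,\,|\xi'|<\epsilon}|\nu_s(t,\xi')|$. Since here $t$ runs over $t\ge x_0^{\alpha}\to\infty$, the uniform decay yields
$$\int_{x_0^{\alpha}}^{x_0/2}\frac{1}{(1-t/x_0)^{(d-1)/2}}\,(\cdots)\,dt\ \lesssim\ \int_{x_0^{\alpha}}^{x_0/2} t^{-N}\,dt\ \lesssim\ x_0^{-\alpha(N-1)},$$
and, since $\alpha>0$, choosing $N>1+n/\alpha$ gives the claimed $O(x_0^{-n})$.

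\emph{Part (ii).} Again I would dominate the integrand by $\exp(-\tfrac14\gamma_{\omega}|x'|^2)\,|\nu_s(t,x'/\sqrt{x_0-t})|$. On the truncated region $|x'|\ge x_0^{\beta}$ one has $\exp(-\tfrac14\gamma_{\omega}|x'|^2)\le \exp(-\tfrac18\gamma_{\omega}x_0^{2\beta})\exp(-\tfrac18\gamma_{\omega}|x'|^2)$, so extending the $x'$-integral to all of $\mathbb{R}^{d-1}$ bounds it by $C\exp(-\tfrac18\gamma_{\omega}x_0^{2\beta})$ with $C$ independent of $t,s,x_0$. The remaining factor $\int_{-\infty}^{x_0^{\alpha}}\sup_{s,\,|\xi'|<\epsilon}|\nu_s(t,\xi')|\,dt$ is at most the (finite) integral over all of $\mathbb{R}$. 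Multiplying, the whole expression is $\lesssim \exp(-\tfrac18\gamma_{\omega}x_0^{2\beta})$, which is $O(x_0^{-n})$ for every $n$ because $\beta>0$.

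\emph{Main obstacle.} There is no genuine difficulty here — these are elementary size estimates — and the only point that must be watched is that every bound is uniform in $s\in\mathcal{V}_{\omega}$. That uniformity is already available: the $s$-independent radius $\epsilon$ and the common lower bound $\gamma_{\omega}>0$ for the eigenvalues of all $Q_s$ from Proposition \ref{P:factorization_Ws}, the continuous (hence, on the compact $\mathcal{V}_{\omega}$, uniformly bounded) dependence of the derivatives of $\tilde\mu_s$ on $s$, and Lemma \ref{L:uniform}, which delivers the decay of $\nu_s(t,\xi')$ in $t$ uniformly in $(s,\xi')$. Once these are invoked, the dominating functions in (i) and (ii) no longer depend on $s$ and the stated $O(x_0^{-n})$ bounds follow.
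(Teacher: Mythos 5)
Your proof is correct and follows essentially the same route as the paper's: in both parts you replace the exponential by the uniform Gaussian bound $\exp(-\tfrac14\gamma_\omega|x'|^2)$ from \mref{E:exponential_est}, use Lemma~\ref{L:uniform} for the $s$- and $\xi'$-uniform rapid decay of $\nu_s$ in $t$, and then in (i) exploit that $t\ge x_0^\alpha\to\infty$ while in (ii) exploit the super-polynomial decay of the Gaussian tail past $|x'|\ge x_0^\beta$. The only cosmetic difference is in (ii): you factor off $\exp(-\tfrac18\gamma_\omega x_0^{2\beta})$ before extending the $x'$-integral, whereas the paper estimates the tail $\int_{|x'|\ge x_0^\beta}\exp(-\tfrac14\gamma_\omega|x'|^2)\,dx'$ directly in polar coordinates; both yield the same $O(x_0^{-n})$ bound.
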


\begin{proof}
Both parts are treated similarly to the estimates for $I_2$.

(i) The integral is bounded from above (up to some constant factors) by
\begin{equation*}
\begin{split}
&\int_{x_0^{\alpha}}^{x_0/2}
\int_{|x'|<\epsilon\sqrt{(x_0-t)}}\exp{\left(-\frac{1}{4}x'\cdot Q_s x'\right)}O(t^{-n/\alpha-1})dx' dt \\
&\lesssim \int_{x_0^{\alpha}}^{x_0/2}t^{-n/ \alpha-1}dt \lesssim x_0^{-n}.
\end{split}
\end{equation*}
(ii) We recall from \mref{E:exponential_est} that on $D'(0,\epsilon \sqrt{x_0-t})$, the exponential term is bounded by the integrable (on $\mathbb{R}^{d-1}$) function $$\exp{\left(-\frac{1}{4}\gamma_{\omega} |x'|^2\right)},$$ and the function $$\displaystyle \sup_{s \in \mathcal{V}_{\omega}, |\xi'|\leq \epsilon}|\nu_s(\cdot,\xi')|$$ is integrable over $(-\infty, \infty)$. Hence, the integral in this lemma is estimated from above by:
\begin{equation*}
\begin{split}
&\int_{-\infty}^{x_0^{\alpha}}
\int_{x_0^{\beta}\leq |x'|<\epsilon\sqrt{x_0}}\exp{\left(-\frac{1}{4}\gamma_{\omega}|x'|^2\right)}
\left|\nu_s\left(t,\frac{x'}{\sqrt{x_0-t}}\right)\right|dx' dt \\
\leq &\int_{-\infty}^{\infty}\sup_{s \in \mathcal{V}_{\omega}, |\xi'|\leq \epsilon}\left|\nu_s(t,\xi')\right|dt\int_{|x'| \geq x_0^{\beta}}\exp{\left(-\frac{1}{4}\gamma_{\omega}|x'|^2\right)}dx'\\
\lesssim &\int_{x_0^{\beta}}^{\infty}\exp{\left(-\frac{1}{4}\gamma_{\omega} r^2\right)}r^{d-2}dr  \\
\lesssim & \int_{x_0^{\beta}}^{\infty}r^{-n/\beta-1-(d-2)}r^{d-2}dr=O(x_0^{-n}).
\end{split}
\end{equation*}
\end{proof}
Fixing a small $\alpha$, we let
\begin{equation*}
\begin{split}
I'_1=\frac{1}{(2\pi)^d}\int_{-\infty}^{x_0^{\alpha}}\frac{1}{(1-\frac{t}{x_0})^{(d-1)/2}}
\int_{|x'|<\epsilon\sqrt{(x_0-t)}}&\exp{\left(-\frac{1}{2}x'\cdot Q_s x'+O\left(\frac{|x'|^3}{\sqrt{x_0-t}}\right)\right)}\\&
\times\left|\nu_s\left(t,\frac{x'}{\sqrt{x_0-t}}\right)\right|dx' dt.
\end{split}
\end{equation*}
Part (i) of Lemma \ref{L:truncation_I1} yields that $I_1=x_0^{-(d-1)/2} I'_1+O(x_0^{-n})$. Thus, we can deal with the truncated integral $x_0^{-(d-1)/2}I'_1$ instead of working directly with $I_1$ in proving \mref{E:I1_2}. So our goal is to estimate the remainder $(I'_1-c)=O(x_0^{\varepsilon-1/2})$. We represent $(I'_1-c)$ as follows:
\begin{equation}
\label{E:R}
\begin{split}
\int_{-\infty}^{\infty}\int_{\mathbb{R}^{d-1}}\exp&{\left(-\frac{1}{2}x'\cdot Q_s x'\right)}
\bigg(\chi_{(-\infty,x_0^{\alpha}]\times D'(0,\epsilon \sqrt{x_0-t})}(t,x')\big(1-\frac{t}{x_0}\big)^{-(d-1)/2}\\
&\times \exp{\left(O\left(\frac{|x'|^3}{\sqrt{x_0-t}}\right)\right)}\nu_s\left(t,\frac{x'}{\sqrt{x_0-t}}\right)-\nu_s(t,0) \bigg) dx' dt.
\end{split}
\end{equation}
The first step is to replace the factor $\displaystyle \big(1-\frac{t}{x_0}\big)^{-(d-1)/2}$ in the integrand by $1$. More explicitly,
\begin{lemma}
If $0<\alpha<\varepsilon$, we obtain
\begin{equation*}
\begin{split}
\int_{-\infty}^{x_0^{\alpha}}\int_{D'(0,\epsilon \sqrt{x_0-t})}\exp&{\left(-\frac{1}{2}x'\cdot Q_s x'+O\left(\frac{|x'|^3}{\sqrt{x_0-t}}\right)\right)}\nu_s\left(t,\frac{x'}{\sqrt{x_0-t}}\right)
\\& \times
\bigg(\big(1-\frac{t}{x_0}\big)^{-(d-1)/2}-1\bigg) dx' dt=o(x_0^{\varepsilon-1/2}).
\end{split}
\end{equation*}
\end{lemma}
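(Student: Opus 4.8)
The plan is to reduce everything to two facts already in hand: the pointwise domination $\mref{E:exponential_est}$ of the exponential factor by the integrable Gaussian $\exp\bigl(-\tfrac14\gamma_{\omega}|x'|^2\bigr)$, and the uniform rapid decay of $\nu_s(t,\xi')$ in $t$ (Lemma \ref{L:uniform}), which in particular makes $\int_{\mathbb{R}}|t|\sup_{|\xi'|<\epsilon}|\nu_s(t,\xi')|\,dt$ finite, uniformly in $s\in\mathcal{V}_{\omega}$. Against this background the only genuinely $x_0$-dependent object in the integrand is the scalar prefactor $F(t,x_0):=\bigl(1-\tfrac{t}{x_0}\bigr)^{-(d-1)/2}-1$, and the whole content of the lemma is that $F$ is uniformly small on the region of integration. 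So the estimate should follow by dominating $F$ and then integrating freely, exactly in the spirit of the estimate for $I_2$ in \mref{E:I2} and of Lemma \ref{L:truncation_I1}.

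First I would record the elementary bound
\[
|F(t,x_0)|\ \le\ C_d\,\frac{|t|}{x_0}\qquad\text{whenever }t\le x_0/2 ,
\]
obtained from the mean value theorem applied to $\tau\mapsto(1-\tau/x_0)^{-(d-1)/2}$ on the segment joining $0$ to $t$ (note $F(0,x_0)=0$), together with the bound $(1-\tau/x_0)^{-(d+1)/2}\le 2^{(d+1)/2}$ valid on that segment. Since $x_0^{\alpha-1}\to 0$, for all sufficiently large $x_0$ the entire range $t\le x_0^{\alpha}$ lies inside $\{t\le x_0/2\}$, so this bound applies to every $t$ occurring in the integral; in particular the truncation at $x_0^{\alpha}$ keeps us away from the singularity $t=x_0$ of $(1-t/x_0)^{-(d-1)/2}$, which is why no further splitting of the $t$-integral is needed.

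Next I would substitute this bound and $\mref{E:exponential_est}$ into the integral, pull the factor $x_0^{-1}$ out, dominate $\bigl|\nu_s\bigl(t,x'/\sqrt{x_0-t}\bigr)\bigr|\le\sup_{|\xi'|<\epsilon}|\nu_s(t,\xi')|$ (legitimate because $|x'/\sqrt{x_0-t}|<\epsilon$ on the domain), and carry out the $x'$-integration against the Gaussian to produce a finite constant depending only on $\gamma_{\omega}$ and $d$. What is left is bounded by
\[
C\,x_0^{-1}\int_{-\infty}^{x_0^{\alpha}}|t|\,\sup_{|\xi'|<\epsilon}|\nu_s(t,\xi')|\,dt\ \le\ C\,x_0^{-1}\int_{\mathbb{R}}|t|\,\sup_{|\xi'|<\epsilon}|\nu_s(t,\xi')|\,dt ,
\]
and the last integral is finite and independent of $x_0$, uniformly in $s\in\mathcal{V}_{\omega}$, by the uniform rapid decay of $\nu_s$. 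Hence the whole expression is $O(x_0^{-1})$, which is $o(x_0^{\varepsilon-1/2})$ for every $\varepsilon<1/2$; here the hypothesis $\alpha<\varepsilon$ is in fact more than is required, only $\alpha<1$ being used to make $x_0^{\alpha}\le x_0/2$ eventually. I do not expect any real obstacle: the only point needing a little care is the uniform (in $t$ on $(-\infty,x_0^{\alpha}]$ and in large $x_0$) control of $F(t,x_0)$, which the displayed mean value estimate supplies.
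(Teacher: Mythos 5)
Your proof is correct, and it takes a genuinely different route from the paper's. The paper majorizes the exponential by the Gaussian $\exp(-\tfrac14\gamma_\omega|x'|^2)$ (as you do), but then splits the $t$-integral at $-x_0^{\alpha}$: on $(-\infty,-x_0^{\alpha})$ it uses only $\bigl|\bigl(1-\tfrac{t}{x_0}\bigr)^{-(d-1)/2}-1\bigr|\le 1$ together with the rapid decay of $\nu_s$ in $t$, while on $[-x_0^{\alpha},x_0^{\alpha}]$ it uses boundedness of $\nu_s$ and estimates the prefactor by its endpoint value, getting length $\times$ height $=O(x_0^{\alpha}\cdot x_0^{\alpha-1})=O(x_0^{2\alpha-1})$; the hypothesis $\alpha<\varepsilon$ (together with $\varepsilon<1/2$) is exactly what makes $2\alpha-1<\varepsilon-1/2$. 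Your mean-value-theorem bound $\bigl|\bigl(1-\tfrac{t}{x_0}\bigr)^{-(d-1)/2}-1\bigr|\le C_d|t|/x_0$ for $t\le x_0/2$ is valid for both signs of $t$ (the derivative factor $(1-\tau/x_0)^{-(d+1)/2}$ is at most $2^{(d+1)/2}$ on the relevant segment), and trading this against the weighted integrability $\int_{\mathbb{R}}|t|\sup_{s,|\xi'|<\epsilon}|\nu_s(t,\xi')|\,dt<\infty$ — which does follow uniformly in $s$ from Lemma \ref{L:uniform} and joint continuity of $\nu_s$ — avoids the splitting entirely and yields the sharper bound $O(x_0^{-1})$. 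This is indeed $o(x_0^{\varepsilon-1/2})$ for every $\varepsilon>0$, and you are right that only $\alpha<1$ is used; the price of the paper's cruder endpoint estimate is precisely the need for $\alpha<\varepsilon$, which your argument dispenses with.
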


\begin{proof}
As we argued in the proof of Lemma \ref{L:truncation_I1} (ii), this integral is majorized by
\begin{equation*}
\begin{split}
\int_{-\infty}^{x_0^{\alpha}}\int_{D'(0,\epsilon \sqrt{x_0-t})}\exp&{\left(-\frac{1}{4}x'\cdot Q_s x'\right)}\left|\nu_s\left(t,\frac{x'}{\sqrt{x_0-t}}\right)\right|\bigg(\big(1-\frac{t}{x_0}\big)^{-(d-1)/2}-1\bigg) dx' dt.
\end{split}
\end{equation*}

We break the external integral into two parts, where the first one is over $(-\infty, -x_0^{\alpha})$ and the second one is over $[-x_0^\alpha, x_0^{\alpha}]$.
Note that the factor $\displaystyle \big(1-\frac{t}{x_0}\big)^{-(d-1)/2}$ is less than $1$ if $t<-x_0^{\alpha}$.
Hence, the rapid decay of $\nu_s$ on $(-\infty, -x_0^{\alpha})$ (as $x_0$ is large enough) implies the same property for the first part. Since $\nu_s$ is uniformly bounded on $\mathbb{R} \times D'(0,\epsilon)$, it suffices to estimate the factor $\displaystyle \big(1-\frac{t}{x_0}\big)^{-(d-1)/2}-1$ for the second part. But this is straightforward, since
\begin{equation*}
\begin{split}
\int_{-x_0^{\alpha}}^{x_0^{\alpha}}\bigg(\big(1-\frac{t}{x_0}\big)^{-(d-1)/2}-1\bigg) &\leq 2x_0^{\alpha}\bigg(\big(1-x_0^{\alpha-1}\big)^{-(d-1)/2}-1\bigg)\\
&=2x_0^{\alpha}\frac{1-(1-x_0^{\alpha-1})^{(d-1)/2}}{(1-x_0^{\alpha-1})^{(d-1)/2}}=O(x_0^{2\alpha-1}).
\end{split}
\end{equation*}
Now the claim follows from the inequality $2\alpha-1<\varepsilon-1/2$.
\end{proof}

This lemma reduces estimates the expression in \mref{E:R} to estimating
\begin{equation*}
\begin{split}
R=\int_{-\infty}^{\infty}\int_{\mathbb{R}^{d-1}}\exp{\bigg(-\frac{1}{2}x'\cdot Q_s x'\bigg)}&
\bigg(\chi_{(-\infty,x_0^{\alpha}]\times D'(0,\epsilon \sqrt{x_0-t})}(t,x')\exp{\left(O\left(\frac{|x'|^3}{\sqrt{x_0-t}}\right)\right)}\\
&\times\nu_s\left(t,\frac{x'}{\sqrt{x_0-t}}\right)-\nu_s(t,0) \bigg) dx' dt.
\end{split}
\end{equation*}
Thanks to part (ii) of Lemma \ref{L:truncation_I1}, it is enough to deal with the following truncation of $R$, which can also be decomposed as the sum of $R^{(1)}$ and  $R^{(2)}$
\begin{equation*}
\begin{split}
\int_{-\infty}^{\infty}\int_{\mathbb{R}^{d-1}}\exp{\bigg(-\frac{1}{2}x'\cdot Q_s x'\bigg)}&
\bigg(\chi_{(-\infty,x_0^{\alpha}]\times D'(0, x_0^{\beta})}(t,x')\exp{\left(O\left(\frac{|x'|^3}{\sqrt{x_0-t}}\right)\right)}\\
&\times\nu_s\left(t,\frac{x'}{\sqrt{x_0-t}}\right)-\nu_s(t,0) \bigg) dx' dt=R^{(1)}+R^{(2)},
\end{split}
\end{equation*}
where
\begin{equation*}
\begin{split}
R^{(1)}=\int_{-\infty}^{x_0^{\alpha}}\int_{|x'|<x_0^{\beta}}&\exp{\bigg(-\frac{1}{2}x'\cdot Q_s x'
+O\bigg(\frac{|x'|^3}{\sqrt{x_0-t}}\bigg)\bigg)}\\
&\hspace{4cm}\times \bigg(\nu_s\bigg(t,\frac{x'}{\sqrt{x_0-t}}\bigg)-\nu_s(t,0) \bigg) dx' dt
\end{split}
\end{equation*}
and
\begin{equation*}
\begin{split}
R^{(2)}=\int_{-\infty}^{\infty}\int_{\mathbb{R}^{d-1}}\exp{\bigg(-\frac{1}{2}x'\cdot Q_s x'\bigg)}\nu_s(t,0)&
\bigg(\chi_{(-\infty,x_0^{\alpha}]\times D'(0, x_0^{\beta})}(t,x')\\
&\times \exp{\left(O\left(\frac{|x'|^3}{\sqrt{x_0-t}}\right)\right)-1}
\bigg) dx' dt.
\end{split}
\end{equation*}
Consider $R^{(1)}$ first. Since $\displaystyle \sup_{(s,t) \in \mathcal{V}_{\omega} \times \mathbb{R}, |x'| \leq \epsilon}|\nabla_{x'}\nu_s(t,x')|<\infty$, the difference factor $\displaystyle \bigg|\nu_s\left(t,\frac{x'}{\sqrt{x_0-t}}\right)-\nu_s(t,0) \bigg|$ is dominated by $O(|x'|/\sqrt{(x_0-t)})=O(x_0^{\beta-1/2})$. For any $n>0$ large enough, $R^{(1)}$ is majorized by
\begin{equation*}
\begin{split}
|R^{(1)}| &\lesssim \int_{|x'|<x_0^{\beta}}\exp{\bigg(-\frac{1}{4}x'\cdot Q_s x'}\bigg)dx' \left(\int_{-x_0^{\alpha}}^{x_0^{\alpha}}x_0^{\beta-1/2}dt+\int_{-\infty}^{-x_0^{\alpha}}|t|^{-n}dt\right)\\
& \lesssim x_0^{\alpha+\beta-1/2}.
\end{split}
\end{equation*}

Now consider $R^{(2)}$. For any $\beta<1/6$, we could first estimate the below difference factor of the integrand
\begin{equation*}
\begin{split}
&\quad \left|\chi_{(-\infty,x_0^{\alpha}]\times D'(0, x_0^{\beta})}(t,x')\exp{\left(O\left(\frac{|x'|^3}{\sqrt{x_0-t}}\right)\right)}-1\right| \\
& \leq \chi_{(-\infty,x_0^{\alpha}]\times D'(0, x_0^{\beta})}(t,x')\left(\exp{\left(O(x_0^{3\beta-1/2})\right)}-1\right)+(1-\chi_{(-\infty,x_0^{\alpha}]\times D'(0, x_0^{\beta})}(t,x'))\\
& \leq \chi_{(-\infty,x_0^{\alpha}]\times D'(0, x_0^{\beta})}(t,x')O(x_0^{3\beta-1/2})+\chi_{(-\infty, \infty) \times \{|x'|\geq x_0^{\beta}\}}(t,x')+\chi_{(x_0^{\alpha}, \infty) \times D'(0, x_0^{\beta})}(t,x').
\end{split}
\end{equation*}
The above inequality leads to the following estimate for $R^{(2)}$:
\begin{equation*}
\begin{split}
|R^{(2)}| &\lesssim x_0^{3\beta-1/2}\left(\int_{-\infty}^{x_0^{\alpha}}|\nu_s(t,0)|dt\right)+\int_{x_0^{\alpha}}^{\infty}|\nu_s(t,0)|dt+\int_{|x'|\geq x_0^{\beta}}\exp{\bigg(-\frac{1}{2}x'\cdot Q_s x'\bigg)}dx'\\
& \lesssim  x_0^{\alpha+3\beta-1/2}+\int_{|t|>x_0^{\alpha}}^{\infty}|t|^{-n}dt+\int_{x_0^{\beta}}^{\infty}r^{-n}dr=O(x_0^{\alpha+3\beta-1/2}).
\end{split}
\end{equation*}
Hence, $R^{(1)}+R^{(2)}=O(x_0^{\alpha+3\beta-1/2})$. By choosing $\beta\in (0,(\varepsilon-\alpha)/3)$, we conclude that $R=o(x_0^{\varepsilon-1/2})$. This proves the claim \mref{E:I1_2}.


\section{The full Green's function asymptotics}
Here we show that the full Green's function $G_{s,\lambda}$ has the same asymptotics as the reduced Green's function $G_0$ as $|x-y| \rightarrow \infty$.

Let us fix a point $s$ in $\mathbb{S}^{d-1}$. As we discussed in Section \ref{sec:floquet}, we handle the Schwartz kernel $K_s(x,y)$ of the operator $T_s$ acting on $L^2(\mathbb{R}^d)$ as follows:
$$\mathcal{F}T_s\mathcal{F}^{-1}=\int_{\mathcal{O}}^{\oplus}T_s(k)d'k,$$
where $\mathcal{F}$ is the Floquet transform (see Definition \ref{Floquet transforms}) and
$$T_s(k)=(1-\eta(k))(L_s(k)-\lambda)^{-1}+\eta(k)((L_{s}(k)-\lambda)|_{R(Q_s(k))})^{-1}Q_s(k).$$
We would like to show that $K_s(x,y)$ is continuous away from the diagonal and decays sufficiently fast as $|x-y| \rightarrow \infty$ uniformly with respect to $s$.

Observe that the kernel of each projector $P_s(k)$ (see Subsection 5.3) is the smooth function:
$$\displaystyle \frac{\phi(k+i\beta_s,x)\overline{\phi(k-i\beta_s,y)}}{F(k+i\beta_s)},$$ for each $k$ in the support of $\eta$. Thus, $(1-\eta(k))P_s(k)$ is a finite rank smoothing operator on $\mathbb{T}$. Moreover, we also have $(L_s(k)-\lambda)T_s(k)=T_s(k)(L_s(k)-\lambda)=I-\eta(k)P_s(k)$.
Each $T_s(k)$ is a parametrix (i.e., an inverse modulo a smoothing operator) of the elliptic operator $L_s(k)-\lambda$ when $(s,k) \in \mathbb{S}^{d-1} \times \mathcal{O}$. This suggests to study parametrices of the family of elliptic operators $L_s(k)-\lambda$ simultaneously.
\subsection{Parameter-dependent periodic pseudodifferential operators}
$ $
\vspace{0.3cm}

First, we briefly recall some basic definitions of \textbf{periodic} (or \textbf{toroidal}) pseudodifferential operators (i.e., $\Psi$DO on the torus $\mathbb{T}$). We also introduce some useful classes of symbols with parameters and describe some of their properties that we will use.

There are several approaches to defining pseudodifferential operators on the torus. The standard approach based on H\"ormander's symbol classes (see e.g., \cite{Shubin_pseudo}) uses local smooth structure on the torus $\mathbb{T}$ and thus ignores the group structure on $\mathbb{T}$.
An alternative approach uses Fourier series with the difference calculus and avoids using local coordinate charts on $\mathbb{T}$ (the details in Chapter 4 in \cite{Ruzh-Turu})\footnote{A different approach to periodic $\Psi DO$s is introduced by A. Sobolev \cite{Sob}.}.
To make a distinction, Ruzhansky and Turunen in \cite{Ruzh-Turu} refer to the symbols in the first approach as \textbf{Euclidean symbols} and the symbols in the latter one as \textbf{toroidal symbols} (see Section 4.5 in \cite{Ruzh-Turu}).
We recall their definitions for only the Kohn-Nirenberg symbol classes, which we need here:
\begin{defi}
\label{Euclidean_symbols}
Let $m$ be a real number.

(a) The class $S^{m}(\mathbb{T} \times \mathbb{R}^d)$ consists of all smooth functions $\sigma(x,\xi)$ on $\mathbb{T} \times \mathbb{R}^d$ such that for any multi-indices $\alpha, \beta$,
$$|D^{\alpha}_{\xi}D^{\beta}_{x}\sigma(x,\xi)| \leq C_{\alpha \beta}(1+|\xi|)^{m-|\alpha|},$$
for some constant $C_{\alpha, \beta}$ that depends only on $\alpha, \beta$.
Symbols in $S^m(\mathbb{T} \times \mathbb{R}^d)$ are called \textbf{Euclidean symbols} of order $m$ on $\mathbb{T}$.

(b)  The class $S^{m}(\mathbb{T} \times \mathbb{Z}^d)$ consists of all functions $\sigma(x,\xi)$ on $\mathbb{T} \times \mathbb{Z}^d$ such that for each $\xi \in \mathbb{Z}^d$, $\sigma(.,\xi) \in C^{\infty}(\mathbb{T})$ and for any multi-indices $\alpha, \beta$,
$$|\Delta^{\alpha}_{\xi}D^{\beta}_{x}\sigma(x,\xi)| \leq C_{\alpha \beta}(1+|\xi|)^{m-|\alpha|},$$
for some constant $C_{\alpha, \beta}$ that depends only on $\alpha, \beta$. Here we recall the definition of the forward difference operator $\Delta^{\alpha}_{\xi}$  with respect to the variable $\xi$ \cite{Ruzh-Turu}. Let $f$ be a complex-valued function defined on $\displaystyle \mathbb{Z}^d$ and $1 \leq j \leq d$. Then we define
$$\Delta_{j}f(\xi):=f(\xi_1,\dots,\xi_{j-1},\xi_j+1,\xi_{j+1},\dots,\xi_d)-f(\xi),$$
and for any multi-index $\alpha$,
$$\Delta^{\alpha}_{\xi}:=\Delta^{\alpha_1}_{1}\dots\Delta^{\alpha_d}_{d}.$$
Symbols in $S^{m}(\mathbb{T} \times \mathbb{Z}^d)$ are called \textbf{toroidal symbols} of order $m$ on $\mathbb{T}$.

(c) The intersection of all the classes $S^{m}(\mathbb{T} \times \mathbb{R}^d)$ ($S^{m}(\mathbb{T} \times \mathbb{Z}^d)$) is denoted by $S^{-\infty}(\mathbb{T} \times \mathbb{R}^d)$ ($S^{-\infty}(\mathbb{T} \times \mathbb{Z}^d)$), which are also called \textbf{smoothing symbols}.
\end{defi}
Due to Theorem 4.5.3 in \cite{Ruzh-Turu}, a symbol is toroidal of order $m$ if and only if it could be extended in $\xi$ to an Euclidean symbol of the same order $m$. Such an extension is unique modulo a smoothing symbol. Consequently, we will use the notation $S^m(\mathbb{T})$ for both classes $S^m(\mathbb{T} \times \mathbb{R}^d)$ and $S^m(\mathbb{T} \times \mathbb{Z}^d)$. The two approaches are essentially equivalent in defining pseudodifferential operators on $\mathbb{T}$ whenever the symbol is in the class $S^m(\mathbb{T})$.
Following \cite{GUI}, this motivates us to define periodic pseudodifferential operators as follows:
\begin{defi}
\label{periodic_pseudo}
Given a symbol $\sigma(x,\xi) \in S^m(\mathbb{T})$, we denote by $Op(\sigma)$ the corresponding periodic pseudodifferential operator defined by
\begin{equation}
\label{E:def_pseudo}
\left(Op(\sigma)f\right)(x):=\sum_{\xi \in \mathbb{Z}^d}\sigma(x,\xi)\tilde{f}(\xi)e^{2\pi i \xi \cdot x},
\end{equation}
where $\tilde{f}(\xi)$ is the Fourier coefficient of $f$ at $\xi$. The right hand side of \mref{E:def_pseudo} converges absolutely if, for instance, $f \in C^{\infty}(\mathbb{T})$.

We also use the notation $Op(S^m(\mathbb{T}))$ for the set of all periodic pseudodifferential operators $Op(\sigma)$ with $\sigma \in S^m(\mathbb{T})$.
\end{defi}

Since we must deal with parameters $s$ and $k$, we introduce a suitable class of symbols depending on parameters $(s,k) \in \mathbb{S}^{d-1} \times \mathcal{O}$.
\begin{defi}
\label{parameter_class_symbols}
The parameter-dependent class $\tilde{S}^{m}(\mathbb{T})$ consists of symbols $\sigma(s,k;x,\xi)$ satisfying the following conditions:
\begin{itemize}
\item For each $(s,k) \in \mathbb{S}^{d-1} \times \mathcal{O}$, the function $\sigma(s,k;\cdot,\cdot)$ is a symbol in the class $S^m(\mathbb{T})$.

\item Consider any multi-indices $\alpha, \beta, \gamma$. Then for each $s \in\mathbb{S}^{d-1}$, the function $\sigma(s,\cdot;\cdot,\cdot)$ is smooth on $\mathcal{O} \times \mathbb{T} \times \mathbb{R}^d$ , and furthermore,
\begin{equation*}
\label{E:parameter_class_symbols}
\sup_{s \in \mathbb{S}^{d-1}}|D^{\alpha}_{k}D^{\beta}_{\xi}D^{\gamma}_x \sigma(s,k;x,\xi)| \leq C_{\alpha \beta \gamma} (1+|\xi|)^{m-|\alpha|-|\beta|},
\end{equation*}
for some constant $C_{\alpha \beta \gamma}>0$ that is independent of $s,k,x,$ and $\xi$.
\end{itemize}
Thus, taking derivatives of a symbol in $\tilde{S}^{m}(\mathbb{T})$ with respect to $k$ improves decay in $\xi$.
We also denote
$$\tilde{S}^{-\infty}(\mathbb{T}):=\bigcap_{m \in \mathbb{R}}\tilde{S}^{m}(\mathbb{T}).$$
\end{defi}

\begin{defi}
For each $m \in \mathbb{R} \cup \{-\infty\}$, we denote by $Op(\tilde{S}^m(\mathbb{T}))$ the set of all families of periodic pseudodifferential operators $\{Op(\sigma(s,k;\cdot,\cdot))\}_{(s,k) \in \mathbb{S}^{d-1} \times \mathcal{O}}$, where $\sigma$ runs over the class $\tilde{S}^m(\mathbb{T})$.
\end{defi}

\begin{example}
$ $

\begin{itemize}
\item
Suppose that $|\lambda|$ is small enough so that $\max_{s \in \mathbb{S}^{d-1}} |\beta_s|<1$. Then the family of symbols $\{(1+(\xi+k+i\beta_s)^2)^{m/2}\}_{(s,k)}$ belongs to the class $\tilde{S}^m(\mathbb{T})$ for any $m \in \mathbb{R}$.

\item
If $a_{\alpha}(x) \in C^{\infty}(\mathbb{T})$ and $m \geq 0$, then
$$\{\sum_{|\alpha| \leq m} a_{\alpha}(x)(\xi+k+i\beta_s)^{\alpha}\}_{(s,k)} \in \tilde{S}^m(\mathbb{T}).$$

\item
The family of elliptic operators $\{(L_s(k)-\lambda)\}_{(s,k)}$ is in $Op(\tilde{S}^{2}(\mathbb{T}))$.

\item
If $a=\{a(s,k;x,\xi)\}_{(s,k)} \in \tilde{S}^l(\mathbb{T})$ and $b=\{b(s,k;x,\xi)\}_{(s,k)} \in \tilde{S}^m(\mathbb{T})$ then $ab=\{ab(s,k;x,\xi)\}_{(s,k)} \in \tilde{S}^{l+m}(\mathbb{T})$.

\item
$a(s,k;x,\xi) \in \tilde{S}^l(\mathbb{T})$ implies $D^{\alpha}_k D^{\beta}_{\xi} D^{\gamma}_x a(s,k;x,\xi) \in \tilde{S}^{l-|\alpha|-|\beta|}(\mathbb{T})$.
\end{itemize}
\end{example}

The following result will be needed in the next subsection:
\begin{thm}
\label{parametrices}
There exists a family of parametrices $\{A_{s}(k)\}_{(s,k)}$ in the class $Op(\tilde{S}^{-2}(\mathbb{T}))$ for the family of elliptic operators $\{(L_s(k)-\lambda)\}_{(s,k)}$.
\end{thm}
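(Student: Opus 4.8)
The plan is to run the classical elliptic parametrix construction, but carried out entirely inside the parameter-dependent calculus $\tilde S^{\bullet}(\mathbb{T})$ of Definition~\ref{parameter_class_symbols}, keeping every symbol estimate uniform in $(s,k)\in\mathbb{S}^{d-1}\times\mathcal{O}$. First I would write down the full (Euclidean) symbol $p(s,k;x,\xi)$ of $L_s(k)-\lambda=L(x,D+k+i\beta_s)-\lambda$ coming from \mref{conjugatingLk}: it is a polynomial of degree $\le 2$ in the combined variable $\zeta:=2\pi\xi+k+i\beta_s$, with leading part $p_2=\langle A(x)\zeta,\zeta\rangle$ and remaining terms of degree $\le 1$ in $\zeta$. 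The ellipticity \mref{E:ellipticity} gives
\[
|p_2(s,k;x,\xi)|\ \ge\ \big|\Re\langle A(x)\zeta,\zeta\rangle\big|\ =\ \big|\langle A(x)(2\pi\xi+k),2\pi\xi+k\rangle-\langle A(x)\beta_s,\beta_s\rangle\big|\ \ge\ \theta|2\pi\xi+k|^2-C|\beta_s|^2 ,
\]
and since $k$ ranges over the bounded set $\mathcal{O}$ while $\sup_{s}|\beta_s|$ can be made $<1$ by shrinking $|\lambda|$, there are $R_0>0$ and $c>0$, \emph{independent of} $(s,k)$, with $|p(s,k;x,\xi)|\ge c\,(1+|\xi|)^2$ for $|\xi|\ge R_0$ (the degree $\le 1$ terms are absorbed by enlarging $R_0$). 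This is the parameter-ellipticity that drives the whole argument; note $\beta_s$ depends smoothly on $s$, so the constants are genuinely uniform.

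Next I would produce the leading term of the parametrix. Fix a smooth cutoff $\chi(\xi)$ that vanishes on $|\xi|\le R_0$ and equals $1$ on $|\xi|\ge 2R_0$, and set $q_0(s,k;x,\xi):=\chi(\xi)/p(s,k;x,\xi)$. A chain/Leibniz-rule computation shows $q_0\in\tilde S^{-2}(\mathbb{T})$: every $x$-derivative of $p$ costs $O((1+|\xi|)^2)$, while every $\xi$- or $k$-derivative costs only $O((1+|\xi|)^{-1})$ relative to $p$, the point being that $p$ depends on $(k,\xi)$ only through $\zeta=2\pi\xi+k+i\beta_s$, so $\partial_{k}$ and $\partial_{\xi}$ act on $p$ in the same way and each lowers the polynomial degree by one. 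This is exactly the requirement in Definition~\ref{parameter_class_symbols} that $\partial_k$ improve decay in $\xi$, and all the resulting bounds are uniform in $s$ because $|\beta_s|$ is bounded and $\beta_s$ is smooth in $s$.

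Then comes the iteration. Using the composition theorem for periodic pseudodifferential operators from \cite{Ruzh-Turu} — which one checks preserves the parameter class, since the asymptotic expansion of the symbol of $Op(a)Op(b)$, namely $\sum_{\alpha}\tfrac{1}{\alpha!}(\partial_{\xi}^{\alpha}a)(D_x^{\alpha}b)$, has all its terms and all its remainders uniformly controlled in $(s,k)$ and the $\partial_{\xi}^{\alpha}$ supply the order gain — one gets $Op(q_0)(L_s(k)-\lambda)=I-Op(\rho_1)$ with $\rho_1\in\tilde S^{-1}(\mathbb{T})$. Recursively one defines $q_{-j}\in\tilde S^{-2-j}(\mathbb{T})$ (each obtained by multiplying $q_0$ against a universal polynomial in the $\partial_{\xi}^{\alpha}q_{-i}$ and $D_x^{\alpha}p$, $i<j$, hence with seminorms still bounded uniformly in $s$) so that the partial sums $\sum_{i\le N}q_{-i}$ compose with $p$ to $I$ modulo $\tilde S^{-1-N}(\mathbb{T})$. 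Passing to Euclidean symbols via the equivalence of toroidal and Euclidean symbols in \cite{Ruzh-Turu}, a Borel-type summation adapted to $\tilde S^{\bullet}(\mathbb{T})$ — choose cutoffs $\psi(\xi/t_N)$ with $t_N\nearrow\infty$ fast relative to the (uniform in $s$) seminorms of $q_{-N}$, and set $a:=\sum_{N\ge 0}\psi(\xi/t_N)\,q_{-N}$ — yields $a\in\tilde S^{-2}(\mathbb{T})$ with $a-\sum_{N<M}q_{-N}\in\tilde S^{-2-M}(\mathbb{T})$ for all $M$; restricting $\xi$ back to $\mathbb{Z}^d$ gives the toroidal symbol. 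Putting $A_s(k):=Op(a)$, we obtain $A_s(k)(L_s(k)-\lambda)=I+Op(r)$ with $r\in\tilde S^{-\infty}(\mathbb{T})$, and the standard fact that a left parametrix modulo smoothing of an elliptic operator is two-sided modulo smoothing (carried out the same way, with uniformity preserved) shows $\{A_s(k)\}_{(s,k)}\in Op(\tilde S^{-2}(\mathbb{T}))$ is the desired family.

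The only real work beyond bookkeeping lies in the third step: checking that the periodic symbolic calculus — composition and asymptotic summation — operates genuinely \emph{uniformly} in the parameters $(s,k)$, and that in all relevant symbols $\partial_k$ behaves just like $\partial_{\xi}$ (which is structural here, because $k$ enters $L_s(k)$ only through $2\pi\xi+k+i\beta_s$). Everything else is the classical elliptic parametrix construction.
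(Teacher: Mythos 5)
Your proposal is correct and follows the same overall strategy as the paper's Section~8: establish a $\xi$-lower bound for the symbol that is uniform in $(s,k)$, invert away from a compact $\xi$-set to produce the leading parametrix term $q_0\in\tilde S^{-2}(\mathbb{T})$, iterate to kill successive orders, and conclude via asymptotic summation in the parameter class. Two small stylistic differences are worth flagging. First, the paper builds a \emph{right} parametrix, so that $L_s(k)-\lambda$ is always the left factor in every composition; since that factor is a differential operator, the composition formula is the \emph{exact finite} Leibniz sum (Theorem~\ref{composition_formula}), and the paper never needs a general asymptotic composition theorem for two $\Psi$DOs. You build a \emph{left} parametrix, composing $Op(q_0)(L_s(k)-\lambda)$ with the differential operator on the right, which forces you to invoke the general asymptotic composition theorem from \cite{Ruzh-Turu} and to verify that its asymptotic remainders are uniformly controlled in $(s,k)$ — extra work the paper deliberately sidesteps (and does not provide). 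Second, you invert the full symbol $p$ rather than only the principal part $L_0$ as the paper does; this is harmless. If you prefer to stay entirely inside the machinery the paper actually develops, switch to a right parametrix and the finite composition formula; otherwise your argument is sound but carries the additional obligation of establishing the parameter-uniform asymptotic composition theorem.
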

The reader can refer to Section 8 for the proof of this result as well as some other basic properties of parameter-dependent toroidal $\Psi$DOs.
\subsection{Decay of the Schwartz kernel of $T_s$}
\begin{lemma}
\label{L:identity_symbol}
For all $k$ on a sufficiently small neighborhood of the support of $\eta$, $\lambda$ $(<0)$ is in the resolvent of the operator $L_s(k)Q_s(k)$ acting on $L^2(\mathbb{T})$. Furthermore, for such $k$, we have the following identity:
\begin{equation}
\label{identity_symbol}
((L_s(k)-\lambda)|_{R(Q_s(k))})^{-1}Q_s(k)=\lambda^{-1}P_s(k)+(L_s(k)Q_s(k)-\lambda)^{-1}.
\end{equation}
\begin{proof}
In the block-matrix form, $(L_s(k)Q_s(k)-\lambda)$ is
\begin{equation}
\label{E:block-matrix}
 \left(
    \begin{array}{r@{}c|c@{}l}
  &    \begin{matrix}
        -\lambda P_s(k)
      \end{matrix} & \mbox{0} & \\\hline
  &    \mbox{0} &
       \begin{matrix}
       (L_s(k)-\lambda)|_{R(Q_s(k))}
      \end{matrix}
    \end{array}
\right).
\end{equation}
This gives the first claim of this lemma. The inverse of \mref{E:block-matrix} is
\begin{equation*}
 \left(
    \begin{array}{r@{}c|c@{}l}
  &    \begin{matrix}
         -\lambda^{-1} P_s(k)
      \end{matrix} & \mbox{0} & \\\hline
  &    \mbox{0} &
       \begin{matrix}
       ((L_s(k)-\lambda)|_{R(Q_s(k))})^{-1}
      \end{matrix}
    \end{array}
\right),
\end{equation*}
which proves the identity \mref{identity_symbol}.
\end{proof}
\end{lemma}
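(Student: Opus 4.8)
The plan is to exploit the structural fact, recorded in Subsection 5.3, that $L_s(k)=L(k+i\beta_s)$ commutes with its Riesz idempotent $P_s(k)$ and hence with $Q_s(k)=I-P_s(k)$. For $k$ in a sufficiently small neighborhood of $\supp(\eta)$ the point $k+i\beta_s$ still lies in $V$, so $P_s(k)$ and $Q_s(k)$ are well defined, depend smoothly on $k$, and give a topological (in general non-orthogonal) direct sum decomposition $L^2(\mathbb{T})=R(P_s(k))\oplus R(Q_s(k))$ into $L_s(k)$-invariant subspaces, the first of which is one-dimensional. With respect to this splitting $Q_s(k)$ is the projection onto the second summand, so $L_s(k)Q_s(k)$ is block diagonal: it vanishes on $R(P_s(k))$ and coincides with $L_s(k)|_{R(Q_s(k))}$ on $R(Q_s(k))$. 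Hence $L_s(k)Q_s(k)-\lambda$ has the block form displayed in \mref{E:block-matrix}, with $(P,P)$-block equal to $-\lambda P_s(k)$ and $(Q,Q)$-block equal to $(L_s(k)-\lambda)|_{R(Q_s(k))}$.

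For the first assertion I would invert the two blocks separately. On the one-dimensional space $R(P_s(k))$ the operator is multiplication by $-\lambda$, which is invertible because $\lambda<0$. For the other block, \textbf{(P3)} says that for $k+i\beta_s\in V$ the only eigenvalue of $L_s(k)$ inside $\overline{B}(0,\epsilon_0)$ is $\lambda_j(k+i\beta_s)$, which is precisely the eigenvalue removed in passing to $R(Q_s(k))$; therefore $\sigma\big(L_s(k)|_{R(Q_s(k))}\big)\cap\overline{B}(0,\epsilon_0)=\emptyset$, and since $|\lambda|$ has been taken small enough that $\lambda\in B(0,\epsilon_0)$, we get $\lambda\in\rho\big(L_s(k)|_{R(Q_s(k))}\big)$. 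Continuity of $k\mapsto P_s(k),Q_s(k)$ propagates this to a whole neighborhood of $\supp(\eta)$. A block-diagonal operator is invertible exactly when each of its blocks is, which yields $\lambda\in\rho(L_s(k)Q_s(k))$ there.

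Finally, inverting block by block gives $(L_s(k)Q_s(k)-\lambda)^{-1}=-\lambda^{-1}P_s(k)+\big((L_s(k)-\lambda)|_{R(Q_s(k))}\big)^{-1}Q_s(k)$, where the first summand is the ambient-space realization of the inverse of the $(P,P)$-block and the second is the $(Q,Q)$-block inverse precomposed with the projection $Q_s(k)$ onto $R(Q_s(k))$; rearranging this equality is exactly \mref{identity_symbol}. I do not expect a genuine obstacle here: the entire computation is elementary once the block structure is in place, and the only points that need care are that the direct sum is by $L_s(k)$-invariant (not spectrally orthogonal) subspaces, so that block diagonality must be read in the topological sense, and that the resolvent condition $\lambda\in\rho\big(L_s(k)|_{R(Q_s(k))}\big)$ holds uniformly on a neighborhood of $\supp(\eta)$ — both of which are immediate from Subsection 5.3 and \textbf{(P3)}.
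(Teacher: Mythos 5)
Your proof is correct and follows essentially the same route as the paper: both write $L_s(k)Q_s(k)-\lambda$ in block-diagonal form with respect to the $L_s(k)$-invariant splitting $R(P_s(k))\oplus R(Q_s(k))$, invert the blocks (using $\lambda\neq 0$ for the one-dimensional block and \textbf{(P3)} for the other), and rearrange to obtain \mref{identity_symbol}. Your version merely makes explicit the invariance and resolvent facts that the paper leaves implicit.
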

The identity \mref{identity_symbol} implies that for each $(s,k)$, the operator $$\eta(k)((L_s(k)-\lambda)|_{R(Q_s(k))})^{-1}Q_s(k)$$ is a periodic pseudodifferential operator in $S^{-2}(\mathbb{T})$. Thus, each of the operators $T_s(k)$ is also in $S^{-2}(\mathbb{T})$ and its symbol is smooth in $(s,k)$ since $P_s(k)$ and $Q_s(k)$ are smooth in $(s,k)$. Actually, more information about the family of operators $\{T_s(k)\}_{(s,k)}$ and their Schwartz kernels can be obtained.

At first, we want to introduce a class of family of operators whose kernels behave nicely.
\begin{defi}
\label{class_S}
We denote by $\mathcal{S}$ the set consisting of families of smoothing operators $\{U_{s}(k)\}_{(s,k)}$ acting on $\mathbb{T}$ so that the following properties hold:
\begin{itemize}
\item For any $ m_1, m_2 \in \mathbb{R}$, the operator $U_s(k)$ is smooth in $k$ as a $\displaystyle B(H^{m_1}(\mathbb{T}), H^{m_2}(\mathbb{T}))$-valued function\footnote{We remind the reader that $B(E,F)$ denotes the space of all bounded linear operators from the Banach space $E$ to $F$.}.

\item The following uniform condition holds for any multi-index $\alpha$:
$$\sup_{s,k}\|D^{\alpha}_k U_s(k)\|_{B(H^{m_1}(\mathbb{T}), H^{m_2}(\mathbb{T}))}<\infty.$$
\end{itemize}
\end{defi}

We remark that if the family of smoothing operators $\{U_s(k)\}_{(s,k)}$ is in $Op(\tilde{S}^{-\infty}(\mathbb{T}))$, then this family also belongs to $\mathcal{S}$.

In order to obtain information on Schwartz kernels of a family of operators in $\mathcal{S}$, we need to use the following standard lemma on Schwartz kernels of integral operators acting on $\mathbb{T}$.
\begin{lemma}
\label{L:kernel_estimate_agmon}
Let $A$ be a bounded operator in $L^2(\mathbb{T})$. Suppose that the range of $A$ is contained in $H^m(\mathbb{T})$, where $m>d/2$ and in addition,
$$\|Af\|_{H^m(\mathbb{T})} \leq C\|f\|_{H^{-m}(\mathbb{T})}$$
for all $f \in L^2(\mathbb{T})$.

Then $A$ is an integral operator whose kernel $K_A(x,y)$ is bounded and uniformly continuous on $\mathbb{T} \times \mathbb{T}$ and the following estimate holds:
\begin{equation}
\label{E:kernel_estimate_agmon}
|K_A(x,y)| \leq \gamma_0 C,
\end{equation}
where $\gamma_0$ is a constant depending only on $d$ and $m$.
\end{lemma}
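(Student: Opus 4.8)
The plan is to realize $K_A$ as a family of Riesz representatives obtained by freezing the first variable and evaluating at a point, exploiting two standard facts about the torus: for $m>d/2$ the Sobolev space $H^m(\mathbb{T})$ embeds into $C(\mathbb{T})$ (indeed into a Hölder space), and $H^m(\mathbb{T})$ is the dual of $H^{-m}(\mathbb{T})$ with respect to the $L^2(\mathbb{T})$ pairing. I would fix $x\in\mathbb{T}$; for $f\in L^2(\mathbb{T})$ the function $Af$ lies in $H^m(\mathbb{T})$ and is therefore continuous, so $\Lambda_x f:=(Af)(x)$ is well defined, and the Sobolev estimate $\|u\|_{C(\mathbb{T})}\le\gamma_1\|u\|_{H^m(\mathbb{T})}$ (with $\gamma_1=\gamma_1(d,m)$) together with the hypothesis gives $|\Lambda_x f|\le\gamma_1\|Af\|_{H^m(\mathbb{T})}\le\gamma_1 C\|f\|_{H^{-m}(\mathbb{T})}$. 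Since $L^2(\mathbb{T})$ is dense in $H^{-m}(\mathbb{T})$, $\Lambda_x$ extends to a bounded functional on $H^{-m}(\mathbb{T})$ of norm at most $\gamma_1 C$, so there is a unique $K_A(x,\cdot)\in H^m(\mathbb{T})$ with $\|K_A(x,\cdot)\|_{H^m(\mathbb{T})}\le\gamma_1 C$ and $(Af)(x)=\int_{\mathbb{T}}K_A(x,y)f(y)\,dy$ for all $f\in L^2(\mathbb{T})$; this exhibits $A$ as an integral operator with kernel $K_A$.

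For the pointwise bound I would apply the Sobolev embedding once more, now in the $y$ variable: $|K_A(x,y)|\le\|K_A(x,\cdot)\|_{C(\mathbb{T})}\le\gamma_1\|K_A(x,\cdot)\|_{H^m(\mathbb{T})}\le\gamma_1^2 C$, so that $\gamma_0:=\gamma_1^2$ depends only on $d$ and $m$.

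For continuity I would upgrade the embedding to $H^m(\mathbb{T})\hookrightarrow C^{0,\alpha}(\mathbb{T})$ for a fixed $\alpha\in(0,m-d/2)$, say $\alpha=\tfrac12\min(1,m-d/2)$, with constant $\gamma_2=\gamma_2(d,m)$; on the torus this is immediate from the Fourier-series description of $H^m(\mathbb{T})$. Then continuity in the second variable is uniform in the first, since $\|K_A(x,\cdot)\|_{C^{0,\alpha}(\mathbb{T})}\le\gamma_1\gamma_2 C$ for every $x$. Continuity in the first variable follows by the same duality trick: for $x,x'\in\mathbb{T}$ the functional $f\mapsto(Af)(x)-(Af)(x')$ on $H^{-m}(\mathbb{T})$ has norm at most $\gamma_2 C\,|x-x'|^\alpha$ (bounding the difference of point values of $Af\in H^m(\mathbb{T})$ by its Hölder seminorm), and its Riesz representative is $K_A(x,\cdot)-K_A(x',\cdot)$, so $\|K_A(x,\cdot)-K_A(x',\cdot)\|_{C(\mathbb{T})}\le\gamma_1\|K_A(x,\cdot)-K_A(x',\cdot)\|_{H^m(\mathbb{T})}\le\gamma_1\gamma_2 C\,|x-x'|^\alpha$. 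Combining the two estimates yields joint (Hölder) continuity of $K_A$ on $\mathbb{T}\times\mathbb{T}$; since $\mathbb{T}\times\mathbb{T}$ is compact, uniform continuity is automatic, and joint continuity also settles the measurability needed to legitimately call $A$ an integral operator.

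The only genuine obstacle here is bookkeeping rather than mathematics: one must verify that each constant that enters — the embedding $H^m(\mathbb{T})\hookrightarrow C(\mathbb{T})$, the embedding $H^m(\mathbb{T})\hookrightarrow C^{0,\alpha}(\mathbb{T})$, and the norm of the $L^2$ duality pairing — depends only on $d$ and $m$, and one must be consistent about the convention (in particular the complex conjugation) in the identification $H^m(\mathbb{T})\cong (H^{-m}(\mathbb{T}))^*$ so that the kernel appears as $(Af)(x)=\int K_A(x,y)f(y)\,dy$ rather than with a conjugated kernel. The Fourier-series model of $H^s(\mathbb{T})$ makes all these embedding constants explicit if one wishes to track them.
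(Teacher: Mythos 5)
Your proof is correct, and it is the standard Riesz--representation argument (freeze $x$, view $f\mapsto(Af)(x)$ as a functional on $H^{-m}$ via the embedding $H^m\hookrightarrow C$, then apply Sobolev embedding again in $y$, and pass to $H^m\hookrightarrow C^{0,\alpha}$ for the Hölder/continuity statements). The paper itself gives no proof of this lemma and simply cites Lemma~2.2 of Agmon's 1965 paper \cite{Ag}; the argument you wrote out is essentially the one in that reference, so your reconstruction is a faithful, self-contained version of the cited fact, with the constant $\gamma_0$ visibly depending only on $d$ and $m$ through the two embedding constants.
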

The fact can be found in Lemma 2.2 in \cite{Ag}.

Now we can state a useful property of Schwartz kernels of a family of operators in $\mathcal{S}$.
\begin{cor}
\label{kernel_in_S}
If $\{U_s(k)\}_{(s,k)}$ is a family of smoothing operators in $\mathcal{S}$, then the Schwartz kernel $K_{U_s}(k,x,y)$ of the operator $U_s(k)$ satisfies
$$\sup_{s,k,x,y}|D^{\alpha}_k K_{U_s}(k,x,y)|<\infty,$$
for any multi-index $\alpha$.
\end{cor}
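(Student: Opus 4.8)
The plan is to deduce this corollary directly from the uniform kernel estimate of Lemma \ref{L:kernel_estimate_agmon}, the only new point being that differentiation in $k$ may be interchanged with the passage to Schwartz kernels. Fix once and for all a real number $m>d/2$ and a multi-index $\alpha$. By the first bullet in Definition \ref{class_S}, $k\mapsto U_s(k)$ is a $C^\infty$ map of $\mathcal{O}$ into $B(H^{-m}(\mathbb{T}),H^{m}(\mathbb{T}))$ for each $s$; in particular $D^{\alpha}_k U_s(k)\in B(H^{-m}(\mathbb{T}),H^{m}(\mathbb{T}))$ for every $(s,k)$, and by the second bullet
\[
C_\alpha:=\sup_{(s,k)}\big\|D^{\alpha}_k U_s(k)\big\|_{B(H^{-m}(\mathbb{T}),H^{m}(\mathbb{T}))}<\infty .
\]

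Next I would package Lemma \ref{L:kernel_estimate_agmon} as the statement that the passage $A\mapsto K_A$ to the Schwartz kernel is a \emph{bounded linear} map from $B(H^{-m}(\mathbb{T}),H^{m}(\mathbb{T}))$ into $C(\mathbb{T}\times\mathbb{T})$, with operator norm at most the constant $\gamma_0$ of that lemma (here the hypothesis $m>d/2$ is used, with $C=\|A\|_{B(H^{-m},H^{m})}$). Since a bounded linear map carries a $C^\infty$ Banach-space-valued curve to a $C^\infty$ curve whose derivatives are obtained by applying the map termwise, composing with $k\mapsto U_s(k)$ shows that $k\mapsto K_{U_s}(k,\cdot,\cdot)$ is $C^\infty$ from $\mathcal{O}$ into $C(\mathbb{T}\times\mathbb{T})$ and that
\[
D^{\alpha}_k K_{U_s}(k,x,y)=K_{D^{\alpha}_k U_s(k)}(x,y),\qquad (x,y)\in\mathbb{T}\times\mathbb{T}.
\]
In particular the left-hand side is a genuine pointwise partial derivative in $k$ for each fixed $(x,y)$. (If one prefers to avoid this abstract formulation, the same identity follows by writing the kernel as $K_{U_s}(k,x,y)=\langle U_s(k)\delta_y,\delta_x\rangle$, forming the difference quotient in $k$, and passing to the limit using the uniform operator bounds together with the embedding $H^{m}(\mathbb{T})\hookrightarrow C(\mathbb{T})$.)

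Finally, applying the estimate \mref{E:kernel_estimate_agmon} of Lemma \ref{L:kernel_estimate_agmon} to the operator $A=D^{\alpha}_k U_s(k)$ gives, for every $(s,k)$,
\[
\sup_{x,y}\big|D^{\alpha}_k K_{U_s}(k,x,y)\big|=\big\|K_{D^{\alpha}_k U_s(k)}\big\|_{C(\mathbb{T}\times\mathbb{T})}\le \gamma_0\,\big\|D^{\alpha}_k U_s(k)\big\|_{B(H^{-m}(\mathbb{T}),H^{m}(\mathbb{T}))}\le \gamma_0\,C_\alpha ,
\]
and taking the supremum over $(s,k)$ yields the asserted bound; uniformity in $s$ costs nothing, since it is already built into the second bullet of Definition \ref{class_S}. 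The only step requiring care is the interchange in the second paragraph, and I expect it to be the main (though routine) obstacle; everything else is a direct quotation of Lemma \ref{L:kernel_estimate_agmon} and Definition \ref{class_S}.
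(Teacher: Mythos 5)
Your argument is essentially the paper's: pick $m>d/2$, read the uniform bound on $\|D^\alpha_k U_s(k)\|_{B(H^{-m},H^m)}$ off Definition~\ref{class_S}, and apply Lemma~\ref{L:kernel_estimate_agmon} to the operator $D^\alpha_k U_s(k)$. The only difference is that you spell out the interchange $D^\alpha_k K_{U_s}=K_{D^\alpha_k U_s}$ via the boundedness of the kernel map $B(H^{-m},H^m)\to C(\mathbb{T}\times\mathbb{T})$, a point the paper passes over silently; this is a harmless and correct elaboration.
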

\begin{proof}
We pick any $m>d/2$. Then by Definition \ref{class_S}, we have
$$\sup_{s,k}\|D^{\alpha}_k U_s(k)f\|_{H^m(\mathbb{T})} \leq C_{\alpha}\|f\|_{H^{-m}(\mathbb{T})}.$$
Applying Lemma \ref{L:kernel_estimate_agmon}, the estimates \mref{E:kernel_estimate_agmon} hold for kernels $D^{\alpha}_k K_{U_s}(k,x,y)$ of the operators $D^{\alpha}_k U_s(k)$ uniformly in $(s,k)$.
\end{proof}

We now go back to the family of operators $T_s(k)$.
\begin{prop}
\label{P:pseudo_ops}
There is a family of periodic pseudodifferential operators $\{B_{s}(k)\}_{(s,k)}$ in $Op(\tilde{S}^{-2}(\mathbb{T}))$ such that the family of operators $\{T_s(k)-B_s(k)\}_{(s,k)}$ belongs to $\mathcal{S}$.
\end{prop}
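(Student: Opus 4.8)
The plan is to realize $T_s(k)$, modulo the class $\mathcal S$, as the image of the parametrix $A_s(k)$ of Theorem \ref{parametrices} under the structural identity $(L_s(k)-\lambda)T_s(k)=T_s(k)(L_s(k)-\lambda)=I-\eta(k)P_s(k)$ recorded above. Concretely I would set
\[
B_s(k):=A_s(k)\bigl(I-\eta(k)P_s(k)\bigr)=A_s(k)-A_s(k)\eta(k)P_s(k)
\]
and claim this family works. That $\{B_s(k)\}\in Op(\tilde S^{-2}(\mathbb T))$ is checked as follows: $\{A_s(k)\}\in Op(\tilde S^{-2}(\mathbb T))$ by Theorem \ref{parametrices}, while $\{\eta(k)P_s(k)\}\in Op(\tilde S^{-\infty}(\mathbb T))$ because, by \mref{E:form_P_s}, it is the rank-one family with Schwartz kernel $\eta(k)\phi(k+i\beta_s,x)\overline{\phi(k-i\beta_s,y)}/F(k+i\beta_s)$, which is smooth on $\supp(\eta)\times\mathbb T\times\mathbb T$ (there $\phi(\cdot\pm i\beta_s,\cdot)$ is holomorphic and $F$ is holomorphic and nonzero by \textbf{(P1)}, \textbf{(P4)}, \textbf{(P6)}) and, since $s\mapsto\beta_s$ is smooth on the compact sphere, is bounded with all its derivatives uniformly in $s$. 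By the composition calculus for the parameter classes (Section 8), $\{A_s(k)\eta(k)P_s(k)\}\in Op(\tilde S^{-\infty}(\mathbb T))$, so $\{B_s(k)\}\in Op(\tilde S^{-2}(\mathbb T))$.

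Next I would compute the difference. Using the identity above, $B_s(k)=A_s(k)(L_s(k)-\lambda)T_s(k)$, so with $\tilde R_s(k):=I-A_s(k)(L_s(k)-\lambda)$ — the smoothing remainder of the left parametrix, which lies in $Op(\tilde S^{-\infty}(\mathbb T))$ —
\[
T_s(k)-B_s(k)=\bigl(I-A_s(k)(L_s(k)-\lambda)\bigr)T_s(k)=\tilde R_s(k)\,T_s(k),
\]
and it remains to prove $\{\tilde R_s(k)T_s(k)\}\in\mathcal S$. For this I would first record that $\{T_s(k)\}$ and all its $k$-derivatives are bounded in $B(H^{m_1}(\mathbb T),H^{m_1}(\mathbb T))$ uniformly in $(s,k)$: by Lemma \ref{L:identity_symbol}, $T_s(k)=(1-\eta(k))(L_s(k)-\lambda)^{-1}+\lambda^{-1}\eta(k)P_s(k)+\eta(k)(L_s(k)Q_s(k)-\lambda)^{-1}$, where the first summand is uniformly bounded because $\lambda\in\rho(L_s(k))$ for $k\in\supp(1-\eta)$ by Proposition \ref{singularity} and that parameter set is compact, the second trivially, and the third because $L_s(k)Q_s(k)-\lambda$ is invertible on $\supp(\eta)$ by Lemma \ref{L:identity_symbol}; differentiating a resolvent in $k$ only lowers its order. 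Then the Leibniz rule
\[
D^\alpha_k\bigl(\tilde R_s(k)T_s(k)\bigr)=\sum_{\beta\le\alpha}\binom{\alpha}{\beta}\bigl(D^\beta_k\tilde R_s(k)\bigr)\bigl(D^{\alpha-\beta}_k T_s(k)\bigr)
\]
exhibits each $k$-derivative as a finite sum of products of a uniformly bounded operator on $H^{m_1}(\mathbb T)$ with a family $D^\beta_k\tilde R_s(k)\in Op(\tilde S^{-\infty}(\mathbb T))$, the latter mapping $H^{m_1}(\mathbb T)$ into $H^{m_2}(\mathbb T)$ with norm bounded uniformly in $(s,k)$ for all $m_1,m_2$; hence $D^\alpha_k(\tilde R_s(k)T_s(k))$ is uniformly bounded in $B(H^{m_1}(\mathbb T),H^{m_2}(\mathbb T))$, which is precisely Definition \ref{class_S}. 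Thus $\{T_s(k)-B_s(k)\}\in\mathcal S$.

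I expect the one genuinely delicate point to be the uniform-in-$(s,k)$ control of the third summand $\eta(k)(L_s(k)Q_s(k)-\lambda)^{-1}$, equivalently of $((L_s(k)-\lambda)|_{R(Q_s(k))})^{-1}Q_s(k)$ for $k$ near $k_0$: one must establish not merely that $\lambda$ avoids the spectrum of this non-self-adjoint restricted operator for each $k$, but that the resolvent together with all of its $k$-derivatives stays bounded as $s$ ranges over the sphere — which is where compactness of $\mathbb S^{d-1}$, the Kato-analytic dependence of the family $\{L(z)\}$, the separation $\operatorname{dist}(\lambda,\sigma(L_s(k)|_{R(Q_s(k))}))>0$ coming from \textbf{(P3)}, and elliptic estimates are used. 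The remaining ingredients — the composition calculus and Sobolev mapping properties for the classes $\tilde S^{\bullet}$, and the Leibniz computation — are routine and are developed in Section 8.
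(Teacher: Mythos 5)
Your overall strategy --- realize $T_s(k)$ modulo $\mathcal S$ as the parametrix $A_s(k)$ from Theorem \ref{parametrices} modified by the rank-one cutoff, then control the difference via Lemma \ref{L:identity_symbol} and a Leibniz computation --- is the same as the paper's, but set up as a mirror image. The paper defines $B_s(k):=A_s(k)-\eta(k)P_s(k)A_s(k)$ (projection on the \emph{left}) and combines $T_s(k)(L_s(k)-\lambda)=I-\eta(k)P_s(k)$ with the \emph{right} parametrix identity $(L_s(k)-\lambda)A_s(k)=I-R_s(k)$ to obtain $T_s(k)-B_s(k)=T_s(k)R_s(k)$, which is then shown to be in $\mathcal S$ using the $H^{m_2-2}\to H^{m_2}$ boundedness of $T_s(k)$ that follows from Lemma \ref{L:identity_symbol}. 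You instead take $B_s(k):=A_s(k)-A_s(k)\eta(k)P_s(k)$ and use $(L_s(k)-\lambda)T_s(k)=I-\eta(k)P_s(k)$ together with a \emph{left} parametrix identity $A_s(k)(L_s(k)-\lambda)=I-\tilde R_s(k)$ to get $T_s(k)-B_s(k)=\tilde R_s(k)T_s(k)$.

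The gap is precisely here: Theorem \ref{parametrices} (via Theorem \ref{Inversion}) produces only a \emph{right} parametrix, i.e.\ $R_s(k)=I-(L_s(k)-\lambda)A_s(k)\in Op(\tilde S^{-\infty}(\mathbb T))$. You assert without justification that $\tilde R_s(k)=I-A_s(k)(L_s(k)-\lambda)$ also lies in $Op(\tilde S^{-\infty}(\mathbb T))$. That statement is true, but it does not follow from what the paper records: the only composition result proved in Section 8, Theorem \ref{composition_formula}, treats $P(x,D)\circ Op(a)$ with the differential operator on the \emph{left}. To get your left-parametrix remainder into $Op(\tilde S^{-\infty}(\mathbb T))$ you would need either the analogous finite expansion for $Op(a)\circ P(x,D)$, or a separate left-parametrix construction parallel to Theorem \ref{Inversion}, or the standard two-sided argument that any right parametrix of an elliptic family agrees with a left parametrix up to a term in $Op(\tilde S^{-\infty}(\mathbb T))$ --- none of which are in the paper. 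The paper's choice of $B_s(k)$ is made exactly to sidestep this: only the right parametrix that is actually constructed is used, and the remainder $T_s(k)R_s(k)$ lands in $\mathcal S$ by the same Lemma \ref{L:identity_symbol}/Leibniz argument you give. So the proof idea is sound, but as written it leans on a left-parametrix fact for the parameter classes $\tilde S^{\bullet}(\mathbb T)$ that must be supplied; alternatively, flip your definition of $B_s(k)$ to the paper's and the need for it disappears.
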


\begin{proof}
Due to Theorem \ref{parametrices}, there is a family of operators $\{A_s(k)\}_{(s,k)}$ in $Op(\tilde{S}^{-2}(\mathbb{T}))$ and a family of operators $\{R_s(k)\}_{(s,k)}$ in $Op(\tilde{S}^{-\infty}(\mathbb{T}))$ such that
$$(L_s(k)-\lambda)A_s(k)=I-R_s(k).$$
Since $T_s(k)(L_s(k)-\lambda)=I-\eta(k)P_s(k)$, we deduce that
\begin{equation}
\label{E:eqn_T(k)}
T_s(k)=A_s(k)-\eta(k)P_s(k)A_s(k)+T_s(k)R_s(k).
\end{equation}

One can check that the symbols of operators $\eta(k)P_s(k)A_s(k)$ are in the class $\tilde{S}^{-\infty}(\mathbb{T})$ due to the composition formula. Thus if we let $B_s(k):=A_s(k)-\eta(k)P_s(k)A_s(k)$, then $\{B_s(k)\}_{(s,k)} \in Op(\tilde{S}^{-2}(\mathbb{T}))$.

Now it remains to show that the family of smoothing operators $\{T_s(k)R_s(k)\}_{(s,k)}$ is in $\mathcal{S}$. Let us fix any two real numbers $m_1$, $m_2$ and a multi-index $\alpha$. Notice that $(L_s(k)-\lambda)$ is analytic in $k$ as a $\displaystyle B(H^{m_2}(\mathbb{T}), H^{m_2-2}(\mathbb{T}))$-valued function and also,
$$\sup_{s,k}\|D^{\alpha}_k (L_s(k)-\lambda)\|_{B(H^{m_2}(\mathbb{T}), H^{m_2-2}(\mathbb{T}))}<\infty.$$
Due to Lemma \ref{L:identity_symbol},
$$T_s(k)=(1-\eta(k))(L_s(k)-\lambda)^{-1}+\eta(k)\lambda^{-1}P_s(k)+\eta(k)(L_s(k)Q_s(k)-\lambda)^{-1}.$$
Thus, $T_s(k)$ is smooth in $k$ as a $B(H^{m_2-2}(\mathbb{T}), H^{m_2}(\mathbb{T}))$-valued function and moreover,
\begin{equation}
\label{uniform_Tsk}
\sup_{s,k}\|D^{\alpha}_k T_s(k)\|_{B(H^{m_2-2}(\mathbb{T}), H^{m_2}(\mathbb{T}))}<\infty.
\end{equation}
Since $\{R_{s}(k)\}$ is in $Op(\tilde{S}^{-\infty}(\mathbb{T}))$, $R_s(k)$ is smooth in $k$ as a $B(H^{m_1}(\mathbb{T}), H^{m_2-2}(\mathbb{T}))$-valued function and furthermore,
\begin{equation}
\label{uniform_Rsk}
\sup_{s,k}\|D^{\alpha}_k R_s(k)\|_{B(H^{m_1}(\mathbb{T}), H^{m_2-2}(\mathbb{T}))}<\infty.
\end{equation}
By \mref{uniform_Tsk}, \mref{uniform_Rsk} and Leibnitz's rule, we deduce that $T_s(k)R_s(k)$ is smooth in $k$ as a  $B(H^{m_1}(\mathbb{T}), H^{m_2}(\mathbb{T}))$-valued function and the corresponding uniform estimate also holds. We conclude that the family $\{T_s(k)-B_s(k)\}_{(s,k)}$ is in $\mathcal{S}$.
\end{proof}

We need the following important estimate of Schwartz kernels of operators $T_s(k)$:
\begin{cor}
\label{kernel_estimate}
Let $K_s(k,x,y)$ be the Schwartz kernel of the operator $T_s(k)$. Let $N>d-2$. If $\alpha$ is a multi-index such that $|\alpha|=N$, then each $D^{\alpha}_k K_s(k,x,y)$ is a continuous function on $\mathbb{T} \times \mathbb{T}$ and the following estimate also holds uniformly with respect to $(x,y)$:
\begin{equation*}
\sup_{(s,k) \in \mathbb{S}^{d-1} \times \mathcal{O}}|D_{k}^{\alpha}K_s(k,x,y)|< \infty.
\end{equation*}
\end{cor}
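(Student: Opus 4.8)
The plan is to split $T_s(k)$ using Proposition \ref{P:pseudo_ops} and treat the two resulting pieces separately. Write
\begin{equation*}
T_s(k) = B_s(k) + \big(T_s(k) - B_s(k)\big),
\end{equation*}
where $\{B_s(k)\}_{(s,k)} \in Op(\tilde{S}^{-2}(\mathbb{T}))$ and $\{T_s(k) - B_s(k)\}_{(s,k)} \in \mathcal{S}$. The second piece is handled with no restriction on $|\alpha|$: since the family lies in $\mathcal{S}$, Corollary \ref{kernel_in_S} gives $\sup_{s,k,x,y}|D^{\alpha}_k K_{T_s - B_s}(k,x,y)| < \infty$ for every multi-index $\alpha$, and, being the kernel of a smoothing operator, $K_{T_s - B_s}$ and all its $k$-derivatives are continuous on $\mathbb{T}\times\mathbb{T}$. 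So the hypothesis $N > d-2$ is needed only for the $B_s(k)$-contribution.

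For that contribution, let $\sigma(s,k;x,\xi) \in \tilde{S}^{-2}(\mathbb{T})$ be the symbol of $B_s(k)$. By Definition \ref{periodic_pseudo} its Schwartz kernel is
\begin{equation*}
K_{B_s}(k,x,y) = \sum_{\xi \in \mathbb{Z}^d} \sigma(s,k;x,\xi)\, e^{2\pi i \xi \cdot (x-y)}.
\end{equation*}
I would differentiate this series term by term in $k$, so that for $|\alpha| = N$,
\begin{equation*}
D^{\alpha}_k K_{B_s}(k,x,y) = \sum_{\xi \in \mathbb{Z}^d} D^{\alpha}_k\sigma(s,k;x,\xi)\, e^{2\pi i \xi \cdot (x-y)},
\end{equation*}
the interchange being justified once the differentiated series is shown to converge uniformly. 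By the parameter-dependent estimates in Definition \ref{parameter_class_symbols} (with $m=-2$, and no $\xi$- or $x$-derivatives), one has $|D^{\alpha}_k\sigma(s,k;x,\xi)| \le C_{\alpha}(1+|\xi|)^{-2-|\alpha|} = C_{\alpha}(1+|\xi|)^{-2-N}$ uniformly in $(s,k,x)$ — this is exactly the gain of decay in $\xi$ coming from each $k$-derivative. Hence the series above is dominated, uniformly in $(s,k,x,y)$, by $C_{\alpha}\sum_{\xi \in \mathbb{Z}^d}(1+|\xi|)^{-2-N}$.

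The hypothesis enters at precisely this point: $\sum_{\xi \in \mathbb{Z}^d}(1+|\xi|)^{-2-N}$ converges if and only if $2+N > d$, i.e. $N > d-2$. Under this assumption the series for $D^{\alpha}_k K_{B_s}$ converges absolutely and uniformly, so $D^{\alpha}_k K_{B_s}(k,\cdot,\cdot)$ is continuous on $\mathbb{T}\times\mathbb{T}$ with $\sup_{(s,k)}|D^{\alpha}_k K_{B_s}(k,x,y)| < \infty$ uniformly in $(x,y)$; adding the $\mathcal{S}$-contribution gives the corollary. I do not expect a genuine obstacle here — the substantive points are merely the bookkeeping that legitimizes termwise $k$-differentiation and the elementary fact that $\sum_{\xi \in \mathbb{Z}^d}(1+|\xi|)^{-r}<\infty$ exactly when $r>d$, which is what forces the threshold $N>d-2$. (If one prefers Euclidean symbols over toroidal ones, one first passes between the two via Theorem 4.5.3 of \cite{Ruzh-Turu}, at the cost of a symbol in $\tilde{S}^{-\infty}(\mathbb{T})$, which is absorbed into the $\mathcal{S}$-term.)
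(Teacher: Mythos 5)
Your proof is correct and follows essentially the same route as the paper: decompose $T_s(k)$ via Proposition~\ref{P:pseudo_ops} into a $\tilde{S}^{-2}(\mathbb{T})$ piece and an $\mathcal{S}$ piece, handle the latter with Corollary~\ref{kernel_in_S}, and for the former differentiate the kernel series termwise using the gain $|D^{\alpha}_k\sigma(s,k;x,\xi)|\lesssim(1+|\xi|)^{-2-N}$ and the convergence criterion $2+N>d$. The only additions you make (explicit justification of the interchange of sum and derivative, the Euclidean/toroidal remark) are sound and consistent with the paper's argument.
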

\begin{proof}
Due to Proposition \ref{P:pseudo_ops}, the operator $T_s(k)$ is a sum of operators $B_s(k)$ and $U_s(k)$ such that $\{B_s(k)\}_{(s,k)} \in Op(\tilde{S}^{-2}(\mathbb{T}))$ and $\{U_s(k)\}_{(s,k)} \in \mathcal{S}$. In particular,
$$K_s(k,x,y)=K_{B_s}(k,x,y)+K_{U_s}(k,x,y).$$

Recall that in the distributional sense, the Schwartz kernel $K_{B_s}(k,x,y)$ of the periodic pseudodifferential operator $B_s(k)$ is given by
$$\sum_{\xi \in \mathbb{Z}^d}\sigma(s,k;x,\xi)e^{2\pi i\xi\cdot (x-y)},$$
where $\sigma(s,k;x,\xi)$ is the symbol of the operator $B_s(k)$.

Since $\{\sigma(s,k;x,\xi)\}_{(s,k)}$ is in $\tilde{S}^{-2}(\mathbb{T})$,
$$|e^{2\pi i\xi\cdot (x-y)}D^{\alpha}_k \sigma(s,k;x,\xi)| \lesssim (1+|\xi|)^{-2-N}.$$
Since $-(2+N)<-d$, the sum
$$\sum_{\xi \in \mathbb{Z}^d} D^{\alpha}_k\sigma(s,k;x,\xi)e^{2\pi i\xi\cdot (x-y)}$$
converges absolutely and moreover,
\begin{equation*}
\sup_{(s,k,x,y) \in \mathbb{S}^{d-1} \times \mathcal{O} \times \mathbb{T} \times \mathbb{T}}|D_{k}^{\alpha}K_{B_s}(k,x,y)| \lesssim \sum_{\xi \in \mathbb{Z}^d}(1+|\xi|)^{-(d+1)}<\infty.
\end{equation*}
Combining this with Corollary \ref{kernel_in_S}, we complete the proof.
\end{proof}

\begin{notation}
Let $\psi$ be a function on $\mathbb{R}^d$ and $\gamma$ be a vector in $\mathbb{R}^d$, then $\tau_{\gamma}\psi$ is the $\gamma$-shifted version of $\psi$. Namely, it is defined as follows: $$\tau_{\gamma}\psi(\cdot)=\psi(\cdot+\gamma).$$

We denote by $\mathcal{P}$ the subset of $C^{\infty}_{0}(\mathbb{R}^d)$ consisting of all functions $\psi$ such that its support is connected, and if $\gamma$ is a \textbf{non-zero vector in $\mathbb{Z}^d$}, then the support of $\tau_{\gamma}\psi$ does not intersect with the support of $\psi$.
\end{notation}

\begin{defi}
Since $\mathbb{R}^d$ is the universal covering space of $\mathbb{T}$, we can consider the covering map $$\pi: \mathbb{R}^d \rightarrow \mathbb{R}^d/\mathbb{Z}^d=\mathbb{T}.$$
In particular, $\pi(x+\gamma)=\pi(x)$ for any $x \in \mathbb{R}^d$ and $\gamma \in \mathbb{Z}^d$.

A \textbf{standard fundamental domain} (with respect to the covering map $\pi$) is of the form $[0,1]^d+\gamma$ for some vector $\gamma$ in $\mathbb{R}^d$. Thus, a standard fundamental domain is a fundamental domain of $\mathbb{R}^d$ with respect to the lattice $\mathbb{Z}^d$.
\end{defi}

Using Definition \ref{Floquet transforms} of the Floquet transform $\mathcal{F}$, we can obtain the following formula:
\begin{lemma}
\label{kernel_formula}
Let $\phi$ and $\theta$ be any two smooth functions in $\mathcal{P}$. Then the Schwartz kernel $K_{s,\phi, \theta}$ of the operator $\phi T_s\theta$ satisfies the following identity for any $(x,y) \in \mathbb{R}^d \times \mathbb{R}^d$:
\begin{equation*}
K_{s,\phi,\theta}(x,y)=\frac{1}{(2\pi)^d}\int_{\mathcal{O}} e^{ik\cdot (x-y)}\phi(x)K_s(k,\pi(x),\pi(y))\theta(y)dk.
\end{equation*}
\end{lemma}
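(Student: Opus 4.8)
The plan is to identify the Schwartz kernel by testing $\phi T_s\theta$ against an arbitrary $f\in C^{\infty}_{c}(\mathbb{R}^d)$ and tracking where each variable is sent by the Floquet transform. Put $g:=\theta f\in C^{\infty}_{c}(\mathbb{R}^d)$; since the sum defining $\mathcal{F}$ is then locally finite, $\widehat{g}(k,\cdot)$ is for each $k$ a smooth function on $\mathbb{T}$ depending holomorphically on $k$. By the very definition of $T_s$ through its Floquet decomposition one has $\widehat{T_s g}(k,\cdot)=T_s(k)\widehat{g}(k,\cdot)$; by the Paley--Wiener part of Lemma \ref{L:floquet} this exhibits $T_s g$ as a rapidly decaying function, so the inversion formula \mref{E:inversion1} applies pointwise:
\[
(T_s g)(x)=(2\pi)^{-d}\int_{\mathcal{O}}e^{ik\cdot x}\,\bigl(T_s(k)\widehat{g}(k,\cdot)\bigr)(\pi(x))\,dk .
\]

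Next I would pass to the Schwartz kernel $K_s(k,\cdot,\cdot)$ of the fiber operator $T_s(k)$ on $\mathbb{T}\times\mathbb{T}$. Its existence and its mild (locally integrable, uniformly in $(s,k)$) diagonal singularity --- of order $|z-z'|^{2-d}$, logarithmic when $d=2$ --- come from Proposition \ref{P:pseudo_ops}, which writes $T_s(k)$ as the sum of an operator in $Op(\tilde{S}^{-2}(\mathbb{T}))$ and a family in $\mathcal{S}$ with smooth bounded kernel. Then $\bigl(T_s(k)\widehat{g}(k,\cdot)\bigr)(\pi(x))=\int_{\mathbb{T}}K_s(k,\pi(x),z)\widehat{g}(k,z)\,dz$. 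Substituting $\widehat{g}(k,z_0)=\sum_{\gamma\in\mathbb{Z}^d}g(z_0+\gamma)e^{-ik\cdot(z_0+\gamma)}$ for $z_0$ ranging over the unit cube $W$ --- a sum that collapses to a single term for each $z_0$ because $\theta\in\mathcal{P}$ --- interchanging this locally finite sum with the $\mathbb{T}$-integration, putting $y=z_0+\gamma$, and using the $\mathbb{Z}^d$-periodicity of $K_s(k,\cdot,\cdot)$ in its second argument, the integral over $W$ unfolds to
\[
\bigl(T_s(k)\widehat{g}(k,\cdot)\bigr)(\pi(x))=\int_{\mathbb{R}^d}K_s(k,\pi(x),\pi(y))\,g(y)\,e^{-ik\cdot y}\,dy .
\]

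Inserting this back, multiplying by $\phi(x)$, restoring $g=\theta f$, and applying Fubini's theorem --- legitimate since for fixed $x$ the function $(k,y)\mapsto K_s(k,\pi(x),\pi(y))\theta(y)f(y)$ is absolutely integrable on $\mathcal{O}\times\supp f$ by the uniform local integrability above and the compact support of $f$ --- I arrive at
\[
(\phi T_s\theta f)(x)=\int_{\mathbb{R}^d}\Bigl[(2\pi)^{-d}\int_{\mathcal{O}}e^{ik\cdot(x-y)}\phi(x)K_s(k,\pi(x),\pi(y))\theta(y)\,dk\Bigr]f(y)\,dy ,
\]
which identifies the bracketed expression as the Schwartz kernel $K_{s,\phi,\theta}(x,y)$. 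When $x-y\notin\mathbb{Z}^d$ the inner $k$-integral converges in the usual sense; when $x-y\in\mathbb{Z}^d$ it is understood through the locally integrable representative, which is all that is needed afterwards.

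I expect the one genuine point requiring care to be the two interchanges --- the $\gamma$-summation against the fiberwise $\mathbb{T}$-integration, and the $k$-integration against the $y$-integration --- both of which rest on uniform (in $s,k$) control of the diagonal singularity of the fiber kernels $K_s(k,\cdot,\cdot)$; this is exactly what Proposition \ref{P:pseudo_ops} plus the standard estimate for kernels of negative-order pseudodifferential operators on a compact manifold supplies. The assumption $\phi,\theta\in\mathcal{P}$ enters only to make the unfolding transparent (each Floquet series has a single term on $W$) and to make $\pi(x)$, $\pi(y)$, hence the right-hand side, unambiguous on the supports.
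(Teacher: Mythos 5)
Your argument is correct and is essentially the paper's proof: both rest on the Floquet direct-integral decomposition of $T_s$, the collapse of the Floquet series of $\theta f$ (and, in the paper, also of $\overline{\phi}g$) to a single term because the supports of the $\mathbb{Z}^d$-translates are disjoint, the fiber kernels $K_s(k,\cdot,\cdot)$, and Fubini. The only presentational difference is that you apply the inversion formula pointwise whereas the paper verifies the identity weakly by pairing against a second test function; your explicit appeal to the uniformly locally integrable diagonal singularity of the order $-2$ toroidal $\Psi$DO kernels to justify the interchanges is a point the paper leaves implicit.
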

\begin{proof}
Since both $\phi, \theta \in \mathcal{P}$, there are standard fundamental domains $W_{\phi}$ and  $W_{\theta} \subset \mathbb{R}^d$ so that
$$\supp(\phi) \subset \mathring{W_{\phi}}, \quad \supp(\theta) \subset \mathring{W_{\theta}}.$$
Then, it suffices to show that $\langle \phi T_{s}\theta f,g \rangle$ equals
$$\frac{1}{(2\pi)^d}\int_{W_{\phi}}\int_{W_{\theta}}\int_{\mathcal{O}} e^{ik\cdot (x-y)}(\phi \overline{g})(x)K_s(k, \pi(x), \pi(y))(\theta f)(y)dkdydx,$$
for any $f,g$ in $C^{\infty}(\mathbb{R}^d)$.

We observe that
\begin{equation*}
\begin{split}
&\left\langle \phi T_{s}\theta f,g \right\rangle=\left\langle \mathcal{F}\phi T_{s}\theta f,\mathcal{F}g \right\rangle\\=&\frac{1}{(2\pi)^d}\left\langle (\mathcal{F}\phi \mathcal{F}^{-1}) \left(\int_{\mathcal{O}}^{\oplus}T_s(k)dk\right) \mathcal{F}(\theta f)
,\mathcal{F}g \right\rangle
\\=&\frac{1}{(2\pi)^d}\left\langle \left(\int_{\mathcal{O}}^{\oplus}T_s(k)dk\right) \mathcal{F}(\theta f)
, \mathcal{F}(\overline{\phi}g) \right\rangle.
\end{split}
\end{equation*}

Since $\theta \in \mathcal{P}$, for any $y$ in $W_{\theta}$, we have
$$\mathcal{F}(\theta f)(k,\pi(y))=(\theta f)(y)e^{-ik\cdot y}.$$

Similarly,
$$\mathcal{F}(\overline{\phi}g)(k,\pi(x))=(\overline{\phi}g)(x)e^{-ik\cdot x}, \quad \forall x \in W_{\phi}.$$

We also have
\begin{equation*}
\left(\int_{\mathcal{O}}^{\oplus}T_s(k)dk\right) (\mathcal{F}(\theta f))(k,\pi(x))
=T_s(k)(\mathcal{F}(\theta f)(k,\cdot))(\pi(x)).
\end{equation*}

Consequently,
\begin{equation*}
\begin{split}
&\left\langle \left(\int_{\mathcal{O}}^{\oplus}T_s(k)dk\right) \mathcal{F}(\theta f)
, \mathcal{F}(\overline{\phi}g)\right\rangle\\
=&\int_{\mathcal{O}}\int_{W_{\phi}}T_s(k)(\mathcal{F}(\theta f)(k,\cdot))(\pi(x))\overline{(\overline{\phi} g)(x)e^{-ik\cdot x}}dxdk\\
=&\int_{\mathcal{O}}\int_{W_{\phi}}\int_{W_{\theta}}K_s(k,\pi(x),\pi(y))\mathcal{F}(\theta f)(k,\pi(y))(\phi \overline{g})(x)e^{ik\cdot x}dydxdk\\
=&\int_{\mathcal{O}}\int_{W_{\phi}}\int_{W_{\theta}}e^{ik\cdot (x-y)}K_s(k,\pi(x),\pi(y))(\theta f)(y)(\phi \overline{g})(x)dydxdk.
\end{split}
\end{equation*}
Using Fubini's theorem to rewrite the above integral, we have the desired identity.
\end{proof}

\begin{prop}
\label{cutoffkernel}
Consider any two smooth compactly supported functions $\phi$ and $\theta$ on $\mathbb{R}^d$
such that their supports are disjoint. Then the kernel $K_{s, \phi, \theta}(x,y)$ is continuous on $\mathbb{R}^d \times \mathbb{R}^d$ and moreover, it satisfies the following decay:
$$\sup_{s}|K_{s, \phi, \theta}(x,y)| \leq C_N |\phi(x)\theta(y)|\cdot |x-y|^{-N},$$
for any $N>d-2$. Here, the constant $C_N$ is independent of $\phi$ and $\theta$.
\end{prop}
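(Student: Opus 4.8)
The plan is to combine the Floquet representation of Lemma~\ref{kernel_formula} with repeated integration by parts in the quasimomentum, using the uniform estimates on $\partial_k^{\alpha}K_s$ supplied by Corollary~\ref{kernel_estimate}. First I would reduce to the situation of Lemma~\ref{kernel_formula}. Take a smooth partition of unity $\{\psi_i\}$ of $\mathbb{R}^d$ subordinate to a locally finite cover by balls of radius $<1/2$ of multiplicity bounded by a constant depending only on $d$, and set $\phi_i:=\phi\psi_i$, $\theta_l:=\theta\psi_l$. Each $\phi_i,\theta_l$ is smooth and compactly supported in a set of diameter $<1$, hence in the interior of a standard fundamental domain (all that the proof of Lemma~\ref{kernel_formula} actually uses), and $\operatorname{supp}(\phi_i)\cap\operatorname{supp}(\theta_l)=\varnothing$ whenever $\operatorname{supp}(\phi)\cap\operatorname{supp}(\theta)=\varnothing$. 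Since $K_{s,\phi,\theta}=\sum_{i,l}K_{s,\phi_i,\theta_l}$ while $\sum_i|\phi_i(x)|=|\phi(x)|$ and $\sum_l|\theta_l(y)|=|\theta(y)|$, it suffices to prove the estimate for one such pair with a constant independent of the pair; this yields $C_N$ independent of $\phi,\theta$. So assume now $\phi,\theta$ are supported in interiors of standard fundamental domains with disjoint supports, and apply Lemma~\ref{kernel_formula}:
\[
K_{s,\phi,\theta}(x,y)=\frac{\phi(x)\theta(y)}{(2\pi)^d}\int_{\mathcal{O}}e^{ik\cdot(x-y)}K_s(k,\pi(x),\pi(y))\,dk .
\]

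The heart of the matter is an integration by parts in $k$ made legitimate by periodicity. On a neighborhood of $\partial\mathcal{O}$ the cut-off $\eta$ vanishes, so there $T_s(k)=(L_s(k)-\lambda)^{-1}$ (this resolvent exists on $\partial\mathcal{O}$ by Proposition~\ref{singularity}); by the conjugation identity \mref{conjugatingLk} one has $L_s(k+2\pi\gamma)=U_{-\gamma}L_s(k)U_{-\gamma}^{-1}$ for $\gamma\in\mathbb{Z}^d$, where $U_{\gamma}$ denotes multiplication by $e^{2\pi i\gamma\cdot x}$, and therefore $T_s(k+2\pi\gamma)=U_{-\gamma}T_s(k)U_{-\gamma}^{-1}$ for $k$ near $\partial\mathcal{O}$, i.e. $K_s(k+2\pi\gamma,u,v)=e^{-2\pi i\gamma\cdot(u-v)}K_s(k,u,v)$ there. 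Fixing lifts of $\pi(x),\pi(y)$ in $[0,1)^d$ we have $(x-y)-(\pi(x)-\pi(y))\in\mathbb{Z}^d$, so when $\pi(x)\ne\pi(y)$ — in which case $k\mapsto K_s(k,\pi(x),\pi(y))$ is smooth, being an off-diagonal value of a pseudodifferential kernel depending smoothly on the parameter $k$ (Proposition~\ref{P:pseudo_ops} and \mref{uniform_Tsk}) — the function $k\mapsto e^{ik\cdot(x-y)}K_s(k,\pi(x),\pi(y))$, and likewise $k\mapsto e^{ik\cdot(x-y)}\partial_k^{\alpha}K_s(k,\pi(x),\pi(y))$ for every $\alpha$, extends to a smooth $2\pi\mathbb{Z}^d$-periodic function on $\mathbb{R}^d$. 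Hence the integral over the fundamental domain $\mathcal{O}$ equals the integral over $\mathbb{T}^{*}$, and, choosing $j$ with $|x_j-y_j|=\max_{1\le i\le d}|x_i-y_i|\ge|x-y|/\sqrt d>0$ and integrating by parts $N$ times in $k_j$ with no boundary terms,
\[
\int_{\mathcal{O}}e^{ik\cdot(x-y)}K_s(k,\pi(x),\pi(y))\,dk=\frac{(-1)^N}{\bigl(i(x_j-y_j)\bigr)^N}\int_{\mathbb{T}^{*}}e^{ik\cdot(x-y)}\partial_{k_j}^{N}K_s(k,\pi(x),\pi(y))\,dk .
\]
By Corollary~\ref{kernel_estimate} (with $|\alpha|=N>d-2$, $\alpha=Ne_j$), $M_N:=\sup_{(s,k,u,v)}|\partial_{k_j}^{N}K_s(k,u,v)|<\infty$, so $|K_{s,\phi,\theta}(x,y)|\le M_N|\phi(x)\theta(y)|\,|x_j-y_j|^{-N}\le d^{N/2}M_N|\phi(x)\theta(y)|\,|x-y|^{-N}$ for every $(x,y)$ with $x\ne y$ and $\pi(x)\ne\pi(y)$.

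It remains to remove the exceptional sets and to prove continuity. The set $\{(x,y):x\ne y,\ \pi(x)\ne\pi(y)\}$ is open and dense. Since $T_s$ is, modulo a globally smoothing operator, a Floquet-lifted periodic pseudodifferential operator (Proposition~\ref{P:pseudo_ops}), its Schwartz kernel $K_{T_s}$ is smooth off the spatial diagonal $\{x=y\}$; as $\operatorname{supp}\phi\times\operatorname{supp}\theta$ avoids $\{x=y\}$ and $K_{s,\phi,\theta}(x,y)=\phi(x)K_{T_s}(x,y)\theta(y)$ vanishes where $\phi(x)\theta(y)=0$, the kernel $K_{s,\phi,\theta}$ is continuous on $\mathbb{R}^d\times\mathbb{R}^d$, and the bound above extends from the dense set by continuity. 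For non-integer $N$ one uses the integer exponent $\lceil N\rceil>d-2$ when $|x-y|\ge1$ and the standard uniform near-diagonal estimate $|K_{T_s}(x,y)|\lesssim|x-y|^{-(d-2)}$ from the parameter-dependent calculus (Theorem~\ref{parametrices}) when $|x-y|\le1$, the latter already giving $|K_{T_s}(x,y)|\lesssim|x-y|^{-N}$ there because $N>d-2$. Finally, unwinding the partition of unity via $\sum_i|\phi_i|=|\phi|$, $\sum_l|\theta_l|=|\theta|$ gives the estimate for the original $\phi,\theta$ with $C_N$ depending only on $N$ and $d$.

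The step I expect to be most delicate is precisely the replacement of $\int_{\mathcal{O}}$ by $\int_{\mathbb{T}^{*}}$ together with the boundary-free integration by parts: this rests entirely on $\eta$ vanishing near $\partial\mathcal{O}$, so that there $T_s(k)$ is the genuine resolvent and obeys the \emph{exact} intertwining $T_s(k+2\pi\gamma)=U_{-\gamma}T_s(k)U_{-\gamma}^{-1}$, combined with the integrality of $(x-y)-(\pi(x)-\pi(y))$; any smoothing error in this intertwining (as would occur if one used the parametrix $A_s(k)$ instead of the resolvent) would reintroduce boundary contributions. The remaining points — the partition-of-unity bookkeeping, the removal of $\{\pi(x)=\pi(y)\}$ by density together with continuity of $K_{s,\phi,\theta}$, and the passage to non-integer $N$ — are routine once Corollary~\ref{kernel_estimate} is available.
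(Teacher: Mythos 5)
Your proof is correct and follows essentially the same route as the paper's: reduction by a partition of unity to the setting of Lemma~\ref{kernel_formula}, boundary-free integration by parts in $k$ justified by the periodicity $T_s(k+2\pi n)=\mathcal{M}_n^{-1}T_s(k)\mathcal{M}_n$ of the integrand, and the uniform bounds on $\partial_k^{\alpha}K_s$ from Corollary~\ref{kernel_estimate}. The additional care you take with the exceptional set $\{\pi(x)=\pi(y)\}$ and with non-integer $N$ goes slightly beyond what the paper records but does not alter the argument.
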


\begin{proof}
By using partitions of unity, any smooth compactly supported function can be written as a finite sum of smooth functions in the set $\mathcal{P}$. Thus, we can assume without loss of generality that both $\phi$ and $\theta$ belong to $\mathcal{P}$.

First, observe that for any $(k,n) \in \mathcal{O} \times \mathbb{Z}^d$,
$$T_s(k+2\pi n)=\mathcal{M}_{n}^{-1}T_s(k) \mathcal{M}_{n},$$
where $\mathcal{M}_n$ is the multiplication operator on $L^2(\mathbb{T})$ by the exponential function $e^{2\pi in \cdot x}.$

Hence,
$$\nabla_{k}^{\alpha}K_s(k+2\pi n,\pi(x),\pi(y))=e^{-2\pi i n\cdot \pi(x)}\nabla_{k}^{\alpha}K_s(k,\pi(x),\pi(y))e^{2\pi i n\cdot \pi(y)},$$
for any multi-index $\alpha$.
Since $e^{2\pi i n \cdot x}=e^{2\pi i n \cdot \pi(x)}$ for any $x \in \mathbb{R}^d$, we obtain
\begin{equation}
\label{perboundary}
e^{i(k+2\pi n)\cdot(x-y)}\nabla_{k}^{\alpha}K_s(k+2\pi n,\pi(x),\pi(y))=e^{ ik\cdot(x-y)}\nabla_{k}^{\alpha}K_s(k,\pi(x),\pi(y)).
\end{equation}

Applying Lemma \ref{kernel_formula}, we then use integration by parts (all boundary terms vanish when applying integration by parts due to \mref{perboundary}) to derive that for any $|\alpha|=N$,
\begin{equation*}
(2\pi)^d(i(x-y))^{\alpha}K_{s,\phi,\theta}(x,y)=\phi(x)\theta(y)\int_{\mathcal{O}} e^{ik\cdot (x-y)}\nabla^{\alpha}_{k} K_s(k,\pi(x), \pi(y))dk.
\end{equation*}

Suppose $N>d-2$. Then by applying Corollary \ref{kernel_estimate}, the above integral is absolutely convergent and it is also uniformly bounded in $(s,x,y)$. Consequently, the kernel $K_{s,\phi, \theta}(x,y)$ is continuous. Furthermore,
$$\sup_{s}|K_{s,\phi, \theta}(x,y)| \lesssim  |\phi(x)\theta(y)|\cdot\min_{|\alpha|=N}|(x-y)^{\alpha}|^{-1} \lesssim  |\phi(x)\theta(y)|\cdot |x-y|^{-N}.$$

\end{proof}
We now have enough tools to approach our goal:
\begin{thm}
\label{rapid_decay_kernel}
The Schwartz kernel $K_s(x,y)$ is continuous away from the diagonal and furthermore, as $|x-y| \rightarrow \infty$, we have
$$\sup_{s}|K_s(x,y)|=O(|x-y|^{-N}), \quad \forall N>0.$$

\end{thm}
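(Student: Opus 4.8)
The plan is to deduce Theorem~\ref{rapid_decay_kernel} from Proposition~\ref{cutoffkernel} by localizing with a locally finite partition of unity on $\mathbb{R}^d$; the load-bearing fact is that the constant $C_N$ in Proposition~\ref{cutoffkernel} does not depend on the cutoff functions.

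First I would settle continuity away from the diagonal. Given $(x_0,y_0)\in\mathbb{R}^d\times\mathbb{R}^d$ with $x_0\neq y_0$, I choose $\phi,\theta\in C^{\infty}_0(\mathbb{R}^d)$ with disjoint supports (possible since $x_0\neq y_0$) and with $\phi\equiv 1$ near $x_0$, $\theta\equiv 1$ near $y_0$. Since the Schwartz kernel of $\phi T_s\theta$ is $\phi(x)K_s(x,y)\theta(y)$, the kernel $K_{s,\phi,\theta}$ agrees with $K_s$ on a neighborhood of $(x_0,y_0)$, and Proposition~\ref{cutoffkernel} asserts that $K_{s,\phi,\theta}$ is continuous on $\mathbb{R}^d\times\mathbb{R}^d$; hence $K_s$ is continuous near $(x_0,y_0)$.

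For the decay, fix $\phi_0\in C^{\infty}_0(\mathbb{R}^d)$ with $\sum_{\gamma\in\mathbb{Z}^d}\phi_0(\cdot-\gamma)\equiv 1$, put $\phi_\gamma:=\phi_0(\cdot-\gamma)$, let $R_0$ be the diameter of $\mathrm{supp}\,\phi_0$, and let $M$ be a uniform bound for the number of $\gamma$ with $x\in\mathrm{supp}\,\phi_\gamma$. Writing
$$K_s(x,y)=\sum_{\gamma,\gamma'\in\mathbb{Z}^d}\phi_\gamma(x)K_s(x,y)\phi_{\gamma'}(y)=\sum_{\gamma,\gamma'}K_{s,\phi_\gamma,\phi_{\gamma'}}(x,y),$$
for each $(x,y)$ at most $M^2$ terms are nonzero; and whenever $|x-y|>4R_0$ and $\phi_\gamma(x)\phi_{\gamma'}(y)\neq 0$ one has $|\gamma-\gamma'|\geq|x-y|-2R_0>2R_0$, so $\mathrm{supp}\,\phi_\gamma\cap\mathrm{supp}\,\phi_{\gamma'}=\emptyset$ and Proposition~\ref{cutoffkernel} applies, giving $\sup_s|K_{s,\phi_\gamma,\phi_{\gamma'}}(x,y)|\leq C_N\|\phi_0\|_\infty^2|x-y|^{-N}$ for every $N>d-2$, with $C_N$ independent of $\gamma,\gamma'$. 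Summing the (at most $M^2$) nonzero terms yields $\sup_s|K_s(x,y)|\leq M^2C_N\|\phi_0\|_\infty^2|x-y|^{-N}$ for $|x-y|>4R_0$. Since $d\geq 2$, $N$ ranges over arbitrarily large values, so for any prescribed $N_0>0$ choosing $N=\max(N_0,d-1)>d-2$ gives $\sup_s|K_s(x,y)|=O(|x-y|^{-N_0})$, which is the claim.

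There is no genuinely new obstacle at this stage: all the analytic work has already been done in establishing Proposition~\ref{cutoffkernel}, and behind it in Corollary~\ref{kernel_estimate} and the uniform parametrix construction of Theorem~\ref{parametrices}. The one point that must be treated with care is the uniformity of the constant in Proposition~\ref{cutoffkernel} with respect to the cutoffs $\phi,\theta$, since it is exactly this that keeps the partition-of-unity sum bounded; everything else is routine bookkeeping with locally finite covers and the elementary comparison $|\gamma-\gamma'|\asymp|x-y|$.
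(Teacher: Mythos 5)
Your proof is correct and takes essentially the same approach as the paper: both deduce the result from Proposition~\ref{cutoffkernel}, and the crucial point in both is that the constant $C_N$ there does not depend on the cutoffs. The paper picks a single pair of cutoffs adapted to each $(x,y)$ rather than summing over a $\mathbb{Z}^d$-partition of unity, but this is only a bookkeeping difference.
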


\begin{proof}
Let us fix a point $(s,x)$ in $\mathbb{S}^{d-1} \times \mathbb{R}^d$. Now we consider a point $y=x+st$, where $t$ is a real number. When $|t|>0$, we can choose two cut-off functions $\phi$ and $\theta$ such that $\phi$ and $\theta$ equal $1$ on some neighborhoods of $x$ and $y$, respectively, and also, the supports of these two functions are disjoint.
Then, Proposition \ref{cutoffkernel} implies that the kernel $K_s(x,y)$ is continuous at $(x,y)$ since it coincides with $K_{s,\phi,\theta}$ on a neighborhood of $(x,y)$. This yields the first claim.
Again, by Proposition \ref{cutoffkernel}, we obtain
$$\sup_{s}|K_s(x,y)|=\sup_{s}|K_{s, \phi, \theta}(x,y)| \leq C_N |x-y|^{-N},$$
which proves the last claim.
\end{proof}

\section{Some results on parameter-dependent toroidal \textbf{$\Psi$DOs}}
The aim in this section is to provide some results needed to complete the proof of Theorem \ref{parametrices}. We adopt the approach of \cite{GUI} to periodic elliptic differential operators.

The next two theorems are straightforward modifications of the proofs for non-parameter toroidal $\Psi$DOs:
\begin{thm}
\textbf{(The asymptotic summation theorem)}
Given families of symbols $b_{l} \in \tilde{S}^{m-l}(\mathbb{T})$, where each family $b_l=\{b_l(s,k)\}_{(s,k)}$ for $l=0,1,\dots$, there exists a family of symbols $b$ in $\tilde{S}^m(\mathbb{T})$ such that
\begin{equation}
\label{E:asymptotic_summation}
\{b(s,k)-\sum_{i<l}b_{i}(s,k)\}_{(s,k)} \in \tilde{S}^{m-l}(\mathbb{T}).
\end{equation}
We will write $\displaystyle b \sim \sum_{l}b_l$ if $b$ satisfies  \mref{E:asymptotic_summation}.
\end{thm}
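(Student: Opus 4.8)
The plan is to mimic the classical Borel-type construction of an asymptotic sum of symbols (as in, e.g., the non-parametric toroidal case in Ruzhansky--Turunen, or H\"ormander's treatment of the Euclidean case), being careful that every estimate is uniform in the parameter $s\in\mathbb S^{d-1}$ and behaves correctly under $k$-differentiation. First I would fix an excision function: choose $\chi\in C^\infty(\mathbb R^d)$ with $\chi(\xi)=0$ for $|\xi|\le 1$ and $\chi(\xi)=1$ for $|\xi|\ge 2$, and for a sequence $t_l\nearrow\infty$ (to be chosen) set
\begin{equation*}
b(s,k;x,\xi):=\sum_{l=0}^{\infty}\chi(\xi/t_l)\,b_l(s,k;x,\xi).
\end{equation*}
For each $\xi$ only finitely many terms are nonzero once $t_l>2|\xi|$, so $b$ is a well-defined smooth function on $\mathbb S^{d-1}\times\mathcal O\times\mathbb T\times\mathbb R^d$ (smoothness in $(k,x)$ is immediate since each summand is smooth and the sum is locally finite in $\xi$).

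The core of the argument is choosing the $t_l$ so that $b\in\tilde S^m(\mathbb T)$ and the remainder condition \mref{E:asymptotic_summation} holds. For this I would record, for each fixed multi-index triple $(\alpha,\beta,\gamma)$ and each $l$, that on the support of $\chi(\cdot/t_l)$ (where $|\xi|\ge t_l$) the symbol estimate defining $\tilde S^{m-l}(\mathbb T)$ gives
\begin{equation*}
\sup_{s}\bigl|D^{\alpha}_{k}D^{\beta}_{\xi}D^{\gamma}_{x}\bigl(\chi(\xi/t_l)b_l(s,k;x,\xi)\bigr)\bigr|
\le C_{l,\alpha,\beta,\gamma}\,(1+|\xi|)^{m-l-|\alpha|-|\beta|}
\le C_{l,\alpha,\beta,\gamma}\,t_l^{-1}(1+|\xi|)^{m-(l-1)-|\alpha|-|\beta|},
\end{equation*}
using $(1+|\xi|)^{-1}\le t_l^{-1}$ on that support (the extra derivatives falling on $\chi(\xi/t_l)$ only improve matters since $|D^\beta_\xi \chi(\xi/t_l)|\lesssim t_l^{-|\beta|}\lesssim (1+|\xi|)^{-|\beta|}$ there). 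Now pick $t_l$ increasing to $\infty$ fast enough that for every $l$ and every $(\alpha,\beta,\gamma)$ with $|\alpha|+|\beta|+|\gamma|\le l$ one has $C_{l,\alpha,\beta,\gamma}\,t_l^{-1}\le 2^{-l}$; this is the standard diagonal choice. Then the tail $\sum_{l\ge L}\chi(\xi/t_l)b_l$ is dominated, after any $(\alpha,\beta,\gamma)$-derivative with $|\alpha|+|\beta|+|\gamma|\le L$, by $\sum_{l\ge L}2^{-l}(1+|\xi|)^{m-(l-1)-|\alpha|-|\beta|}\lesssim (1+|\xi|)^{m-(L-1)-|\alpha|-|\beta|}$, uniformly in $s$; since $(\alpha,\beta,\gamma)$ is arbitrary this shows the tail lies in $\tilde S^{m-L+1}(\mathbb T)$, hence in particular $b\in\tilde S^m(\mathbb T)$ (take $L=0$). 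For the remainder identity, write $b-\sum_{i<l}b_i=\sum_{i<l}(\chi(\xi/t_i)-1)b_i+\sum_{i\ge l}\chi(\xi/t_i)b_i$; the first finite sum is supported in $|\xi|\le 2\max_{i<l}t_i$, hence is a smoothing family in $\tilde S^{-\infty}(\mathbb T)\subset\tilde S^{m-l}(\mathbb T)$ (here one uses that each $b_i$ satisfies the uniform-in-$s$ bounds of $\tilde S^{m-i}(\mathbb T)$, so on a bounded $\xi$-region all derivatives are uniformly bounded), and the second sum lies in $\tilde S^{m-l+1}(\mathbb T)\subset\tilde S^{m-l}(\mathbb T)$ by the tail estimate above — actually one should be slightly more careful and arrange the $t_l$ so the tail sits in $\tilde S^{m-l}(\mathbb T)$ exactly, which only requires shifting the index bookkeeping by one. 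Combining the two pieces gives \mref{E:asymptotic_summation}.

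The one genuinely new feature compared with the classical proof is that the parameter class $\tilde S^{m}(\mathbb T)$ trades $k$-derivatives for decay in $\xi$, so the symbol estimates must be indexed by $|\alpha|+|\beta|$ jointly rather than by $|\beta|$ alone; I expect the main obstacle — really a bookkeeping obstacle rather than a conceptual one — to be organizing the diagonal choice of $t_l$ so that it simultaneously controls all of $D^\alpha_k$, $D^\beta_\xi$, $D^\gamma_x$ uniformly in $s$, and checking that derivatives landing on the cutoff $\chi(\xi/t_l)$ do not spoil the gain (they do not, precisely because on $\operatorname{supp}\chi(\cdot/t_l)$ we have $t_l^{-1}\lesssim(1+|\xi|)^{-1}$). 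No further input is needed: everything reduces to the uniform symbol bounds built into Definition \ref{parameter_class_symbols} together with the fact that a finite sum of symbols supported in a bounded $\xi$-set is smoothing with all bounds uniform in $s$.
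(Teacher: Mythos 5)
Your proof is correct and follows the same Borel-summation scheme as the paper: truncate $b_l$ with a cutoff $\chi(\xi/t_l)$, gain a power of decay from $(1+|\xi|)^{-1}\le t_l^{-1}$ on $\operatorname{supp}\chi(\cdot/t_l)$, and choose $t_l$ by a Cantor diagonal so that all derivative orders $|\alpha|+|\beta|+|\gamma|\le l$ are controlled at once, uniformly in $s$. Your explicit remark that derivatives landing on the cutoff obey $|D_\xi^\beta\chi(\xi/t_l)|\lesssim t_l^{-|\beta|}\lesssim(1+|\xi|)^{-|\beta|}$ on the support is a point the paper's Step 2 passes over in silence; otherwise the two arguments match essentially line for line.
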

\begin{proof}
\textit{Step 1}.
Let $n=m+\epsilon$ for some $\epsilon>0$. Then
$$|b_l(s,k;x,\xi)| \leq C_l(1+|\xi|)^{m-l}=\frac{C_l (1+|\xi|)^{n-l}}{(1+|\xi|)^{\epsilon}}.$$
Thus, there is a sequence $\{\eta_l\}_{l \geq 1}$ such that $\eta_l \rightarrow +\infty$ and
$$|b_l(s,k;x,\xi)|<\frac{1}{2^l}(1+|\xi|)^{n-l}$$
for $|\xi|>\eta_l$. Let $\rho \in C^{\infty}(\mathbb{R})$ satisfy that $0 \leq \rho \leq 1$, $\rho(t)=0$ whenever $|t|<1$ and $\rho(t)=1$ whenever $|t|>2$. We define:
\begin{equation*}
b(s,k;x,\xi)=\sum_{l} \rho \left(\frac{|\xi|}{\eta_l}\right)b_l(s,k;x,\xi).
\end{equation*}
Since only a finite number of summands are non-zero on any compact subset of $\mathbb{T} \times \mathbb{R}^d$, $b(s,\cdot;\cdot,\cdot) \in C^{\infty}(\mathcal{O} \times \mathbb{T} \times \mathbb{R}^d)$. Moreover, $b(s,k)-\sum_{r<l} b_r(s,k)$ is equal to:

\begin{equation*}
\sum_{r<l}\left(\rho \left( \frac{|\xi|}{\eta_r}\right)-1\right)b_r(s,k)+b_l(s,k)+\sum_{r>l}\rho\left(\frac{|\xi|}{\eta_r}\right)b_r(s,k).
\end{equation*}

The first summand is compactly supported while the second summand is in $S^{m-l}(\mathbb{T})$. Now let $\epsilon<1$. Then, the third summand is bounded from above by
$$\sum_{r>l}\frac{1}{2^r}(1+|\xi|)^{n-r} \leq (1+|\xi|)^{n-l-1} \leq (1+|\xi|)^{m-l}.$$
Consequently,
$$\sup_{s \in \mathbb{S}^{d-1}}\left|b(s,k)-\sum_{r<l} b_r(s,k)\right| \leq C(1+|\xi|)^{m-l}.$$

\textit{Step 2}.
For $|\alpha|+|\beta|+|\gamma| \leq N$, one can choose $\eta_l$ such that
$$\sup_{s \in \mathbb{S}^{d-1}}|D^{\alpha}_k D^{\beta}_{\xi} D^{\gamma}_x b_l(s,k;x,\xi)| \leq \frac{1}{2^l}(1+|\xi|)^{n-l-|\alpha|-|\beta|}$$
for $\eta_l<|\xi|$. The same argument as in Step 1 implies that
\begin{equation}
\label{step2asymptotic}
\sup_{s \in \mathbb{S}^{d-1}}|D^{\alpha}_k D^{\beta}_{\xi} D^{\gamma}_x (b(s,k)-\sum_{r<l}b_r(s,k))| \leq C_N (1+|\xi|)^{m-l-|\alpha|-|\beta|}.
\end{equation}

\textit{Step 3}.
The sequence of $\eta_l$'s in Step 2 depends on $N$. We denote this sequence by $\eta_{l,N}$ to indicate this dependence on $N$. By induction, we can assume that for all $l$, $\eta_{l,N} \leq \eta_{l,N+1}$. Applying the Cantor diagonal process to this family of sequences, i.e., let $\eta_l=\eta_{l,l}$ then $b$ has the property \mref{step2asymptotic} for every $N$.
\end{proof}

\begin{thm}
\textbf{(The composition formula)}
\label{composition_formula}
Let $a=\{a(s,k)\}$ be a family of symbols in $\tilde{S}^l(\mathbb{T})$ and $P(x,D)=\sum_{|\alpha| \leq m}a_{\alpha}(x)D^{\alpha}$ be a differential operators of order $m \geq 0$ with smooth coefficients $a_{\alpha}(x)$. Then the family of periodic pseudodifferential operators $\{P(x,\xi+k+i\beta_s)Op(a(s,k))\}_{(s,k)} \in Op(\tilde{S}^{l+m}(\mathbb{T}))$. Indeed, we have:
$$P_s(k) Op(a(s,k))=Op(P\circ a) (s,k)),$$
where
\begin{equation}
\label{E:composition_formula}
(P\circ a) (s,k;x,\xi)=\sum_{|\beta| \leq m}\frac{1}{\beta !}D^{\beta}_{\xi}P_s(k)(x,\xi)D^{\beta}_x a(s,k;x,\xi)
\end{equation}
and
$$P_s(k)(x,\xi)=P(x,\xi+k+i\beta_s).$$
\end{thm}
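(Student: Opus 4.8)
The plan is to reduce the claim to the known composition calculus for toroidal $\Psi$DOs (Ruzhansky–Turunen, Chapter 4 of \cite{Ruzh-Turu}) applied fiber-wise in $(s,k)$, and then to check that all the resulting symbol estimates are uniform in $s$ and improve in $\xi$ under $k$-differentiation, i.e.\ that the composed symbol genuinely lies in the parameter class $\tilde S^{l+m}(\mathbb T)$. First I would observe that $P_s(k)=P(x,D+k+i\beta_s)$ is the operator quantizing the polynomial symbol $P_s(k)(x,\xi)=P(x,\xi+k+i\beta_s)=\sum_{|\alpha|\le m}a_\alpha(x)(\xi+k+i\beta_s)^\alpha$, which by the Example after Definition~\ref{parameter_class_symbols} is a family in $\tilde S^m(\mathbb T)$ (here one uses $\max_{s}|\beta_s|<1$ so the imaginary shift is harmless for the $\xi$-estimates). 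Since $P$ is a \emph{differential} operator, the composition is exact rather than merely asymptotic: for a differential operator on the left the Leibniz expansion \mref{E:composition_formula} terminates at $|\beta|\le m$, and the standard toroidal (or Euclidean, via Theorem 4.5.3 in \cite{Ruzh-Turu}) composition theorem gives, for each fixed $(s,k)$,
$$
P_s(k)\,Op(a(s,k))=Op\Big(\sum_{|\beta|\le m}\tfrac{1}{\beta!}D^\beta_\xi\big(P_s(k)(x,\xi)\big)\,D^\beta_x a(s,k;x,\xi)\Big).
$$
This is the promised identity; the content of the theorem is that the right-hand symbol is in $\tilde S^{l+m}(\mathbb T)$.

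Next I would verify the parameter-symbol estimates termwise. Fix $\beta$ with $|\beta|\le m$ and consider $c_\beta(s,k;x,\xi):=\tfrac{1}{\beta!}D^\beta_\xi\big(P_s(k)(x,\xi)\big)D^\beta_x a(s,k;x,\xi)$. The factor $D^\beta_\xi P_s(k)(x,\xi)$ is, up to constants, $\sum_{|\alpha|\le m}a_\alpha(x)\,\partial^\beta_\xi(\xi+k+i\beta_s)^\alpha$, a polynomial in $(\xi+k+i\beta_s)$ of degree $\le m-|\beta|$ with smooth $x$-coefficients; hence $\{D^\beta_\xi P_s(k)(x,\xi)\}_{(s,k)}\in\tilde S^{m-|\beta|}(\mathbb T)$ (again using the Example and boundedness of $\beta_s$, and noting that each $\partial_k$ lowers the polynomial degree by one, matching the required gain of one power of $(1+|\xi|)$). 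The factor $D^\beta_x a(s,k;x,\xi)$ lies in $\tilde S^{l}(\mathbb T)$ by the last bullet of the Example (differentiation in $x$ preserves the order). By the product property of the classes $\tilde S^{\bullet}(\mathbb T)$ (the fourth bullet of the Example), $c_\beta\in\tilde S^{(m-|\beta|)+l}(\mathbb T)\subseteq\tilde S^{l+m}(\mathbb T)$. Summing the finitely many terms over $|\beta|\le m$ keeps us in $\tilde S^{l+m}(\mathbb T)$, which is what we want.

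The main point to be careful about — the only place where the parameter-dependence is not completely automatic — is checking that the uniform-in-$s$ bounds with the extra decay in $\xi$ under $k$-derivatives, i.e.\ the estimate $\sup_{s}|D^{\alpha}_k D^\beta_\xi D^\gamma_x(\,\cdot\,)|\le C_{\alpha\beta\gamma}(1+|\xi|)^{(l+m)-|\alpha|-|\beta|}$ in Definition~\ref{parameter_class_symbols}, really do survive the composition. For the left polynomial factor this is visible by hand: each application of $D_k$ (or $D_\xi$) to $(\xi+k+i\beta_s)^\alpha$ literally reduces the degree by one, and the $k$-derivatives of $\beta_s$ are bounded uniformly in $s$ by smoothness of $s\mapsto\beta_s$ on the compact sphere, so products of such derivatives stay bounded; for the $a$-factor it is part of the hypothesis $a\in\tilde S^l(\mathbb T)$; and the Leibniz rule then distributes $D^\alpha_k$, $D^\beta_\xi$, $D^\gamma_x$ over the product $c_\beta$ with the orders of decay adding correctly. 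I would conclude by invoking the product/derivative bullets of the Example to package this, so that the exact identity \mref{E:composition_formula} together with these uniform estimates gives $\{P_s(k)Op(a(s,k))\}_{(s,k)}\in Op(\tilde S^{l+m}(\mathbb T))$, as claimed.
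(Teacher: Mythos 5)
Your proposal is correct and follows essentially the same route as the paper: the pointwise (in $(s,k)$) identity \mref{E:composition_formula} is taken from the standard composition calculus for a differential operator composed with a $\Psi$DO, and the only thing to check is that the composed symbol lies in $\tilde{S}^{l+m}(\mathbb{T})$, which both you and the paper do via the product/derivative properties of the parameter classes and Leibniz's formula. Your write-up merely spells out the uniform-in-$s$ estimates in more detail than the paper does.
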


\begin{proof}
The composition formula \mref{E:composition_formula} is obtained for each $(s,k)$ is standard in pseudodifferential operator theory (see e.g., \cite{GUI, Ruzh-Turu, Shubin_pseudo}). We only need to check that the family of symbols $\{(P \circ a) (s,k;x,\xi)\}_{(s,k)}$  is in $\tilde{S}^{l+m}(\mathbb{T})$.
But this fact follows easily from \mref{E:composition_formula} and Leibnitz's formula.
\end{proof}
We now finish the proof of Theorem \ref{parametrices}.
\begin{thm}
\label{Inversion}
\textbf{(The inversion formula)}
There exists a family of symbols $a=\{a(s,k)\}_{(s,k)}$ in $\tilde{S}^{-2}(\mathbb{T})$ and a family of symbols $r=\{r(s,k)\}_{(s,k)}$ in $\tilde{S}^{-\infty}(\mathbb{T})$ such that
$$(L_s(k)-\lambda)Op(a(s,k))=I-Op(r(s,k)).$$
\end{thm}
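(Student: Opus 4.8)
The plan is to construct the symbol $a$ by the standard iterative parametrix procedure, but carried out uniformly in the parameters $(s,k)$, using the parameter-dependent symbol classes $\tilde S^m(\mathbb{T})$ and the two theorems just proved (the asymptotic summation theorem and the composition formula \eqref{E:composition_formula}). First I would write $L_s(k)-\lambda = P_s(k)$ for the differential operator with symbol $p_2(s,k;x,\xi)=P(x,\xi+k+i\beta_s)-\lambda$, where $P(x,D)=D^*A(x)D+V(x)$; this is a family in $Op(\tilde S^2(\mathbb{T}))$ by the Example in Section 7, and, crucially, by ellipticity \eqref{E:ellipticity} together with $\lambda<0$ and $\max_s|\beta_s|<1$, its principal symbol $\sigma_2(s,k;x,\xi)=\sum_{k,l}a_{kl}(x)(\xi+k+i\beta_s)_k(\xi+k+i\beta_s)_l$ is bounded below in modulus by $c(1+|\xi|)^2$ for $|\xi|$ large, uniformly in $(s,k)$. (One should check this lower bound carefully: expanding, $\Re\sigma_2 = \xi^TA\xi + (\text{lower order in }\xi) \geq \theta|\xi|^2 - C|\xi| - C$, so for $|\xi|\geq R_0$ with $R_0$ independent of $(s,k)$ we get $|\sigma_2|\gtrsim|\xi|^2$; this is exactly where the compactness of $\mathbb{S}^{d-1}$ and smallness of $|\lambda|$ enter.)

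The construction then goes as follows. Choose a cutoff $\chi(\xi)\in C^\infty(\mathbb{R}^d)$ vanishing for $|\xi|\leq R_0$ and equal to $1$ for $|\xi|\geq 2R_0$, and set $a_0(s,k;x,\xi):=\chi(\xi)/p_2(s,k;x,\xi)$; by the uniform ellipticity bound above and the fact that differentiating the symbol of $P_s(k)$ in $k$ lowers the $\xi$-order (the defining feature of $\tilde S^m(\mathbb{T})$), one verifies $\{a_0(s,k)\}\in\tilde S^{-2}(\mathbb{T})$. Applying the composition formula \eqref{E:composition_formula}, $P_s(k)\,Op(a_0(s,k)) = Op\big(\sum_{|\beta|\le m}\frac1{\beta!}D^\beta_\xi p(s,k)D^\beta_x a_0(s,k)\big) = I - Op(r_0(s,k))$, where $r_0 := I\text{-symbol} - (p\circ a_0)$; the $\beta=0$ term is $p_2 a_0 = \chi$, which differs from $1$ only on a compact $\xi$-set, and every $\beta\neq 0$ term gains at least one $\xi$-derivative on $a_0$, hence lands in $\tilde S^{-1}(\mathbb{T})$. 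So $\{r_0(s,k)\}\in\tilde S^{-1}(\mathbb{T})$. Now iterate: having $r_0$, put $a_1 := a_0 \# r_0$ (i.e. choose the symbol of $Op(a_0)Op(r_0)$, which by the composition calculus for two $\Psi$DOs — also uniform in $(s,k)$, same Leibniz argument — lies in $\tilde S^{-3}(\mathbb{T})$), so that $P_s(k)Op(a_0+a_1) = I - Op(r_0) + (I-Op(r_0))Op(r_0) = I - Op(r_1)$ with $\{r_1(s,k)\}\in\tilde S^{-2}(\mathbb{T})$. Inductively one obtains $a_N\in\tilde S^{-2-N}(\mathbb{T})$ and remainders $r_N\in\tilde S^{-1-N}(\mathbb{T})$. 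Finally, by the asymptotic summation theorem choose $a\sim\sum_{N\geq 0}a_N$ in $\tilde S^{-2}(\mathbb{T})$; then $P_s(k)Op(a(s,k)) - (I - Op(r_N(s,k)))$ has symbol in $\tilde S^{-1-N}(\mathbb{T})$ for every $N$, so $(L_s(k)-\lambda)Op(a(s,k)) = I - Op(r(s,k))$ with $r:=$ the resulting family in $\bigcap_N\tilde S^{-1-N}(\mathbb{T})=\tilde S^{-\infty}(\mathbb{T})$.

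The main obstacle, and the only place where genuine work beyond bookkeeping is needed, is verifying the \emph{uniform} (in $s$ and $k$) ellipticity estimate $|p_2(s,k;x,\xi)|\gtrsim(1+|\xi|)^2$ for large $|\xi|$ together with the matching uniform bounds on all derivatives $D^\alpha_k D^\beta_\xi D^\gamma_x(1/p_2)$ — i.e. showing $1/p_2$ really defines an element of $\tilde S^{-2}(\mathbb{T})$ rather than just of $S^{-2}(\mathbb{T})$ for each fixed $(s,k)$. This reduces to (i) the quadratic lower bound above, which uses \eqref{E:ellipticity}, $\lambda<0$, and $|\beta_s|$ small uniformly by compactness of $\mathcal{V}_\omega$ (indeed of $\mathbb{S}^{d-1}$); and (ii) the observation that $p_2$ is polynomial in $(\xi,k,\beta_s)$ of degree $2$, so each $k$-derivative drops the $\xi$-degree by one, and then a routine Faà di Bruno / quotient-rule estimate propagates this through $1/p_2$. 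Everything else — the two composition calculi, the iteration, the asymptotic summation — is the classical parametrix construction run verbatim with the parameter-dependent norms, and the paper has already set up exactly the symbol classes and theorems that make each step uniform in $(s,k)$.
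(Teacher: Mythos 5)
Your overall strategy---build $a_0$ by inverting the symbol on a cone $|\xi|\geq R_0$ with $R_0$ uniform in $(s,k)$, verify $a_0\in\tilde S^{-2}(\mathbb{T})$ by the structure of the parameter class, then remove the remainder by a Neumann-type iteration and an asymptotic sum---is the same parametrix construction the paper runs, and the uniform ellipticity estimate you flag as the crux is indeed the crux. The main divergence is in the iteration, and there is a genuine gap there: you set $a_1:=a_0\# r_0$, i.e.\ the symbol of the composed operator $Op(a_0)Op(r_0)$, and assert this lies in $\tilde S^{-3}(\mathbb{T})$ by ``the same Leibniz argument.'' But the paper's composition theorem (Theorem \ref{composition_formula}) is proved only for a \emph{differential} operator $P$ applied to a $\Psi$DO, where the composition formula \eqref{E:composition_formula} is a \emph{finite} sum over $|\beta|\leq m$ and the claim really is just Leibniz. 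Composing two genuine $\Psi$DOs is not the same argument: the formula is an \emph{infinite} asymptotic expansion with a remainder that must be estimated (in the toroidal calculus this is Theorem 4.7.10 of Ruzhansky--Turunen, and its proof goes through discrete Taylor expansions plus the asymptotic summation machinery). The paper never proves a parameter-dependent version of that harder theorem, and in fact its whole organization of Lemma \ref{L:inversion} is designed to avoid needing it: by setting $a_l:=a_0\cdot r_l$ (a \emph{pointwise product} of symbols, which is trivially closed under $\tilde S$-classes) and observing that $b-(L-\lambda)\circ(a_0b)\in\tilde S^{l-1}(\mathbb{T})$, every composition that appears has the differential operator $L-\lambda$ on the left, so only Theorem \ref{composition_formula} is ever invoked. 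To make your version rigorous you would either have to state and prove the parameter-dependent $\Psi$DO$\times\Psi$DO composition calculus (more work than you acknowledge), or switch to the pointwise-product iteration, which is exactly what Lemma \ref{L:inversion}(i)--(ii) does. Two smaller points: the paper inverts the \emph{principal} symbol $L_0$ rather than the full symbol $p_2$ (both work; inverting the full symbol gives a marginally better initial remainder but the uniform-$R_0$ argument is the same either way), and your uniform lower bound should be stated for $p_2$ rather than $\sigma_2$ since $p_2$ picks up the first-order terms from $D^*A(x)D$ as well as $V-\lambda$, though this is easily absorbed in the $C|\xi|+C$ error you already allow.
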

\begin{proof}
Let
$$L_0(s,k;x,\xi):=\sum_{|\alpha|=2}a_{\alpha}(x)(\xi+k+i\beta_s)^{\alpha},$$
$$\|a\|_{\infty}:=\sum_{|\alpha|=2}\|a_{\alpha}(\cdot)\|_{L^{\infty}(\mathbb{T})},$$
and
$$M:=\max_{(s,k) \in \mathbb{S}^{d-1} \times \mathcal{O}}\left(|k|^2+\theta^{-1}\|a\|_{\infty}|\beta_s|^2+\theta^{-1}\right),$$
where $\theta$ is the ellipticity constant in \mref{E:ellipticity}.
Whenever $|\xi|>(2M)^{1/2}$,
\begin{align*}
|L_0(s,k;x,\xi)| \geq \Re(L_0(s,k;x,\xi)) &\geq \theta|\xi+k|^2-\sum_{|\alpha|=2}a_{\alpha}(x)(\beta_s)^{\alpha}
\\ & \geq \theta \left(\frac{|\xi|^2}{2}-|k|^2 \right)-\|a\|_{\infty}|\beta_s|^2>1.
\end{align*}
Let $\rho \in C^{\infty}(\mathbb{R})$ be a function satisfying $\rho(t)=0$ when $|t|<(2M)^{1/2}$ and $\rho(t)=1$ when $|t|>2M^{1/2}$. We define the function
\begin{equation}
\label{a_0}
a_0(s,k)(x,\xi)=\rho(|\xi|)\frac{1}{L_0(s,k;x,\xi)}.
\end{equation}
Then $a_0:=\{a_0(s,k)\}_{(s,k)}$ is well-defined and belongs to $\tilde{S}^{-2}(\mathbb{T})$. The next lemma is the final piece we need to complete the proof of the theorem.
\begin{lemma}
\label{L:inversion}

(i) If $b=\{b(s,k)\}_{(s,k)} \in \tilde{S}^l(\mathbb{T})$ then $b- (L-\lambda) \circ (a_0 b) \in \tilde{S}^{l-1}(\mathbb{T})$.

(ii) There exists a sequence of families of symbols $a_l=\{a_l(s,k)\}_{(s,k)}$ in $\tilde{S}^{-2-l}(\mathbb{T}), l=0,1,\dots$ and a sequence of families of symbols $r_l=\{r_l(s,k)\}_{(s,k)}$ in $\tilde{S}^{-l}(\mathbb{T}), l=0,1,\dots$ such that $a_0$ is the family of symbols in \mref{a_0}, $r_0(s,k)=1$ for every $(s,k)$ and for all $l$,
$$(L-\lambda)\circ a_l=r_l -r_{l+1}.$$
\end{lemma}

\begin{proof}
(i) Let $p(s,k)=(L(s,k)-\lambda)(x,\xi)-L_0(s,k;x,\xi)$ so that $p=\{p(s,k)\}_{(s,k)} \in \tilde{S}^{1}(\mathbb{T})$ and hence, $p \circ (a_0 b)$ is in $\tilde{S}^{l-1}(\mathbb{T})$ due to Theorem \ref{composition_formula}.
Moreover, $b-L_0a_0b=(1-\rho(|\xi|))b$ is a family of symbols whose $\xi$-supports are compact and thus it is in $\tilde{S}^{-\infty}(\mathbb{T})$. We can now derive again from the composition formula \mref{E:composition_formula} when $P:=L_0$ that
$$(L-\lambda) \circ (a_0 b)=L_0 \circ (a_0 b)+p \circ (a_0 b)=L_0 a_0b+\dots=b+\dots,$$
where the dots are the terms in $\tilde{S}^{l-1}(\mathbb{T})$.

(ii) Recursively, let $a_l=a_0 r_l$ and $r_{l+1}=r_l - (L-\lambda) \circ a_l$. By part (i), $r_{l+1} \in \tilde{S}^{-(l+1)}(\mathbb{T})$.
\end{proof}

Now let $a$ be the asymptotic sum of the families of symbols $a_l$, i.e., $a \sim \sum_{l} a_l$. Then
$$(L-\lambda)\circ a \sim \sum_{l} (L-\lambda) \circ a_l=\sum_l r_l - r_{l+1}=r_0=1,$$
which implies that $1-(L-\lambda) \circ a \sim 0$. In other words, this means that $r:=1-(L-\lambda) \circ a \in \tilde{S}^{-\infty}(\mathbb{T})$. Hence, there exists a family of symbols $a$ in $\tilde{S}^{-2}(\mathbb{T})$ and a family of symbols $r$ in $\tilde{S}^{-\infty}(\mathbb{T})$ satisfying $(L-\lambda) \circ a=1-r$.
Finally, an application of Theorem \ref{composition_formula} completes the proof of Theorem \ref{parametrices}.
\end{proof}
\section{Some auxiliary statements}
\subsection{A lemma on the principle of non-stationary phase}
\begin{lemma}
\label{L:uniform}
Let $M$ be a compact manifold (with or without boundary) and $a:\mathbb{R} \times M \rightarrow  \mathbb{C}$ be a smooth function with compact support.
Then for any $N>0$, there exists a constant $C_N>0$ so that the following estimate holds for any non-zero $t \in \mathbb{R}$:
\begin{equation}
\label{E:non-stationary}
\sup_{x \in M}\left|\int_{-\infty}^{\infty} e^{ity}a(y,x)dy\right| \leq C_N |t|^{-N}.
\end{equation}
Here $C_N$ depends only on $N$, the diameter $R$ of the $y$-support of $a$ and $\displaystyle \sup_{x,y}|\partial_{y}^{N} a|$.
\end{lemma}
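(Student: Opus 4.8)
The plan is to obtain the estimate by $N$-fold integration by parts in the $y$ variable, using that $e^{ity}$ is, up to the constant $it$, its own antiderivative. The first observation is that, since $a$ has compact support in $\mathbb{R}\times M$ and $M$ is compact, the image of $\operatorname{supp}(a)$ under the projection $(y,x)\mapsto y$ is a compact subset of $\mathbb{R}$, hence contained in a single closed interval $J$ whose length is the diameter $R$ of the $y$-support. Thus for every fixed $x\in M$ the function $y\mapsto a(y,x)$ is smooth and supported in $J$, and all of its $y$-derivatives vanish outside $J$; this is what makes the integration by parts below legitimate with no boundary contributions, and it is where compactness of $M$ enters (a single interval $J$ works uniformly in $x$).

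Next I would record the one-step identity. For $t\neq 0$, write $e^{ity}=\frac{1}{it}\,\frac{d}{dy}e^{ity}$ and integrate by parts; the boundary terms vanish by compact support, giving
\[
\int_{-\infty}^{\infty} e^{ity}\,a(y,x)\,dy \;=\; \frac{i}{t}\int_{-\infty}^{\infty} e^{ity}\,\partial_{y}a(y,x)\,dy .
\]
Since $a$ is smooth, each $\partial_y^{j}a$ is again continuous and compactly supported, so iterating this identity $N$ times is justified and yields
\[
\int_{-\infty}^{\infty} e^{ity}\,a(y,x)\,dy \;=\; \Bigl(\tfrac{i}{t}\Bigr)^{N}\int_{-\infty}^{\infty} e^{ity}\,\partial_{y}^{N}a(y,x)\,dy .
\]

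Finally I would estimate the right-hand side directly: the integrand has modulus at most $\sup_{x,y}\lvert\partial_y^{N}a\rvert$ and is supported in the interval $J$ of length $R$, so
\[
\sup_{x\in M}\left|\int_{-\infty}^{\infty} e^{ity}\,a(y,x)\,dy\right| \;\le\; \frac{R}{\lvert t\rvert^{N}}\,\sup_{x,y}\bigl|\partial_{y}^{N}a(y,x)\bigr|,
\]
which is exactly \mref{E:non-stationary} with $C_N=R\,\sup_{x,y}\lvert\partial_y^{N}a\rvert$, finite because $\partial_y^N a$ is continuous on the compact set $\operatorname{supp}(a)$. I do not expect any real obstacle here; the only points needing a sentence of care are the vanishing of the boundary terms (immediate from compact support in $y$) and the uniformity in $x$ (supplied by compactness of $M$, which bounds both the length of $J$ and $\sup\lvert\partial_y^N a\rvert$). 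The differentiable/manifold structure of $M$, with or without boundary, plays no further role since only $y$-derivatives are taken.
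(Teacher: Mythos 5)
Your proof is correct and takes essentially the same approach as the paper: $N$-fold integration by parts in $y$, with boundary terms vanishing by compact support, followed by the trivial estimate $R\cdot\sup_{x,y}|\partial_y^N a|$. You simply spell out the justifying details (single interval $J$ uniform in $x$, uniformity supplied by compactness of $M$) that the paper leaves implicit.
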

\begin{proof}
Let $t \neq 0$. Applying integration by parts repeatedly ($N$-times), it follows that
\begin{equation*}
\left|\int_{-\infty}^{\infty} e^{ity}a(y,x)dy\right|=|t|^{-N}\left|\int_{-\infty}^{\infty} e^{ity}\nabla_{y}^{N}a(y,x)dy\right|
\leq R \sup_{x,y}|\partial_{y}^{N} a|\cdot|t|^{-N}.
\end{equation*}
\end{proof}

\subsection{The Weierstrass preparation theorem}
\begin{thm}
\label{Weierstrass}
Let $f(t,z)$ be an analytic function of $(t,z) \in \mathbb{C}^{1+n}$ in a neighborhood of $(0,0)$ such that $(0,0)$ is a simple zero of $f$, i.e.:
\begin{equation*}
f(0,0)=0, \quad \frac{\partial f}{\partial t}(0,0)\neq 0.
\end{equation*}
Then there is a unique factorization
\begin{equation*}
f(t,z)=(t-A(z))B(t,z),
\end{equation*}
where $A,B$ are analytic in a neighborhood of $0$ and $(0,0)$ respectively. Moreover, $B(0,0) \neq 0$ and $A(0)=0$.
\end{thm}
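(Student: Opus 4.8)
The plan is to follow the classical argument-principle proof of the Weierstrass preparation theorem (see Theorem 7.5.1 in \cite{Hormander}), which in the present situation is especially transparent: the hypothesis $\partial f/\partial t(0,0)\neq 0$ says precisely that $t\mapsto f(t,0)$ has a \emph{simple} zero at $t=0$. First I would fix $r>0$ small enough that $f$ is holomorphic on a neighborhood of $\{|t|\le r\}\times\{|z|\le\delta_0\}$ for some $\delta_0>0$ and that $f(t,0)\neq 0$ for $0<|t|\le r$; then, using that $\{|t|=r\}$ is compact and $f$ is continuous and non-vanishing on $\{|t|=r\}\times\{0\}$, I would shrink to $0<\delta\le\delta_0$ so that $f(t,z)\neq 0$ whenever $|t|=r$ and $|z|<\delta$.

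Next I would count zeros fiberwise. For $|z|<\delta$ the winding number
\[
N(z):=\frac{1}{2\pi i}\int_{|t|=r}\frac{\partial_t f(t,z)}{f(t,z)}\,dt
\]
is an integer, namely the number of zeros of $t\mapsto f(t,z)$ in $|t|<r$ counted with multiplicity; it depends continuously on $z$, so it is constant on the connected set $\{|z|<\delta\}$, and $N(0)=1$. Hence for every such $z$ the function $f(\cdot,z)$ has a unique, and simple, zero $A(z)$ in $|t|<r$, given by the residue formula
\[
A(z)=\frac{1}{2\pi i}\int_{|t|=r} t\,\frac{\partial_t f(t,z)}{f(t,z)}\,dt,
\]
which is holomorphic in $z$ (the integrand is holomorphic in $z$ and continuous, so one may differentiate under the integral sign) and satisfies $A(0)=0$. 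This is exactly the formula reappearing as \eqref{E:formula_As}.

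I would then set $B(t,z):=f(t,z)/(t-A(z))$. For each fixed $z$ this has a removable singularity at $t=A(z)$ and is non-vanishing on $|t|<r$, and the Cauchy representation $B(t,z)=\frac{1}{2\pi i}\int_{|w|=r}\frac{f(w,z)}{(w-A(z))(w-t)}\,dw$ exhibits $B$ as jointly holomorphic on $\{|t|<r\}\times\{|z|<\delta\}$. Since $B(0,0)=\lim_{t\to0}f(t,0)/t=\partial_t f(0,0)\neq 0$, a further shrinking of the bidisc makes $B$ non-vanishing there, so $f=(t-A(z))B(t,z)$ with $A(0)=0$ and $B(0,0)\neq 0$, as required. (Alternatively, having produced $A$ via the holomorphic implicit function theorem, the fundamental theorem of calculus along the segment from $A(z)$ to $t$ gives $f(t,z)=(t-A(z))\int_0^1\partial_t f(A(z)+s(t-A(z)),z)\,ds$, which displays $B$ as jointly holomorphic at once.) For uniqueness, suppose $f=(t-A_1(z))B_1(t,z)=(t-A_2(z))B_2(t,z)$ on a common bidisc with $A_i(0)=0$ and $B_i(0,0)\neq 0$; after shrinking so that the $B_i$ are non-vanishing and the $A_i(z)$ stay in the $t$-disc, $A_i(z)$ is for each $z$ the unique zero of $f(\cdot,z)$ there, which forces $A_1=A_2$ and then $B_1=B_2$.

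The one genuinely delicate point is upgrading the fiberwise statement ``$f(\cdot,z)$ has a single simple zero near $0$'' to \emph{joint} holomorphy of $A$, hence of $B$, in $(t,z)\in\mathbb{C}^{1+n}$; the contour-integral formula for $A$ above — or, equivalently, the holomorphic implicit function theorem — handles this with no appeal to Hartogs' theorem on separate analyticity. Everything else (choosing $r$ and $\delta$, the winding-number count, the Cauchy representation for $B$, the non-vanishing of $B$ near $(0,0)$, and the uniqueness argument) is routine.
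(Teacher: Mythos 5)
Your proof is correct. Note that the paper does not actually prove Theorem \ref{Weierstrass}; it simply cites Theorem 7.5.1 of \cite{Hormander}, whose proof handles a zero of arbitrary order $m$ in $t$ and therefore must reconstruct the coefficients of a degree-$m$ Weierstrass polynomial from the power sums of the roots (via Newton's identities) -- that is where the logarithmic contour integral quoted in \mref{E:formula_As} comes from. Your argument is the specialization of that proof to $m=1$: the single power sum $\frac{1}{2\pi i}\int_{|t|=r} t\,\partial_t f(t,z)/f(t,z)\,dt$ is already the root $A(z)$ itself, so the symmetric-function machinery disappears and everything becomes self-contained. The choice of $r$ and $\delta$, the winding-number count, the residue formula for $A$, the Cauchy representation establishing joint holomorphy of $B$ (or, alternatively, the integral form of Taylor's theorem), the non-vanishing of $B$, and the germ-level uniqueness are all carried out correctly. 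One cosmetic remark: your formula for $A(z)$ is not literally \mref{E:formula_As}, which expresses $\log(z_1-A_s(z'))$ as a contour integral, but the two are equivalent for a simple zero. What your route buys is an elementary proof that exhibits $A$ and $B$ explicitly by contour integrals -- which is precisely the feature the paper exploits later, in Proposition \ref{P:factorization_Ws}, to get uniformity and continuity in the parameter $s$; what it gives up is the general-order statement of \cite{Hormander}, which is not needed here since the hypothesis is exactly that the zero is simple.
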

The proof of a more general version of this theorem could be found in Theorem 7.5.1 in \cite{Hormander}.
\subsection{Proofs of Proposition 4.1 and Lemma 5.3}\label{subsec:Proposition proofs}
\begin{remark}
\label{R:CR}
Consider a domain $\mathcal{D}$ of $\mathbb{C}^d$ and let $\displaystyle f: \mathcal{D} \rightarrow \mathbb{C}$ be a holomorphic function. For $z \in \mathbb{C}^d$, write $z=x+iy$ where $x,y \in \mathbb{R}^d$. Now we fix a vector $\beta$ in $\mathbb{R}^d$ and denote $\mathcal{D}_{\beta}=(\mathcal{D}-i\beta) \cap \mathbb{R}^d$. If this intersection is non-empty, we may consider the restriction $k \rightarrow f(k+i\beta)$ as a real analytic function defined on a subdomain $\mathcal{D}_{\beta}$ of $\mathbb{R}^d$. Thanks to Cauchy-Riemann equations of $f$, we do not need to make any distinction between derivatives of $f$ with respect to $x$ (when $f$ is viewed as a real analytic one) or $z$ (when $f$ is considered as a complex analytic one) at every point in $\mathcal{D}_{\beta}$ since
\begin{equation*}
\frac{\partial f}{\partial x_l}(k+i\beta)=\frac{\partial f}{\partial z_l}(k+i\beta)=-i\frac{\partial f}{\partial y_l}(k+i\beta), \quad 1 \leq l \leq d.
\end{equation*}
For higher order derivatives, we use induction and the above identity to obtain
\begin{equation*}
\partial_x^{\alpha}f(k+i\beta)=\partial_z^{\alpha}f(k+i\beta)=(-i)^{|\alpha|}\partial_y^{\alpha}f(k+i\beta),
\end{equation*}
for any multi-index $\alpha$. We  use these facts implicitly for the function $\lambda_j$.
When dealing with the analytic function $f=\lambda_j$ in this part, denote $\partial^{\alpha}\lambda_j$ to indicate either its $x$ or $z$-derivatives.

We also want to mention this simple relation between derivatives of $\lambda_j$ and $E$:
\begin{equation*}
\partial^{\alpha}E(\beta)=\partial_y^{\alpha}\lambda_j(k_0+i\beta)=i^{|\alpha|}\partial^{\alpha}\lambda_j(k_0+i\beta).
\end{equation*}
\end{remark}

\begin{prmainprop}
We recall from Section 2 that $V$ is an open neighborhood of $k_0$ in $\mathbb{C}^d$ such that the properties \textbf{(P1)-(P6)} are satisfied. Note that $V$ depends only on the local structure at $k_0$ of the dispersion branch $\lambda_j$ of $L$. Denote $\mathcal{O}_{s}=\{k + it\beta_{s}: k \in \mathcal{O}, t \in [0,1]\}$ for each $s \in \mathbb{S}^{d-1}$. For $C>0$
(which we define it later), set $M_{s,C}=\mathcal{O}_{s}\cap \{z \in \mathbb{C}^d: |\Re(z)-k_0|<C\}$ and $N_{s,C}=\mathcal{O}_{s}\setminus M_{s,C}$.
For $C$ and $|\lambda|$ small enough, we can suppose $M_{s,C} \Subset V$ since $\beta_s$ is small too. We also assume that $|\lambda|\leq \epsilon_0$.\footnote{Recall the definition of $\epsilon_0$ from \textbf{(P3)}.}

\begin{figure}[h]
\begin{tikzpicture}
\draw[fill, color = gray!50] (-3.5,0) rectangle (-1,2);
\draw[fill, color = gray!50] (1,0) rectangle (3.5,2);
\draw[->] (-5,0) -- (5,0) node [below]{$\mathbb{R}^d$};
\draw[->] (0,2) -- (0,4) node [left]{$i\mathbb{R}^d$};
\draw[thick] (-3.5,0) rectangle (3.5, 2);
\draw[dashed] (1, 0) -- (1,2);
\draw[dashed] (-1, 0) -- (-1,2);
\draw[fill] (0,2) circle (2pt);
\draw (0,2.3) node [right]{$i \beta_s$};
\draw[thick, decorate,decoration={brace,amplitude=6pt, mirror},xshift=0pt,yshift=-3pt] (-3.5,0) -- (3.5,0);
\draw (0,-0.2) node [below]{$[-\pi, \pi]^d$};
\draw (0,1) node{$M_{s,C}$};
\draw (-2.25,1) node{$N_{s, C}$};
\draw (2.25,1) node{$N_{s, C}$};
\draw (0,-0.2) node [below]{$[-\pi, \pi]^d$};
\draw[fill] (0,0) circle (2pt);
\draw (0,0.3) node {0};
\end{tikzpicture}
\caption{An illustration of the regions $M_{s,C}$ and $N_{s,C}$ when $k_0\equiv 0$.}
\end{figure}
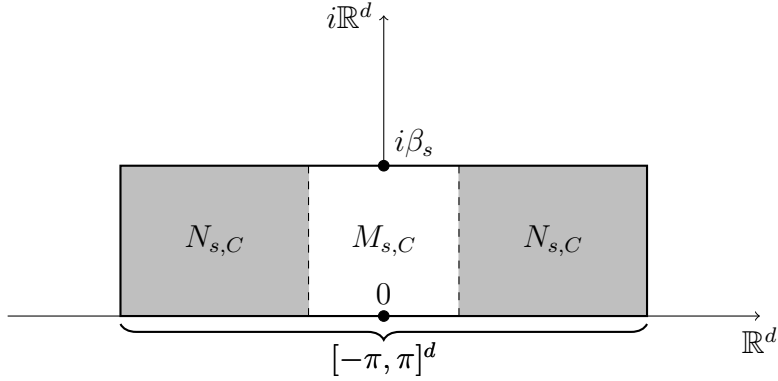
For any point $z=k+it\beta_s \in M_{s,C}$, we want to show if $\lambda \in \sigma(L(z))$, it forces $z=k_0+i\beta_s$. By \textbf{(P3)}, this is the same as showing the equation $\lambda_{j}(z)=\lambda$ has no solution $z$ in $M_{s,C}$ except for the trivial solution $z=k_0+i\beta_s$. Suppose for contradiction $\lambda_{j}(k+it\beta_s)=\lambda=\lambda(\beta_s)$ for some $t \in [0,1]$ and $k$ in $\{k \in \mathcal{O} \mid 0<|k-k_0|<C\}$. By Taylor expanding around $k_0+it\beta_s$, there is some $\gamma \in (0,1)$ such that
\begin{equation}
\label{E:taylor_singularity}
\begin{split}
&\lambda-\lambda_{j}(k_0+it\beta_s)=\left((k-k_0)\cdot \nabla \lambda_{j}(k_0+it\beta_s)+\sum_{|\alpha|=3}\frac{(k-k_0)^{\alpha}}{\alpha !}\partial^{\alpha}\lambda_j (k_0+it\beta_s)\right)\\
&+\left(\frac{1}{2}(k-k_0)\cdot \Hess{(\lambda_j)}(k_0+it\beta_s)(k-k_0)
+\sum_{|\alpha|=4}\frac{(k-k_0)^{\alpha}}{\alpha !}\partial^{\alpha}\lambda_j(\gamma (k-k_0)+k_0+it\beta_s)\right).
\end{split}
\end{equation}
If $|\alpha|$ is odd, then by Remark \ref{R:CR} and the fact that $E$ is real, we have
\begin{equation*}
\partial^{\alpha}\lambda_j(k_0+it\beta_s)=\frac{1}{i^{|\alpha|}}\partial^{\alpha}E(t\beta_s) \in i\mathbb{R}.
\end{equation*}
Taking the real part of the equation \mref{E:taylor_singularity} to get
\begin{equation*}
\begin{split}
E(\beta_s)-E(t\beta_s)=-\frac{1}{2}(k-k_0)\cdot \Hess&(E)(t\beta_s)(k-k_0)+\sum_{|\alpha|=4}\frac{(k-k_0)^{\alpha}}{\alpha !}\times\\
&\times \Re(\partial^{\alpha}\lambda_j(\gamma (k-k_0)+k_0+it\beta_s)).
\end{split}
\end{equation*}
The left-hand side is bounded above by $(1-t)\lambda \leq 0$ because of the concavity of $E$ ($E(t\beta_s) \geq tE(\beta_s)=t\lambda$). On the other hand, by \textbf{(P5)},
\begin{equation*}
-\frac{1}{2}(k-k_0)\cdot \Hess{(E)}(t\beta_s)(k-k_0) \geq \frac{1}{4}|k-k_0|^2 \min\sigma(\Hess{(\lambda_{j})}(0))
\end{equation*}
\begin{equation*}
\left|\sum_{|\alpha|=4}\frac{(k-k_0)^{\alpha}}{\alpha !}\Re(\partial^{\alpha}\lambda_j(\gamma (k-k_0)+k_0+it\beta_s))\right| \leq C(d)|k-k_0|^{4}\max_{z \in \overline{V}, |\alpha|=4}|\partial^{\alpha}\lambda_{j}(z)|.
\end{equation*}
We simply choose $\displaystyle C^2<\frac{\min\sigma(\Hess{(\lambda_{j})}(0))}{C(d)\max_{z \in \overline{V}, |\alpha|=4}|\partial^{\alpha}\lambda_{j}(z)|}$ to get a contradiction if $k \neq k_0$. \\

For the remaining part, we just need to treat points $k+it\beta_s$ in $N_{s,C}$. We have $\lambda \in \rho(L(k)), \forall k \in \mathbb{R}^d$. The idea is to adapt the upper-semicontinuity of the spectrum of an analytic family of type $\mathcal{A}$ on $\mathbb{C}^d$, following \cite{Ka}. For any $k \in \mathcal{O}$ and $z \in \mathbb{C}^d$, the composed operators $(L(k+z)-L(k))(L(k)-\lambda)^{-1}$ are closed and defined on $L^2(\mathbb{T})$ and by closed graph theorem, these are bounded operators. Clearly,
$$\displaystyle L(k+z)-\lambda=(1+(L(k+z)-L(k))(L(k)-\lambda)^{-1})(L(k)-\lambda).$$
Thus, $\lambda$ is in the resolvent of $L(k+z)$ if the operator  $\displaystyle 1+(L(k+z)-L(k))(L(k)-\lambda)^{-1}$ is invertible.
Hence, it is enough to show that there is some positive constant $\tau$ such that for any $k \in \mathcal{O}$ and $|z|<\tau$,
\begin{equation}
\label{E:semicontinuity}
\|(L(k+z)-L(k))(L(k)-\lambda)^{-1}\|_{op}<1/2, \quad |k-k_0|\geq C,
\end{equation}
where the operator norm on $L^2(\mathbb{T})$ is denoted by $\|\cdot\|_{op}$.
Indeed, if $|\lambda|$ is small enough so that we have $\displaystyle \max_{s \in \mathbb{S}^{d-1}}|\beta_s|<\tau$ and then \mref{E:semicontinuity} implies that $\lambda \in \rho(L(k+it\beta_s)$ for any $t \in [0,1]$.

Finally, we will use some energy estimates of linear elliptic equations and spectral theory to obtain \mref{E:semicontinuity}. Observe that,
\begin{equation*}
L(k+z)-L(k)=z\cdot A(x)(D+k)+(D+k)\cdot A(x)z+z\cdot A(x)z.
\end{equation*}
For $v \in H^1(\mathbb{T})$ and $|z|<1$, there is some constant $C_1>0$ (independent of $z$) such that
\begin{equation}
\label{E:energy1}
\|(z\cdot A(x)(D+k)+(D+k)\cdot A(x)z+z\cdot A(x)z)v\|_{L^2(\mathbb{T})} \leq C_1|z|\cdot\|v\|_{H^{1}(\mathbb{T})}.
\end{equation}
Set $v:=(L(k)-\lambda)^{-1}u$ for $u \in L^2(\mathbb{T})$. Ellipticity of $L(k)$ yields $v \in H^{2}(\mathbb{T})$ and in particular, we obtain \mref{E:energy1} for such $v$. Testing the equation $(L(k)-\lambda)v=u$ with the function $v$, we derive the standard energy estimate
\begin{equation}
\label{E:energy2}
\|Dv\|_{L^2(\mathbb{T})} \leq C_2(\|v\|_{L^2(\mathbb{T})}+\|u\|_{L^2(\mathbb{T})}).
\end{equation}
Note that both $C_1$ and $C_2$ in \mref{E:energy1} and \mref{E:energy2} are independent of $k$ and $\lambda$ since we take $k$ in the bounded set $\mathcal{O}$ and consider $|\lambda|$ to be small enough.

Suppose that $|\lambda|$ is less than one-half of the length of the gap between the dispersion branches $\lambda_{j}$ and $\lambda_{j-1}$.
Due to functional calculus of the self-adjoint operator $L(k)$, we get
$$\|(L(k)-\lambda)^{-1}\|_{op}=dist(\lambda, \sigma(L(k)))^{-1}=\min\{(\lambda_{j}(k)-\lambda), (\lambda- \lambda_{j-1}(k))\}^{-1}.$$

Now let $\delta_{1}=-\frac{1}{2}\max \lambda_{j-1}(k)>0$ and $\displaystyle \delta_2=\min_{k \in \mathcal{O}, |k-k_0| \geq C}\lambda_{j}(k)$. Then due to \textbf{A3}, $\delta_2>0$. Moreover,
$$\lambda - \lambda_{j-1}(k)>\lambda - \max_{k \in \mathcal{O}}\lambda_{j-1}(k)>\delta_{1},$$
and
$$\lambda_{j}(k)-\lambda \geq \min_{k \in \mathcal{O}, |k-k_0| \geq C}\lambda_{j}(k)-\lambda>\delta_2.$$
Hence,
\begin{equation}
\label{E:resolvent_bound}
\|(L(k)-\lambda)^{-1}\|_{op}<\delta:=\min\{\delta_1,\delta_2\}^{-1}.
\end{equation}

In other words, $\|v\|_{L^2(\mathbb{T})} \leq \delta \|u\|_{L^2(\mathbb{T})}$. Applying this fact together with \mref{E:energy1} and \mref{E:energy2}, we have
\begin{equation*}
\begin{split}
\|(L(k+z)-L(k))(L(k)-\lambda)^{-1}u\|_{L^2(\mathbb{T})} &\leq |z|C_1\|v\|_{H^{1}(\mathbb{T})}\\
&\leq |z|C_1C_2(\|v\|_{L^2(\mathbb{T})}+\|u\|_{L^2(\mathbb{T})})\\
&\leq |z|C_1C_2(1+\delta)\|u\|_{L^2(\mathbb{T})}.
\end{split}
 \end{equation*}
Now \mref{E:semicontinuity} is a consequence of the above estimate if we let
$$\displaystyle \tau\leq \min{\left(\frac{1}{2C_1C_2(1+\delta)},1\right)}.$$
\end{prmainprop}

\begin{prmainlem}
From Lemma \ref{L:Bloch_variety}, the complex Bloch variety $\Sigma:=B_{L}$ of the operator $L$ is an analytic subset of codimension one in $\mathbb{C}^{d+1}$. By \cite{K, Wilcox}, there exist an entire scalar function $h(k,\mu)$ and an entire operator-valued function $I(k,\mu)$ on $\mathbb{C}^{d+1}$ such that

(1) $h$ vanishes only on $\Sigma$ and has simple zeros at all points of $\Sigma$.

(2) In $\mathbb{C}^{d+1} \setminus \Sigma$, $(L(k)-\mu)^{-1}=h(k,\mu)^{-1}I(k,\mu)$.

In particular, $(L_{t,s}(k)-\lambda)^{-1}=h(k+it\beta_s,\lambda)^{-1}I(k+it\beta_s,\lambda)$ for $k \in \mathbb{R}^{d}$ and $t \in [0,1)$ by Proposition \ref{singularity}. Due to \textbf{Assumption A} and \textbf{(P2)}, if $k_0+it\beta_s \in V$, the $k$-variable function $h(k,\lambda)^{-1}$
is equal (up to a non-vanishing analytic factor) to $(\lambda_{j}(k+it\beta_s)-\lambda)^{-1}$ on an open disc $D(k_0,2\varepsilon)\subseteq V$ in $\mathbb{C}^d$ for some $\varepsilon>0$. Hence, we can write the sesquilinear form for such values of $k$ as
\begin{equation*}
(R_{t,s,\lambda}f,\varphi)=R_1+R_2,
\end{equation*}
where
\begin{equation*}
R_1=(2\pi)^{-d}\int_{\mathcal{O}\cap D(k_0,\varepsilon)}\frac{\left(M(k,\lambda)\widehat{f}(k), \widehat{\varphi}(k)\right)}{\lambda_{j}(k+it\beta_s)-\lambda}dk
\end{equation*}
and
\begin{equation*}
R_2=(2\pi)^{-d}\int_{\mathcal{O}\setminus D(k_0,\varepsilon)}\left(L(k+it\beta_s)-\lambda)^{-1}\widehat{f}(k), \widehat{\varphi}(k)\right)dk.
\end{equation*}
Here $M(k,\lambda)$ is a $L^{2}(\mathbb{T})$-valued analytic function on $D(k_0,\varepsilon)$ when $|\lambda|$ is small. Since $f$ and $ \varphi$ have compact supports, their Floquet transforms $\widehat{f}(k), \widehat{\varphi}(k)$ are analytic with respect to $k$.
To prove the equality \mref{E:limit}, we apply the Lebesgue Dominated Convergence Theorem. For $R_1$, it suffices to show that the denominator in the integrand when $t \rightarrow 1^{-}$ is integrable over $D(k_0,\varepsilon)$ for $d\geq 2$. Indeed,
\begin{equation*}
\label{E:denominator_integrable1}
|\lambda_{j}(k+i\beta_{s})-\lambda|
 \geq  \delta \left| i\nabla{E}(\beta_s)\cdot(k-k_0)-\frac{1}{2}(k-k_0)\cdot \Hess{(E)}(\beta_s)(k-k_0)\right |,
\end{equation*}
for some $\delta>0$ if $\varepsilon$ is chosen small enough so that in the Taylor expansion of $\lambda_j$ at $k_0+i\beta_s$, the remainder term $O(|k-k_0|^3)$ is dominated by the quadratic term $|k-k_0|^2$. Furthermore,
\begin{equation*}
\label{E:denominator_integrable2}
\left| i\nabla{E}(\beta_s)\cdot |k-k_0|-\frac{1}{2}(k-k_0)\cdot \Hess{(E)}(\beta_s)(k-k_0)\right |^2\geq C(|k-k_0|^2+|k-k_0|^4),
\end{equation*}
for some constant $C>0$ (independent of $k$).
Now let $v:=(k-k_0)$ and so the right hand side of the above estimate is just $|v|^2+|v|^4$  (up to a constant factor). One can apply H\"older's inequality
to obtain
\begin{equation*}
\label{E:denominator_integrable3}
|v|^2+|v|^4 \geq |v_1|^{2}+|v'|^4 \geq C|v_1|^{3/2}|v'|,
\end{equation*}
where $v=(v_1,v')\in \mathbb{R} \times \mathbb{R}^{d-1}$.
Thus,
we deduce
\begin{equation}
\label{E:denominator_integrable4}
|\lambda_{j}(k+i\beta_{s})-\lambda|^{-1}\leq C|v_1|^{-3/4}|v'|^{-1/2}.
\end{equation}
Since the function $|x|^{-n}$ is integrable near 0 in $\mathbb{R}^d$ if and only if $n>d$, $|v'|^{-1/2}$ and $|v_1|^{-3/4}$ are integrable near 0 in $\mathbb{R}^{d-1}$ and $\mathbb{R}$ respectively. Therefore, the function in the right hand side of \mref{E:denominator_integrable4} is integrable near 0.

The integrability of $R_2$ as $t \rightarrow 1^{-}$ follows from the estimation \mref{E:semicontinuity} in the proof of Proposition \ref{singularity}. Indeed,
\begin{equation}
\label{E:R2_integrable1}
\begin{split}
\|(L(k+it\beta_s)-\lambda)^{-1}\|_{op}&=\|(1-(L(k+it\beta_s)-L(k))(\lambda-L(k))^{-1})^{-1}(\lambda-L(k))^{-1}\|_{op}\\
&\leq \frac{\|(L(k)-\lambda)^{-1}\|_{op}}{1-\|(L(k+it\beta_s)-L(k))(\lambda-L(k))^{-1}\|_{op}}.
\end{split}
\end{equation}
By decreasing $|\lambda|$, if necessary, and repeating the arguments when showing \mref{E:semicontinuity} and \mref{E:resolvent_bound} we derive:
\begin{equation}
\label{E:R2_integrable2}
1-\|(L(k+it\beta_s)-L(k))(\lambda-L(k))^{-1}\|_{op}\geq 1/2, \quad \forall k \in \mathcal{O}\setminus D(k_0,\varepsilon)
\end{equation}
and
\begin{equation}
\label{E:R2_integrable3}
\sup_{k \in\mathcal{O}\setminus D(k_0,\varepsilon)}\|(L(k)-\lambda)^{-1}\|_{op}<\infty.
\end{equation}
Thanks to \mref{E:R2_integrable1}, \mref{E:R2_integrable2}, \mref{E:R2_integrable3}, Cauchy-Schwarz inequality and Lemma \ref{L:floquet}, we have:
\begin{equation*}
\begin{split}
\sup_{t \in [0,1]}\left|\left(L(k+it\beta_s)-\lambda)^{-1}\widehat{f}(k), \widehat{\varphi}(k)\right)\right| &\leq 2\|(L(k)-\lambda)^{-1}\|_{op}\cdot\|\widehat{f}(k)\|_{L^2(\mathbb{T})}\|\widehat{\varphi}(k)\|_{L^2(\mathbb{T})}
\\ & \lesssim\|\widehat{f}(k)\|_{L^2(\mathbb{T})}\|\widehat{\varphi}(k)\|_{L^2(\mathbb{T})} \quad, \forall k \in \mathcal{O} \setminus D(k_0,\varepsilon)
\end{split}
\end{equation*}
and
\begin{equation*}
\int_{\mathcal{O} \setminus D(k_0,\varepsilon)}\|\widehat{f}(k)\|_{L^2(\mathbb{T})}\|\widehat{\varphi}(k)\|_{L^2(\mathbb{T})}dk \leq \|f\|_{L^2(\mathbb{R}^d)}\|\varphi\|_{L^2(\mathbb{R}^d)}<\infty.
\end{equation*}
This completes the proof of our lemma.
\end{prmainlem}
\subsection{Regularity of eigenfunctions $\phi(z,x)$}
$ $
\vspace{0.3cm}

In this subsection, we study the regularity of the eigenfunctions $\phi(z,x)$ of the operator $L(z)$ with corresponding eigenvalue $\lambda_j(z)$ (see \textbf{(P4)}).
It is known that for each $z \in V$, the eigenfunction $\phi(z,x)$ is smooth in $x$. We will claim that these eigenfunctions are smooth in $(z,x)$ when $z$ is near to $k_0.$ The idea is that initially, $\phi(z,\cdot)$ is an analytic section of the Hilbert bundle $V \times H^2(\mathbb{T})$ and then by ellipticity, it is also an analytic section of the bundle $V \times H^m(\mathbb{T})$ for any $m>0$ (for statements related to Fredholm morphisms between analytic Banach bundles, see e.g., \cite{ZKKP}) and hence smoothness will follow.

For the sake of completeness, we provide the proof of the above claim by applying standard bootstrap arguments in the theory of elliptic differential equations.
\begin{lemma}
\label{L:joint_continuity}
The function $\partial_{x}^{\alpha}\phi(z,x)$ is jointly continuous on $V \times \mathbb{R}^d$ for any multi-index $\alpha$.
\end{lemma}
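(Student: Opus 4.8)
The plan is to prove joint continuity of $\partial_x^\alpha \phi(z,x)$ on $V \times \mathbb{R}^d$ by a bootstrap argument. First, recall from \textbf{(P4)} that $z \mapsto \phi(z,\cdot)$ is a holomorphic (hence continuous) $H^2(\mathbb{T})$-valued function on a neighborhood of $\overline V$, and that $\phi(z,\cdot)$ solves the eigenvalue equation $L(z)\phi(z,\cdot) = \lambda_j(z)\phi(z,\cdot)$ in $L^2(\mathbb{T})$, where $L(z) = (D+z)^*A(x)(D+z) + V(x)$ as in \mref{conjugatingLk}. Since $\lambda_j$ is holomorphic on a neighborhood of $\overline V$ by \textbf{(P1)}, the right-hand side $\lambda_j(z)\phi(z,\cdot)$ depends continuously on $z$ in $L^2(\mathbb{T})$. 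The strategy is to show inductively that $z \mapsto \phi(z,\cdot)$ is continuous as an $H^m(\mathbb{T})$-valued map for every $m \geq 2$, and then invoke the Sobolev embedding $H^m(\mathbb{T}) \hookrightarrow C^k(\mathbb{T})$ (for $m > k + d/2$) together with the fact that a continuous map into $C^k(\mathbb{T})$ gives joint continuity of the evaluation $(z,x) \mapsto \partial_x^\alpha \phi(z,x)$ for $|\alpha| \leq k$.

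The inductive step is the core of the argument and rests on an elliptic regularity estimate with continuous dependence on the parameter. Suppose we know $z \mapsto \phi(z,\cdot) \in C(V, H^m(\mathbb{T}))$. Rewrite the equation as $L(z_0)\phi(z,\cdot) = \lambda_j(z)\phi(z,\cdot) - (L(z) - L(z_0))\phi(z,\cdot)$ for a fixed reference point $z_0$. The operator $L(z) - L(z_0)$ is a first-order differential operator with smooth coefficients depending polynomially (hence continuously) on $z$, so $(L(z)-L(z_0))\phi(z,\cdot)$ depends continuously on $z$ in $H^{m-1}(\mathbb{T})$, and thus the whole right-hand side lies in $H^{m-1}(\mathbb{T})$ and depends continuously on $z$ there. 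Actually, since $\phi(z,\cdot) \in H^m$, the term $(L(z)-L(z_0))\phi(z,\cdot)$ and $\lambda_j(z)\phi(z,\cdot)$ both lie in $H^{m-1}$, but we want to gain a full two orders: more carefully, one applies the a priori estimate $\|u\|_{H^{m+2}(\mathbb{T})} \lesssim \|L(z_0)u\|_{H^m(\mathbb{T})} + \|u\|_{L^2(\mathbb{T})}$ (valid since $L(z_0)$ is elliptic on the compact manifold $\mathbb{T}$) to $u = \phi(z,\cdot) - \phi(z',\cdot)$, using that $L(z_0)(\phi(z,\cdot) - \phi(z',\cdot))$ is controlled in $H^m(\mathbb{T})$ by the continuity in $z$ already established at level $m$. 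This yields $\|\phi(z,\cdot) - \phi(z',\cdot)\|_{H^{m+2}(\mathbb{T})} \to 0$ as $z' \to z$, i.e. continuity at level $m+2$; starting from $m = 0$ and $m = 1$ (equivalently $m=2$ from \textbf{(P4)}) and iterating gives all $m$.

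The main obstacle — really the only subtle point — is bookkeeping the continuous (as opposed to merely pointwise-in-$z$) dependence: one must phrase the elliptic estimate uniformly in $z_0$ over a small ball, or fix $z_0$ and absorb the $z$-dependent lower-order part $L(z) - L(z_0)$ into the right-hand side as above, being careful that this absorbed term is itself continuous in $z$ at the Sobolev level already attained by the induction. Since $A, V$ are smooth and $\mathbb{Z}^d$-periodic and the coefficients of $L(z)$ are polynomials in $z$ with smooth-in-$x$ coefficients, every norm estimate is uniform for $z$ in a fixed compact neighborhood of $z_0$ inside $V$, so no genuine difficulty arises; it is purely a matter of organizing the standard bootstrap. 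Once continuity in every $H^m(\mathbb{T})$ is in hand, fix $\alpha$ and choose $m > |\alpha| + d/2$; then $z \mapsto \phi(z,\cdot)$ is continuous into $C^{|\alpha|}(\mathbb{T})$, and since for maps into $C^{|\alpha|}(\mathbb{T})$ the evaluation $(z,x) \mapsto \partial_x^\alpha \phi(z,x)$ is jointly continuous (continuity in $z$ is uniform in $x$, continuity in $x$ is built into $C^{|\alpha|}$), the claim follows on $V \times \mathbb{T}$, hence on $V \times \mathbb{R}^d$ by periodicity of $\phi(z,\cdot)$ in $x$.
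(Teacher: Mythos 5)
Your proof follows essentially the same route as the paper's: a Sobolev bootstrap applied to the difference $\phi(z,\cdot)-\phi(z',\cdot)$ using the eigenvalue equation and elliptic a priori estimates (plus a preliminary uniform bound $\sup_{z}\|\phi(z,\cdot)\|_{H^m}<\infty$), followed by Sobolev embedding into $C^{k}(\mathbb{T})$ and periodicity in $x$. The only, and harmless, discrepancy is in the step where you claim to gain two orders at once with a fixed reference operator $L(z_0)$: the cross term $(L(z')-L(z_0))\bigl(\phi(z,\cdot)-\phi(z',\cdot)\bigr)$ is controlled in $H^{m}$ only by continuity at level $m+1$, so each iteration actually gains one Sobolev order rather than two; the paper sidesteps this by applying the varying operator $L(z)$ to the difference (and proves Lipschitz continuity in $z$ rather than mere continuity), but either way the induction reaches every $m$ and the conclusion stands.
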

\begin{proof}
By periodicity, it suffices to restrict $x$ to $\mathbb{T}$. Let $K:=\overline{V}$.
Due to \textbf{(P4)}, the function $z \mapsto \phi(z, \cdot)$ is a $H^2(\mathbb{T})$-valued analytic on some neighborhood of $K$. Thus,
\begin{equation*}
\label{E:initial}
\sup_{z \in K}\|\phi(z, \cdot)\|_{H^2(\mathbb{T})}<\infty.
\end{equation*}
Then, we can apply bootstrap arguments for the equation $$L(z)\phi(z, \cdot)=\lambda_j(z)\phi(z, \cdot)$$
to see that $M_m:=\sup_{z \in K}\|\phi(z, \cdot)\|_{H^m(\mathbb{T})}$ is finite for any nonnegative integer $m$.

Now we consider $z$ and $z'$ in $K$. Let $\phi_{z,z'}(x):=\phi(z,x)-\phi(z',x)$. Then, $\phi_{z,z'}$ is a (classical) solution of the equation
$$L(z)\phi_{z,z'}=f_{z,z'},$$
where $f_{z,z'}:=(\lambda_j(z)\phi(z, \cdot)-\lambda_j(z')\phi(z', \cdot))+(L(z')-L(z))\phi(z', \cdot)$.

By induction, we will show that for any $m \geq 0$,
\begin{equation}
\label{induction_lemm}
\|\phi_{z,z'}\|_{H^m(\mathbb{T})}\lesssim |z-z'|.
\end{equation}

The case $m=0$ is clear because \textbf{(P4)} implies that $z \mapsto \|\phi(z,\cdot)\|_{L^2(\mathbb{T})}$ is Lipschitz continuous.

Next, we assume that the estimate \mref{induction_lemm} holds for $m$.
As in \mref{E:energy1},
\begin{equation}
\label{E:energy1_lemm}
\|(L(z)-L(z'))\phi(z', \cdot)\|_{H^m(\mathbb{T})} \lesssim |z-z'| \cdot \|\phi(z', \cdot)\|_{H^{m+1}(\mathbb{T})}\lesssim M_{m+1}|z-z'|.
\end{equation}
Using triangle inequalities, the estimates \mref{induction_lemm}, \mref{E:energy1_lemm} and analyticity of $\lambda_j$, we get
\begin{equation}
\label{E:energy2_lemm}
\begin{split}
\|f_{z,z'}\|_{H^{m}(\mathbb{T})}& \lesssim \|\lambda_j(z)\phi(z, \cdot)-\lambda_j(z')\phi(z', \cdot)\|_{H^{m}(\mathbb{T})}+\|(L(z)-L(z'))\phi(z', \cdot)\|_{H^{m}(\mathbb{T})} \\
&\lesssim |\lambda_j(z)|\cdot \|\phi_{z,z'}\|_{H^{m}(\mathbb{T})}+M_m|\lambda_j(z)-\lambda_j(z')|+M_{m+1}|z-z'|
\\&\lesssim |z-z'|.
\end{split}
\end{equation}
Notice that for any $m \geq 0$, the following standard energy estimate holds (see e.g., \cite{Ev,GT,LU}):
\begin{equation}
\label{E:energy3_lemm}
\|\phi_{z,z'}\|_{H^{m+2}(\mathbb{T})}\lesssim \|f_{z,z'}\|_{H^m(\mathbb{T})}+\|\phi_{z,z'}\|_{L^2(\mathbb{T})}.
\end{equation}
Combining \mref{E:energy2_lemm} and \mref{E:energy3_lemm}, we deduce that $\|\phi_{z,z'}\|_{H^{m+2}(\mathbb{T})} \lesssim |z-z'|$. Hence, \mref{induction_lemm} holds for $m+2$. This finishes our induction.

Applying the Sobolev embedding theorem, $\|\phi_{z,z'}\|_{C^m(\mathbb{T})}\lesssim |z-z'|$ for any $m\geq 0$. In other words, $\phi \in C(K,C^m(\mathbb{T}))$ for any $m$. Since $C(K \times \mathbb{T})=C(K,C(\mathbb{T}))$, this completes the proof.
\end{proof}

\begin{notation}
Consider a $z$-parameter family of linear partial differential operators $\{L(z)\}$ where $z \in \mathbb{R}^d$. Suppose $L(x,\xi,z)$ is the symbol of $L(z)$. Whenever it makes sense, the differential operator $\displaystyle \frac{\partial L(z)}{\partial z_l}$ is the one whose symbol is $\displaystyle \frac{\partial L}{\partial z_l}(x,\xi,z)$ for any $l \in \{1,2,\dots,d\}$.
\end{notation}

\begin{prop}
\label{P:joint_continuity}
Assume $D$ is an open disc centered at $k_0$ in $\mathbb{R}^d$ such that $D\pm i\beta_s \Subset V$ for any $s \in \mathbb{S}^{d-1}$. Then all eigenfunctions $\displaystyle \phi(k\pm i\beta_s,x)$ are smooth on a neighborhood of $\overline{D} \times \mathbb{R}^d$. Furthermore, all derivatives of $\phi(k \pm i\beta_s,x)$ are bounded on $\overline{D} \times \mathbb{R}^d$ uniformly in $s$, i.e., for any multi-indices $\alpha, \beta$:
$$\sup_{(s,k,x) \in \mathbb{S}^{d-1}\times \overline{D} \times \mathbb{R}^d}|\partial^{\alpha}_{k}\partial^{\beta}_{x}\phi(k \pm i\beta_s,x)|<\infty.$$
\end{prop}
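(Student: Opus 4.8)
The plan is to upgrade Lemma \ref{L:joint_continuity} from the real-parameter setting to the complex shifts $k \pm i\beta_s$, and then to obtain the uniformity in $s$ by a compactness argument. First I would observe that for fixed $s$, the family $z \mapsto \phi(z,\cdot)$ is, by \textbf{(P4)}, an $H^2(\mathbb{T})$-valued holomorphic map on a neighborhood of $\overline{V}$; restricting to the real slice $k \mapsto \phi(k+i\beta_s,\cdot)$ gives a real-analytic (hence $C^\infty$) $H^2(\mathbb{T})$-valued map on a neighborhood of $\overline{D}$, using Remark \ref{R:CR} to identify $k$-derivatives with $z$-derivatives. Then I would run the same elliptic bootstrap as in the proof of Lemma \ref{L:joint_continuity}, but now applied to the equation $L(k+i\beta_s)\phi(k+i\beta_s,\cdot) = \lambda_j(k+i\beta_s)\phi(k+i\beta_s,\cdot)$: differentiating this equation in $k$ repeatedly and using that $L(k+i\beta_s)$ is elliptic with smooth coefficients, one gains two Sobolev orders at each step, so $\partial_k^\alpha \phi(k+i\beta_s,\cdot) \in H^m(\mathbb{T})$ for every $m$, with the $H^m$-norm locally bounded in $k$. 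Sobolev embedding then gives joint smoothness of $(k,x) \mapsto \phi(k+i\beta_s,x)$ on a neighborhood of $\overline{D}\times\mathbb{R}^d$; the argument for $k - i\beta_s$ is identical (or follows by conjugation, since $\phi(k-i\beta_s,\cdot) = \overline{\phi(\overline{k+i\beta_s},\cdot)}$ when $k$ is real, by Lemma \ref{evenness_band_functions}-style reasoning and \textbf{(P6)}).

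For the uniformity in $s \in \mathbb{S}^{d-1}$, the key point is to make every constant in the bootstrap depend only on quantities that are controlled uniformly in $s$. I would set this up as follows. Let $K := \overline{V}$ as in the proof of Lemma \ref{L:joint_continuity}; by \mref{E:beta_s_in_V} and the hypothesis $\overline{D}\pm i\beta_s \Subset V$, the compact set $\{k \pm i\beta_s : s\in\mathbb{S}^{d-1},\ k \in \overline{D}\}$ lies in a fixed compact subset of $V$. On that compact set, the coefficients of $L(z)$ and all their $x$-derivatives are uniformly bounded, the band function $\lambda_j(z)$ and all its derivatives are uniformly bounded (by \textbf{(P1)} and analyticity), and $F(z) = (\phi_z,\phi_{\bar z})$ is bounded away from zero (by \textbf{(P6)}), so the normalization of $\phi(z,\cdot)$ can be fixed uniformly. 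Since $z \mapsto \phi(z,\cdot)$ is $H^2(\mathbb{T})$-valued holomorphic on a neighborhood of this compact set, $\sup \|\phi(z,\cdot)\|_{H^2(\mathbb{T})}$ and $\sup\|\partial_z^\alpha \phi(z,\cdot)\|_{H^2(\mathbb{T})}$ over that set are finite by the Cauchy estimates. Feeding these uniform bounds into the parameter-dependent energy estimates — which are exactly those established in the proof of Proposition \ref{singularity} (see \mref{E:energy1}, \mref{E:energy2}) and in Lemma \ref{L:joint_continuity} (see \mref{E:energy1_lemm}--\mref{E:energy3_lemm}), all with constants independent of the base point because the coefficients lie in a fixed bounded set — the induction on $m$ yields $\sup_{s,k}\|\partial_k^\alpha \phi(k\pm i\beta_s,\cdot)\|_{H^m(\mathbb{T})} < \infty$ for every $\alpha$ and every $m$. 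Sobolev embedding $H^m(\mathbb{T}) \hookrightarrow C^{\lfloor m - d/2\rfloor}(\mathbb{T})$ then converts this into the stated uniform pointwise bound on $\partial_k^\alpha \partial_x^\beta \phi(k\pm i\beta_s,x)$.

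The main obstacle, and the step that needs the most care, is keeping track of where the parameter $s$ enters: it appears only through the shift $i\beta_s$, hence only through the fact that $L(k + i\beta_s) = L(x, D + k + i\beta_s)$ has coefficients (the lower-order part involving $k + i\beta_s$) that vary with $s$. One must check that the holomorphic dependence $z \mapsto \phi(z,\cdot)$ together with the joint continuity of $s \mapsto \beta_s$ (established after Corollary \ref{evenness_symmetry}) is enough to ensure joint smoothness of $(s,k,x) \mapsto \phi(k\pm i\beta_s,x)$, or at least the uniform boundedness of its $k$- and $x$-derivatives — and for that it is cleanest not to claim smoothness in $s$ but only the uniform sup-bounds, which is all the proposition asks. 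The rest is routine elliptic regularity theory run with bookkeeping on constants, so I would present it by invoking the estimates already proved (\mref{E:energy1}, \mref{E:energy1_lemm}, \mref{E:energy3_lemm}) rather than re-deriving them.
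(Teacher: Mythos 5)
Your plan is correct and follows essentially the paper's route: start from the $H^{2}(\mathbb{T})$-valued holomorphy guaranteed by \textbf{(P4)}, bootstrap the eigenfunction equation through elliptic regularity (the paper does this via the oscillation estimates of Lemma \ref{L:joint_continuity} and a weak-form differentiation, you via Cauchy estimates on the holomorphic family — same idea, slightly streamlined), pass through Sobolev embedding, and get the uniformity in $s$ from the compactness of $\{k\pm i\beta_s : k\in\overline{D},\, s\in\mathbb{S}^{d-1}\}$ inside $V$. One small slip: the parenthetical identity $\phi(k-i\beta_s,\cdot)=\overline{\phi(\overline{k+i\beta_s},\cdot)}$ reads, for real $k$, as $\phi(k-i\beta_s,\cdot)=\overline{\phi(k-i\beta_s,\cdot)}$, which is false; conjugating the eigenvalue relation shows that $\overline{\phi(z,\cdot)}$ is an eigenfunction of $L(-\overline{z})$, not $L(\overline{z})$ — but this is immaterial, since your primary claim (that the argument for $k-i\beta_s$ is identical) is correct and is exactly what the paper does.
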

\begin{proof}
Pick any open disc $D'$ in $\mathbb{R}^d$ so that $\overline{D} \pm i\beta_s \subset D'\pm i\beta_s \subseteq V$. We will prove that all eigenfunctions are smooth on the domain $D' \times \mathbb{R}^d$. Also, it is enough to consider the function $\displaystyle \phi(k+i\beta_s)$ since the other one is treated similarly.

First, we show that $\displaystyle \frac{\partial \phi}{\partial k_l}(k+i\beta_s,x)$ is continuous for any $1 \leq l \leq d$. By Lemma \ref{L:joint_continuity}, the function  $(k,x) \mapsto \phi(k+i\beta_s,x)$ is continuous on $D' \times \mathbb{T}$. We consider any two complex-valued test functions $\varphi \in C^{\infty}_{c}(D')$ and $\psi \in C^{\infty}(\mathbb{T})$. Testing the equation of the eigenfunction $\phi(k+i\beta_s,x)$ with $\psi$ and $\displaystyle \frac{\partial \varphi}{\partial k_l}$, we derive
\begin{equation*}
\int_{D'}\int_{\mathbb{T}}(L(k+i\beta_s)-\lambda_j(k+i\beta_s))\phi(k+i\beta_s,x)\overline{\psi(x)}\frac{\partial \varphi}{\partial k_l}(k)dxdk=0.
\end{equation*}
Observe that $L(k+i\beta_s)^{*}=L(k-i\beta_s)$ and $\displaystyle \left(\frac{\partial L(k-i\beta_s)}{\partial k_l}\right)^*=\frac{\partial L(k+i\beta_s)}{\partial k_l}$. We integrate by parts to derive
\begin{equation}
\label{E:test_eqn}
\begin{split}
0&=\int_{D'}\left(\left(L(k+i\beta_s)-\lambda_j(k+i\beta_s)\right)\phi(k+i\beta_s,x),\psi(x)\right)_{L^2(\mathbb{T})}\frac{\partial \varphi}{\partial k_l}(k)dk\\
&=\int_{D'}\left(\phi(k+i\beta_s,x),\left(L(k-i\beta_s)-\overline{\lambda_j}(k+i\beta_s)\right)\psi(x)\right)_{L^2(\mathbb{T})}\frac{\partial \varphi}{\partial k_l}(k)dk\\
&=\int_{D'}\left(-\frac{\partial\phi}{\partial k_l}(k+i\beta_s,x),\left(L(k-i\beta_s)-\overline{\lambda_j}(k+i\beta_s)\right)\psi(x)\right)_{L^2(\mathbb{T})}\varphi(k)dk\\
&-\int_{D'}\left(\phi(k+i\beta_s,x),\frac{\partial L(k-i\beta_s)}{\partial k_l}\psi(x)-\frac{\partial \overline{\lambda_j}}{\partial k_l}(k+i\beta_s)\psi(x)\right)_{L^2(\mathbb{T})}\varphi(k)dk\\
&=\int_{D'}\left(\left(L(k+i\beta_s)-\lambda_j(k+i\beta_s)\right)\frac{\partial\phi}{\partial k_l}(k+i\beta_s,x),\psi(x)\right)_{L^2(\mathbb{T})}\varphi(k)dk\\
&-\int_{D'}\left(\left(\frac{\partial L(k+i\beta_s)}{\partial k_l}-\frac{\partial \lambda_j}{\partial k_l}(k+i\beta_s)\right)\phi(k+i\beta_s,x),\psi(x)\right)_{L^2(\mathbb{T})}\varphi(k)dk.
\end{split}
\end{equation}
We introduce
\begin{equation*}
\begin{split}
\phi_l(k,x)&:=\frac{\partial \phi}{\partial k_l}(k+i\beta_s,x),\\
G(k)&:=L(k+i\beta_s)-\lambda_j(k+i\beta_s),\\
H(k,x)&:=\left(\frac{\partial L(k+i\beta_s)}{\partial k_l}-\frac{\partial \lambda_j}{\partial k_l}(k+i\beta_s)\right)\phi(k+i\beta_s,x).
\end{split}
\end{equation*}
By invoking the previous lemma, the Lipschitz continuity of the $C^{m+2}(\mathbb{T})$-valued function $\phi(k+i\beta_s,\cdot)$ implies that the mapping $k \mapsto H(k,\cdot)$ must be Lipschitz as a $C^m(\mathbb{T})$-valued function on $\overline{D'}$ for any $m\geq 0$. On the other hand, the $H^2(\mathbb{T})$-valued function $\phi_l(k,\cdot)$ is also Lipschitz on $D'$ due to \textbf{(P4)}. Hence, both $(G(k)\phi_l(k,\cdot), \psi)_{L^2(\mathbb{T})}$ and $(H(k,\cdot),\psi)_{L^2(\mathbb{T})}$ are continuous on $D'$ for any test function $\psi$.
The continuity let us conclude from \mref{E:test_eqn} that for every $k \in D'$, $\phi_l(k,\cdot)$ is a weak solution of the equation
\begin{equation}
\label{E:eqn}
G(k)\phi_l(k,x)=H(k,x).
\end{equation}
We interpret \mref{E:eqn} in the classical sense since all the coefficients of this equation are smooth. Consider any $k_1, k_2$ in $D'$ and subtract the equation corresponding to $k_1$ from the one corresponding to $k_2$ to obtain the equation for the oscillation function $\phi_l(k_1,\cdot)-\phi_l(k_2,\cdot)$:
\begin{equation*}
G(k_1)(\phi_l(k_1,x)-\phi_l(k_2,x))=(G(k_2)-G(k_1))\phi_l(k_2,x)+(H(k_1,x)-H(k_2,x)).
\end{equation*}
Note that due to regularities of $\lambda_j$, $H$ and the fact that the differential operator $G(k)$ depends analytically on $k$, we get
$$\|H(k_1,\cdot)-H(k_2,\cdot)\|_{H^m(\mathbb{T})}+\|(G(k_1)-G(k_2))\phi_l(k_2,\cdot)\|_{H^m(\mathbb{T})}=O(|k_1-k_2|), \quad \forall m \in \mathbb{N}.$$
Combining this with the uniform boundedness in $k$ of the supremum norms of all coefficients of the differential operator $G(k_1)$, we obtain
$$\|\phi_l(k_1,\cdot)-\phi_l(k_2,\cdot)\|_{H^m(\mathbb{T})}=O(|k_1-k_2|),$$ by using energy estimates as in the proof of Lemma \ref{L:joint_continuity}. An application of the Sobolev embedding theorem
shows that $\partial_{x}^{\beta}\phi_l(k,x)$ is continuous on $D' \times \mathbb{T}$ for any multi-index $\beta$.

To deduce continuity of higher derivatives $\partial_{x}^{\beta} \partial_{k}^{\alpha}\phi(k+i\beta_s)$ ($|\alpha|>1$, $|\beta| \geq 0$), we induct on $|\alpha|$ and
repeat the arguments of the $|\alpha|=1$ case.

Finally, the last statement of this proposition also follows since all of our estimates hold uniformly in $s$.
\end{proof}
\begin{obs}
1. The property \textbf{(P4)} is crucial in order to bootstrap regularities of eigenfunctions $\phi(k\pm i\beta_s)$.

2. If one just requires $\phi(k \pm i\beta_s) \in C^m(\overline{D}\times \mathbb{R}^d)$ for certain $m>0$ then the smoothness on coefficients of $L$ could be relaxed significantly (see \cite{GT,LU}).
\end{obs}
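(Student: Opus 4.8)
The plan is to run a bootstrap argument combining the holomorphy of $z\mapsto\phi(z,\cdot)$ as an $H^2(\mathbb{T})$-valued map (property \textbf{(P4)}) with interior elliptic regularity, in the spirit that an analytic section of the Hilbert bundle $V\times H^2(\mathbb{T})$ solving an elliptic eigenequation is automatically an analytic section of $V\times H^m(\mathbb{T})$ for every $m$, hence smooth. First I would enlarge $D$ slightly to a disc $D'$ with $\overline{D}\pm i\beta_s\subset D'\pm i\beta_s\subseteq V$ for every $s\in\mathbb{S}^{d-1}$ — possible since $|\beta_s|$ is small and $\{\beta_s:s\in\mathbb{S}^{d-1}\}$ is compact — and reduce to the function $\phi(k+i\beta_s,x)$, the $-$ case being identical. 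From Lemma \ref{L:joint_continuity} and the induction in its proof I would record the base-case data: $M_m:=\sup_{z\in\overline{V}}\|\phi(z,\cdot)\|_{H^m(\mathbb{T})}<\infty$ and $\|\phi(z,\cdot)-\phi(z',\cdot)\|_{H^m(\mathbb{T})}\lesssim|z-z'|$ for every $m$; Remark \ref{R:CR} lets me identify, along the slice $k\mapsto k+i\beta_s$, the $k$-derivatives of $\phi$ and $\lambda_j$ with their complex $z$-derivatives.

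The core step is to differentiate the eigenidentity $L(z)\phi(z,\cdot)=\lambda_j(z)\phi(z,\cdot)$ in $z_l$, which yields the inhomogeneous elliptic equation
$$\bigl(L(z)-\lambda_j(z)\bigr)\frac{\partial\phi}{\partial z_l}(z,\cdot)=\frac{\partial\lambda_j}{\partial z_l}(z)\,\phi(z,\cdot)-\frac{\partial L(z)}{\partial z_l}\phi(z,\cdot)=:H_l(z,\cdot).$$
Since $\partial L/\partial z_l$ is first order with smooth coefficients and $\phi(z,\cdot)\in C^\infty(\mathbb{T})$ with $\|\phi(z,\cdot)\|_{H^{m+1}}\le M_{m+1}$, the datum $H_l(z,\cdot)$ lies in $H^m(\mathbb{T})$ for every $m$ and is Lipschitz in $z$ as an $H^m$-valued map (using analyticity of $\lambda_j$ and of the coefficients of $L(z)$, plus the base-case Lipschitz bound on $\phi$). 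The standard interior estimate $\|u\|_{H^{m+2}}\lesssim\|(L(z)-\lambda_j(z))u\|_{H^m}+\|u\|_{L^2}$ — exactly the form used in the proof of Lemma \ref{L:joint_continuity} — then upgrades $\partial_{z_l}\phi(z,\cdot)$ from $H^2$ to $H^{m+2}$ for every $m$, so it is $C^\infty$ in $x$. For joint continuity I would subtract this equation at $k_1$ from that at $k_2$, producing an elliptic equation for $\partial_{k_l}\phi(k_1+i\beta_s,\cdot)-\partial_{k_l}\phi(k_2+i\beta_s,\cdot)$ whose forcing has $H^m$-norm $O(|k_1-k_2|)$; seeded by the $L^2$ Lipschitz bound (a consequence of $H^2$-holomorphy of $\partial_{z_l}\phi$), the same a priori estimate gives a Lipschitz modulus of continuity in every $H^{m+2}$, and Sobolev embedding makes each $\partial_x^\beta\partial_{k_l}\phi(k+i\beta_s,x)$ continuous on $D'\times\mathbb{T}$. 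Iterating this for $|\alpha|>1$ — each time differentiate the eigenequation $\alpha$ times, isolate $\partial_k^\alpha\phi$, and observe that the resulting forcing involves only strictly lower-order $k$-derivatives of $\phi$ (already smooth and $k$-Lipschitz by induction) together with $z$-derivatives of $L$ and $\lambda_j$ — then delivers $\phi(k+i\beta_s,x)\in C^\infty(D'\times\mathbb{R}^d)$.

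The uniformity in $s$ comes essentially for free: $\{k+i\beta_s:(s,k)\in\mathbb{S}^{d-1}\times\overline{D}\}$ is a fixed compact subset of $V$, so the quantities $M_m$, the $C^m$-norms of the coefficients of $L(k+i\beta_s)$ and of its $k$-derivatives, and all sup-norms of $\lambda_j$ and its derivatives over $\overline{V}$ are bounded uniformly in $s$; since every estimate above carries constants depending only on these data and on the ellipticity constant $\theta$, the bound $\sup_{(s,k,x)}|\partial_k^\alpha\partial_x^\beta\phi(k\pm i\beta_s,x)|<\infty$ follows at once. The one genuinely delicate point — and the step I expect to cost the most care — is the very first differentiation: a priori $\phi(z,\cdot)$ is only $H^2(\mathbb{T})$-valued holomorphic, so $\partial_{z_l}\phi(z,\cdot)$ is merely an $H^2$-valued map and must first be shown to solve the differentiated equation in the weak sense before elliptic regularity applies. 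I would handle this by testing the eigenequation against $\psi(x)\,\partial_{k_l}\varphi(k)$ with $\psi\in C^\infty(\mathbb{T})$, $\varphi\in C_c^\infty(D')$, integrating by parts in $k$, and using $L(z)^*=L(\overline{z})$ together with $\bigl(\partial_{k_l}L(k-i\beta_s)\bigr)^*=\partial_{k_l}L(k+i\beta_s)$ to transfer the $k$-derivative onto $\phi$. Apart from that, the only chore is bookkeeping in the double induction, on the order $|\alpha|$ of the $k$-derivative and on the Sobolev index $m$, so that the forcing term at each stage is composed only of already-controlled data; there is no conceptual obstacle.
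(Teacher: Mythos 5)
The Observation you were given is a meta-remark on the proof of Proposition~\ref{P:joint_continuity}; the paper offers no proof of it, and there is nothing to prove in the strict sense. What you have written is, in substance, a re-derivation of that proposition's proof, and it tracks the paper point for point: enlarging $D$ to $D'$, seeding from Lemma~\ref{L:joint_continuity}, differentiating the eigenequation weakly by testing against $\psi(x)\,\partial_{k_l}\varphi(k)$ and using $L(z)^*=L(\overline{z})$, bootstrapping through the interior estimate $\|u\|_{H^{m+2}(\mathbb{T})}\lesssim\|(L(z)-\lambda_j(z))u\|_{H^m(\mathbb{T})}+\|u\|_{L^2(\mathbb{T})}$, the subtraction argument for joint continuity, induction on $|\alpha|$, and uniformity in $s$ from compactness of $\{k+i\beta_s\}$ inside $V$. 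This correctly substantiates item~1 of the Observation: \textbf{(P4)} is the seed because without $H^2(\mathbb{T})$-valued holomorphy of $z\mapsto\phi(z,\cdot)$ there is no base case for the induction \mref{induction_lemm} nor any justification for the initial weak differentiation that launches the bootstrap, and you flag this as the delicate step.

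You do not, however, address item~2 of the Observation, which asserts that if one only wants $\phi(k\pm i\beta_s)\in C^m(\overline{D}\times\mathbb{R}^d)$ for some fixed finite $m$, then the $C^\infty$ hypothesis on the coefficients of $L$ can be weakened substantially. This follows directly from the structure you set up: reaching $C^m$ requires only finitely many applications of the interior estimate, and for each fixed Sobolev index that estimate holds (see \cite{GT,LU}) under only finite regularity of $A$ and $V$ (for instance $C^{m,\alpha}$ or suitable Sobolev-class coefficients), so the smooth-coefficient assumption is used only because the paper ultimately wants $C^\infty$ eigenfunctions. If your intent was to justify the Observation and not merely to reprove Proposition~\ref{P:joint_continuity}, you should make this last point explicit rather than leaving it implicit in the choice of elliptic estimates.
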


\section{Concluding remarks}
\begin{enumerate}
\item The condition that the potentials $A$, $V$ are infinitely differentiable is an overkill. The Fredholm property of the corresponding Floquet operators is essential, which can be obtained under much weaker assumptions.

\item The main result of this article assumes the central symmetry (evenness) of the relevant branch of the dispersion curve $\lambda_j(k)$, which does not hold for instance for operators with periodic magnetic potentials \cite{Shterenberg,FKTassym}. Note that the result of \cite{KR} at the spectral edge does not require such a symmetry. It seems that in the inside-the-gap situation one also should not need such a symmetry. However, the authors have not been able to do so, and thus were limited to the case of high symmetry points of the Brillouin zone.

\item In the case when $\lambda$ is below the whole spectrum, the result of this paper implies the Theorem 1.1 in \cite{MT} for self-adjoint operators.

\end{enumerate}

\section{Acknowledgements}

The work of all authors was partially supported by NSF DMS grants. The authors express their gratitude to NSF for the support. P.K. would like to thank the Isaac Newton Institute for Mathematical Sciences, Cambridge, for support and hospitality during the programme Periodic and Ergodic Problems, where work on this paper was undertaken.

\begin{bibdiv}
\begin{biblist}
\bib{Agmon}{book}{
    AUTHOR = {Agmon, Shmuel},
    TITLE = {Lectures on elliptic boundary value problems},
    PUBLISHER = {AMS Chelsea Publishing, Providence, RI},
      YEAR = {2010},
     PAGES = {x+216},
      ISBN = {978-0-8218-4910-1},
   MRCLASS = {35-02 (35Jxx 35P05)},
  MRNUMBER = {2589244 (2011c:35004)},
}
\bib{Ag}{article}{
    AUTHOR = {Agmon, Shmuel},
     TITLE = {On kernels, eigenvalues, and eigenfunctions of operators
              related to elliptic problems},
   JOURNAL = {Comm. Pure Appl. Math.},
  FJOURNAL = {Communications on Pure and Applied Mathematics},
    VOLUME = {18},
      YEAR = {1965},
     PAGES = {627--663},
      ISSN = {0010-3640},
   MRCLASS = {47.65 (35.80)},
  MRNUMBER = {0198287 (33 \#6446)},
}
\bib{Bab}{article}{
    AUTHOR = {Babillot, M.},
     TITLE = {Th\'eorie du renouvellement pour des cha\^\i nes
              semi-markoviennes transientes},
   JOURNAL = {Ann. Inst. H. Poincar\'e Probab. Statist.},
  FJOURNAL = {Annales de l'Institut Henri Poincar\'e. Probabilit\'es et
              Statistique},
    VOLUME = {24},
      YEAR = {1988},
    NUMBER = {4},
     PAGES = {507--569},
      ISSN = {0246-0203},
     CODEN = {AHPBAR},
   MRCLASS = {60K15 (60J15 60K05)},
  MRNUMBER = {978023 (90h:60082)},
MRREVIEWER = {Philippe Bougerol},
       URL = {http://www.numdam.org.lib-ezproxy.tamu.edu:2048/item?id=AIHPB_1988__24_4_507_0},
}

\bib{BarComb}{article}{
   author={Barbaroux, J. M.},
   author={Combes, J. M.},
   author={Hislop, P. D.},
   title={Localization near band edges for random Schr\"odinger operators},
   note={Papers honouring the 60th birthday of Klaus Hepp and of Walter
   Hunziker, Part II (Z\"urich, 1995)},
   journal={Helv. Phys. Acta},
   volume={70},
   date={1997},
   number={1-2},
   pages={16--43},
   issn={0018-0238},
   review={\MR{1441595 (98h:82028)}},
}

\bib{BS_2001}{article}{
    AUTHOR = {Birman, Michael},
    AUTHOR = {Suslina, Tatyana},
     TITLE = {Threshold effects near the lower edge of the spectrum for
              periodic differential operators of mathematical physics},
 BOOKTITLE = {Systems, approximation, singular integral operators, and
              related topics ({B}ordeaux, 2000)},
    SERIES = {Oper. Theory Adv. Appl.},
    VOLUME = {129},
     PAGES = {71--107},
 PUBLISHER = {Birkh\"auser, Basel},
      YEAR = {2001},
   MRCLASS = {35P05 (35B10 35J15 47F05 47N20 74G10 81Q10)},
  MRNUMBER = {1882692 (2003f:35220)},
MRREVIEWER = {Rainer Hempel},
}
\bib{BS_2003}{article}{
     AUTHOR = {Birman, Michael},
     AUTHOR = {Suslina, Tatyana},
     TITLE = {Periodic second-order differential operators. {T}hreshold
              properties and averaging},
   JOURNAL = {Algebra i Analiz},
  FJOURNAL = {Rossi\u\i skaya Akademiya Nauk. Algebra i Analiz},
    VOLUME = {15},
      YEAR = {2003},
    NUMBER = {5},
     PAGES = {1--108},
      ISSN = {0234-0852},
   MRCLASS = {47F05 (35B10 35J10 35P05 35Q40)},
  MRNUMBER = {2068790 (2005k:47097)},
MRREVIEWER = {Yulia E. Karpeshina},
}

\bib{CombThom}{article}{
   author={Combes, J. M.},
   author={Thomas, L.},
   title={Asymptotic behaviour of eigenfunctions for multiparticle
   Schr\"odinger operators},
   journal={Comm. Math. Phys.},
   volume={34},
   date={1973},
   pages={251--270},
   issn={0010-3616},
   review={\MR{0391792 (52 \#12611)}},
}

\bib{Ea}{book}{
    AUTHOR = {Eastham, M. S. P.},
     TITLE = {The spectral theory of periodic differential equations},
    SERIES = {Texts in Mathematics (Edinburgh)},
 PUBLISHER = {Scottish Academic Press, Edinburgh; Hafner Press, New York},
      YEAR = {1973},
     PAGES = {viii+130},
      ISBN = {7-011-1936-5},
   MRCLASS = {34-02 (34C25 34L05 35P05 47A10 47F05)},
  MRNUMBER = {3075381},
}
\bib{Ev}{book}{
    AUTHOR = {Evans, Lawrence C.},
     TITLE = {Partial differential equations},
    SERIES = {Graduate Studies in Mathematics},
    VOLUME = {19},
 PUBLISHER = {American Mathematical Society, Providence, RI},
      YEAR = {1998},
     PAGES = {xviii+662},
      ISBN = {0-8218-0772-2},
   MRCLASS = {35-01},
  MRNUMBER = {1625845 (99e:35001)},
MRREVIEWER = {Luigi Rodino},
}
\bib{FKTassym}{article} {
    AUTHOR = {Feldman, Joel},
    AUTHOR = {Kn{\"o}rrer, Horst},
    AUTHOR = {Trubowitz, Eugene},
     TITLE = {Asymmetric {F}ermi surfaces for magnetic {S}chr\"odinger
              operators},
   JOURNAL = {Comm. Partial Differential Equations},
  FJOURNAL = {Communications in Partial Differential Equations},
    VOLUME = {25},
      YEAR = {2000},
    NUMBER = {1-2},
     PAGES = {319--336},
      ISSN = {0360-5302},
     CODEN = {CPDIDZ},
   MRCLASS = {35Q40 (47F05 81Q10 82B10)},
  MRNUMBER = {1737550 (2001e:35145)},
MRREVIEWER = {Guido Gentile},
}
\bib{Ghomi}{article}{
    AUTHOR = {Ghomi, Mohammad},
     TITLE = {Gauss map, topology, and convexity of hypersurfaces with
              nonvanishing curvature},
   JOURNAL = {Topology},
  FJOURNAL = {Topology. An International Journal of Mathematics},
    VOLUME = {41},
      YEAR = {2002},
    NUMBER = {1},
     PAGES = {107--117},
      ISSN = {0040-9383},
     CODEN = {TPLGAF},
   MRCLASS = {53A10 (57N35)},
  MRNUMBER = {1871243 (2002j:53013)},
MRREVIEWER = {Jaime B. Ripoll},
}
\bib{GT}{book}{
    AUTHOR = {Gilbarg, David},
    AUTHOR = {Trudinger, Neil S.},
     TITLE = {Elliptic partial differential equations of second order},
    SERIES = {Grundlehren der Mathematischen Wissenschaften},
    VOLUME = {224},
   EDITION = {Second},
 PUBLISHER = {Springer-Verlag, Berlin},
      YEAR = {1983},
     PAGES = {xiii+513},
      ISBN = {3-540-13025-X},
   MRCLASS = {35Jxx (35-01)},
  MRNUMBER = {737190 (86c:35035)},
MRREVIEWER = {O. John},
       DOI = {10.1007/978-3-642-61798-0},
       URL = {http://dx.doi.org.lib-ezproxy.tamu.edu:2048/10.1007/978-3-642-61798-0},
}
\bib{GUI}{misc}{
  AUTHOR = {Guillemin, Victor},
  TITLE = {Notes on elliptic operators},
  note={\url{http://math.mit.edu/~vwg/classnotes-spring05.pdf}},
  eprint={http://math.mit.edu/~vwg/classnotes-spring05.pdf},
  Year = {Spring 2005},
  Organization= {MIT},
}
\bib{HKSW}{article}{
    AUTHOR = {Harrison, J. M.},
    AUTHOR = {Kuchment, P.},
    AUTHOR = {Sobolev, A.},
    AUTHOR = {Winn, B.},
     TITLE = {On occurrence of spectral edges for periodic operators inside
              the {B}rillouin zone},
   JOURNAL = {J. Phys. A},
  FJOURNAL = {Journal of Physics. A. Mathematical and Theoretical},
    VOLUME = {40},
      YEAR = {2007},
    NUMBER = {27},
     PAGES = {7597--7618},
      ISSN = {1751-8113},
   MRCLASS = {81Q10 (35P05 35Q40 47F05)},
  MRNUMBER = {2369966 (2008j:81039)},
MRREVIEWER = {Pavel V. Exner},
}
\bib{Hormander}{book}{
    AUTHOR = {H{\"o}rmander, Lars},
     TITLE = {The analysis of linear partial differential operators. {I}},
    SERIES = {Grundlehren der Mathematischen Wissenschaften},
    VOLUME = {256},
      NOTE = {Distribution theory and Fourier analysis},
 PUBLISHER = {Springer-Verlag, Berlin},
      YEAR = {1983},
     PAGES = {ix+391},
      ISBN = {3-540-12104-8},
   MRCLASS = {35-02 (42B10 46Fxx)},
  MRNUMBER = {717035 (85g:35002a)},
MRREVIEWER = {L. Cattabriga},
}
\bib{KS}{article}{
    AUTHOR = {Kirsch, Werner},
    AUTHOR = {Simon, Barry},
     TITLE = {Comparison theorems for the gap of {S}chr\"odinger operators},
   JOURNAL = {J. Funct. Anal.},
  FJOURNAL = {Journal of Functional Analysis},
    VOLUME = {75},
      YEAR = {1987},
    NUMBER = {2},
     PAGES = {396--410},
      ISSN = {0022-1236},
     CODEN = {JFUAAW},
   MRCLASS = {35P15 (35J10 81C10 82A05)},
  MRNUMBER = {916759 (89b:35127)},
MRREVIEWER = {David Gurarie},
}
\bib{KloppRalston}{article}{
    AUTHOR = {Klopp, Fr{\'e}d{\'e}ric},
    AUTHOR = {Ralston, James},
     TITLE = {Endpoints of the spectrum of periodic operators are
              generically simple},
   JOURNAL = {Methods Appl. Anal.},
  FJOURNAL = {Methods and Applications of Analysis},
    VOLUME = {7},
      YEAR = {2000},
    NUMBER = {3},
     PAGES = {459--463},
      ISSN = {1073-2772},
   MRCLASS = {47F05 (35B10 35J10 35P05 47A10)},
  MRNUMBER = {1869296 (2002i:47055)},
MRREVIEWER = {George D. Raikov},
}
\bib{Ka}{book}{
    AUTHOR = {Kato, Tosio},
     TITLE = {Perturbation theory for linear operators},
   EDITION = {Second},
 PUBLISHER = {Springer-Verlag, Berlin-New York},
      NOTE = {Grundlehren der Mathematischen Wissenschaften, Band 132},
      YEAR = {1976},
     PAGES = {xxi+619},
   MRCLASS = {47-XX},
  MRNUMBER = {0407617 (53 \#11389)},
}
\bib{K}{book}{
    AUTHOR = {Kuchment, Peter},
     TITLE = {Floquet theory for partial differential equations},
    SERIES = {Operator Theory: Advances and Applications},
    VOLUME = {60},
 PUBLISHER = {Birkh\"auser Verlag, Basel},
      YEAR = {1993},
     PAGES = {xiv+350},
      ISBN = {3-7643-2901-7},
   MRCLASS = {35-02 (35C15 35P10 47N20)},
  MRNUMBER = {1232660 (94h:35002)},
MRREVIEWER = {Yehuda Pinchover},
}
\bib{KP1}{article}{
    AUTHOR = {Kuchment, Peter},
    AUTHOR = {Pinchover, Yehuda},
     TITLE = {Integral representations and {L}iouville theorems for
              solutions of periodic elliptic equations},
   JOURNAL = {J. Funct. Anal.},
  FJOURNAL = {Journal of Functional Analysis},
    VOLUME = {181},
      YEAR = {2001},
    NUMBER = {2},
     PAGES = {402--446},
      ISSN = {0022-1236},
     CODEN = {JFUAAW},
   MRCLASS = {35J05 (35B10 35C15)},
  MRNUMBER = {1821702 (2001m:35067)},
MRREVIEWER = {Fr{\'e}d{\'e}ric Klopp},
}
\bib{KP2}{article}{
    AUTHOR = {Kuchment, Peter},
    AUTHOR = {Pinchover, Yehuda},
     TITLE = {Liouville theorems and spectral edge behavior on abelian
              coverings of compact manifolds},
   JOURNAL = {Trans. Amer. Math. Soc.},
  FJOURNAL = {Transactions of the American Mathematical Society},
    VOLUME = {359},
      YEAR = {2007},
    NUMBER = {12},
     PAGES = {5777--5815},
      ISSN = {0002-9947},
     CODEN = {TAMTAM},
   MRCLASS = {58J05 (35B05 35J15 35P05 58J50)},
  MRNUMBER = {2336306 (2008h:58037)},
MRREVIEWER = {Alberto Parmeggiani},
}
\bib{KR}{article}{
    AUTHOR = {Kuchment, Peter},
    AUTHOR = { Raich, Andrew}
     TITLE = {Green's function asymptotics near the internal edges of
              spectra of periodic elliptic operators. {S}pectral edge case},
   JOURNAL = {Math. Nachr.},
  FJOURNAL = {Mathematische Nachrichten},
    VOLUME = {285},
      YEAR = {2012},
    NUMBER = {14-15},
     PAGES = {1880--1894},
      ISSN = {0025-584X},
   MRCLASS = {35J08 (31B35 35J15 35P05 47A10)},
  MRNUMBER = {2988010},
}
\bib{LU}{book}{
    AUTHOR = {Ladyzhenskaya, Olga A.},
    AUTHOR = {Ural{\cprime}tseva, Nina N.},
     TITLE = {Linear and quasilinear elliptic equations},
    SERIES = {Translated from the Russian by Scripta Technica, Inc.
              Translation editor: Leon Ehrenpreis},
 PUBLISHER = {Academic Press, New York-London},
      YEAR = {1968},
     PAGES = {xviii+495},
   MRCLASS = {35.47},
  MRNUMBER = {0244627 (39 \#5941)},
}
\bib{MT}{article}{
    AUTHOR = {Murata, Minoru},
    AUTHOR = {Tsuchida, Tetsuo},
     TITLE = {Asymptotics of {G}reen functions and {M}artin boundaries for
              elliptic operators with periodic coefficients},
   JOURNAL = {J. Differential Equations},
  FJOURNAL = {Journal of Differential Equations},
    VOLUME = {195},
      YEAR = {2003},
    NUMBER = {1},
     PAGES = {82--118},
      ISSN = {0022-0396},
     CODEN = {JDEQAK},
   MRCLASS = {35J15 (31C35 35A08 35B10 35B40)},
  MRNUMBER = {2019244 (2004k:35067)},
MRREVIEWER = {Yehuda Pinchover},
}
\bib{MT06}{article}{
     AUTHOR = {Murata, Minoru},
    AUTHOR = {Tsuchida, Tetsuo},
     TITLE = {Asymptotics of {G}reen functions and the limiting absorption
              principle for elliptic operators with periodic coefficients},
   JOURNAL = {J. Math. Kyoto Univ.},
  FJOURNAL = {Journal of Mathematics of Kyoto University},
    VOLUME = {46},
      YEAR = {2006},
    NUMBER = {4},
     PAGES = {713--754},
      ISSN = {0023-608X},
     CODEN = {JMKYAZ},
   MRCLASS = {35A08 (34B27 35B10 35J15 35P25)},
  MRNUMBER = {2320348 (2008d:35002)},
MRREVIEWER = {Xinhua Ji},
}
\bib{RS4}{book}{
    AUTHOR = {Reed, Michael},
    AUTHOR = {Simon, Barry},
     TITLE = {Methods of modern mathematical physics. {IV}. {A}nalysis of
              operators},
 PUBLISHER = {Academic Press, New
              York-London},
      YEAR = {1978},
     PAGES = {xv+396},
      ISBN = {0-12-585004-2},
   MRCLASS = {47-02 (81.47)},
  MRNUMBER = {0493421 (58 \#12429c)},
MRREVIEWER = {P. R. Chernoff},
}
\bib{Ruzh-Turu}{book}{
    AUTHOR = {Ruzhansky, Michael},
    AUTHOR = {Turunen, Ville},
     TITLE = {Pseudo-differential operators and symmetries},
    SERIES = {Pseudo-Differential Operators. Theory and Applications},
    VOLUME = {2},
      NOTE = {Background analysis and advanced topics},
 PUBLISHER = {Birkh\"auser Verlag, Basel},
      YEAR = {2010},
     PAGES = {xiv+709},
      ISBN = {978-3-7643-8513-2},
   MRCLASS = {35-02 (35S05 43A77 43A80 43A85 47G30 58J40)},
  MRNUMBER = {2567604 (2011b:35003)},
MRREVIEWER = {Fabio Nicola},
       DOI = {10.1007/978-3-7643-8514-9},
       URL = {http://dx.doi.org.lib-ezproxy.tamu.edu:2048/10.1007/978-3-7643-8514-9},
}
\bib{Shterenberg}{article}{
    AUTHOR = {Shterenberg, R. G.},
     TITLE = {An example of a periodic magnetic {S}chr\"odinger operator
              with a degenerate lower edge of the spectrum},
   JOURNAL = {Algebra i Analiz},
  FJOURNAL = {Rossi\u\i skaya Akademiya Nauk. Algebra i Analiz},
    VOLUME = {16},
      YEAR = {2004},
    NUMBER = {2},
     PAGES = {177--185},
      ISSN = {0234-0852},
   MRCLASS = {35Q40 (35J10 35P05 35P15 47F05 81Q10 81V10)},
  MRNUMBER = {2068347 (2005d:35220)},
MRREVIEWER = {Vitaly A. Volpert},
}
\bib{Shubin}{article}{
    AUTHOR = {{\v{S}}ubin, M. A.},
     TITLE = {Spectral theory and the index of elliptic operators with
              almost-periodic coefficients},
   JOURNAL = {Uspekhi Mat. Nauk},
  FJOURNAL = {Akademiya Nauk SSSR i Moskovskoe Matematicheskoe Obshchestvo.
              Uspekhi Matematicheskikh Nauk},
    VOLUME = {34},
      YEAR = {1979},
    NUMBER = {2(206)},
     PAGES = {95--135},
      ISSN = {0042-1316},
   MRCLASS = {35P05 (34B30 58G12 58G25)},
  MRNUMBER = {535710 (81f:35090)},
MRREVIEWER = {P. A. Mishnaevski{\u\i}},
}
\bib{Shubin_pseudo}{book}{
     AUTHOR = {Shubin, M. A.},
     TITLE = {Pseudodifferential operators and spectral theory},
   EDITION = {Second},
      NOTE = {Translated from the 1978 Russian original by Stig I.
              Andersson},
 PUBLISHER = {Springer-Verlag, Berlin},
      YEAR = {2001},
     PAGES = {xii+288},
      ISBN = {3-540-41195-X},
   MRCLASS = {47G30 (35Sxx 58J40)},
  MRNUMBER = {1852334 (2002d:47073)},
       DOI = {10.1007/978-3-642-56579-3},
       URL = {http://dx.doi.org.lib-ezproxy.tamu.edu:2048/10.1007/978-3-642-56579-3},
}

\bib{Sob}{article}{
    AUTHOR = {Sobolev, A.},
     TITLE = {Periodic operators: the method of gauge transfor},
   JOURNAL = {Lectures at the I. Newton Institute, https://www.newton.ac.uk/event/pepw01/timetable},
      YEAR = {2015},
       URL = {https://www.newton.ac.uk/event/pepw01/timetable},
}
\bib{Thorpe}{book}{
    AUTHOR = {Thorpe, John A.},
     TITLE = {Elementary topics in differential geometry},
    SERIES = {Undergraduate Texts in Mathematics},
      NOTE = {Corrected reprint of the 1979 original},
 PUBLISHER = {Springer-Verlag, New York},
      YEAR = {1994},
     PAGES = {xiv+253},
      ISBN = {0-387-90357-7},
   MRCLASS = {53-01},
  MRNUMBER = {1329997 (95m:53002)},
}
\bib{Wilcox}{article}{
    AUTHOR = {Wilcox, Calvin H.},
     TITLE = {Theory of {B}loch waves},
   JOURNAL = {J. Analyse Math.},
  FJOURNAL = {Journal d'Analyse Math\'ematique},
    VOLUME = {33},
      YEAR = {1978},
     PAGES = {146--167},
      ISSN = {0021-7670},
     CODEN = {JOAMAV},
   MRCLASS = {82A60 (35P10 35Q20 42C05 78A99 81H20)},
  MRNUMBER = {516045 (82b:82068)},
MRREVIEWER = {A. H. Klotz},
}
\bib{Woess}{book}{
    AUTHOR = {Woess, Wolfgang},
     TITLE = {Random walks on infinite graphs and groups},
    SERIES = {Cambridge Tracts in Mathematics},
    VOLUME = {138},
 PUBLISHER = {Cambridge University Press, Cambridge},
      YEAR = {2000},
     PAGES = {xii+334},
      ISBN = {0-521-55292-3},
   MRCLASS = {60B15 (60G50 60J10)},
  MRNUMBER = {1743100 (2001k:60006)},
MRREVIEWER = {Donald I. Cartwright},
}
\bib{ZKKP}{article}{
    AUTHOR = {Za{\u\i}denberg, M. G.},
    AUTHOR = {Kre{\u\i}n, S. G.},
    AUTHOR = {Ku{\v{c}}ment, P. A.},
    AUTHOR = {Pankov, A. A.},
     TITLE = {Banach bundles and linear operators},
   JOURNAL = {Uspehi Mat. Nauk},
  FJOURNAL = {Akademiya Nauk SSSR i Moskovskoe Matematicheskoe Obshchestvo.
              Uspekhi Matematicheskikh Nauk},
    VOLUME = {30},
      YEAR = {1975},
    NUMBER = {5(185)},
     PAGES = {101--157},
      ISSN = {0042-1316},
   MRCLASS = {58B99 (47A55 32L99)},
  MRNUMBER = {0415661 (54 \#3741)},
MRREVIEWER = {Jurgen Leiterer},
}
\end{biblist}
\end{bibdiv}
\end{document}